\documentclass[a4paper,UKenglish,cleveref, autoref, thm-restate]{lipics-v2021}

\usepackage{ifthen}
\newboolean{shortver}
\setboolean{shortver}{false}

\usepackage{array}
\usepackage{tabularx}
\usepackage{tikz} 
\usetikzlibrary{fit,shapes.geometric} 
\usepackage{placeins} 
\usepackage{todonotes} 
\usetikzlibrary{calc,decorations.pathreplacing,positioning} 

\definecolor{colorOneDark}{HTML}{9B2335}
\definecolor{colorOneLight}{HTML}{F1CDD3}
\definecolor{colorTwoDark}{HTML}{D97706}
\definecolor{colorTwoLight}{HTML}{F9E0B7}
\definecolor{colorThreeDark}{HTML}{087F8C}
\definecolor{colorThreeLight}{HTML}{C4E5E8}
\definecolor{colorFourDark}{HTML}{6A3D9A}
\definecolor{colorFourLight}{HTML}{E4D8F0}

\newcommand{\bw}{\operatorname{bw}}
\newcommand{\Spec}{\mathcal D}
\DeclareMathOperator{\tw}{tw}
\newcommand{\td}{\operatorname{td}}
\newcommand{\cl}{\operatorname{cl}}
\DeclareMathOperator{\vc}{vc}
\newcommand{\CSP}{\textnormal{\textsc{Binary CSP}}}
\newcommand{\Wthree}{\mathsf{W}[3]}
\newcommand{\vars}{\mathcal V}
\providecommand{\cutwidth}{\operatorname{cutwidth}}
\DeclareMathOperator{\cw}{cw}
\DeclareMathOperator{\rw}{rw}
\DeclareMathOperator{\mim}{mim}
\DeclareMathOperator{\mimw}{mimw}
\DeclareMathOperator{\GF}{GF}
\DeclareMathOperator{\con}{con}
\newcommand{\calL}{\mathcal L}
\newcommand{\ACDN}{\mathsf{A\&C\ DN}}

\theoremstyle{definition}

\newcommand{\ESP}{\textnormal{\textsc{ESP}}}
\newcommand{\SubESP}{\textnormal{\textsc{Subdivision-ESP}}}
\newcommand{\ml}{\operatorname{ml}}
\newcommand{\NP}{\mathsf{NP}}
\newcommand{\dist}{\operatorname{dist}}
\newcommand{\ecc}{\operatorname{ecc}}
\newcommand{\diam}{\operatorname{diam}}
\newcommand{\dcog}{\operatorname{dcog}}
\providecommand{\fen}{\operatorname{fen}}
\providecommand{\dtc}{\operatorname{dtc}}
\newcommand{\MSO}{\mathrm{MSO}_1}

\newcommand{\yes}{\textnormal{\textsc{yes}}}

\newcommand{\ds}{\textnormal{\textsc{Dominating Set}}}
\newcommand{\Wone}{\mathsf{W[1]}}
\newcommand{\FPT}{\mathsf{FPT}}
\newcommand{\Wtwo}{\mathsf{W[2]}}
\newcommand{\cc}{\vartheta}
\newcommand{\ta}{\operatorname{tree\text{-}\alpha}}

\title{Eccentricity Shortest Path: the structural parameter landscape} 

\author{Anand Babu}{Department of Computer Science and Engineering, National Institute of Technology Calicut, Kozhikode, India}{anandbabu@nitc.ac.in}{https://orcid.org/0000-0002-9904-0488}{}

\author{Ashwin Jacob}{Department of Computer Science and Engineering, National Institute of Technology Calicut, Kozhikode, India}{ashwinjacob@nitc.ac.in}{https://orcid.org/0000-0003-4864-043X}{Anusandhan National Research Foundation (ANRF) research grant ANRF/ECRG/2025/004197/PMS}

\author{Hitendra Kumar}{Department of Computer Science and Engineering, Indian Institute of Technology Dharwad, Dharwad, India}{ra.hitendra.kumar1729@iitdh.ac.in}{}{}

\author{R. B. Sandeep}{Department of Computer Science and Engineering, Indian Institute of Technology Dharwad, Dharwad, India}{sandeeprb@iitdh.ac.in}{https://orcid.org/0000-0003-4383-1819}{}

\authorrunning{A. Babu and A. Jacob and H. Kumar and R. B. Sandeep} 

\Copyright{Anand Babu and Ashwin Jacob and Hitendra Kumar and R. B. Sandeep} 

\ccsdesc[100]{Mathematics of computing $\rightarrow$ Discrete mathematics $\rightarrow$ Graph theory; Theory of computation $\rightarrow$ Design and analysis of algorithms $\rightarrow$ Parameterized complexity and exact algorithms} 

\keywords{Eccentricity shortest path, parameterized complexity, fixed-parameter tractability, kernelization, graph algorithms, structural graph parameters} 

\category{} 

\relatedversion{} 

\acknowledgements{}

\nolinenumbers 

\EventEditors{John Q. Open and Joan R. Access}
\EventNoEds{2}
\EventLongTitle{42nd Conference on Very Important Topics (CVIT 2016)}
\EventShortTitle{CVIT 2016}
\EventAcronym{CVIT}
\EventYear{2016}
\EventDate{December 24--27, 2016}
\EventLocation{Little Whinging, United Kingdom}
\EventLogo{}
\SeriesVolume{42}
\ArticleNo{23}

\ifthenelse{\boolean{shortver}}{\hideLIPIcs\acknowledgements{}}{}

\long\def\mespFullAbstract{%
\begin{abstract}
The \emph{Eccentricity Shortest Path} problem asks, given a connected
graph and a nonnegative integer $R$, whether there is a shortest path
such that every vertex is within distance $R$ of the path.
We present a comprehensive study of its structural parameterized
complexity and establish the following results:

\begin{itemize}
\item \textbf{Fixed-parameter tractability.}
We give fixed-parameter algorithms for vertex-deletion distance to
cographs, vertex-deletion distance to interval graphs, and bandwidth.
These algorithms also provide alternative proofs of fixed-parameter
tractability for distance to cluster graphs, established by
Ku\v{c}era and Such\'y (\emph{Algorithmica}, 2023), and for distance
to disjoint unions of paths, established by Bhyravarapu, Jana,
Kanesh, Saurabh, and Verma (IWOCA~2023).
The interval-deletion algorithm additionally resolves a question
posed in the latter paper.
We also record fixed-parameter tractability for treedepth
as a consequence of earlier algorithms.

\item \textbf{XP algorithms.}
We obtain an XP algorithm for vertex-deletion distance to chordal
graphs, extending polynomial-time solvability on chordal graphs
established by Dragan and Leitert
(\emph{Journal of Graph Algorithms and Applications}, 2016).
We also obtain XP algorithms for clique-width and rank-width,
without requiring expressions or decompositions as part of the input.
Consequently, the problem is polynomial-time solvable for every fixed
treewidth, answering a question of Chakraborty, Das, Islam, Mitra, and
Sen (arXiv preprint, 2026) and providing alternative proofs of their
polynomial-time solvability results for $K_{2,3}$-minor-free graphs
and cactus graphs.
We additionally record XP membership for diameter as a consequence
of earlier algorithms.

\item \textbf{Parameterized hardness.}
We prove $\mathsf{W}[3]$-hardness parameterized by cutwidth, and hence
by pathwidth and treewidth, even on graphs of maximum degree at most
five. This strengthens the $\mathsf{W}[1]$-hardness for pathwidth and
treewidth established by Bhyravarapu and Mohanapriya (IPEC~2026).
We also establish $\mathsf{W}[2]$-hardness for vertex clique-cover
number.

\item \textbf{Para-NP-hardness.}
We prove NP-hardness on graphs of tree-independence number at most
two.

\item \textbf{Polynomial kernel.}
For maximum leaf number $\ell$, we obtain an $O(\ell^4)$-bit
compression and a polynomial kernel, strengthening the FPT result
of Ku\v{c}era and Such\'y (\emph{Algorithmica}, 2023).

\item \textbf{Kernel lower bounds.}
Unless $\mathsf{NP}\subseteq\mathrm{coNP}/\mathrm{poly}$,
we exclude polynomial kernels for bandwidth, feedback edge number,
vertex-deletion distance to a clique, and vertex cover number.
The last bound also excludes polynomial kernels for vertex-deletion
distance to proper interval graphs.
\end{itemize}
\end{abstract}
}

\long\def\mespLandscape{%
\begin{figure}[p]
\centering
\resizebox{\textwidth}{!}{%
\begin{tikzpicture}[
  x=0.8cm,
  node distance=8mm and 10mm,
  every node/.style={
    draw,
    rounded corners=2pt,
    align=center,
    font=\scriptsize,
    inner sep=2pt,
    minimum height=6mm,
    fill=white
  },
  color1dark/.style={fill=colorOneDark, text=white},
  color1light/.style={fill=colorOneLight, text=black},
  color2dark/.style={fill=colorTwoDark, text=white},
  color2light/.style={fill=colorTwoLight, text=black},
  color3dark/.style={fill=colorThreeDark, text=white},
  color3light/.style={fill=colorThreeLight, text=black},
  color4dark/.style={fill=colorFourDark, text=white},
  color4light/.style={fill=colorFourLight, text=black},
  edge/.style={draw, line width=0.4pt}
]

\node[color3dark] (dtclique) at (0,0) {Dist to Clique};
\node[star, star points=5, star point ratio=2.25, draw=blue, fill=blue, rounded corners=0pt, inner sep=0pt, minimum width=7pt, minimum height=7pt, font=\normalfont] at ([xshift=-2pt,yshift=-2pt]dtclique.north east) {};
\node[color3dark] (vc)       at (5.2,0) {Vertex Cover};
\node[star, star points=5, star point ratio=2.25, draw=blue, fill=blue, rounded corners=0pt, inner sep=0pt, minimum width=7pt, minimum height=7pt, font=\normalfont] at ([xshift=-2pt,yshift=-2pt]vc.north east) {};
\node[color4dark] (ml)       at (10.6,0) {Max Leaf \#};
\node[star, star points=5, star point ratio=2.25, draw=blue, fill=blue, rounded corners=0pt, inner sep=0pt, minimum width=7pt, minimum height=7pt, font=\normalfont] at ([xshift=-2pt,yshift=-2pt]ml.north east) {};

\node[color4dark] (nd)      at (-2.3,-1.6) {Neighborhood\\diversity};
\node[color2dark] (mcc)      at (-4.8,-1.6) {Clique Cover};
\node[star, star points=5, star point ratio=2.25, draw=blue, fill=blue, rounded corners=0pt, inner sep=0pt, minimum width=7pt, minimum height=7pt, font=\normalfont] at ([xshift=-2pt,yshift=-2pt]mcc.north east) {};
\node[color3light] (dtcoc)    at (0.0,-1.6) {Dist to\\Co-Cluster};
\node[color3light] (dtclu)    at (2.2,-1.6) {Dist to\\Cluster};
\node[color3light] (dtdp)     at (10.6,-1.6) {Dist to\\Paths};
\node[color3dark] (dts)     at (5.2,-1.6) {Dist to\\Split};
\node[color3dark] (td)       at (8.0,-1.6) {Treedepth};
\node[star, star points=5, star point ratio=2.25, draw=blue, fill=blue, rounded corners=0pt, inner sep=0pt, minimum width=7pt, minimum height=7pt, font=\normalfont] at ([xshift=-2pt,yshift=-2pt]td.north east) {};
\node[color3dark] (fes)      at (13.0,-1.6) {Feedback\\Edge Set};
\node[star, star points=5, star point ratio=2.25, draw=blue, fill=blue, rounded corners=0pt, inner sep=0pt, minimum width=7pt, minimum height=7pt, font=\normalfont] at ([xshift=-2pt,yshift=-2pt]fes.north east) {};
\node[color3dark] (bw)       at (15.4,-1.6) {Bandwidth};
\node[star, star points=5, star point ratio=2.25, draw=blue, fill=blue, rounded corners=0pt, inner sep=0pt, minimum width=7pt, minimum height=7pt, font=\normalfont] at ([xshift=-2pt,yshift=-2pt]bw.north east) {};

\node[color3dark] (mw)      at (-0.15,-3.2) {Modular\\width};
\node[color2dark] (cutw)       at (15.4,-3.2) {Cutwidth};
\node[star, star points=5, star point ratio=2.25, draw=blue, fill=blue, rounded corners=0pt, inner sep=0pt, minimum width=7pt, minimum height=7pt, font=\normalfont] at ([xshift=-2pt,yshift=-2pt]cutw.north east) {};
\node[color2dark] (dtsp)     at (8.0,-3.2) {Dist to\\Spider};
\node[color3dark] (dtcog)    at (1.7,-3.2) {Dist to\\Cograph};
\node[star, star points=5, star point ratio=2.25, draw=blue, fill=blue, rounded corners=0pt, inner sep=0pt, minimum width=7pt, minimum height=7pt, font=\normalfont] at ([xshift=-2pt,yshift=-2pt]dtcog.north east) {};
\node[color2light] (mis)      at (-4.4,-3.2) {Ind Set};
\node[color3dark] (dtint)    at (3.9,-3.2) {Dist to\\Interval};
\node[star, star points=5, star point ratio=2.25, draw=blue, fill=blue, rounded corners=0pt, inner sep=0pt, minimum width=7pt, minimum height=7pt, font=\normalfont] at ([xshift=-2pt,yshift=-2pt]dtint.north east) {};
\node[color1dark] (genus)    at (13.0,-3.2) {Genus};

\node[color1dark] (md)       at (15.4,-4.8) {Max\\Degree};
\node[color2light] (fvs)      at (8.0,-4.8) {Feedback\\Vertex Set};
\node[color2light] (mindom)   at (-4.4,-4.8) {Dom Set};

\node[color2light] (pw)       at (13.0,-6.4) {Pathwidth};
\node[color1light] (hidx)     at (15.4,-6.4) {$h$-index};
\node[color1dark] (dtbip)    at (1.8,-7.1) {Dist to\\Bipartite};
\node[color2light] (dtout)    at (8,-6.4) {Dist to\\Outerplanar};
\node[color2dark] (dtcho)    at (-2.2,-6.4) {Dist to\\Chordal};
\node[star, star points=5, star point ratio=2.25, draw=blue, fill=blue, rounded corners=0pt, inner sep=0pt, minimum width=7pt, minimum height=7pt, font=\normalfont] at ([xshift=-2pt,yshift=-2pt]dtcho.north east) {};
\node[color2dark] (diamc)    at (5.2,-6.4) {Diameter};
\node[star, star points=5, star point ratio=2.25, draw=blue, fill=blue, rounded corners=0pt, inner sep=0pt, minimum width=7pt, minimum height=7pt, font=\normalfont] at ([xshift=-2pt,yshift=-2pt]diamc.north east) {};

\node[color1dark] (dtp)       at (13.0,-8.0) {Dist to\\Planar};
\node[color2light] (tw)       at (8.0,-8.0) {Treewidth};

\node[color1light] (dtperf)   at (0.0,-9.6) {Dist to\\Perfect};
\node[color2dark] (cwd)      at (5.2,-9.6) {Cliquewidth};
\node[star, star points=5, star point ratio=2.25, draw=blue, fill=blue, rounded corners=0pt, inner sep=0pt, minimum width=7pt, minimum height=7pt, font=\normalfont] at ([xshift=-2pt,yshift=-2pt]cwd.north east) {};
\node[color1dark] (tind)       at (8.0,-9.6) {Tree ind no.};
\node[star, star points=5, star point ratio=2.25, draw=blue, fill=blue, rounded corners=0pt, inner sep=0pt, minimum width=7pt, minimum height=7pt, font=\normalfont] at ([xshift=-2pt,yshift=-2pt]tind.north east) {};

\draw[edge] (dtclique) -- (mcc);
\draw[edge] (dtclique) -- (nd);
\draw[edge] (dtclique) -- (dtcoc);
\draw[edge] (dtclique) -- (dtclu);
\draw[edge] (dtclique) -- (dts);

\draw[edge] (vc) -- (dtclu);
\draw[edge] (vc) -- (nd);
\draw[edge] (vc) -- (dtdp);
\draw[edge] (vc) -- (dts);

\draw[edge] (vc) -- (dtcoc);
\draw[edge] (vc) -- (td);

\draw[edge] (ml) -- (fes);
\draw[edge] (ml) -- (bw);

\draw[edge] (ml) -- (dtdp);

\draw[edge] (mcc) -- (mis);

\draw[edge] (dtcoc) -- (dtcog);
\draw[edge] (dtcoc) -- (mindom);

\draw[edge] (dtclu) -- (dtcog);
\draw[edge] (dtclu) -- (dtint);

\draw[edge] (dtdp) -- (dtsp);
\draw[edge] (dtdp) -- (hidx);

\draw[edge] (dts) -- (dtcho);
\draw[edge] (dts) -- (diamc);

\draw[edge] (dtdp) -- (dtint);

\draw[edge] (td) -- (pw);
\draw[edge] (td) -- (diamc);

\draw[edge] (dtsp) -- (pw);
\draw[edge] (dtsp) -- (fvs);

\draw[edge] (fes) -- (genus);

\draw[edge] (fes) -- (fvs);

\draw[edge] (bw) -- (cutw);

\draw[edge] (nd) -- (mw);

\draw[edge] (cutw) -- (pw);
\draw[edge] (cutw) -- (md);

\draw[edge] (mis) -- (mindom);

\draw[edge] (dtcog) -- (dtperf);
\draw[edge] (dtcog) -- (cwd);
\draw[edge] (mw) -- (cwd);
\draw[edge] (mw) -- (mindom);

\draw[edge] (dtcog) -- (diamc);

\draw[edge] (dtint) -- (dtcho);

\draw[edge] (fvs) -- (dtout);
\draw[edge] (fvs) -- (dtcho);

\draw[edge] (fvs) -- (dtbip);

\draw[edge] (pw) -- (tw);

\draw[edge] (md) -- (hidx);

\draw[edge] (mindom) -- (diamc);

\draw[edge] (dtcho) -- (dtperf);
\draw[edge] (dtcho) -- (tind);

\draw[edge] (dtout) -- (tw);

\draw[edge] (dtout) -- (dtp);

\draw[edge] (dtbip) -- (dtperf);

\draw[edge] (tw) -- (cwd);
\draw[edge] (tw) -- (tind);

\node[
  draw,
  rounded corners=2pt,
  fill=white,
  anchor=north west,
  align=left,
  font=\scriptsize,
  inner sep=3pt,
  minimum height=0mm
] at (-0.8,-10.8) {%
  \begin{tabular}{@{}ll@{}}
    \multicolumn{2}{@{}l@{}}{\textbf{ESP complexity}}\\[1pt]
    \colorbox{colorOneDark}{\rule{0pt}{1.2ex}\hspace{1.2em}}\;\colorbox{colorOneLight}{\rule{0pt}{1.2ex}\hspace{1.2em}} & Not in XP\\
    \colorbox{colorTwoDark}{\rule{0pt}{1.2ex}\hspace{1.2em}}\;\colorbox{colorTwoLight}{\rule{0pt}{1.2ex}\hspace{1.2em}} & XP but not FPT\\
    \colorbox{colorThreeDark}{\rule{0pt}{1.2ex}\hspace{1.2em}}\;\colorbox{colorThreeLight}{\rule{0pt}{1.2ex}\hspace{1.2em}} & FPT but no Poly Kernel\\[1pt]
    \colorbox{colorFourDark}{\rule{0pt}{1.2ex}\hspace{1.2em}}\;%
    \colorbox{colorFourLight}{\rule{0pt}{1.2ex}\hspace{1.2em}}
& Polynomial kernel\\[1pt]
    \multicolumn{2}{@{}l@{}}{Dark: main result; light: implied result}
  \end{tabular}%
};

\end{tikzpicture}
}
\caption{The structural parameter landscape of \ESP{} on connected
graphs, inspired by the graph parameter hierarchy of Sorge and
Weller~\cite{SorgeWeller2019}, and combining results from earlier
work~\cite{DraganLeitert2017,KuceraSuchy2023,BhyravarapuEtAl2023,BhyravarapuEtAl2023Full,Tomer2023,BhyravarapuMohanapriya2026}
with the results proved here. A line from an upper parameter $p$ to a
lower parameter $q$ means that $q$ is bounded by a function of $p$ on
connected graphs.
For the nontrivial parameter relations, see
\cite{FellowsEtAl2009,SorgeWeller2019} for the bounds from maximum leaf
number to bandwidth, feedback edge number, and distance to disjoint paths;
\cite[Lemma~2]{Lampis2012} for the bound on neighborhood diversity by
vertex cover number and, by complementation, by distance to a clique;
\cite[Theorem~3]{GajarskyLampisOrdyniak2013} for the bound on modular-width
by neighborhood diversity;
\cite{Gurski2017,CourcelleOlariu2000} for the bounds on clique-width in
terms of modular-width, cograph-deletion distance, and treewidth;
\cite{Kinnersley1992} for the bound on pathwidth by cutwidth;
\cite{NesetrilOssonaDeMendez2006,SorgeWeller2019} for the relations
between vertex cover number, treedepth, pathwidth, and diameter; and
\cite{DallardMilanicStorgel2024} for the bounds on tree-independence
number by treewidth and chordal-deletion distance.
The bound on genus by feedback edge number is given
in~\cite[Lemma~4.19]{SorgeWeller2019}.
The remaining relations follow from graph-class inclusions and elementary
structural arguments.
The four colors distinguish the complexity classes listed in the legend;
negative statements use the standard assumptions
$\mathsf{P}\neq\NP$, $\FPT\neq\Wone$, $\FPT\neq\Wtwo$,
$\FPT\neq\Wthree$, and
$\NP\nsubseteq\mathrm{coNP}/\mathrm{poly}$, as appropriate.
Dark and light shades distinguish directly established results from
consequences of parameter bounds. Blue stars mark parameters explicitly
treated in this paper, including consequences of earlier algorithms.
\emph{Dist to} denotes vertex-deletion distance, and \emph{Ind Set} and
\emph{Dom Set} denote the independence and domination numbers.}
\label{fig:parameter-landscape}
\end{figure}
}

\long\def\mespFullIntroduction{%
\section{Introduction}
\label{sec:introduction}

A shortest path provides an efficient route between its endpoints, but it
need not pass close to the rest of the graph. When a path is intended to
serve as an accessible route through a network, these two requirements
must be considered together: travel along the route should be shortest,
and the greatest distance needed to reach it should be small. The
\emph{Eccentricity Shortest Path} problem (\ESP) captures this
combination. For a connected graph $G$ and a path $P$, the eccentricity of
$P$ is
\[
 \ecc_G(P)=\max_{v\in V(G)}\min_{u\in V(P)}\dist_G(v,u).
\]
The optimization problem asks for a shortest path of minimum eccentricity.
Its decision version asks whether a given connected graph $G$ has a shortest
path $P$ satisfying
$\ecc_G(P)\leq R$, where $R$ is part of the input. The endpoints are not
prescribed, and a shortest path is always understood to be shortest
between its own endpoints.

Dragan and Leitert introduced \ESP{} and established its connection with
approximating minimum-distortion embeddings of graph metrics into the
line~\cite{DraganLeitert2017}. A path of low eccentricity provides a
one-dimensional structure near every vertex, while its shortest-path
property preserves distances along that structure. The problem therefore
combines a metric covering requirement with an exact constraint on the
chosen route. These requirements interact: extending a path can improve
coverage but destroy shortestness, whereas replacing it by a shorter route
can leave previously covered vertices too far away. This interaction is
central to both the algorithms and the lower bounds in this paper.

\subsection{Previous work and structural parameterization}

The general problem is computationally difficult. Dragan and
Leitert~\cite{DraganLeitert2017} proved NP-hardness even on planar
bipartite graphs of maximum degree three and $\Wtwo$-hardness when the
parameter is the requested eccentricity. They also gave an
$O(n^{2r+2}m)$-time algorithm for finding an optimum path, where $r$ is
the optimum eccentricity, together with constant-factor approximation
algorithms. Here $n$ and $m$ denote the numbers of vertices and edges.
Birmel\'{e}, de Montgolfier, and
Planche~\cite{BirmeleDeMontgolfierPlanche2016} gave a linear-time
$3$-approximation for \ESP{} and established tight bounds relating its
optimum eccentricity to laminarity, where the path is required to be
diametral.
In a separate paper published in 2016, Dragan and Leitert obtained a
linear-time algorithm on distance-hereditary graphs and polynomial-time
algorithms on chordal graphs, dually chordal graphs, and graphs of bounded tree-length or
bounded hyperbolicity~\cite{DraganLeitert2016Structured}. In particular,
the chordal-graph result implies polynomial-time solvability on interval
and split graphs.

Generalizations of \ESP{} have also been studied.
Birmel\'{e} et al.~\cite{BirmeleEtAl2020} introduced
\emph{hub-laminar decompositions}, which use several locally shortest
paths with small collective eccentricity that meet only near their
endpoints.
Chakraborty and Vax\`{e}s~\cite{ChakrabortyVaxes2025} considered
\emph{$k$-Geodesic Center}, which asks for $k$ shortest paths minimizing
the maximum distance from a vertex to their union; its case $k=1$ is
precisely \ESP. They gave a polynomial-time additive
$O(\delta)$-approximation for this generalization on $\delta$-hyperbolic
graphs and proved NP-hardness on partial grids for every fixed
$k\geq 1$, including \ESP.

Chakraborty et al.~\cite{ChakrabortyEtAl2026} recently obtained an
$O(n^4)$-time algorithm for \ESP{} on $K_{2,3}$-minor-free graphs,
a class containing all outerplanar graphs, and an $O(n^3)$-time
algorithm on cactus graphs. Their approach fixes the endpoints of the
solution path and uses the structure of the shortest paths between them
to construct a weighted auxiliary directed acyclic graph. Minimizing
the largest edge weight along a path in this auxiliary graph yields a
shortest path of minimum eccentricity for the chosen endpoints.
They leave open polynomial-time solvability on graphs of every fixed
treewidth, including the case of treewidth at most two.

These results motivate studying how structural restrictions on the input
control the complexity of \ESP. We study FPT and XP algorithms, as well
as polynomial kernelization, with the eccentricity threshold $R$
unrestricted, using structural parameters of $G$ alone.

All distances to graph classes below mean vertex-deletion distances.
Ku\v{c}era and Such\'y~\cite{KuceraSuchy2023} initiated a systematic
study of structural parameterizations of \ESP. They established
fixed-parameter tractability for modular-width, distance to cluster
graphs, and maximum leaf number, as well as for treewidth combined with
the requested eccentricity. They also asked whether treewidth alone
suffices for fixed-parameter tractability. Bhyravarapu
et al.~\cite{BhyravarapuEtAl2023} subsequently obtained FPT algorithms
for distance to a disjoint union of paths and for feedback vertex number
combined with the eccentricity threshold, and an FPT approximation
scheme parameterized by feedback vertex number. Their full
version~\cite{BhyravarapuEtAl2023Full} also proves fixed-parameter
tractability for distance to split graphs and $\Wtwo$-hardness for
distance to chordal graphs, and asks about distance to interval graphs.

More recently, Bhyravarapu and
Mohanapriya~\cite{BhyravarapuMohanapriya2026} proved that \ESP{} is
$\Wone$-hard parameterized by the vertex-deletion distance $s$ to a
disjoint union of spiders, where a spider is a tree with at most one
vertex of degree greater than two. They also exclude an $n^{o(s)}$-time
algorithm under the Exponential Time Hypothesis (ETH). Their result
implies $\Wone$-hardness for each of feedback vertex number,
pathwidth, and treewidth individually. This answers the earlier
tractability questions for feedback vertex number and treewidth
negatively unless $\FPT=\Wone$. On the positive side, they give a
$2^{O(t\log t)}n^{O(1)}$-time algorithm parameterized by feedback edge
number $t$.

For kernelization, Tomer's report~\cite{Tomer2023} reports a linear
kernel parameterized by neighborhood diversity.

\subsection{Our results}

We provide a comprehensive picture of the structural parameterized
complexity of \ESP{}, establishing FPT and XP algorithms,
parameterized hardness results, and both positive and negative
kernelization results. Table~\ref{tab:intro-results} summarizes
the principal statements proved in this paper.
Figure~\ref{fig:parameter-landscape} places these results among related
structural parameters and previously known bounds. 

\begin{table}[t]
\centering
\small
\renewcommand{\arraystretch}{1.16}
\begin{tabularx}{\textwidth}{@{}>{\raggedright\arraybackslash}p{0.30\textwidth}
 >{\raggedright\arraybackslash}X@{}}
\hline
\textbf{Parameter} & \textbf{Result established in this paper} \\
\hline
Distance to cographs $k$
 & FPT: $f(k)n^{O(1)}$
   (Section~\ref{sec:fpt-algorithms}) \\
Distance to interval graphs $k$
 & FPT: $13^k n^{O(1)}$
   (Section~\ref{sec:fpt-algorithms}) \\
Bandwidth $b$
 & FPT: $2^{O(b^2\log(b+1))}n^{O(1)}$; no polynomial kernel
   (Sections~\ref{sec:fpt-algorithms}
   and~\ref{sec:kernelization}) \\
Treedepth $d$
 & FPT: $F(d)n^3$
   (Section~\ref{sec:fpt-algorithms}) \\
Diameter $d$
 & XP: $O(n^{2d+2}m)$
   (Section~\ref{sec:xp-algorithms}) \\
Distance to chordal graphs $k$
 & XP: $n^{O(k)}$
   (Section~\ref{sec:xp-algorithms}) \\
Clique-width $k$; rank-width $q$
 & XP without a supplied expression or decomposition
   (Section~\ref{sec:xp-algorithms}) \\
Width $w$ of a supplied mim-width layout
 & $n^{O(\max\{1,w\})}$ time
   (Section~\ref{sec:xp-algorithms}) \\
Vertex clique-cover number
 & $\Wtwo$-hard
   (Section~\ref{sec:w-hardness}) \\
Cutwidth
 & $\Wthree$-hard even at maximum degree five
   (Section~\ref{sec:w-hardness}) \\
Tree-independence number
 & NP-hard already at value at most two
   (Section~\ref{sec:para-np-hardness}) \\
Maximum leaf number $\ell$
 & Polynomial kernel; $O(\ell^4)$-bit compression
   (Section~\ref{sec:kernelization}) \\
Distance to clique; feedback edge number; vertex cover number
 & No polynomial kernel
   (Section~\ref{sec:kernelization}) \\
\hline
\end{tabularx}
\caption{Results proved in this paper. Every kernel lower bound assumes
$\NP\nsubseteq\mathrm{coNP}/\mathrm{poly}$. Treedepth FPT and diameter
XP are consequences of earlier algorithms~\cite{KuceraSuchy2023,DraganLeitert2017}.}
\label{tab:intro-results}
\end{table}

\mespLandscape

\paragraph*{Algorithms.}
Our $13^k n^{O(1)}$-time algorithm for interval deletion distance
resolves the question posed by Bhyravarapu
et al.~\cite{BhyravarapuEtAl2023Full}. We also obtain fixed-parameter
tractability for distance to cographs and bandwidth. For bandwidth, the
running time is $2^{O(b^2\log(b+1))}n^{O(1)}$, and a bandwidth ordering
need not be supplied. On the XP side, we give algorithms for distance to
chordal graphs, clique-width, and rank-width. In particular, our chordal-deletion algorithm complements the known
$\Wtwo$-hardness result, highlighting a distinction in the
parameterized complexity of \ESP{} between the vertex-deletion
distances to interval graphs and to chordal graphs.
The clique-width and rank-width results follow from an
algorithm on a supplied mim-width layout whose polynomial exponent
depends on the layout width but not on $R$.
Since clique-width is bounded by a function of treewidth, our clique-width
XP algorithm also yields polynomial-time solvability for every fixed
treewidth, including series-parallel graphs. This answers the question
raised by Chakraborty et al.~\cite{ChakrabortyEtAl2026}; their specialized
algorithms give explicit quartic and cubic running-time bounds on
$K_{2,3}$-minor-free graphs and cactus graphs, respectively, both of which have bounded treewidth. For completeness, we also record treedepth FPT and diameter XP as
consequences of the earlier
treewidth--eccentricity and optimum-eccentricity algorithms,
respectively.

\paragraph*{Hardness.}
We prove $\Wthree$-hardness parameterized by cutwidth, even for graphs
of maximum degree five and even when a suitable ordering is supplied.
Since pathwidth, and hence treewidth, is at most cutwidth, this also
gives $\Wthree$-hardness for both parameters, strengthening the earlier
$\Wone$-hardness of Bhyravarapu and
Mohanapriya~\cite{BhyravarapuMohanapriya2026}.
Together with the bandwidth algorithm, this separates two familiar
ordering parameters for \ESP: bounding edge lengths in an ordering
supports an FPT algorithm, whereas bounding the number of edges crossing
each cut does not, unless $\FPT=\Wthree$. We also prove
$\Wtwo$-hardness for the vertex clique-cover number, including when a
clique cover is supplied. Finally, \ESP{} is NP-hard on graphs of
tree-independence number at most two. This remains true with an explicit
tree decomposition whose every bag is partitioned into two cliques.
Hence bounding the independence number within bags does not suffice for
an XP algorithm, unless $\mathsf{P}=\NP$.

\paragraph*{Polynomial preprocessing.}
For maximum leaf number $\ell$, we obtain an $O(\ell^4)$-bit compression
into a fixed language in NP and derive an ordinary unweighted \ESP{}
kernel of polynomial encoding length. This strengthens the previously
known FPT result for this parameter.
In contrast, \ESP{}
has no polynomial kernel parameterized by bandwidth, feedback edge
number, distance to clique, or vertex cover number, unless
$\NP\subseteq\mathrm{coNP}/\mathrm{poly}$. In particular, the
feedback-edge lower bound complements the FPT algorithm of Bhyravarapu
and Mohanapriya~\cite{BhyravarapuMohanapriya2026}, showing that this
tractability does not extend to polynomial kernelization under the
stated assumption. The vertex-cover lower bound also yields lower bounds
for deletion distance to cluster, split,
proper interval, and interval graphs. Thus the interval-deletion
and bandwidth FPT algorithms cannot be accompanied by polynomial kernels
under this assumption. The maximum-leaf and feedback-edge results
further show that different restrictions on sparse graphs can have
opposite consequences for polynomial preprocessing.

\FloatBarrier

\subsection{Overview of the techniques}

\paragraph*{Summarizing distances to a geodesic.}
The algorithms for interval and chordal deletion distance use the
structure of shortest paths to keep small descriptions of their
interaction with a modulator. In an interval graph, distances to a
geodesic can be related to distances to its endpoints. We use this
property to obtain a parameter-bounded family of candidate distance
profiles on the modulator. For each candidate profile, a sweep through an interval representation stores only the last three selected vertices in the processed prefix under right-endpoint order and the first three selected vertices in the unprocessed suffix under left-endpoint order, together with degree counters and witnesses for the guessed distances. Summing over these choices gives the $13^k$ factor. For a
chordal modulator, the relevant decomposition has bags consisting of a
clique together with the modulator. A geodesic uses at most two vertices
of the clique, and its distance profile on a bag has a small set of
representatives. Storing profiles for both sides of each separator gives
the XP algorithm despite the bags having unbounded size.

\paragraph*{Few distance values instead of a bounded radius.}
The bandwidth algorithm addresses a different obstacle: the optimum
eccentricity can be arbitrarily large even on cycles, which have bounded
bandwidth. Consequently, simply applying an algorithm parameterized by
treewidth and $R$ cannot give the desired result. We instead bound the
number of attainable distances from a vertex to a shortest path with
fixed endpoints. A packing argument for separated vertices in
bounded-bandwidth trees yields a bound involving a parameter-dependent
power of $\log n$. A dynamic program over a tree decomposition uses
these distance values rather than all integers up to $R$. Absorbing the
remaining powers of $\log n$ into the parameter-dependent factor gives
the claimed FPT running time.

\paragraph*{Logical descriptions of shortest paths.}
For distance to cographs, a small modulator bounds both clique-width and
diameter. A cotree refined by neighborhoods in the modulator provides a
bounded-label expression, and the diameter bound permits an
$\mathrm{MSO}_1$ description of the required path with
parameter-bounded formula length. The clique-width XP algorithm uses a
different description: a connected vertex set containing $s$ and $t$
and having exactly $\dist_G(s,t)+1$ vertices must induce an
$s$--$t$ geodesic. Combining this certificate with the mim-width
model-checking theorem of Bergougnoux, Dreier, and
Jaffke~\cite{BergougnouxDreierJaffke2023} handles the unbounded
eccentricity threshold. Rank decompositions supply the required layouts,
so no clique-width expression needs to be given.

\paragraph*{Encoding choices by distances.}
Our hardness constructions force a shortest path to make a sequence of
choices and use distances to test vertices to enforce consistency.
For clique-cover number, these choices encode a dominating set.
For cutwidth, we reduce from the vertex-cover parameterization of
\CSP{} studied by Bodlaender, Groenland, and
Pilipczuk~\cite{BGP23}. Variable gadgets encode values through
complementary distances to two ports, and test gadgets recognize
forbidden pairs. The construction simultaneously bounds the cutwidth
and maximum degree. For tree-independence number, clause vertices form
one clique that can be added to a decomposition of the remaining
construction, yielding bags covered by two cliques.

\paragraph*{Compressing structure and preserving information.}
For maximum leaf number, suppressing degree-two chains leaves a small
weighted core, but its binary edge lengths may still depend on $n$.
We combine a direct algorithm with this representation to obtain a
polynomial-size compression. A crucial additional step is to verify
the encoded path and its eccentricity without expanding the chains;
this places the compressed language in NP and permits a reduction back
to ordinary \ESP.
The negative results use OR-cross-compositions and polynomial parameter
transformations. Their constructions encode many source instances or
choices while keeping the relevant parameter small: through bounded
modules for bandwidth, pendant structures around a fixed cyclic core
for feedback edge number, and large clique or independent parts for
the deletion-distance and vertex-cover lower bounds.

\paragraph*{Organization.}
Section~\ref{sec:preliminaries} fixes the notation and graph parameters.
Sections~\ref{sec:fpt-algorithms} and~\ref{sec:xp-algorithms} present the
FPT and XP algorithms. Sections~\ref{sec:w-hardness} and
\ref{sec:para-np-hardness} establish the parameterized and
para-NP-hardness results. Section~\ref{sec:kernelization} presents the
polynomial kernel and kernel lower bounds.

}

\long\def\mespFullProofs{%
\section{Preliminaries}
\label{sec:preliminaries}

We use standard graph-theoretic and parameterized-complexity terminology; see,
for example, Cygan et al.~\cite{cygan-et-al-2015}. All graphs are finite,
undirected, simple, and unweighted, unless stated otherwise. For a graph $G$,
we write $V(G)$ and $E(G)$ for its vertex and edge sets, and put
$n=|V(G)|$ and $m=|E(G)|$. The open and closed neighborhoods of a vertex
$v$ are denoted by $N_G(v)$ and $N_G[v]$, respectively. For
$X\subseteq V(G)$, the graph induced by $X$ is denoted by $G[X]$, while
$G-X=G[V(G)\setminus X]$. We omit the subscript $G$ whenever the underlying
graph is clear.
Subdividing an edge $uv$ means replacing it by the path $u,w,v$, where
$w$ is a new vertex. More generally, a \emph{subdivision} replaces each
edge by a path of positive length with the same endpoints; the replacement
paths have pairwise disjoint sets of new internal vertices.

For vertices $u,v\in V(G)$, let $\dist_G(u,v)$ be their distance in $G$.
For a nonempty set $S\subseteq V(G)$, define
\[
 \dist_G(v,S)=\min_{x\in S}\dist_G(v,x),\qquad
 \ecc_G(S)=\max_{v\in V(G)}\dist_G(v,S).
\]
A path from $u$ to $v$ is a \emph{shortest path}, or a \emph{geodesic}, if
its length is $\dist_G(u,v)$. We allow one-vertex paths. Every geodesic is
induced. For a path $P$, we abbreviate $\dist_G(v,V(P))$ and
$\ecc_G(V(P))$ to $\dist_G(v,P)$ and $\ecc_G(P)$, respectively.
A subgraph $H$ of $G$ is \emph{isometric} if
$\dist_H(x,y)=\dist_G(x,y)$ for all $x,y\in V(H)$.

\paragraph*{Eccentricity Shortest Path (\ESP).}
The input consists of a connected graph $G$ and a nonnegative integer $R$.
The question is whether $G$ has a shortest path $P$ with
$\ecc_G(P)\leq R$. The optimization version asks for a shortest path of
minimum eccentricity. We call such a path a \emph{solution path} and write
$(G,R)\in\ESP$ when the instance is a yes-instance. The problem was
introduced by Dragan and Leitert~\cite{DraganLeitert2017}.

\paragraph*{Parameterized complexity.}
For a graph parameter $k$, an FPT
algorithm has running time $f(k)n^{O(1)}$, while an XP algorithm may have
running time $f(k)n^{g(k)}$, for computable functions $f$ and $g$.
Thus FPT algorithms keep the polynomial exponent independent of the
parameter. A complementary question asks whether polynomial-time
preprocessing can reduce an instance to an equivalent instance whose
encoding length is polynomial in $k$; such a reduction is a
\emph{polynomial kernel}. We refer the reader to the textbook~\cite{cygan-et-al-2015}
for standard terminology in parameterized complexity.

\subsection{Graph parameters}
\label{sec:graph-parameters}

We use the following graph parameters.

\medskip
\noindent\textbf{Basic numerical parameters.}
The degree of $v$ is $\deg_G(v)=|N_G(v)|$, and the maximum degree is
$\Delta(G)=\max_{v\in V(G)}\deg_G(v)$. A set of pairwise nonadjacent vertices
is \emph{independent}; the independence number $\alpha(G)$ is the maximum
size of such a set. For a connected graph $G$, its diameter and optimum
ESP eccentricity~\cite{DraganLeitert2017} are, respectively,
\[
 \diam(G)=\max_{u,v\in V(G)}\dist_G(u,v),\qquad
 r(G)=\min\{\ecc_G(P):P\text{ is a shortest path of }G\}.
\]
When referring to the maximum diameter of a disconnected graph's components,
we take the maximum of $\diam(C)$ over its connected components $C$.

\medskip
\noindent\textbf{Vertex-deletion distances.}
For a graph class $\mathcal H$, the \emph{deletion distance to $\mathcal H$}
is
\[
 \min\{|X|:X\subseteq V(G)\text{ and }G-X\in\mathcal H\}.
\]
A set $X$ with $G-X\in\mathcal H$ is an
\emph{$\mathcal H$-modulator}~\cite{cygan-et-al-2015}.
All distances to graph classes in this paper refer to vertex deletion.
We consider the following classes:
\begin{itemize}
 \item A \emph{clique} is a graph whose vertices are pairwise adjacent.
 \item A \emph{cograph} is a graph with no induced $P_4$~\cite{Corneiletal1985}.
 \item A \emph{chordal graph} has no induced cycle of length at least
 four~\cite{Dirac1961}.
 \item An \emph{interval graph} is the intersection graph of intervals on a
 line~\cite{LekkerkerkerBoland1962}. It is a \emph{proper interval graph}
 if it has such a representation in which no interval properly contains
 another~\cite{Roberts1969}.
 \item A \emph{cluster graph} is a disjoint union of
 cliques~\cite{ShamirSharanTsur2004}.
 \item A \emph{split graph} has a vertex partition into a clique and an
 independent set~\cite{FoldesHammer1977}.
\end{itemize}
We denote the distances to cographs and to cliques by $\dcog(G)$ and
$\dtc(G)$, respectively. Thus a cograph modulator $X$ satisfies that $G-X$
is a cograph, and a clique modulator $X$ satisfies that $G-X$ is a clique.
The distances to chordal, interval, proper interval, cluster, and split
graphs are defined by the same minimum above.

Equivalently, nonempty cographs are obtained from one-vertex graphs by repeated
disjoint unions and complete joins. A \emph{cotree} records this
construction: its leaves are the graph's vertices, and its internal nodes
are labelled $\oplus$ or $\otimes$ for disjoint union or complete
join~\cite{Corneiletal1985}.

\medskip
\noindent\textbf{Treewidth, pathwidth, and tree-independence number.}
A \emph{tree decomposition} of $G$ consists of a tree $T$ and bags
$B_u\subseteq V(G)$ for $u\in V(T)$ such that every vertex belongs to a bag,
every edge has both endpoints in a common bag, and, for every $v\in V(G)$,
the nodes whose bags contain $v$ induce a connected subtree of $T$.
Its width is $\max_{u\in V(T)}|B_u|-1$, and its independence number is
$\max_{u\in V(T)}\alpha(G[B_u])$. The \emph{treewidth}
$\tw(G)$~\cite{RobertsonSeymour1986} and
\emph{tree-independence number} $\ta(G)$~\cite{DallardMilanicStorgel2024}
are the respective minima over all tree decompositions of $G$. A
\emph{path decomposition} is a tree decomposition whose underlying tree is a path; the \emph{pathwidth} is the
minimum width of a path decomposition~\cite{RobertsonSeymour1983}.

\medskip
\noindent\textbf{Treedepth.}
The \emph{closure} $\cl(F)$ of a rooted forest $F$ joins each vertex to all
its proper ancestors. The \emph{treedepth} $\td(G)$ is the minimum height of
a rooted forest $F$ on $V(G)$ with $G\subseteq\cl(F)$, where height counts
vertices on a longest root-to-leaf path~\cite{NesetrilOssonaDeMendez2006}.

\medskip
\noindent\textbf{Bandwidth and cutwidth.}
For an ordering $\pi=(v_1,\ldots,v_n)$ of $V(G)$, put
\[
 \bw(G,\pi)=\max_{v_iv_j\in E(G)}|i-j|,\qquad
 \cutwidth(G,\pi)=\max_{0\leq h\leq n}
 |\{v_iv_j\in E(G):i\leq h<j\}|.
\]
For an edgeless graph, put $\bw(G,\pi)=0$. The \emph{bandwidth} $\bw(G)$
and \emph{cutwidth} $\cutwidth(G)$ are the minima of these quantities over
all vertex orderings $\pi$~\cite{DiazPetitSerna2002}. We also view an
ordering as a bijection $\pi:V(G)\to\{1,\ldots,n\}$, with $\pi(v_i)=i$.

\medskip
\noindent\textbf{Clique-width.}
A \emph{$k$-expression} constructs a labelled graph using labels from
$\{1,\ldots,k\}$ and four operations: creating a labelled vertex, taking
disjoint union, adding all edges between two distinct labels, and relabelling
all vertices of one label. The \emph{clique-width} $\cw(G)$ is the minimum
$k$ for which $G$ has a $k$-expression~\cite{CourcelleOlariu2000}.

\medskip
\noindent\textbf{Rank-width and mim-width.}
For $A\subseteq V(G)$, put $\overline A=V(G)\setminus A$. Let $\mim_G(A)$
be the maximum size of an induced matching in the bipartite cut graph
$G[A,\overline A]$. Edges with both endpoints on one side of the cut are
not part of this bipartite graph. Let $\rho_G(A)$ be the rank over $\GF(2)$
of its bipartite adjacency matrix.

A \emph{rooted layout} $\calL=(T,\delta)$ consists of a rooted full binary
tree $T$ and a bijection $\delta$ from $V(G)$ to its leaves. If $V_x$ is the
set of vertices corresponding to leaves below a node $x$, its width is
\[
 \mimw_G(\calL)=\max_{x\in V(T)}\mim_G(V_x).
\]
The \emph{mim-width} of $G$ is the minimum of $\mimw_G(\calL)$ over all
rooted layouts $\calL$~\cite{Vatshelle2012,BergougnouxDreierJaffke2023}.
A \emph{rank-decomposition} is a subcubic tree with leaves bijectively
labelled by $V(G)$; its width is the maximum cut-rank over
the cuts obtained by deleting its edges. The minimum such width is the
\emph{rank-width} $\rw(G)$~\cite{OumSeymour2006}. For a one-vertex graph,
both rank-width and mim-width are zero.

\medskip
\noindent\textbf{Vertex cover number and feedback edge number.}
The \emph{vertex cover number} $\vc(G)$ is the minimum size of a
set $X\subseteq V(G)$ such that $G-X$ is independent~\cite{Karp1972}.
The \emph{feedback edge number} $\fen(G)$ is the minimum number of edges
whose deletion makes $G$ a forest~\cite{BalabanGanianRocton2024}.

\medskip
\noindent\textbf{Clique cover number.} The \emph{vertex clique-cover number}
$\cc(G)$ is the minimum number of cliques whose vertex sets partition
$V(G)$~\cite{Karp1972}. This is different from the edge clique-cover number.

\medskip
\noindent\textbf{Maximum leaf number.} For connected $G$, the
\emph{maximum leaf number} $\ml(G)$ is the maximum number of leaves
in a spanning tree of $G$~\cite{FellowsEtAl2009}. We write
$\ell=\ml(G)$ when using this parameter.

\section{FPT algorithms}
\label{sec:fpt-algorithms}

In this section, we present FPT algorithms for \ESP{} parameterized
by each of the following: distance to cographs, distance to interval
graphs, bandwidth, and treedepth. 

\subsection{Distance to cographs}
\label{sec:mesp-cographs}

In this subsection, we show that \ESP{} is fixed-parameter tractable
parameterized by distance to cographs. Recall that $\dcog(G)$ is the
minimum size of a cograph modulator; see
Subsection~\ref{sec:graph-parameters}.

\begin{restatable}{theorem}{thmcograph}
\label{thm:maincograph}
There is a computable function $f$ such that \ESP{} can be solved in time
$f(k)|V(G)|^{O(1)}$ on every instance $(G,R)$ with $k=\dcog(G)$.
Consequently, \ESP{} is fixed-parameter tractable parameterized by distance
to cographs. The optimization version is FPT for the same parameter.
\end{restatable}

Our proof has two ingredients. First,
from a cograph modulator we construct a clique-width expression whose width is
bounded by a function of the modulator size. Second, we express the existence
of a suitable shortest path by a formula whose length depends only on the
parameter. Then we use the following meta-theorem of Courcelle, Makowsky, and Rotics~\cite{CMR2000}.

\begin{restatable}{proposition}{cmrmetatheorem}[Courcelle–Makowsky–Rotics~\cite{CMR2000}]
\label{thm:cmr}
There is a computable function $F$ with the following property. Given a
$q$-expression $\tau$ constructing a graph $G$ and an $\MSO$ sentence
$\varphi$, whether $G\models\varphi$ can be decided in
$F(q,|\varphi|)|\tau|$ time. The same statement holds for a formula with free
individual variables when values for those variables are supplied.
\end{restatable}

\subsubsection{A bounded-width clique-width expression}

We first construct a clique-width expression for a graph that is close to a
cograph. The following standard cotree fact will be useful.

\begin{restatable}{proposition}{binarycotree}
\label{cor:binary-cotree}
Let $G$ be a cograph with $n$ vertices and $m$ edges. Then $G$ admits a
rooted binary cotree in which the leaves correspond bijectively to $V(G)$ and
every internal node is labelled either $\oplus$ or $\otimes$, representing
disjoint union and complete join, respectively. Moreover, such a binary
cotree can be constructed in $O(n+m)$ time.
\end{restatable}

\begin{proof}
Corneil, Perl, and Stewart~\cite{Corneiletal1985} showed that a cotree of a
cograph can be constructed in $O(n+m)$ time. A cotree node may have more
than two children. As observed by Bodlaender and M\"ohring
~\cite{BodlaenderMohring1993}, an internal node with children
$u_1,\ldots,u_r$, where $r>2$, can be replaced by a rooted binary tree whose
leaves are $u_1,\ldots,u_r$ and whose internal nodes all receive the label of
the original node. This does not change the represented graph, since both
disjoint union and complete join are associative. The number of new nodes is
linear in the size of the original cotree, so the transformation takes linear
time; see also Lyons~\cite[Section~2.1]{LYONS2011}.
\end{proof}

The next lemma is based on the standard clique-width construction for adding
vertices; compare Gurski~\cite{Gurski2017}. We include the direct construction
because we need an explicit bound on the size of the expression.

\begin{restatable}{lemma}{cographmodulatorexpression}
\label{lem:expression}
Let $X$ be a cograph modulator of $G$ with $|X|=t$, and let
$n=|V(G)|$. One can construct a $(2^{t+1}+t)$-expression $\tau$ for $G$
with $|\tau|=O(4^t n)$ in $f_0(t)n^{O(1)}$ time, for a computable
function $f_0$.
\end{restatable}

\begin{proof}
Put $H=G-X$. By Proposition~\ref{cor:binary-cotree}, a rooted binary cotree
gives a $2$-expression for $H$ of size $O(|V(H)|)$, ending with all vertices
labelled $1$. Disjoint-union nodes need no additional operation. At a join
node, relabel the output of the second child to $2$ before taking the disjoint
union, insert all edges between labels $1$ and $2$, and then relabel $2$ to
$1$.

For $v\in V(H)$, let $\sigma(v)=N_G(v)\cap X$. Refine each label
$i\in\{1,2\}$ to labels $(i,S)$ for all $S\subseteq X$, and create $v$
with the refined label determined by $\sigma(v)$. Replace an edge-insertion
operation between labels $1$ and $2$ by insertions between every pair
$(1,S),(2,T)$. Similarly, replace a relabelling $i\to j$ by
$(i,S)\to(j,S)$ for every $S\subseteq X$. A straightforward induction over
the expression shows that this constructs exactly $H$ and ends with each
vertex $v$ labelled $(1,\sigma(v))$.

Now create every $x\in X$ with its own private label. Insert the edges of
$G[X]$ individually, and join the label of $x$ to $(1,S)$ precisely when
$x\in S$. These operations add exactly the missing edges of $G$. The
construction uses $2\cdot 2^t+t$ labels and
\[
 O\bigl(4^t|V(H)|+t2^t+t^2+t\bigr)=O(4^t n)
\]
operations. If $H$ is empty, only the private labels and the edges of $G[X]$
are needed.
\end{proof}

\subsubsection{A logical description of ESP}

Set $\delta_{-1}(x,y):=\bot$ and $\delta_0(x,y):=(x=y)$. For $r\geq1$,
define recursively, always using fresh bound variables,
\[
 \delta_r(x,y):=\exists z\,\bigl((x=z\lor E(x,z))\land
 \delta_{r-1}(z,y)\bigr).
\]
After expansion, this formula describes a walk of at most $r$ edges; equality
allows a step to remain at its current vertex. Hence $\delta_r(u,v)$ holds
exactly when $\dist_G(u,v)\leq r$.

For nonnegative integers $D$ and $R$, define
\refstepcounter{equation}\label{eq:formula}
\[
 \begin{aligned}
 \Phi_{D,R}:={}&
 \bigvee_{\ell=0}^{D}\exists p_0\cdots\exists p_\ell
 \Biggl[
 \left(\bigwedge_{i=0}^{\ell-1}E(p_i,p_{i+1})\right)
 \land\neg\delta_{\ell-1}(p_0,p_\ell) \land
 \left(\forall v\,\bigvee_{i=0}^{\ell}\delta_R(v,p_i)\right)
 \Biggr].
 \end{aligned}
 \qquad\text{(\theequation)}
\]
The empty conjunction for $\ell=0$ is true.

The following diameter bound is a direct consequence of a result of
Kobayashi and Otachi~\cite{KobayashiOtachi2022}.

\begin{restatable}{proposition}{cographmodulatordiameter}
\label{lem:diameter-cograph-modulator}
Let $G$ be connected and let $X\subseteq V(G)$ be a cograph modulator. If
$|X|=k$, then $\operatorname{diam}(G)\leq 4k+3$.
\end{restatable}

\begin{proof}
Kobayashi and Otachi~\cite{KobayashiOtachi2022} observed that, if every
component of every graph in a class $\mathcal H$ has diameter at most $d$,
then every component of a graph at vertex-deletion distance at most $k$ from
$\mathcal H$ has diameter less than $(k+1)(d+2)$. Every connected cograph has
diameter at most $2$. Therefore
$\operatorname{diam}(G)<4(k+1)$, and the integral bound
$\operatorname{diam}(G)\leq4k+3$ follows.
\end{proof}

\begin{restatable}{lemma}{mespformulacorrectness}
\label{lem:formula-correct}
Let $X$ be a cograph modulator of a graph $G$, put $t=|X|$, and let
$R\geq0$. Then $G\models\Phi_{4t+3,R}$ if and only if $(G,R)$ is a
yes-instance of \ESP{}.
\end{restatable}

\begin{proof}
Suppose first that $G\models\Phi_{4t+3,R}$. Some disjunct indexed by an
integer $\ell\leq4t+3$ has a satisfying assignment. Its first conjunct
describes a walk of length $\ell$, while the second conjunct rules out a
shorter walk between its endpoints. The walk is therefore a geodesic path.
The last conjunct says that every vertex of $G$ is at distance at most $R$
from this path.

Conversely, let $P=p_0p_1\cdots p_\ell$ be a shortest path with
$\ecc_G(P)\leq R$. By Proposition~\ref{lem:diameter-cograph-modulator},
$\ell\leq\operatorname{diam}(G)\leq4t+3$. Assigning the vertices of $P$ to
$p_0,\ldots,p_\ell$ satisfies $\Phi_{4t+3,R}$.
\end{proof}

The number of symbols in $\Phi_{4t+3,R}$ is bounded by a computable function
of $t+R$.

\subsubsection{The FPT algorithm}

We can now combine the expression and the logical description.

\begin{proof}[Proof of Theorem~\ref{thm:maincograph}]
Let $(G,R)$ be an instance, put $k=\dcog(G)$, and let $n=|V(G)|$. Greedily
take a maximal collection of pairwise vertex-disjoint induced
$P_4$'s, and let $X$ be the union of their vertex sets. Maximality implies
that $G-X$ is a cograph. Every cograph modulator must meet every selected
$P_4$. Thus, writing $t=|X|$, we have $t\leq4k$. The collection can be found
in $O(n^4)$ time by scanning all four-vertex sets and selecting an induced
$P_4$ whenever its vertices avoid the previously selected paths.

Proposition~\ref{lem:diameter-cograph-modulator}, applied to $X$, gives
$\operatorname{diam}(G)\leq4t+3$. If $R\geq4t+3$, the answer is immediately
\yes: a path consisting of any one vertex has eccentricity at most
$\operatorname{diam}(G)\leq R$. We may therefore assume that $R\leq4t+3$.

By Lemma~\ref{lem:expression}, we can construct a $q$-expression $\tau$ for
$G$ in $f_0(t)n^{O(1)}$ time such that
$
 q\leq2^{t+1}+t$ and
$
 |\tau|=O(4^t n).
$
The number of symbols in $\Phi_{4t+3,R}$ is bounded by a computable function
$g(t)$, since $R\leq4t+3$. The total running
time is at most
\[
 O(n^4)+f_0(t)n^{O(1)}+
 F\bigl(2^{t+1}+t,g(t)\bigr)O(4^t n)
 =f(t)n^{O(1)}\leq f'(k)n^{O(1)}
\]
for computable functions $f$ and $f'$, where the last equality uses
$t\leq4k$. Hence the decision problem is FPT.

For the optimization version, test $R=0,1,\ldots,4t+3$ and return the first
accepted value. The value $R=4t+3$ is always accepted, so this requires at
most $4t+4$ calls. A witnessing path can be recovered by standard
self-reduction. First determine an accepted disjunct $\ell$, and then fix
$p_0,p_1,\ldots,p_\ell$ one at a time, using the model checker for the formula
with free individual variables and trying all vertices as the next value.
This uses at most $n(\ell+1)=O(tn)$ further model-checking calls. Since $t \leq 4k$, it preserves fixed-parameter tractability.
\end{proof}

\subsection{Distance to interval graphs}
\label{sec:mesp-interval}

Recall that the distance to interval graphs is the minimum number of
vertices whose deletion leaves an interval graph, as defined in
Subsection~\ref{sec:graph-parameters}.

The \ESP{} problem is polynomial-time solvable on interval graphs, as a
special case of the polynomial-time algorithm for chordal graphs
given by Dragan and Leitert~\cite{DraganLeitert2016Structured}. Bhyravarapu
et al.~\cite{BhyravarapuEtAl2023Full} identified the
parameterization by interval vertex deletion number as an open question.
We resolve this question by presenting an FPT algorithm for \ESP{}
parameterized by the deletion distance to interval graphs.

Let $G$ be a graph and let $S\subseteq V(G)$, $|S|\leq k$, be a set such
that $H=G-S$ is an interval graph. We assume that such a modulator
$S$ is given; if not we can find such a set using the
FPT algorithm for Interval Vertex Deletion~\cite{cao2015interval}, which runs in time $10^k n^{O(1)}$. We also assume that an interval representation of $H$ is available; if not it can be computed in polynomial  time~\cite{BoothLueker1976}.

\subsubsection{Enumerating distance profiles}

The algorithm first check if the path is a singleton vertex in polynomial time. If not, it enumerates distinct endpoints $a,b$ and sets $D=d(a,b)$. For this guess let
\[
 \tau(v)=
 \begin{cases}
  1,&v\in\{a,b\},\\
  2,&v\notin\{a,b\}.
 \end{cases}
\]

$\tau(v)$ denote the degree of the vertices in the solution path. We begin with the following lemma.
\begin{lemma}\label{lem:interval-endpoint}
Let $Q=q_0\cdots q_\ell$ be a geodesic of an interval graph $H$. If $r=\dist_H(u,V(Q))$ is finite and $r\geq2$, then $
 \min\{\dist_H(u,q_0),\dist_H(u,q_\ell)\}\leq r+1$.
\end{lemma}
\begin{proof}
If $Q$ is a singleton, the claim is immediate. Otherwise, choose a vertex $q_i$ in $Q$ that is nearest to $u$ and a shortest $u$--$q_i$ path $W$. Suppose both endpoint distances exceed $r+1$. Then $i\geq2$ and $\ell-i\geq2$, by following $W$ and the appropriate subpath of $Q$.
The vertices $u,q_0,q_\ell$ are pairwise nonadjacent. The path $Q$ avoids
$N_H[u]$ because $r\geq2$.

Also, $W$ followed by $Q[q_i,q_0]$ avoids $N_H[q_\ell]$. An internal vertex
$w$ of $W$ equal or adjacent to $q_\ell$ would give a route from $u$ to
$q_\ell$ of length at most $r$, a contradiction. The subpath $Q[q_i,q_0]$
avoids $N_H[q_\ell]$ because $Q$ is induced and $\ell-i\geq2$. The analogous
path to $q_\ell$ avoids $N_H[q_0]$. These concatenations are paths: an
internal vertex of $W$ on $Q$ would be closer to $u$ than $r$.
Thus $u,q_0,q_\ell$ form an asteroidal triple, which is impossible for interval graphs~\cite{LekkerkerkerBoland1962}.
\end{proof}

For an $a$--$b$ geodesic $P$, define for all $s \in S$, $\rho_s=d(s,V(P))$,
and distance profile $\rho=(\rho_s)_{s\in S}$. Let $T=\{s\in S:\rho_s\leq 2\}$. We use the vertices of $T$ as anchors. For each $s\in S$, define
\begin{equation}\label{eq:anchor}
 \beta_s=\min\left\{d(s,a),d(s,b),
        \min_{t\in T}\bigl(d(s,t)+\rho_t\bigr)\right\}.
\end{equation}
We guess the set $T$ and the values $\rho_t$ for $t\in T$. Once these values are fixed, $\beta_s$ gives an estimate of
$\rho_s$ using only the two endpoints of $P$ and the nearby anchors.
The next lemma shows that this estimate is exact on $T$ and is within
additive two for every vertex of $S$.

\begin{lemma}\label{lem:interval-anchor}
For every $s\in S$, $\rho_s\leq\beta_s\leq\rho_s+2$. Moreover, if $s\in T$, then $\beta_s=\rho_s$.
\end{lemma}
\begin{proof}
Every term in the definition of $\beta_s$ corresponds to a walk from
$s$ to $P$. Hence $\beta_s\geq \rho_s$. If $s\in T$, then $\rho_s\leq 2$, and the term corresponding to $t=s$ in the definition of $\beta_s$ gives
$\beta_s\leq d(s,s)+\rho_s=\rho_s$. Thus $\beta_s=\rho_s$.

It remains to prove $\beta_s\leq \rho_s+2$ for an arbitrary $s\in S$.
Let $W$ be a shortest $s$--$P$ path, and let $p\in V(P)$ be its
endpoint. If $p\in S$, then $p\in T$ and $\rho_p=0$. The corresponding
anchor term gives $\beta_s\leq d(s,p)+\rho_p=d(s,p)=\rho_s$.

Suppose now that $p\notin S$. Let $z$ be the last vertex of $S$ on $W$,
and let $u$ be the vertex immediately following $z$ on $W$. Let $Q$ be
the component of $P-S$ containing $p$. The suffix of $W$ from $u$ to
$p$ lies entirely in $H$. It is a shortest $u$--$Q$ path in $H$;
otherwise, replacing this suffix by a shorter $u$--$Q$ path would give a
shorter $s$--$P$ path in $G$. Put $r=d_H(u,V(Q))$. Then $\rho_s=d(s,z)+1+r$. We first note that $\rho_z=1+r$. Indeed, the path from $z$ through $u$ to $Q$ has length $1+r$, so $\rho_z\leq 1+r$. On the other hand, $\rho_s\leq d(s,z)+\rho_z \leq d(s,z)+1+r =\rho_s$, and therefore equality holds throughout. In particular, $\rho_z=1+r$. 

If $r\leq 1$, then $\rho_z\leq2$, so $z\in T$. The anchor term for $z$
therefore gives $\beta_s\leq d(s,z)+\rho_z=\rho_s$. It remains to consider $r\geq2$. The path $Q$ is a geodesic of $H$: otherwise a shorter path between its endpoints in $H$ would shorten $P$ in $G$. By Lemma~\ref{lem:interval-endpoint}, $Q$ has an endpoint $e$ such that
$ d_H(u,e)\leq r+1$. Consequently,
\[
d(s,e)\leq d(s,z)+1+d_H(u,e)
        \leq d(s,z)+r+2
        =\rho_s+1.
\]

If $e\in\{a,b\}$, the corresponding endpoint term in the definition of
$\beta_s$ gives $\beta_s\leq d(s,e)\leq \rho_s+1$. Otherwise, $e$ has a neighbor $m\in S\cap V(P)$ along $P$. Since $m\in V(P)$, we have $\rho_m=0$, and hence $m\in T$. The anchor term for $m$ gives
\[
\beta_s\leq d(s,m)+\rho_m
        \leq d(s,e)+1
        \leq \rho_s+2.
\]
Thus $\beta_s\leq\rho_s+2$ in all cases.
\end{proof}

For fixed endpoints $a,b$, we enumerate six possible labels for every
$s\in S$:
\[
 \mathsf{near}_0,\ \mathsf{near}_1,\ \mathsf{near}_2,
 \quad
 \mathsf{far}_0,\ \mathsf{far}_1,\ \mathsf{far}_2.
\]
A label $\mathsf{near}_j$ declares that $s\in T$ and sets
$\rho_s=j$. After all near labels have been chosen, we compute
$\beta_s$ from these guessed values using \eqref{eq:anchor}. A label
$\mathsf{far}_j$ then sets $\rho_s=\beta_s-j$. We discard the candidate if a vertex labeled $\mathsf{far}_j$ receives $\rho_s\leq 2$, since such a vertex should belong to $T$. Finally, we define $M=\{s\in S:\rho_s=0\}$. 
Thus $M$ is determined by the distance assignments and need not be
guessed separately.

\begin{corollary}\label{cor:interval-enumeration}
For fixed endpoints, the enumeration produces at most $6^k$ candidates
$(M,\rho)$, and contains the true candidate for every $a$--$b$ geodesic.
For a fixed prescribed zero set $M$ of size $m$, at most
$5^{k-m}$ candidates have zero set exactly $M$.
\end{corollary}
\begin{proof}
Consider a distance profile $\rho$. If $\rho_s\leq2$, assign
$s$ the label $\mathsf{near}_{\rho_s}$. Otherwise, by
Lemma~\ref{lem:interval-anchor}, $0\leq\beta_s-\rho_s\leq2$, so assigning $s$ the label $\mathsf{far}_{\beta_s-\rho_s}$ produces the correct value of $\rho_s$. Hence the actual distance profile is included in the enumeration.

There are six labels for each of the $k$ vertices of $S$, giving at
most $6^k$ candidates. If the zero set $M$ is prescribed, every
$s\in M$ is forced to receive the label $\mathsf{near}_0$. Each
$s\in S\setminus M$ has the remaining five possible labels, giving at
most $5^{k-m}$ candidates.
\end{proof}

\subsubsection{Coverage and interval summaries}

\begin{lemma}\label{lem:interval-mod-separator}
Let $Y=M\cup X\ne\varnothing$, where $M\subseteq S$ and $X\subseteq V(H)$. Suppose that for every $s \in S$, $\rho_s=d(s,Y)$. Then, for every $v\in V(H)$,
\begin{equation}\label{eq:mod-separator}
 d(v,Y)=
 \min\left\{
   \dist_H(v,X),
   \min_{s\in S}\bigl(d(v,s)+\rho_s\bigr)
 \right\}.
\end{equation}
\end{lemma}
\begin{proof}
Every finite term on the right-hand side gives a walk from $v$ to $Y$.
Indeed, a path in $H$ from $v$ to $X$ gives the first term, while for
$s\in S$, a shortest $v$--$s$ path followed by a shortest $s$--$Y$ path
gives a walk of length $d(v,s)+\rho_s$. Hence
\[
d(v,Y)\leq
\min\left\{
   \dist_H(v,X),
   \min_{s\in S}\bigl(d(v,s)+\rho_s\bigr)
\right\}.
\]

For the reverse inequality, let $P$ be a shortest $v$--$Y$ path. If
$P$ avoids $S$, then $P$ lies entirely in $H$ and its endpoint belongs
to $X$. Thus $d(v,Y)=|P|\geq \dist_H(v,X)$. Otherwise, let $s$ be the first vertex of $P$ that belongs to $S$. The
prefix of $P$ from $v$ to $s$ has length at least $d(v,s)$, while the
remaining suffix has length at least $d(s,Y)=\rho_s$. Therefore 
\[
d(v,Y)=|P|\geq d(v,s)+\rho_s
\geq \min_{x\in S}\bigl(d(v,x)+\rho_x\bigr).
\]
Combining the two cases proves \eqref{eq:mod-separator}.
\end{proof}

Fix a candidate $(M,\rho)$ from the enumeration. We first perform the
following consistency checks, rejecting the candidate if any of them
fails:
\begin{equation}\label{eq:interval-prechecks}
 \begin{gathered}
 0\leq\rho_s\leq R\quad \forall s\in S,\qquad
 \{a,b\}\cap S\subseteq M,\qquad |M|\leq D+1,\\
 d(s,m)\geq\rho_s\quad \forall s\in S,\ \forall m\in M,\qquad
 |N_G(m)\cap M|\leq\tau(m)\quad \forall m\in M.
 \end{gathered}
\end{equation}

The first two conditions ensure that the guessed distances are feasible
and that every modulator vertex used by the endpoints belongs to the
guessed path. The remaining conditions ensure that the vertices of $M$
do not already violate the guessed distances or their required induced
degrees.

Next, define the set of vertices that are eligible to be selected by
\[
E_\rho=
\{v\in V(H): d(s,v)\geq\rho_s\text{ for every }s\in S\},
\]
and the set of vertices whose coverage cannot be satisfied using the
modulator alone by
\[
U_\rho=
\{v\in V(H): d(v,s)+\rho_s>R\text{ for every }s\in S\}.
\]
Thus, only vertices of $E_\rho$ may be selected. Moreover, when $\rho$ is the actual distance profile of the selected path, i.e., $\rho_s=d(s,Y)$ for every $s\in S$, Lemma~\ref{lem:interval-mod-separator} shows that the remaining coverage requirement is precisely $\dist_H(v,X)\leq R$ for every $v\in U_\rho$.

Fix an interval representation $I_v=[L(v),R(v)]$ of $H$. We may assume
that all interval endpoints are distinct: at each tied
coordinate, place all left endpoints before all right endpoints, and
break ties arbitrarily within each group.
Let $\prec_R$ and $\prec_L$ denote the orders of the vertices by
increasing right and left endpoints, respectively.

\begin{lemma}\label{lem:interval-monotone}
For distinct vertices $v,x,y$ of $H$:
\begin{enumerate}
 \item if $R(x)<R(y)<R(v)$ and $\dist_H(x,v)<\infty$, then
       $\dist_H(y,v)\leq\dist_H(x,v)$;
 \item if $R(v)<R(x)$, $R(v)<R(y)$, $L(x)<L(y)$, and
       $\dist_H(v,y)<\infty$, then
       $\dist_H(v,x)\leq\dist_H(v,y)$.
\end{enumerate}
\end{lemma}

\begin{proof}
For the first statement, let $x=p_0,p_1,\ldots,p_d=v$ be a shortest $x$--$v$ path. The union of the intervals $I_{p_0},\ldots,I_{p_d}$ is connected. Since $R(x)<R(y)<R(v)$, this union contains the point $R(y)$. On the other hand, $R(x)<R(y)$, so $I_x$ does not contain $R(y)$. Hence some vertex $p_j$, $j\geq1$, has an interval containing $R(y)$. Therefore $p_j$ is equal or adjacent to $y$. Starting at $y$, move to $p_j$ if necessary and then follow the
suffix $p_j,p_{j+1},\ldots,p_d=v$. This gives a $y$--$v$ walk of
length at most $1+(d-j)\leq d$, since $j\geq1$. Thus $\dist_H(y,v)\leq d=\dist_H(x,v)$.

For the second statement, first suppose that $L(x)<L(v)$. Since
$R(v)<R(x)$, the interval $I_x$ contains $R(v)$, and hence $xv\in E(H)$.
Therefore $\dist_H(v,x)=1\leq\dist_H(v,y)$. It remains to consider the case $L(v)<L(x)<L(y)$. Let $v=p_0,p_1,\ldots,p_d=y$ be a shortest $v$--$y$ path. The union of the intervals along this path is connected and contains $L(x)$. Since $L(y)>L(x)$, the interval $I_y$ does not contain $L(x)$. Thus some $p_j$, with $j\leq d-1$, has an interval intersecting $I_x$. Consequently $p_jx\in E(H)$. Following the path from $v$ to $p_j$ and then taking the edge $p_jx$ gives a $v$--$x$ walk of length at most $j+1\leq d$.
Hence $\dist_H(v,x)\leq d=\dist_H(v,y)$. 
\end{proof}

Write $h_1\prec_R\cdots\prec_R h_N$ where $N=|V(H)|$, and let $
L_i=\{h_1,\ldots,h_i\}$ and $\overline L_i=V(H)\setminus L_i$. For a vertex set $Z$, let $\mathrm R_3(Z)$ be the last three vertices
of $Z$ in the order $\prec_R$, and let $\mathrm L_3(Z)$ be the first
three vertices of $Z$ in the order $\prec_L$. If $|Z|\leq3$, we retain
all vertices of $Z$.

\begin{corollary}\label{cor:interval-summaries}
Let $v=h_{i+1}$ and $t\geq1$.

For any $Z^-\subseteq L_i$, let $q^-$ be the number of vertices of
$Z^-$ at distance at most $t$ from $v$. Then $\mathrm R_3(Z^-)$
contains exactly $\min\{3,q^-\}$ such vertices.

Similarly, for any $Z^+\subseteq\overline L_{i+1}$, if $q^+$ is the
number of vertices of $Z^+$ at distance at most $t$ from $v$, then
$\mathrm L_3(Z^+)$ contains exactly $\min\{3,q^+\}$ such vertices.
\end{corollary}

\begin{proof}
Consider first $Z^-$. By Lemma~\ref{lem:interval-monotone}, among the vertices preceding $v$ in the order $\prec_R$, moving to the right can only
decrease the distance to $v$. Hence the vertices of $Z^-$ that are
within distance $t$ of $v$ form a suffix of $Z^-$ in the order
$\prec_R$. If there are at most three such vertices, all of them are among the last three vertices of $Z^-$. If there are more than three, the last three vertices of $Z^-$ are all within distance $t$ of $v$. Thus $\mathrm R_3(Z^-)$ contains exactly $\min\{3,q^-\}$ such vertices.

The argument for $Z^+$ is analogous. By Lemma~\ref{lem:interval-monotone}, the vertices of $Z^+$ within distance $t$ of $v$ form a prefix in the order
$\prec_L$. Hence the first three vertices, $\mathrm L_3(Z^+)$, preserve
their number up to three.
\end{proof}

\subsubsection{The dynamic programming algorithm}

Put $S_0=S\setminus M$. For each $m\in M$, let $b_m=|N_G(m)\cap M|$ and $t_m=\tau(m)-b_m$. The preliminary checks ensure that $0\leq t_m\leq2$. Since every vertex of $M$ belongs to the selected path and has $\rho_m=0$, each vertex of $M$ already witnesses its own distance. Thus we only need
distance-equality witnesses for vertices in $S_0$, and degree counters
only for vertices in $M$.

After processing $L_i$, a state is a tuple $(A,B,c,q,W)$, where
\[
\begin{gathered}
A\subseteq L_i,\qquad
B\subseteq\overline L_i,\qquad
|A|,|B|\leq3,\\
c(m)\in\{0,\ldots,t_m\}\quad \forall m\in M,\qquad
0\leq q\leq D+1,\qquad
W\subseteq S_0.
\end{gathered}
\]
The state components have the following intended meanings for a solution
vertex set $X\subseteq V(H)$:
\begin{align*}
A&=\mathrm R_3(X\cap L_i),\\
B&=\mathrm L_3(X\cap\overline L_i),\\
c(m)&=|N_G(m)\cap X\cap L_i|
&& \forall m\in M,\\
q&=|M|+|X\cap L_i|,\\
W&=\{s\in S_0:
      d(s,M\cup(X\cap L_i))=\rho_s\}.
\end{align*}

Thus, $A$ records the three rightmost selected vertices already
processed, while $B$ records the three leftmost selected vertices that
are promised for the unprocessed part. The counter $c(m)$ records how
many selected neighbors of $m$ have been seen so far, $q$ records the
number of selected vertices counted so far including $M$, and $W$
records the vertices of $S_0$ whose required distance $\rho_s$ already
has a witness among $M$ and the processed vertices.

A table entry is retained whenever it can be obtained by the transitions below. The correctness of these intended meanings, including the promised information in $B$, is proved for states occurring in an accepting computation.

\paragraph*{Initialization.} At $i=0$, no vertices of $H$ have been processed. Thus set $A=\varnothing, c\equiv0, q=|M|$, and let $W=\{s\in S_0:d(s,M)=\rho_s\}$. We enumerate every set $B\subseteq V(H)$ with $|B|\leq3$ and retain the corresponding initial states $(\varnothing,B,0,|M|,W)$. 

\paragraph*{Processing a vertex.} Consider a state that is retained at layer $i<N$, and let $v=h_{i+1}$. Choose $\varepsilon\in\{0,1\}$, where $\varepsilon=1$ means that $v$ is
selected. Let
\[
Z=
\begin{cases}
\{v\},&\varepsilon=1,\\
\varnothing,&\varepsilon=0.
\end{cases}
\]
If $v\in\{a,b\}$, we force $\varepsilon=1$. If $v$ is selected, we also
require $v\in E_\rho$.

We next enumerate a new future promise $B'\subseteq\overline L_{i+1}, |B'|\leq3$, and require
\begin{equation}\label{eq:interval-suffix}
 B=\mathrm L_3(B'\cup Z).
\end{equation}
The selected vertices in $\overline L_i$ are precisely $
Z\cup(X\cap\overline L_{i+1})$, where $Z=\{v\}$ if $v$ is selected and $Z=\varnothing$ otherwise. Since $B'$ stores the first three vertices of
$X\cap\overline L_{i+1}$ in the order $\prec_L$, equation
\eqref{eq:interval-suffix} gives the required value of $B$. 

If $v$ is selected, we must verify its final degree in $G[M\cup X]$.
By Corollary~\ref{cor:interval-summaries} with $t=1$, the last three
selected vertices among the processed vertices and the first three
promised selected vertices among the unprocessed vertices suffice to determine whether $v$ has one, two, or at least three selected neighbors in $X$.
Therefore, require
\begin{equation}\label{eq:interval-degree}
 \min\left\{
  3,\,
  |N_G(v)\cap M|
  +|N_H(v)\cap A|
  +|N_H(v)\cap B'|
 \right\}
 =\tau(v).
\end{equation}

If $v$ is skipped and $v\in U_\rho$, then $v$ cannot be covered through
the modulator and must therefore have a selected vertex of $X$ within
distance $R$ in $H$. If $R=0$, this is impossible, so we reject. For
$R\geq1$, Corollary~\ref{cor:interval-summaries} with $t=R$ implies that
it suffices to check
\begin{equation}\label{eq:interval-cover}
 \exists\,w\in A\cup B'
 \quad\text{such that}\quad
 \dist_H(v,w)\leq R.
\end{equation}
A selected vertex is covered by itself, while a skipped vertex outside
$U_\rho$ is already covered through the modulator. After these tests, set
$A'=\mathrm R_3(A\cup Z)$. 

The remaining state components are updated in the following way: $
c'(m)=c(m)+\varepsilon\,\mathbf 1_{mv\in E(G)}$ for all $m\in M$, $
q'=q+\varepsilon$, and $W' = W\cup \{s\in S_0:  \varepsilon=1\text{ and }d(s,v)=\rho_s\}$. Reject the transition if $c'(m)>t_m$ for some $m\in M$ or if $q'>D+1$. Otherwise retain $(A',B',c',q',W')$ at layer $i+1$.

\paragraph*{Acceptance.} After all vertices of $H$ have been processed, accept a state exactly when
\begin{equation}\label{eq:interval-accept}
 B=\varnothing,\qquad
 q=D+1,\qquad
 W=S_0,\qquad
 c(m)=t_m\quad \forall m\in M.
\end{equation}

The condition $B=\varnothing$ ensures that no selected vertex remains
in the future promise. The equality $q=D+1$ gives the required number
of vertices in the geodesic, while $W=S_0$ certifies that every
$s\in S_0$ attains its guessed distance $\rho_s$. Finally,
$c(m)=t_m$ for every $m\in M$ ensures that all selected vertices of
the modulator have their required degree.

\subsubsection{Correctness and running time}

The following lemma proves the correctness of the algorithm. 
\begin{lemma}\label{lem:interval-correct}
For every candidate from Corollary~\ref{cor:interval-enumeration}, the dynamic programming algorithm accept if and only if there is an $a$--$b$ geodesic $P$ of eccentricity at most $R$ such that
\[
V(P)\cap S=M
\qquad\text{and}\qquad
d(s,V(P))=\rho_s
\quad\text{for every }s\in S.
\]
\end{lemma}

\begin{proof}
For completeness, let $Y=V(P)=M\cup X$ be such a path. At each layer
$i$, take $A=\mathrm R_3(X\cap L_i)$,$B=\mathrm L_3(X\cap\overline L_i)$,
let $c$ and $q$ have their values according to the definitions above,
and let $W=\{s\in S_0: d(s,M\cup(X\cap L_i))=\rho_s\}$. 

The set $A$ has the required meaning by the definition of the update:
after processing $v$, the selected processed vertices are
$X\cap L_{i+1}=(X\cap L_i)\cup Z$, and hence $A'=\mathrm R_3(A\cup Z)
   =\mathrm R_3(X\cap L_{i+1})$. Thus the meaning of $A$ is preserved.

For $B$, the new promise $B'$ is chosen so that $B'=\mathrm L_3(X\cap\overline L_{i+1})$. Since $X\cap\overline L_i
=(X\cap\overline L_{i+1})\cup Z$, equation~\eqref{eq:interval-suffix} gives
$ B=\mathrm L_3(B'\cup Z)  =\mathrm L_3(X\cap\overline L_i)$. Thus the meaning of $B$ is preserved.

All vertices of $X$ belong to $E_\rho$: for every $s\in S$, $
d(s,x)\geq d(s,Y)=\rho_s$. The degree of every selected vertex is correctly tested by Corollary~\ref{cor:interval-summaries} with $t=1$. Since the required degree is at most two, retaining three vertices is sufficient to determine whether the degree is one or two. The contribution of vertices of $M$ is known exactly. Moreover, the counters $c(m)$ never exceed their final values $t_m$. The candidate also satisfies the preliminary checks:
$\rho_s\leq R$ because $\ecc_G(P)\leq R$,
$\{a,b\}\cap S\subseteq M$, $|M|\leq D+1$, and the remaining
inequalities follow from $Y=V(P)$ and the fact that $P$ is induced.

Consider now a skipped vertex $v\in U_\rho$. Since $\rho$ is the actual distance profile, Lemma~\ref{lem:interval-mod-separator} gives $\dist_H(v,X)\leq R$. For $R\geq1$, Corollary~\ref{cor:interval-summaries} with $t=R$ implies that a suitable witness can be found among $A\cup B'$, so \eqref{eq:interval-cover} is satisfied. If $R=0$, a vertex of $U_\rho$
cannot be skipped. A selected vertex is covered by itself, while a
vertex outside $U_\rho$ is already covered through the modulator.

At the end of the dynamic programming algorithm, $B=\varnothing$ because
$\overline L_N=\varnothing$. Since $|Y|=D+1$, we have $q=D+1$. Every
$s\in S_0$ has a selected vertex of $Y$ at distance exactly $\rho_s$,
so $W=S_0$, and the selected vertices of $M$ have exactly $t_m$
neighbors in $X$. Hence all conditions in \eqref{eq:interval-accept}
are satisfied, and the algorithm accepts.

For soundness, suppose that the dynamic programming accepts. Follow the stored predecessors and let $X$ be the set of selected vertices of $H$.
We first establish the intended meanings of the state components.

The meanings of $A$, $c$, and $q$ follow by forward induction on the
layers. The update rules directly maintain the last three selected
processed vertices, the number of selected neighbors seen so far for
each $m\in M$, and the number of selected vertices in $M\cup L_i$.

For $W$, once a vertex $s\in S_0$ has a witness $x$ with $d(s,x)=\rho_s$, that equality remains true after further vertices are selected: every subsequently selected vertex belongs to $E_\rho$ and therefore has distance at least $\rho_s$ from $s$. Thus the update of $W$ maintains exactly $W=\{s\in S_0: d(s,M\cup(X\cap L_i))=\rho_s\}$. 

It remains to justify the meaning of $B$. At the final layer,
$B_N=\varnothing=\mathrm L_3(X\cap\overline L_N)$. Suppose that $
B_{i+1}=\mathrm L_3(X\cap\overline L_{i+1})$. At layer $i$, the unprocessed selected vertices consist of $X\cap\overline L_{i+1}$ together with $v$ if $v$ is selected. Therefore, by \eqref{eq:interval-suffix},
$B_i =\mathrm L_3(B_{i+1}\cup Z)  =\mathrm L_3(X\cap\overline L_i)$.  Backward induction proves the claimed meaning of $B$ at every layer.
Consequently, every degree and coverage test uses actual selected
vertices rather than nonexistent future vertices.

Put $Y=M\cup X$. The endpoint conditions imply $a,b\in Y$, and the
final cardinality gives $|Y|=D+1$. The exact meanings of $A$ and $B$ make the set in \eqref{eq:interval-degree} precisely the relevant selected vertices before and after $v$. Hence this test certifies the true degree of every selected vertex of $X$. The final counters certify the exact degrees of the vertices in $M$. Together with the endpoint and cardinality
conditions we can conclude that $G[Y]$ is an $a$--$b$ geodesic.

It remains to show that the guessed distance profile $\rho$ is the actual distances. The preliminary inequalities give $d(s,Y)\geq\rho_s\qquad(s\in S)$. For every $s\in S_0$, the condition $W=S_0$ gives a vertex of $Y$ at
distance exactly $\rho_s$ from $s$. For every $m\in M$, we have
$\rho_m=0$ and $m\in Y$ itself. Thus $d(s,Y)=\rho_s$ for all $s\in S$.

Finally, consider any vertex $v\in V(H)$. If $v\notin U_\rho$, then
there is some $s\in S$ with $d(v,s)+\rho_s\leq R$, and the exactness of $\rho$ together with Lemma~\ref{lem:interval-mod-separator} shows that $v$ is within distance $R$ of $Y$. If $v\in U_\rho$, then either $v$ is selected, or \eqref{eq:interval-cover} provides a selected vertex of $X$ within distance $R$ of $v$. Thus every vertex of $H$ is within distance $R$ of $Y$. Every vertex of $S$ is also within distance $R$ of $Y$ because $\rho_s\leq R$. Therefore $\ecc_G(Y)\leq R$. Hence $Y$ is an $a$--$b$ geodesic of eccentricity at most $R$.
\end{proof}

We now give the running time of the algorithm. For a fixed candidate $(M,\rho)$ with $m=|M|$, each of the summaries $A$ and $B$ has $O(n^3)$ possible values, since it contains at most three vertices. For each $m\in M$, the counter $c(m)$ has at most three possible values, giving at most $3^m$ counter vectors. There are $O(n)$ possible values of $q$, and there are $2^{k-m}$ possible witness sets $W\subseteq S_0$. Hence each layer contains at most $3^m2^{k-m}n^{O(1)}$ states. From each state, there are two choices for whether the current vertex is selected and $O(n^3)$ choices for the new promise $B'$. All tests and updates take polynomial time. Since there are $N\leq n$ layers, the total time for the sweep for one fixed candidate is
$3^m2^{k-m}n^{O(1)}$.

For a prescribed zero set $M$ of size $m$, Corollary~\ref{cor:interval-enumeration} gives at most $5^{k-m}$ candidates. Therefore, summing over all choices of $M$ gives
\begin{equation}\label{eq:thirteen}
 \sum_{M\subseteq S}
  5^{k-|M|}3^{|M|}2^{k-|M|}
 =
 \sum_{m=0}^k
  \binom{k}{m}3^m10^{k-m}
 =13^k.
\end{equation}
The $n^2$ choices of the endpoint pair $(a,b)$ and all polynomial-time preprocessing are absorbed into the $n^{O(1)}$ factor. Since the running time for finding the interval vertex deletion set $S$ is $10^k n^{O(1)}$, we can conclude that the algorithm runs in $13^k n^{O(1)}$ time. 

\subsection{Bandwidth}

Recall that $\bw(G)$ is the minimum, over all vertex orderings, of the
maximum distance between the positions of adjacent vertices. We use the
bandwidth, treewidth, and path-decomposition conventions from
Subsection~\ref{sec:graph-parameters}.

\begin{theorem}\label{thm:bw-fpt}
\ESP{} can be solved in $2^{O(b^2\log(b+1))}n^{O(1)}$ time, where
$b=\bw(G)$. The algorithm returns a witnessing path and does not require
a bandwidth ordering. The optimization version has the same running-time
bound.
\end{theorem}

The known algorithm parameterized by treewidth and $R$
\cite{KuceraSuchy2023} does not suffice here, since the optimum eccentricity
can be unbounded even on cycles. We instead bound the number of attainable
distance values. Throughout, put
\[
 L=1+\lfloor\log_2 n\rfloor,\qquad
 M_0=1,\qquad M_b=(40bL)^b\quad(b\ge1).
\]

\subsubsection{Separated vertices in a rooted tree}
Let $T$ be a tree rooted at a vertex $r$. The \emph{depth} of a vertex
$x\in V(T)$ is $h_T(x)=d_T(r,x)$, the number of edges on the unique
$r$--$x$ path in $T$. In particular, $h_T(r)=0$. Call
$A\subseteq V(T)$ \emph{separated} if
\begin{equation}\label{eq:separated}
 d_T(x,y)\ge2|h_T(x)-h_T(y)|
 \qquad(x,y\in A).
\end{equation}
The following argument adapts the spine-removal and dyadic-grouping
arguments of Eppstein, Goodrich, and Liu
~\cite[Lemmas~12 and~13]{EppsteinGoodrichLiu2025}.

\begin{lemma}\label{lem:tree}
If a rooted tree $T$ has at most $n$ vertices and bandwidth at most $b$,
then every separated set has size at most $M_b$.
\end{lemma}
\begin{proof}
Let $A\subseteq V(T)$ be an arbitrary separated set. We prove that
$|A|\le M_b$ by induction on $b$. If $b=0$, then $T$ has exactly
one vertex, so $|A|\le 1=M_0$. Assume now that $b\ge1$.
If $A=\varnothing$, the claim is immediate, so assume that
$A\ne\varnothing$.

No member of $A$ is a proper ancestor of another, since their distance
would equal their positive depth difference, contradicting
\eqref{eq:separated}. Prune $T$ to the union of the paths from its root
to vertices of $A$, retaining the root and the notation $T$.
This preserves the depths and pairwise distances of vertices of $A$,
does not increase bandwidth, and makes $A$ exactly the set of vertices
with no children.

Fix an ordering $\pi$ of bandwidth at most $b$. Choose $a_-,a_+\in A$
whose root paths contain its first and last vertices, respectively;
such choices exist by the pruning. Let $S$ be the union of these paths.
Then $S$ is connected, contains both
ordering extremes, and contains at most two members of $A$.

Deleting $S$ from the ordering and closing the gaps reduces bandwidth to
at most $b-1$. Indeed, every edge $xy$ outside $S$ whose span is $b$ has a
vertex of $S$ strictly between its endpoints: otherwise the path in $S$
between the ordering extremes would cross this gap using an edge of span
greater than $b$. Deleting that intervening vertex reduces the span of
$xy$; edges of smaller span already satisfy the bound.

Each component $K$ of $T-S$ is joined to $S$ by exactly one edge:
at least one exists because $T$ is connected, and two would create
a cycle since both $K$ and $S$ are connected. Let $u\in S$ be the
endpoint of this edge. Different components may have the same
attachment vertex $u$.
Assign $K$ to one of the two root paths containing $u$, with terminal
vertex $a\in\{a_-,a_+\}$. Choose $x\in A\cap V(K)$, which exists by
the pruning. The lowest common ancestor of $a,x$ is $u$, so
\eqref{eq:separated} gives
\[
 d_T(u,a)+d_T(u,x)\ge2|d_T(u,a)-d_T(u,x)|,
 \qquad d_T(u,a)\le3d_T(u,x)\le3|V(K)|.
\]
Consequently, $d_T(a,y)\le4|V(K)|$ for every $y\in V(K)$.

For each terminal $a$, group its assigned components by
$2^j\le|V(K)|<2^{j+1}$, for $j=0,\ldots,\lfloor\log_2 n\rfloor$.
Set $\ell=2^j$ in one such group. Every vertex
of these components is within distance $8\ell$ of $a$, so its position
differs from $\pi(a)$ by at most $8b\ell$. Their union therefore has at
most $16b\ell+1\le17b\ell$ vertices. Since each component has at least
$\ell$ vertices, there are at most $17b$ components in the group.
There are at most $L$ groups for each terminal, giving at most $34bL$
components in total.

Root each component at its vertex adjacent to $S$. Its vertices of $A$
remain separated, since their distances and depth differences are
unchanged. Each component has bandwidth at most $b-1$, so induction gives
\[
 |A|\le2+34bL M_{b-1}
 \le2+34bL(40bL)^{b-1}
 \le(40bL)^b=M_b.
\]
\end{proof}

\subsubsection{Bounding the distance spectra}
Fix distinct vertices $s,t$, put $D=d(s,t)$ and $a(x)=d(s,x)$, and let
\[
 I=\{x:d(s,x)+d(x,t)=D\}.
\]
On vertex set $I$, form the directed graph $H$ with an arc $x\to y$
whenever $xy\in E(G)$ and $a(y)=a(x)+1$. Its directed $s$--$t$ paths
are exactly the shortest $s$--$t$ paths of $G$. For each vertex $v$, define
its \emph{distance spectrum}
\[
 \Spec_{s,t}(v)=
 \{d(v,V(P)):P\text{ is a shortest }s\text{--}t\text{ path}\}.
\]

\begin{lemma}\label{lem:spectrum}
Once graph distances are available, $\Spec_{s,t}(v)$ can be computed in
$O(n+m)$ time. If $\bw(G)\le b$, then $|\Spec_{s,t}(v)|\le M_b^3$.
\end{lemma}
\begin{proof}
Put $w(x)=d(v,x)$. For $x\in I$, let $A(x)$ be the maximum possible
minimum $w$-value on a directed $s$--$x$ path in $H$; define $B(x)$
analogously for directed $x$--$t$ paths. These values are computed in
forward and reverse layer order by
\begin{align*}
 A(s)&=w(s),& A(x)&=\min\{w(x),\max_{y\to x}A(y)\}\quad(x\ne s),\\
 B(t)&=w(t),& B(x)&=\min\{w(x),\max_{x\to y}B(y)\}\quad(x\ne t).
\end{align*}
We have
\begin{equation}\label{eq:spectrum}
 \Spec_{s,t}(v)=\{w(x):x\in I,\ A(x)=B(x)=w(x)\}.
\end{equation}
Indeed, a minimum-weight vertex $x$ of a shortest path satisfies these
equalities. Conversely, witnessing prefix and suffix paths concatenate
to a path with minimum weight $w(x)$: their strictly increasing layers
ensure that they meet only at $x$. This proves the computation claim.

List the spectrum as $r_1<\cdots<r_q$ and, using \eqref{eq:spectrum},
choose $x_i$ with $w(x_i)=A(x_i)=B(x_i)=r_i$. Write $a_i=a(x_i)$.
For each $x\in I\setminus\{s\}$, choose a predecessor attaining the maximum in the
recurrence for $A$. The chosen edges form a tree $T_s$ rooted at $s$,
with depths $a(x)$, whose root-to-$x$ path attains $A(x)$.
The analogous successor choices give a tree $T_t$ rooted at $t$, with
depths $D-a(x)$, whose root-to-$x$ path attains $B(x)$.
Finally, let $T_v$ be a breadth-first-search tree of $G$ rooted at $v$.
All three trees have bandwidth at most $b$, being subgraphs of $G$.

For $i<j$, put $\Delta r=r_j-r_i>0$ and $\Delta a=a_j-a_i$.
If $z$ is the lowest common ancestor of $x_i,x_j$ in $T_s$, then
$w(z)\ge r_j$. Since distances from $v$ change by at most one along
an edge, $d_{T_s}(z,x_i)\ge\Delta r$. Using the depths in $T_s$,
and arguing symmetrically in $T_t$, gives
\begin{align}
 d_{T_s}(x_i,x_j)&\ge\Delta a+2\Delta r,
 &d_{T_t}(x_i,x_j)&\ge-\Delta a+2\Delta r.\label{eq:trees}
\end{align}
Also, the depth of $x_i$ in $T_v$ is $r_i$, and
\begin{equation}\label{eq:Tv}
 d_{T_v}(x_i,x_j)\ge d(x_i,x_j)\ge|\Delta a|.
\end{equation}

Call indices $i_1<\cdots<i_k$ an \emph{increasing chain} if each
consecutive pair satisfies $0\le\Delta a\le2\Delta r$, and a
\emph{decreasing chain} if each satisfies $-2\Delta r\le\Delta a\le0$.
Summation extends these inequalities to every pair in the chain.
By \eqref{eq:trees}, increasing chains give separated sets in $T_s$,
and decreasing chains give separated sets in $T_t$. Thus each has length
at most $M_b$ by Lemma~\ref{lem:tree}.

Let $p_i^+$ and $p_i^-$ be the maximum lengths of increasing and decreasing
chains ending at $i$. Both lie in $\{1,\ldots,M_b\}$. If $i<j$ and
$|a_j-a_i|\le2(r_j-r_i)$, appending $j$ to a suitable chain ending at
$i$ increases at least one rank. Hence equal rank pairs imply
\[
 (p_i^+,p_i^-)=(p_j^+,p_j^-)
 \quad\Longrightarrow\quad |a_j-a_i|>2(r_j-r_i).
\]
By \eqref{eq:Tv}, each class with the same rank pair is separated in
$T_v$, so it contains at most $M_b$ representatives. There are at most
$M_b^2$ classes, proving $q\le M_b^3$.
\end{proof}

\subsubsection{The algorithm}
First test all single-vertex paths. For each pair of distinct endpoints
$s,t$, construct $H$ and the spectra above, and allow the labels
\[
 \Lambda(v)=\Spec_{s,t}(v)\cap\{0,\ldots,R\}.
\]
If any domain is empty, reject this pair. Otherwise put
$q=\max_v|\Lambda(v)|$.

\begin{lemma}\label{lem:certificate}
A shortest $s$--$t$ path of eccentricity at most $R$ exists if and only if
there are labels $\lambda(v)\in\Lambda(v)$ such that, writing
$Z=\{v:\lambda(v)=0\}$,
\begin{enumerate}
\item $s,t\in Z$ and $|Z|\le D+1$;
\item every vertex of $Z\setminus\{s\}$ has a predecessor in $Z$ in $H$;
\item every vertex $v\in V(G)\setminus Z$ has a neighbor $u$ in $G$
such that $\lambda(u)=\lambda(v)-1$.
\end{enumerate}
\end{lemma}
\begin{proof}
For a witnessing path $P$, take $\lambda(v)=d(v,V(P))$. Then $Z=V(P)$,
and all three conditions hold.

Conversely, follow zero-labeled predecessors from $t$. Each step decreases
the distance from $s$ by one, yielding a directed $s$--$t$ path $P$ in
$H$ with $D+1$ vertices. Since $V(P)\subseteq Z$ and $|Z|\le D+1$,
we have $Z=V(P)$. From any positive label, repeatedly following a neighbor
of label one smaller reaches $Z$ after exactly the starting label many
steps. Thus $P$ is shortest and has eccentricity at most $R$.
\end{proof}

\begin{lemma}\label{lem:DP}
Given a tree decomposition of width $w$, the certificate in
Lemma~\ref{lem:certificate} can be found, or its existence ruled out, in
$2^{O(w)}q^{O(w+1)}n^{O(1)}$ time.
\end{lemma}
\begin{proof}
Use a rooted nice tree decomposition of width $w$ with empty root and
leaves, processed bottom up. At node $x$, let $B_x$ be its bag,
$U_x$ the union of bags in its subtree, and $F_x=U_x\setminus B_x$.
All labels belong to their domains, and $s,t$ receive label zero
whenever present. For a labeling $\lambda$, a \emph{support} for
$v\ne s$ is a zero-labeled predecessor in $H$ when $\lambda(v)=0$,
and a neighbor in $G$ of label $\lambda(v)-1$ otherwise.

A state at $x$ consists of a label $\ell(v)\in\Lambda(v)$ for each
$v\in B_x$ and a Boolean flag $\sigma(v)$ for each
$v\in B_x\setminus\{s\}$. A labeling $\lambda$ of $U_x$ realizes
the state if it obeys the label restrictions, agrees with $\ell$
on $B_x$, and, for every $v\in B_x\setminus\{s\}$,
$\sigma(v)=1$ exactly when $v$ has a support in $F_x$ under $\lambda$.
A false flag therefore allows a support in the current bag or outside
$U_x$. The state's value is
\[
 \min_{\lambda}\bigl|\{u\in F_x:\lambda(u)=0\}\bigr|,
\]
over all realizing labelings for which every vertex of
$F_x\setminus\{s\}$ has a support in $U_x$.
Thus forgotten vertices other than $s$ must already have support,
while requirements for bag vertices may be deferred.
Bag vertices do not contribute to the value until they are forgotten.
If no such labeling exists, the value is $+\infty$.

The empty-leaf state has value zero. At every node, retain the minimum
value over transitions producing the same state.

At an introduce node $x$ with child $y$, let $B_x=B_y\cup\{v\}$.
Extend each child state by an allowed label for $v$, preserving its
other labels, flags, and value. If $v\ne s$, initialize its flag to
false. This is correct because $F_x=F_y$ and $v$ has no neighbor in
$F_y$, by the decomposition axioms.

At a forget node $x$ with child $y$, let $B_x=B_y\setminus\{v\}$.
Here $U_x=U_y$ and $F_x=F_y\cup\{v\}$, so $v$ is the only newly
forgotten vertex. Consider a child state $(\ell,\sigma)$ with finite
value $c$. If $v\ne s$, its support requirement must now be satisfied:
$\sigma(v)=1$ certifies a support in $F_y$; if $\sigma(v)=0$, we
require a support in $B_x$ under $\ell$. If neither exists, discard
the transition. These checks are exhaustive: since $v$ occurs in
no bag outside the subtree rooted at $y$, every neighbor of $v$
lies in $F_y\cup B_x$.
For each remaining vertex $u\in B_x\setminus\{s\}$, set its new flag
to true exactly when $\sigma(u)=1$ or $v$ supports $u$ under $\ell$.
The first possibility accounts for supports in $F_y$, and the
second accounts for the newly forgotten vertex $v$; together they
record precisely whether $u$ has support in $F_x$.
The parent labels are $\ell|_{B_x}$, and the candidate value is
$c+1$ if $\ell(v)=0$, and $c$ otherwise, because only $v$ is newly
counted. When $v=s$, omit its support check, but still update the
remaining flags and add one for its zero label.

At a join node $x$ with children $y,z$, combine child states with
identical bag labels, OR corresponding flags, and add their values.
The sets $F_y,F_z$ are disjoint, have no edges between them, and
their union is $F_x$. Hence their realizing labelings combine
consistently, and their zero counts add without duplication.

The decomposition axioms also ensure that vertices of $F_x$ have
no neighbors outside $U_x$. Consequently, these transitions preserve
valid realizations, and every valid realization restricts to child
states included in the transitions. Induction therefore proves
that the stored minima are exact.

At the empty root, the value is the minimum $|Z|$ under the local
conditions of Lemma~\ref{lem:certificate}; accept if it is at most
$D+1$. There are at most $(2q)^{w+1}$ states per bag.
Pairwise enumeration at joins gives the claimed running time.
Backpointers recover the labels and hence a witnessing path.
\end{proof}

\begin{proof}[Proof of Theorem~\ref{thm:bw-fpt}]
The case $n=1$ is immediate, as is $R\ge n-1$, so assume $b\ge1$ and
$R<n-1$. Precompute all graph distances. Lemma~\ref{lem:spectrum}
computes all spectra over all endpoint pairs in $O(n^3(n+m))$ time.

A bandwidth-$b$ ordering gives a path decomposition of width at most $b$
using consecutive windows of $b+1$ vertices, so $\tw(G)\le b$.
To obtain a decomposition without the ordering, use Korhonen's algorithm
\cite{Korhonen2021}: on input $G,k$, it takes $2^{O(k)}n$ time and either
returns a decomposition of width at most $2k+1$ or certifies
$\tw(G)>k$. Trying $k=1,2,\ldots$ until a decomposition is returned
therefore gives width at most $2b+1$ in $2^{O(b)}n$ time.

For each pair of endpoints, Lemma~\ref{lem:spectrum} gives
$q\le M_b^3$. The computed decomposition has width $w\le2b+1$,
so $w+1=O(b)$, since $b\ge1$.
Lemma~\ref{lem:DP} therefore gives running time
$2^{O(b)}(M_b^3)^{O(b)}n^{O(1)}$ for this pair.
The $O(n^2)$ endpoint choices and the polynomial-time distance and
spectrum computations are absorbed into the factor $n^{O(1)}$.
Substituting $M_b=(40bL)^b$, the total time is consequently
\begin{align*}
 2^{O(b)}(M_b^3)^{O(b)}n^{O(1)}
 &=2^{O(b)}(40bL)^{O(b^2)}n^{O(1)}\\
 &=2^{O(b^2\log(b+1))}L^{O(b^2)}n^{O(1)}.
\end{align*}
Here the last equality separates the factors involving $b$ and $L$,
using $(40b)^{O(b^2)}=2^{O(b^2\log(b+1))}$.

We next absorb the remaining power of $L$ into a factor depending
only on $b$ and one additional factor of $n$.
For an integer $K\ge1$, consider
$f_K(x)=(1+x)^Ke^{-x}$ for $x\ge0$. Its derivative is
\[
 f_K'(x)=(1+x)^{K-1}e^{-x}(K-1-x),
\]
so its maximum is attained at $x=K-1$ and equals
$K^Ke^{1-K}$. Substituting $x=\ln n$ gives
\[
 \frac{(1+\ln n)^K}{n}
 =f_K(\ln n)
 \le K^Ke^{1-K}
 \le eK^K.
\]
Since $L=1+\lfloor\log_2 n\rfloor\le2(1+\ln n)$, we obtain
\[
 L^K
 \le 2^K(1+\ln n)^K
 \le e(2K)^K n
 =2^{O(K\log(K+1))}n.
\]

Choose an absolute constant $C\ge1$ large enough that the exponent
hidden in $L^{O(b^2)}$ is at most $Cb^2$, and put
$K=\lceil Cb^2\rceil$. Since $L\ge1$, the remaining factor is
bounded by $L^K$. Moreover,
$K\log(K+1)=O(b^2\log(b+1))$, and hence
\[
 L^{O(b^2)}
 \le 2^{O(b^2\log(b+1))}n.
\]
Substituting this estimate into the running time above proves
the bound $2^{O(b^2\log(b+1))}n^{O(1)}$.
In particular, the exponent of $n$ is independent of $b$.

Backtracking from an accepting dynamic-programming state returns
a witnessing path. Feasibility is monotone in $R$, so binary search
over $R\in\{0,\ldots,n-1\}$ finds the optimum using $O(\log n)$
decision calls. Running the algorithm at the optimum threshold
and backtracking then returns an optimal path.
These additional calls contribute only a polynomial factor in $n$,
so the optimization version has the same running-time bound.
\end{proof}

\subsection{Treedepth}

Recall that $\td(G)$ is the minimum height of a rooted forest whose
closure contains $G$, with height counted in vertices. The notation
$\cl(F)$, $\tw(G)$, and $\diam(G)$ is as in
Subsection~\ref{sec:graph-parameters}.

We use the following known result directly.

\begin{proposition}[Ku\v{c}era and Such\'y~\cite{KuceraSuchy2023},
Corollary~26]\label{prop:tw-r}
For some computable function $A$, \ESP\ on an $n$-vertex graph $G$
with desired eccentricity $r$ can be solved in time
$A(\tw(G),r)n^3$. If a solution exists, a witnessing shortest path
can be found within the same bound.
\end{proposition}

Allowing singleton paths does not affect this bound: their
eccentricities can all be computed in $O(n^3)$ time.
By enlarging $A$ if necessary, we may assume that it is nondecreasing
in both arguments.

\begin{lemma}\label{lem:td-bounds}
If $d=\td(G)$, then
\[
  \tw(G)\leq d-1
  \qquad\text{and}\qquad
  \diam(G)\leq 2^d-2.
\]
\end{lemma}

\begin{proof}
Let $F$ be a treedepth decomposition of height $d$. Since $G$ is
connected, $F$ is a rooted tree. Use $F$ as the decomposition tree,
with each bag $B_x$ consisting of $x$ and its ancestors. Every vertex
belongs to its own bag. For every edge of $G$, its endpoints are
comparable in the ancestor order and hence belong to the bag indexed
by the lower endpoint. The bags containing a fixed vertex $v$ are
exactly those indexed by $v$ and its descendants, so they form a
connected subtree. Thus these bags form a tree decomposition, and
$|B_x|\leq d$ gives $\tw(G)\leq d-1$.

For the second bound, we prove by induction on $h\geq1$ that the
closure of any rooted tree of height at most $h$ contains no path
with more than $2^h-1$ vertices. The case $h=1$ is immediate.
For $h\geq2$, deleting the root from a path leaves at most two path
segments. Each segment lies in a single child subtree, since there
are no edges between different child subtrees. By induction, each
segment has at most $2^{h-1}-1$ vertices. The original path therefore
has at most
\[
  1+2(2^{h-1}-1)=2^h-1
\]
vertices. This also covers a path avoiding the root, which lies in
one child subtree. Every path in $G$ is a path in $\cl(F)$, so every
shortest path has at most $2^d-2$ edges. Hence
$\diam(G)\leq2^d-2$.
\end{proof}

\begin{theorem}\label{thm:td-fpt}
For some computable function $F$, \ESP\ can be solved in time
$F(\td(G))n^3$. A shortest path of minimum eccentricity can also be
found within such a running time. No treedepth decomposition is
required as part of the input.
\end{theorem}

\begin{proof}
Given $(G,r)$, compute $\Delta:=\diam(G)$ by breadth-first search
from every vertex, in $O(n^3)$ time. If $r\geq\Delta$, return any
singleton path: it is shortest and has eccentricity at most
$\Delta\leq r$.

Otherwise, let $d:=\td(G)$. Lemma~\ref{lem:td-bounds} gives
\[
  \tw(G)\leq d-1,
  \qquad
  0\leq r<\Delta\leq2^d-2.
\]
Apply Proposition~\ref{prop:tw-r}. Its running time is at most
$A(d-1,2^d-2)n^3$, which has the required form. The algorithm uses
neither $d$ nor a treedepth decomposition; these are needed only for
the running-time analysis.

For optimization, apply the decision algorithm for
$r=0,1,\ldots,\Delta$ and return the witness from the first
successful call. The first such value is the minimum eccentricity
of a shortest path, and a successful call exists because
$r=\Delta$ always suffices. There are at most
$\Delta+1\leq2^d-1$ calls, so their total running time is still
$F(d)n^3$ for a computable function $F$.
\end{proof}

\section{XP algorithms}
\label{sec:xp-algorithms}

\subsection{Diameter}

Recall that $\diam(G)$ is the maximum distance between two vertices of
the connected graph $G$, and $r(G)$ is its optimum ESP eccentricity; see
Subsection~\ref{sec:graph-parameters}.

\begin{lemma}
\label{lem:rleqdiameter}
Let $G$ be connected and let $r(G)$ be the optimum ESP eccentricity.  Then
$
 r(G)\leq \diam(G).
$
\end{lemma}

\begin{proof}
Choose an arbitrary vertex $v\in V(G)$. By definition,
\[
 \ecc_G(\{v\})=\max_{x\in V(G)}\dist_G(v,x).
\]
Every pairwise distance in $G$, including $\dist_G(v,x)$, is at most
$\diam(G)$.  Taking the maximum over $x$ gives $\ecc_G(\{v\})
 \leq \diam(G)$. The one-vertex path $P=(v)$ is a
shortest path of length zero.  By optimality of $r(G)$, we have
$
 r(G)
 \leq \ecc_G(P)
 =\ecc_G(\{v\})
 \leq \diam(G).
$
\end{proof}

\begin{proposition}[Corollary~12, \cite{DraganLeitert2017}]
\label{cor:DL12}
If the optimum eccentricity of a shortest path in $G$ is $r$, then the optimization version of ESP can be solved in
$O(n^{2r+2}m)$ time, even if $r$ is not known in advance.
\end{proposition}

We can now state the main result.

\begin{theorem}
\label{thm:maindiameter}
ESP is in $\mathsf{XP}$ when parameterized by
the maximum diameter $d$ of a connected component.
\end{theorem}

\begin{proof} 
By Proposition~\ref{cor:DL12} ESP is in $\mathsf{XP}$ parameterized by $r(G)$ and by Lemma~\ref{lem:rleqdiameter} $r(G) \leq \diam(G)$, hence ESP is in $\mathsf{XP}$ parameterized by diameter.
\end{proof}

\subsection{Distance to chordal graphs}
\label{sec:mesp-chordal}

Recall that the distance to chordal graphs is the minimum size of a set
$S$ such that $G-S$ is chordal; see Subsection~\ref{sec:graph-parameters}.

The \ESP{} problem is polynomial-time solvable on chordal graphs~\cite{DraganLeitert2016Structured}. On the other hand, Bhyravarapu et al.\
showed that the problem is $W[2]$-hard when parameterized by deletion distance to chordal graphs~\cite{BhyravarapuEtAl2023Full}. We complement this hardness result by showing that \ESP{} is in
XP parameterized by the distance to chordal graphs.

\subsubsection{Clique bags and distance profiles}

We have a graph $G$ with $|V(G)|=n$ and a set $S \subseteq V(G), |S| \leq k$ such that $H=G-S$ is chordal. If $S$ is not given, we enumerate all subsets of $V(G)$ of size at most $k$ and test whether the resulting graph is chordal. This takes $n^{O(k)}$ time. Hence, for the XP running time, we may assume that such a set $S$ is given. We have the following lemma.

\begin{lemma}\label{lem:chordal-decomposition}
Let $G$ be a graph on $n$ vertices, let $S\subseteq V(G)$, and let
$H=G-S$ be chordal. From $G$ and $S$ one can construct in polynomial time a rooted nice tree
decomposition $(T,\mathcal B)$ of $G$ with the following properties.
\begin{enumerate}
 \item The root and every leaf have bag $S$.
 \item Every bag has the form   $B_t=S\mathbin{\dot\cup} C_t$, where $C_t$ is a clique of $H$.
 \item Only vertices of $H$ are introduced or forgotten.
 \item The decomposition has $O(n^2)$ nodes.
\end{enumerate}
\end{lemma}
\begin{proof}
Since $H$ is a chordal graph, a tree decomposition of $H$ whose bags are cliques can be constructed in polynomial time and has
$O(|V(H)|)$ bags ~\cite{Gavril1974, BlairPeyton1993}. Add $S$ to every bag. This gives a tree decomposition of $G$ in which every bag is $S$ together with a clique of $H$.

Apply the standard conversion to a nice tree decomposition
~\cite{Kloks1994}. Along an original tree edge with endpoint bags
$S\dot\cup C$ and $S\dot\cup C'$, first forget the vertices of
$C\setminus C'$ one at a time and then introduce the vertices of
$C'\setminus C$ one at a time. Every intermediate non $S$ part in the bag is a subset of $C$ or of $C'$, and hence remains a clique. Duplicate bags to create join nodes and attach the usual chains ending in the bag $S$ at the root and the leaves. Thus only vertices of $H$ are introduced or forgotten. There are $O(|V(H)|)$ original bags and at most $O(|V(G)|)$ new nodes per original tree edge, so the resulting decomposition has $O(|V(G)|^2)$ nodes.
\end{proof}

Our algorithm is a dynamic programming algorithm on the above nice tree decomposition of $G$. The decomposition is useful only if distances from a potentially large clique to a candidate geodesic can be summarized compactly. The next lemma provides exactly such a summary: one vertex of the geodesic
simultaneously realizes the minimum distance from every vertex of the
clique. 

\begin{lemma}\label{lem:chordal-projection}
Let $J$ be a chordal graph, $C \subset V(J) $ be a clique, and $Q$ be a geodesic in $J$ disjoint from $C$. There exists a vertex $q\in V(Q)$ such that 
 $\forall c\in C, \operatorname{dist}_J(c,q)=\operatorname{dist}_J(c,V(Q))$.
\end{lemma}
\begin{proof}
We will prove by induction on $|V(J)|$. The statement is immediate when $|V(J)|=1$ , $|V(Q)|=1$, or if $J$ is a clique. Since $C \subset V(J)$, there is a simplicial
vertex $v\notin C$: this follows from Dirac's theorem that a noncomplete chordal graph has two nonadjacent simplicial vertices~\cite{Dirac1961}. 

Since $v$ is simplicial, no shortest path between two other vertices can
use $v$ as an internal vertex: if it did, its two neighbors on the path
would be adjacent, yielding a shorter path. Thus, deleting $v$ does not change the distances between any two vertices of $V(J)\setminus\{v\}$. If $v \notin V(Q)$, the statement follows by applying induction on the graph $J-v$. Since distances between vertices of $Q$ are unchanged, $Q$ remains a geodesic in $J-v$. As $v$ is simplicial, it also cannot be an internal vertex of $Q$. Thus $v$ is an endpoint of $Q$. Let $u$ be unique neighbor of $v$ in $Q$. 

Suppose that $v$ is a closest vertex of $Q$ to some $c\in C$. Look at the shortest path from $c$ to $Q$ ending at $v$. Let $x$ be the penultimate vertex of this shortest path. If $x=u$, then we have a shorter path from $c$ to $Q$ ending at $u$. Thus $x \neq u$. Then since $v$ is simplicial and $x, u \in N(v)$, $xu \in E(J)$. Then we can replace $xv$ by $xu$ giving a $c$-$u$ path of the same length as the $c$-$v$ path. Hence $u$ is also a closest vertex of $Q$ to $c$. Thus, $\forall c\in C, \operatorname{dist}_{J}(c,V(Q-v)) = \operatorname{dist}_{J}(c,V(Q))$.

Now $Q-v$ is a geodesic in $J-v$: it is a subpath of the geodesic $Q$, and the distances between its endpoints are unchanged after deleting $v$. Since $C\subseteq V(J-v)$ is still a clique, the induction hypothesis applied to $J-v$ and $Q-v$ gives a vertex $q\in V(Q-v)$ such that $\forall c\in C, \operatorname{dist}_{J-v}(c,q) = \operatorname{dist}_{J-v}(c,V(Q-v)) $.
Using distance preservation and the above claim, we obtain $\forall c\in C, \operatorname{dist}_{J}(c,q) = \operatorname{dist}_{J}(c,V(Q))$. This proves the lemma.

\end{proof}

We have the following lemma.
\begin{lemma}\label{lem:chordal-representatives}
Let $C$ be a clique of $H=G-S$. Suppose $X\subseteq V(H)\setminus C$ is the union of at most $h$ geodesics of $H$.
Then there is a set $W\subseteq X$ with $|W|\leq h+k$ such that
$d(b,W)=d(b,X)$ for every $b\in S\cup C$.
\end{lemma}
\begin{proof}
If $X=\varnothing$, take $W=\varnothing$. Otherwise, for each of
the at most $h$ geodesics $Q$, clique $C$ and chordal graph $H$, apply Lemma~\ref{lem:chordal-projection} to obtain vertices $q \in Q$ and add it to $W$. For every $s\in S$, choose a vertex
$x_s\in X$ minimizing $d(s,x_s)$, and add it to $W$.
Then $|W|\leq h+k$, and $d(s,W)=d(s,X)$ for every $s\in S$.

Fix $c\in C$. If $X$ is unreachable from $c$, then both $d(c,X)$ and
$d(c,W)$ are infinite. Otherwise, take a shortest $c$--$X$ path. If it
avoids $S$, it lies in $H$ and ends in one of the geodesic pieces of $X$;
the vertex obtained from applying Lemma~\ref{lem:chordal-projection} for that piece is no farther from $c$. If the
path meets a vertex $s\in S$, then $ d(c,x_s) \leq d(c,s)+d(s,X) \leq d(c,X)$.
In either case $d(c,W)\leq d(c,X)$. The reverse inequality follows from
$W\subseteq X$, and hence the equality holds for every $c\in C$.
\end{proof}

Let us fix the decomposition from Lemma~\ref{lem:chordal-decomposition}. For a node $t$, let $V_t$ be the union of all bags in the subtree rooted at $t$, and let $I_t=V_t\setminus B_t$ and $O_t=V(G)\setminus V_t$. The separator property of tree decompositions implies that there is no edge
between $I_t$ and $O_t$.

Let $P$ be a geodesic of $G$. Since $P$ is induced and $C_t$ is a clique, $P\cap C_t$ consists of at most two consecutive vertices. Delete this block, together with the at most $k$ vertices of $P\cap S$. At most $k+2$ subpaths remain. Each lies entirely in $I_t$ or $O_t$, because there is no edge between the two sets. Moreover, every such subpath is a geodesic of $H$: it is a geodesic subpath of $P$, and a shorter route in $H$ would also be a shorter route in $G$. Therefore Lemma~\ref{lem:chordal-representatives} applies to each of
$V(P)\cap I_t$ and $V(P)\cap O_t$ with $h=k+2$.

We use the convention $d(v,\emptyset)=\infty$. For $Z\in\{I_t,O_t\}$, enumerate the following family of vectors on $B_t$:
\begin{equation}\label{eq:universe}
 \mathcal P_t(Z)=
 \left\{
   \bigl(d(b,W)\bigr)_{b\in B_t}:
   W\subseteq Z,\ |W|\leq 2k+2
 \right\}.
\end{equation}
We call the vectors $ \bigl(d(b,W)\bigr)_{b\in B_t}$ as {\em distance profiles}. Include the all-$\infty$ vector, obtained from $W=\varnothing$, and discard duplicate vectors. Each family has at most $(n+1)^{2k+2}$ members and can be enumerated from one all-pairs distance matrix of $G$. Let $\mathcal P_t^-:=\mathcal P_t(I_t)$ and $ \mathcal P_t^+:=\mathcal P_t(O_t)$.

For a geodesic $P$ and a side $Z\in\{I_t,O_t\}$, Lemma~\ref{lem:chordal-representatives} provide a set $W\subseteq V(P)\cap Z$ of size at most $2k+2$ such that for all $b \in B_t, d(b,W)=d\bigl(b,V(P)\cap Z\bigr)$. Thus the vector recording the distances from the bag vertices to
$V(P)\cap I_t$ belongs to $\mathcal P_t^-$, and the corresponding vector
for $V(P)\cap O_t$ belongs to $\mathcal P_t^+$. Explicitly, $ \bigl(d(b,V(P)\cap I_t)\bigr)_{b\in B_t}\in\mathcal P_t^-$, $\bigl(d(b,V(P)\cap O_t)\bigr)_{b\in B_t}\in\mathcal P_t^+$.

The dynamic program will also need to transfer a distance vector across a
separator. For a vector $p$ indexed by a set $K$, define its extension by
\begin{equation}\label{eq:transfer}
 (\mathcal T_Kp)(v)
 =\min_{b\in K}\bigl(d(v,b)+p(b)\bigr),
\end{equation}
Suppose that $p(b)=d(b,X)$ for every $b\in K$ and that every $v$-$X$ path meets $K$. Then we can prove that  
\begin{equation}\label{eq:separator}
 d(v,X)=(\mathcal T_Kp)(v).
\end{equation}
Indeed, concatenating a shortest $v$-$b$ path with a shortest $b$-$X$
path gives for all $b \in K, d(v,X)\leq d(v,b)+p(b)$. Conversely, a shortest $v$-$X$ path meets $K$ at some vertex $b$, and splitting the path at $b$ gives $ d(v,X)\geq d(v,b)+d(b,X) \geq(\mathcal T_Kp)(v)$.

\subsubsection{Certifying a geodesic}

The algorithm guesses distinct endpoints $a,b$, put $D=d(a,b)$. Let
$F=\{v\in V(G):d(a,v)+d(v,b)=D\}$ and $\ell(v)=d(a,v)$. The set $F$ consists exactly of the vertices that lie on some $a$-$b$ geodesic, so only vertices of $F$ need to be selected. Moreover, if $xy\in E(G)$, then $|\ell(x)-\ell(y)|\leq1$.

For a selected set $Y\subseteq F$, forbid edges of $G[Y]$ whose endpoints
belong to the same layer. Orient every remaining edge from the smaller layer to the larger layer, and define the required degree pair
\[
 \tau(v)=
 \begin{cases}
  (0,1),&v=a,\\
  (1,0),&v=b,\\
  (1,1),&v\notin\{a,b\},
 \end{cases}
\]
where the two coordinates are indegree and outdegree.

\begin{lemma}\label{lem:chordal-degrees}
A set $Y\subseteq F$ induces an $a$-$b$ geodesic if and only if it
contains $a,b$, has no same-layer edge, and every vertex $v\in Y$ has
degree pair $\tau(v)$ in the orientation above.
\end{lemma}

\begin{proof}
If $G[Y]$ is an $a$-$b$ geodesic, then it is induced, and the distance
from $a$ increases by one along every path edge. The three conditions are
therefore necessary.

Conversely, assume that $Y$ contains $a$ and $b$, that $G[Y]$ has no same-layer edge, and that every vertex $v\in Y$ has degree pair $\tau(v)$. For every edge $xy$, the triangle inequality gives $|\ell(x)-\ell(y)|\leq1$. Since same-layer edges are excluded, every oriented edge goes from layer $i$ to layer $i+1$. Thus the orientation of edges in $Y$ is acyclic.

The degree conditions say that $a$ has one outgoing edge and no incoming
edge, $b$ has one incoming edge and no outgoing edge, and every other
selected vertex has one edge in each direction. Thus every component of $G[Y]$ is a directed path. No second component can exist: every directed path has a first vertex with indegree zero and a last vertex with outdegree zero, but only $a$ and $b$ have these degree types. Hence all vertices of $Y$ form one directed path from $a$ to $b$. 
Each edge of this path is from layer $i$ to layer $i+1$. Since $a$ is in layer zero and $b$ is in layer $D$, the path has length $\ell(b)-\ell(a)=d(a,b)=D$. Thus $G[Y]$ is an $a$-$b$ geodesic.
\end{proof}

\subsubsection{The dynamic programming algorithm}
Fix an endpoint guess $(a,b)$ and a target eccentricity $R$. At a node $t$,
a state consists of
\begin{itemize}
 \item a set $A_t\subseteq B_t\cap F$ of selected bag vertices;
 \item a pair $e_t(x)\in\{0,1\}^2$ for every $x\in A_t$; and
 \item distance profiles $\alpha_t\in\mathcal P_t^-$ and
       $\omega_t\in\mathcal P_t^+$.
\end{itemize}
The two coordinates of $e_t(x)$ are the current indegree and outdegree of
$x$. For a selected solution set $Y$, the intended meaning of the state is
\begin{equation}\label{eq:meaning}
 \begin{split}
 A_t&=Y\cap B_t,\\
 e_t(x)&=\text{the degree pair of $x$ in the oriented graph }
          G[Y\cap V_t] \qquad \forall x\in A_t,\\
 \alpha_t(x)&=d(x,Y\cap I_t) \qquad \forall x\in B_t,\\
 \omega_t(x)&=d(x,Y\cap O_t) \qquad \forall x\in B_t.
 \end{split}
\end{equation}

We keep only states satisfying the following conditions:
\[
 |A_t\cap C_t|\leq2,
 \qquad
 \{a,b\}\cap B_t\subseteq A_t,
 \qquad
 e_t(x)\leq\tau(x)\quad \forall x\in A_t,
\]
where the last inequality is coordinatewise. We also require that
$G[A_t]$ has no same-layer edge. The first condition is valid because an
induced path contains at most two vertices of a clique; the second ensures
that an endpoint is selected whenever it occurs in a bag; and the degree
bound prevents a partial solution from already exceeding its required final
degree.

Finally, for every vertex $v\in B_t$, require
\begin{equation}\label{eq:domination}
 \min\{\alpha_t(v),\omega_t(v),d(v,A_t)\}\leq R.
\end{equation}
The three terms measure the distance from $v$ to the selected vertices in
$I_t$, in $O_t$, and in $B_t$, respectively. Hence this condition says that
every vertex currently appearing in the bag is within distance $R$ of the
eventual path. 

A transition is valid precisely when the parent and child states satisfy
the corresponding equations below and both states satisfy all state
conditions. For two vectors on the same set, let $p\wedge q$ denote their
pointwise minimum. When a transition handles a vertex $v$, let
$ \sigma_v(x)= d(x,v)$ when $v$ is selected and $ \sigma_v(x)= \infty$ otherwise.

\paragraph*{Transitions}

For a selected set $A\subseteq F$ with no same-layer edge, let $e_A(x)$
denote the degree pair of $x$ in the oriented graph $G[A]$. All additions,
subtractions, and comparisons of degree pairs are coordinate wise. We now
describe how a state at a node is obtained from states at its children.

\subparagraph*{Leaf Node}
Let $t$ be a leaf. Here $B_t=V_t=S$ and $I_t=\varnothing$. Therefore $\alpha_t$ is an all $\infty$ vector, and no selected vertex lies strictly inside the subtree. For every admissible choice of $A_t$ and $\omega_t\in\mathcal P_t^+$, set $e_t(x)=e_{A_t}(x)$ for all $x\in A_t$. These are all leaf states.

\subparagraph*{Introduce Node}
Suppose that $t$ introduces a vertex $v$ and has child $c$. Thus $B_t=B_c\cup\{v\}, I_t=I_c, O_c=O_t\cup\{v\}$. We require $A_t\cap B_c=A_c$. If $v\notin A_t$, the degree pairs do not change. If $v\in A_t$, then the only new selected edges are those from $v$ to selected vertices in $A_c$; there is no edge from $v$ to $I_c$. 

Since $I_t=I_c$, the selected solution vertices in the strict interior are the same at $t$ and $c$. The values of $\alpha_t$ on $B_c$ therefore agree with $\alpha_c$. To determine the new value $\alpha_t(v)$, observe that every path from $v$ to $I_c$ meets $B_c$. The separator identity \eqref{eq:separator} therefore computes the distance from $v$ to the selected interior using $\alpha_c$.

Since $O_c=O_t\cup\{v\}$, the selected solution vertices outside the child subtree are the selected vertices in $O_t$, together with $v$ when $v$ is selected. The distance from a vertex $x\in B_c$ to this union is the smaller of $\omega_t(x)$ and $d(x,v)$; if $v$ is not selected, the second value is interpreted as $\infty$. Thus, we require
\begin{equation}\label{eq:introduce}
 \begin{split}
  \alpha_t(x)&=(\mathcal T_{B_c}\alpha_c)(x)
       \qquad \forall x\in B_t,\\
  \omega_c(x)&=\min\{\omega_t(x),\sigma_v(x)\}
       \qquad \forall x\in B_c.
 \end{split}
\end{equation}

\subparagraph*{Forget Node}
Suppose that $t$ forgets a vertex $v$ and has child $c$. Then
\[
 B_c=B_t\cup\{v\},
 \qquad
 I_t=I_c\cup\{v\},
 \qquad
 O_c=O_t.
\]
We require $A_t=A_c\setminus\{v\}$ and $e_t(x)=e_c(x)$ for every $x\in A_t$. If $v\in A_c$, also require $e_c(v)=\tau(v)$. The latter is because after $v$ is forgotten, it belongs to $I_t$, and the separator property implies that it has no neighbor in $O_t$.

If $v$ is selected, it is added to the solution part of the interior; otherwise this is unchanged. The exterior is unchanged, but its distance profile must be extended from $B_t$ to the larger child bag $B_c$. Hence, we require
\begin{equation}\label{eq:forget}
 \begin{split}
  \alpha_t(x)&=\min\{\alpha_c(x),\sigma_v(x)\}
       \qquad \forall x\in B_t,\\
  \omega_c(x)&=(\mathcal T_{B_t}\omega_t)(x)
       \qquad \forall x\in B_c.
 \end{split}
\end{equation}
The second identity is valid because $B_t$ separates the forgotten vertex $v$ from $O_t$.

\subparagraph*{Join Node}
Let $t$ be a join node with children $c_1,c_2$. All three bags are the same. Thus, we require the solution intersecting with the bag $A_t=A_{c_1}=A_{c_2}=:A$.

The two child subgraphs intersect exactly in $G[A]$. Consequently, each
selected boundary edge is counted in both child degree pairs and must be
subtracted once. Thus, we require
\begin{equation}\label{eq:join-degrees}
e_t(x)=e_{c_1}(x)+e_{c_2}(x)-e_A(x)
\qquad \forall x\in A.
\end{equation}

The interior of the parent is the union of the two child interiors. From the point of view of either child, its exterior is the union of the parent exterior and the other child's interior. Since the distances to a union are obtained by taking the pointwise minimum, we require
\begin{equation}\label{eq:join-profiles}
 \begin{split}
  \alpha_t&=\alpha_{c_1}\wedge\alpha_{c_2},\\
  \omega_{c_1}&=\omega_t\wedge\alpha_{c_2},\\
  \omega_{c_2}&=\omega_t\wedge\alpha_{c_1}.
 \end{split}
\end{equation}

\subparagraph*{Root and evaluation.}
At the root $\rho$, the exterior is empty, so require $\omega_\rho\equiv\infty$. The root bag is $S$, and every selected vertex
outside $S$ has already had its final degree checked at its forget node.
For the remaining selected vertices $x \in A_\rho$, require $  e_\rho(x)=\tau(x)$.

A state at each node is retained when it satisfies all the required conditions at that node. The tables are computed bottom-up. The endpoint guess $(a,b)$ succeeds if some retained root state satisfies the root requirements. 

\subsubsection{Correctness and running time}

We now prove that for every fixed choice of endpoints $a,b$, the dynamic program accepts if only if there is a $a-b$ geodesic of eccentricity at most $R$. 

\paragraph*{Completeness.}
Suppose that $P$ is an $a-b$ geodesic with $\ecc_G(P)\le R$, and let $Y=V(P)$. We show that there is a valid state with respect to $P$ at every node of the decomposition. For a node $t$, let $A_t = Y \cap B_t$. For $x \in A_t$, let $e_t(x)$ be the indegree and outdegree of $x$ in the oriented graph induced by $Y \cap V_t$. Define $\alpha_t(x) = d(x, Y \cap I_t)$ and $\omega_t(x) = d(x, Y \cap O_t)$ for every $x \in B_t$.

Consider a bag $B_t = S \cup C_t$. Since $P$ is a geodesic, it is induced.
As $C_t$ is a clique, $P$ contains at most two vertices of $C_t$, and if
it contains two, they are consecutive on $P$. Delete these at most two
vertices together with the at most $k$ vertices of $P\cap S$. The path
$P$ is then split into at most $k+2$ subpaths. Since there are no edges between $I_t$ and $O_t$, each remaining subpath lies completely in one of $I_t$ and $O_t$. Moreover, every such subpath is a geodesic of $H$; it is a subpath of the geodesic $P$, and any shorter path in $H$ between the endpoints would also be a shorter path in $G$. Thus $Y \cap I_t$ and $Y \cap O_t$ are each the union of at most $k+2$ geodesics of $H$. By Lemma \ref{lem:chordal-representatives},  for each side there is a set of at most $(k+2)+k = 2k+2$ vertices inducing exactly the same distances to every vertex of $B_t$. Thus, we have $\alpha_t \in \mathcal P_t^-$ and $\omega_t \in \mathcal P_t^+$. 

We now prove the remaining conditions needed for a valid DP state. Since $P$ is induced and $C_t$ is a clique, $|A_t \cap C_t| \leq 2$ and since $a, b \in Y$, every occurrence of $a$ or $b$ in a bag belong to $A_t$. The orientation of $P$ by increasing distance from $a$ gives the required degree pair $\tau(v)$, and no edge of $P$ has both endpoints in the same layer. Hence all partial degree pairs are coordinatewise at most $\tau$. 

It remains to check the coverage condition. For every $v \in B_t$, the sets $Y \cap I_t, A_t, Y \cap O_t$ partition $Y$, Therefore 
\[
d(v,Y)
=
\min\bigl\{
d(v,Y\cap I_t),\,d(v,A_t),\,d(v,Y\cap O_t)
\bigr\}
=
\min\{\alpha_t(v),d(v,A_t),\omega_t(v)\}.
\] 
Since $\ecc_G(P)\le R$, this value is at most $R$, condition \ref{eq:domination} is satisfied at every bag. 

We now have to verify the transitions. At a leaf node,
$I_t=\emptyset$, so $\alpha_t\equiv\infty$ and the stated leaf state is
obtained. At an introduce node, $I_t=I_c$ and $B_c$ separates $B_t$ from
$I_c$. Since $\alpha_c(b)=d(b,Y\cap I_c)$ for every $b\in B_c$,
equation~\eqref{eq:separator}, applied with $K=B_c$ and
$X=Y\cap I_c$, yields
\[
\alpha_t(x)
=d(x,Y\cap I_t)
=d(x,Y\cap I_c)
=\min_{b\in B_c}\{d(x,b)+\alpha_c(b)\}
=(\mathcal T_{B_c}\alpha_c)(x)
\]
for every $x\in B_t$. This is precisely the first identity in
equation~\eqref{eq:introduce}. The second identity in
equation~\eqref{eq:introduce} follows from
$O_c=O_t\cup\{v\}$.

At a forget node, the forgotten vertex $v$ is added to $I_t$, giving the
first identity in equation~\eqref{eq:forget}. Moreover, $O_c=O_t$ and
$B_t$ separates $B_c$ from $O_t$, so the separator identity gives the
second identity in equation~\eqref{eq:forget}.

At a join node, the interiors of the two children are disjoint and their
union is $I_t$, so the distance to the selected vertices in $I_t$ is the
pointwise minimum of the two child distances. Similarly, the exterior of
each child is the union of the parent's exterior and the other child's
interior, which gives the remaining two identities in
equation~\eqref{eq:join-profiles}. The degree equation is
equation~\eqref{eq:join-degrees}, since the common boundary edges are
counted in both children and hence must be subtracted once.

Thus, the states induced by $P$ are present throughout the decomposition,
and the root satisfies all acceptance conditions. Hence every solution
path gives an accepting computation.

\paragraph*{Soundness.}
Conversely, suppose that the dynamic program accepts for the endpoint
guess $(a,b)$. Fix an accepting root state and, for every state in this
accepting computation, choose the predecessor state(s) witnessing the
corresponding transition. These states determine a global set $Y$ of
selected vertices. Indeed, once a vertex is introduced, its selection
status is carried through all subsequent bags containing that vertex
until it is forgotten; at a join node, both children are required to
have the same selection on their common bag. Thus every vertex has a
unique, consistent selection status, and hence $Y\subseteq F$ with
$a,b\in Y$.

We claim that, for every node $t$ and every $x\in A_t$, the value $e_t(x)$ is the indegree and outdegree of $x$ in the oriented graph $G[Y\cap V_t]$. This follows by induction over the decomposition. At an introduce node, the transition accounts for exactly the newly exposed edges incident with the introduced vertex; at a forget node, the degree of the forgotten vertex is already final and is checked against $\tau(v)$. At a join node, the two child subgraphs intersect precisely in $G[A_t]$, so the common boundary edges are counted twice and are therefore subtracted once.

Since every edge of $G$ is contained in some bag, the condition that $G[A_t]$ should not contain a same-layer edge implies that $G[Y]$ has no
same-layer edge. Every selected vertex outside $S$ is eventually
forgotten, at which point its degree is final and is required to equal
$\tau(v)$. For vertices of $S$, this is checked at the root. Thus every vertex of $Y$ has degree pair $\tau(v)$.

Hence, by Lemma \ref{lem:chordal-degrees}, $G[Y]$ is an $a$--$b$ geodesic. Its length is $\ell(b)-\ell(a)=d(a,b)$.

It remains to prove that the stored distance profiles have their intended
meanings. More precisely, for every node $t$ and every $x\in B_t$, the
profiles in the states chosen above satisfy
\begin{equation}\label{eq:profile-invariant}
\alpha_t(x)=d(x,Y\cap I_t)
\qquad\text{and}\qquad
\omega_t(x)=d(x,Y\cap O_t).
\end{equation}

The statement for the inside profiles $\alpha_t$ follows by induction from the leaves towards the root. At a leaf, $I_t=\emptyset$ and hence $\alpha_t\equiv\infty$. At an introduce node, $I_t=I_c$ and $B_c$ separates $B_t$ from $I_c$, so the first identity in equation~\eqref{eq:introduce} follows from the separator identity. At a forget node, the forgotten vertex is added to the strict interior, and therefore the new profile is obtained by taking the pointwise minimum with its distance vector, as in equation~\eqref{eq:forget}. At a join node, the two strict interiors are disjoint and their union is $I_t$, so the first identity in equation~\eqref{eq:join-profiles} is exactly the distance to $Y\cap I_t$.

The statement for the outside profiles $\omega_t$ is proved symmetrically from the root towards the leaves. At the root the exterior is empty. The introduce, forget, and join transitions in equations \eqref{eq:introduce}--\eqref{eq:join-profiles} update the exterior by adding the introduced vertex, extending across the separator, and taking the union with the sibling interior, respectively. Hence the second identity in equation~\eqref{eq:profile-invariant} holds throughout the
chosen computation.

Now let $v\in V(G)$ and choose a bag $B_t$ containing $v$. Since
$Y=(Y\cap I_t)\,\dot\cup\,A_t\,\dot\cup\,(Y\cap O_t)$, we obtain
$d(v,Y) = \min\{\alpha_t(v),d(v,A_t),\omega_t(v)\} \le R$, where the inequality follows from condition~\eqref{eq:domination}. Therefore $\ecc_G(Y)\le R$, and the reconstructed geodesic is a solution.

\paragraph*{Running time.}
For a bag $B_t$, there are at most $2^k\left(1+|C_t|+\binom{|C_t|}{2}\right) = n^{O(1)}2^k$ choices for $A_t$. For each selected boundary vertex there are at most four possibilities for its degree pair, and each distance profile has at most $(n+1)^{2k+2}$ possibilities. Hence the number of states at a node is $n^{O(k)}$. Since the decomposition has $O(n^2)$ nodes and there are $O(n^2)$ choices for the endpoints $a,b$, all transitions can be checked in $n^{O(k)}$ time. Thus the algorithm runs in
$n^{O(k)}$ time.

\subsection{Clique-width}

Recall that $\cw(G)$ is the minimum number of labels in an expression
constructing $G$ by the four operations in
Subsection~\ref{sec:graph-parameters}.

For a connected $n$-vertex input graph, we may replace $R$ by
$\min\{R,n-1\}$.

\begin{theorem}\label{thm:cliquewidth-xp}
Let $G$ be a connected $n$-vertex graph.
\begin{enumerate}
\setlength{\itemsep}{0pt}
\setlength{\parskip}{0pt}
 \item[\textup{(a)}] Given a clique-width $k$-expression $\tau$ for $G$, \ESP{} can be
 solved in $|\tau|^{O(1)}+n^{O(k)}$ time.
 \item[\textup{(b)}] Without a supplied expression, \ESP{} can be solved in
 $f(k)n^{O(k)}$ time, where $k=\cw(G)$ and $f$ is computable.
\end{enumerate}
Both algorithms return a witnessing shortest path whenever one exists.
\end{theorem}

Consequently, \ESP{} is in XP parameterized by clique-width. We first
prove the stronger intermediate result for a supplied mim-width layout.

\subsubsection{Layouts and the model-checking theorem}

\subparagraph*{Width measures.}
We use the cut functions $\mim_G(A)$ and $\rho_G(A)$, rooted layouts
$\calL$, and rank-decompositions from Subsection~\ref{sec:graph-parameters}.
Recall that $\mim_G(A)$ is the maximum size of an induced matching in the
bipartite cut graph, while $\rho_G(A)$ is the rank over $\GF(2)$ of its
bipartite adjacency matrix.
The notation $V_x$ and $\mimw_G(\calL)$ is as defined there.

\begin{lemma}\label{lem:mim-rank}
For every $A\subseteq V(G)$, $\mim_G(A)\leq\rho_G(A)$. Hence a
rank-decomposition of width $q$ yields a rooted layout of mim-width at
most $q$ in polynomial time.
\end{lemma}

\begin{proof}
An induced matching of size $a$ across the cut gives an $a\times a$
identity submatrix of the cut adjacency matrix, after ordering its rows
and columns by the matched pairs. Thus its size is at most the matrix
rank. To root a rank-decomposition on at least two leaves, first suppress
any degree-two vertices, subdivide an edge, and make the new vertex the
root. This preserves the nontrivial cuts. The one-vertex case is immediate.
\end{proof}

\subparagraph*{Distance neighborhood logic.}
For integers $r,d\geq1$ and $U\subseteq V(G)$, define
\[
 N_d^r(U)=
 \bigl\{v\in V(G):
   |\{u\in U:1\leq\dist_G(u,v)\leq r\}|\geq d
 \bigr\}.
\]
The logic used below permits set variables, unary vertex predicates,
Boolean set operations, comparisons between set terms, cardinality
constraints such as $|X|=a$, and the operators $N_d^r$. It also permits
$\con(X)$, asserting that $G[X]$ is connected. This is a fragment of
distance neighborhood logic with acyclicity and connectivity constraints
($\ACDN$); we do not use its acyclicity predicate.

For a formula $\varphi$, let $d(\varphi)$ be the largest neighborhood
counting threshold occurring in it, with a minimum value of two. In the
formula-size convention below, numerical values in size measurements and
neighborhood operators each count as one symbol.

\begin{theorem}[Mim-width model checking {\cite[Theorem~1.2]{BergougnouxDreierJaffke2023}}]
\label{thm:meta}
Given an $\ACDN$ formula $\varphi(X_1,\ldots,X_p)$, an $n$-vertex
vertex-colored graph, and a rooted layout of mim-width $w\geq1$, one can
find sets satisfying $\varphi$, or correctly conclude that none exist, in
\[
 n^{O(d(\varphi)w|\varphi|^2)}
\]
time.
\end{theorem}

The important point is that the \emph{numerical radius} does not appear
in this exponent. In particular, a fixed formula schema with fixed
counting thresholds and a constant number of cardinality measurements
bounded by $n$ is handled in $n^{O(w)}$ time, even when its radii depend on
the input. This numerical convention is part of the cited theorem, not an
assumption that $R$ is constant.

\subsubsection{A simpler certificate for a shortest path}

\begin{lemma}[Connected-set certificate]\label{lem:cliquewidth-xp-certificate}
Let $s,t\in V(G)$ and $D=\dist_G(s,t)$. For $X\subseteq V(G)$, the
following are equivalent:
\begin{enumerate}
\setlength{\itemsep}{0pt}
\setlength{\parskip}{0pt}
 \item[\textup{(i)}] $G[X]$ is a geodesic $s$--$t$ path;
 \item[\textup{(ii)}] $s,t\in X$, $|X|=D+1$, and $G[X]$ is connected.
\end{enumerate}
\end{lemma}

\begin{proof}
The forward implication is immediate. Conversely, let $Q$ be a simple
$s$--$t$ path in the connected graph $G[X]$. Then
\[
 D\leq |E(Q)|=|V(Q)|-1\leq |X|-1=D.
\]
All inequalities are equalities, so $Q$ is a shortest $s$--$t$ path and
$V(Q)=X$. Any additional edge of $G[X]$ would join two nonconsecutive
vertices of $Q$ and yield a shorter $s$--$t$ path. Thus $G[X]=Q$.
When $s=t$, the condition $|X|=D+1=1$ gives the same conclusion directly.
\end{proof}

In particular, no acyclicity, degree, or explicit path predicate is needed.

\subsubsection{The algorithm on a supplied mim-width layout}

For fixed $s,t$, give the graph two unary predicates
$\mathbf S=\{s\}$ and $\mathbf T=\{t\}$; when $s=t$, both predicates
hold at that vertex. These predicates do not change the graph or the
layout width. With $D=\dist_G(s,t)$, define
\[
 \mathsf{Dom}_R(X)=
 \begin{cases}
  \overline X\subseteq N_1^R(X),&R\geq1,\\
  \overline X=\varnothing,&R=0,
 \end{cases}
\]
and use the formula
\begin{equation}\label{eq:cliquewidth-formula}
 \Phi_{R,D}(X):=
 (\mathbf S\cup\mathbf T\subseteq X)
 \ \wedge\ |X|=D+1
 \ \wedge\ \con(X)
 \ \wedge\ \mathsf{Dom}_R(X).
\end{equation}
By \cref{lem:cliquewidth-xp-certificate}, the first three conjuncts say exactly that
$G[X]$ is a geodesic $s$--$t$ path. The last conjunct says exactly that
$\ecc_G(X)\leq R$. Conversely, the vertex set of a geodesic
$s$--$t$ path of eccentricity at most $R$ satisfies all four conjuncts.
Thus \eqref{eq:cliquewidth-formula} is an exact certificate for the fixed-endpoint
instance.

\begin{theorem}\label{thm:mim-algorithm}
Given a connected $n$-vertex graph $G$, an integer $R\geq0$, and a rooted
layout of mim-width $w$, \ESP{} can be solved, with a witnessing path
returned, in $n^{O(\max\{1,w\})}$ time. For $n\geq2$, the bound is
$n^{O(w)}$.
\end{theorem}

\begin{proof}
Handle $n=1$ directly. If $n\geq2$, connectivity implies $w\geq1$: every
singleton leaf cut has an edge. Compute all-pairs distances by
breadth-first search in $O(n(n+m))$ time. For every ordered pair
$(s,t)\in V(G)^2$, compute $D=\dist_G(s,t)$, add the two singleton
predicates, and apply \cref{thm:meta} to \eqref{eq:cliquewidth-formula} on the supplied
layout. Accept if any call succeeds, and reject otherwise.

The correctness of the formula proves soundness. For completeness, every
feasible geodesic path has an endpoint pair considered by the algorithm.
A returned set $X$ induces that path, so its vertex order is recovered by
traversing $G[X]$ from $s$ (or simply returning $s$ when $X=\{s\}$).

The formula has constant length, uses only the neighborhood threshold
one, and has a single cardinality constraint of value $D+1\leq n$.
Consequently $d(\Phi_{R,D})=2$ by convention, and each call takes
$n^{O(w)}$ time. The $n^2$ calls and the polynomial preprocessing are
absorbed into the same bound.
\end{proof}

If one requires distinct path endpoints, simply omit pairs $(s,s)$.

\subsubsection{From clique-width to the required layout}

Recall the definition of a $k$-expression from
Subsection~\ref{sec:graph-parameters}. Its syntax tree has vertex-creation
leaves, binary disjoint-union nodes, and unary join and relabelling nodes.

We use the standard conversion underlying the inequality
$\rw(G)\leq\cw(G)$; see Oum and Seymour
~\cite[Proposition~6.3]{OumSeymour2006}.

\begin{lemma}\label{lem:cw-to-layout}
From a clique-width $k$-expression $\tau$ for $G$, one can construct in
polynomial time a rooted layout $\calL_\tau$ whose cuts have cut-rank at
most $k$ and hence mim-width at most $k$. In particular, $\rw(G)\leq\cw(G)$.
\end{lemma}

\begin{proof}
Suppress the unary nodes of the syntax tree and match each vertex with
its creation leaf. This gives a rooted layout. Fix a node cut
$(A,\overline A)$, and consider the labels just after evaluating the
corresponding subexpression. Two vertices of $A$ having the same label
then keep the same label through all subsequent operations. Every edge
from $A$ to $\overline A$ is added after the two sides first meet at an
ancestor disjoint-union node, and later joins treat these two vertices
identically. Therefore they have identical final neighborhoods in
$\overline A$.

The cut adjacency matrix thus has at most $k$ distinct rows, so its rank
is at most $k$. By \cref{lem:mim-rank}, the cut's induced-matching number
is also at most $k$. This holds at every layout node. Forgetting and
suppressing the degree-two root gives a rank-decomposition of width at
most $k$ when $n\geq2$; the one-vertex case is immediate.
\end{proof}

\begin{proof}[Proof of \cref{thm:cliquewidth-xp}\textup{(a)}]
Construct $\calL_\tau$ using \cref{lem:cw-to-layout}, and apply
\cref{thm:mim-algorithm}. Processing the expression takes
$|\tau|^{O(1)}$ time, and the resulting layout has mim-width at most $k$.
\end{proof}

\subsubsection{Removing the expression and obtaining an optimal path}

We use Oum's algorithm \cite[Theorem~5.8]{Oum2008}: for each positive
integer $h$, it either produces a rank-decomposition of width at most
$3h-1$ or correctly certifies $\rw(G)>h$, in $g(h)n^3$ time for a
computable function $g$.

\begin{proof}[Proof of \cref{thm:cliquewidth-xp}\textup{(b)}]
Handle $n=1$ directly, and put $q=\rw(G)\geq1$ otherwise. Run Oum's
algorithm for $h=1,2,\ldots$ until it returns a decomposition. It must
succeed by $h=q$, so the returned decomposition has width at most
$3q-1$, and the total preprocessing time is at most
\[
 \left(\sum_{h=1}^{q}g(h)\right)n^3.
\]
By \cref{lem:mim-rank}, root it to obtain a layout of mim-width at most
$3q-1$. Then \cref{thm:mim-algorithm} gives an $n^{O(q)}$-time algorithm.
Finally, $q\leq k=\cw(G)$ by \cref{lem:cw-to-layout}, so the total time
is $f(k)n^{O(k)}$ for a computable $f$. Neither $k$ nor $q$ needs to be
known by the algorithm.
\end{proof}

\begin{corollary}\label{cor:rw}
Without a supplied decomposition, \ESP{} is solvable in
$f(q)n^{O(\max\{1,q\})}$ time, where $q=\rw(G)$. In particular, it is
in XP parameterized by rank-width.
\end{corollary}

\begin{proof}
Use the preceding algorithm and bound its running time directly in terms
of $q$, handling the one-vertex case separately.
\end{proof}

\begin{corollary}\label{cor:optimization}
A minimum-eccentricity shortest path can be found within the bounds of
\cref{thm:cliquewidth-xp}, up to an additional $O(\log n)$ factor for $n\geq2$.
\end{corollary}

\begin{proof}
The optimum lies in $\{0,\ldots,n-1\}$, and feasibility is monotone in
$R$. Binary search finds the smallest feasible radius in $O(\log n)$
decision calls. Recover a witness with one final call at this radius.
The same layout and all-pairs distances can be reused throughout.
\end{proof}

\begin{remark}
The mim-width step remains essential: the clique-width branch of the
cited model-checking theorem depends exponentially on the numerical
radius. Only the mim-width bound used here is independent of that radius.
\end{remark}

\section{W-hardness}
\label{sec:w-hardness}

\subsection{Clique cover number}
\label{sec:mesp-clique-cover}

A set $D\subseteq V(G)$ is a dominating set if $N_G[D]=V(G)$. The \ds{}
problem asks whether a graph has a dominating set of size at most $k$; it is
$\Wtwo$-complete parameterized by $k$~\cite{DowneyFellows1995}.

We show that a small vertex clique cover does not make \ESP{}
fixed-parameter tractable unless $\FPT=\Wtwo$. Recall that $\cc(G)$ denotes
the minimum number of cliques whose vertex sets partition $V(G)$; see
Subsection~\ref{sec:graph-parameters}.

Our reduction uses the choice-layer and distance-test structure of
Dragan and Leitert~\cite[Theorem~13]{DraganLeitert2017},
with clique layers and columns arranged to bound the vertex
clique-cover number.

\begin{restatable}{theorem}{thmcliquecoverhardness}
\label{thm:maincliquecover}
\ESP{} is $\Wtwo$-hard parameterized by $\cc(G)$. This remains true when a
vertex clique cover is supplied and the parameter is the size of the supplied
cover.
\end{restatable}

\subsubsection{Construction}

We give a parameterized reduction from \ds{}, which is $\Wtwo$-complete
parameterized by the solution size~\cite{DowneyFellows1995}. Let $(F,k)$ be
an instance of \ds{}, put $n=|V(F)|$, and assume that $1\leq k\leq n$, as
the remaining cases are trivial. Set $R=k$.

For every $i\in[k]$ and $0\leq h\leq R-1$, create the clique
$
 C_{i,h}=\{x_{i,h}:x\in V(F)\}.
$
For every $x\in V(F)$ and $i\in[k]$, add the edges of the column
\[
 x_{i,0}-x_{i,1}-\cdots-x_{i,R-1}.
\]
Make $C_{i,0}$ complete to $C_{i+1,0}$ for every $i\in[k-1]$. Add two
vertices $s$ and $t$, make $s$ complete to $C_{1,0}$, and make $t$ complete
to $C_{k,0}$. We call the subgraph induced by $s,t$ and the cliques
$C_{i,0}$ the \emph{backbone}. Attach the two pendant paths
\[
 s=s_0-s_1-\cdots-s_R,
 \qquad
 t=t_0-t_1-\cdots-t_R.
\]
When $R=1$, each column consists only of its backbone vertex.

Create the clique $W=\{q_x:x\in V(F)\}$ and add the edges
\[
 q_xy_{i,R-1}\in E(H)
 \quad\Longleftrightarrow\quad
 y\in N_F[x],
 \qquad x,y\in V(F),\ i\in[k].
\]
No edges other than those specified above are present. Thus $q_x$ is
adjacent to the top of every column corresponding to a vertex that dominates
$x$. Since $x\in N_F[x]$, every vertex of $W$ is connected to the backbone,
and hence $H$ is connected.

Figure~\ref{fig:cc-overview} summarizes the construction. A box represents a
clique, a thick horizontal segment represents a complete join, and a thin
vertical segment represents the matching edges belonging to the columns.

\begin{figure}[t]
\centering
\begin{tikzpicture}[
  x=0.83cm,
  y=0.78cm,
  font=\small,
  clique/.style={draw,rounded corners=2pt,fill=blue!7,
    minimum width=1.28cm,minimum height=0.62cm,align=center},
  witness/.style={draw,rounded corners=7pt,fill=orange!12,
    minimum width=1.55cm,minimum height=0.7cm,align=center},
  terminal/.style={draw,circle,fill=black!4,inner sep=1.6pt},
  incidence/.style={densely dashed,gray!80},
  column/.style={gray!75},
  backbone/.style={very thick}
]
  \node[terminal] (sr) at (0,0) {$s_R$};
  \node (sd) at (0.75,0) {$\cdots$};
  \node[terminal] (s) at (1.5,0) {$s$};
  \node[clique] (c10) at (3.0,0) {$C_{1,0}$};
  \node[clique] (c20) at (4.8,0) {$C_{2,0}$};
  \node (bd) at (6.15,0) {$\cdots$};
  \node[clique] (ck0) at (7.5,0) {$C_{k,0}$};
  \node[terminal] (t) at (9.0,0) {$t$};
  \node (td) at (9.75,0) {$\cdots$};
  \node[terminal] (tr) at (10.5,0) {$t_R$};

  \draw (sr) -- (sd) -- (s);
  \draw[backbone] (s) -- (c10);
  \draw[backbone] (c10) --
    node[above=8pt,font=\scriptsize] {complete join} (c20);
  \draw[backbone] (c20) -- (bd) -- (ck0);
  \draw[backbone] (ck0) -- (t);
  \draw (t) -- (td) -- (tr);
  \node[above=2pt of sd,font=\scriptsize] {$R$ edges};
  \node[above=2pt of td,font=\scriptsize] {$R$ edges};
  \node[below=5.8pt of bd,font=\scriptsize] {backbone};

  \node[clique] (c1top) at (3.0,2.7) {$C_{1,R-1}$};
  \node (c1d) at (3.0,1.35) {$\vdots$};
  \node[clique] (c2top) at (4.8,2.7) {$C_{2,R-1}$};
  \node (c2d) at (4.8,1.35) {$\vdots$};
  \node[clique] (cktop) at (7.5,2.7) {$C_{k,R-1}$};
  \node (ckd) at (7.5,1.35) {$\vdots$};

  \draw[column] (c10) -- (c1d) -- (c1top);
  \draw[column] (c20) -- (c2d) -- (c2top);
  \draw[column] (ck0) -- (ckd) -- (cktop);
  \node[rotate=90,font=\scriptsize] at (2.55,1.35) {columns};

  \node[witness] (w) at (5.25,4.15) {$W=\{q_x:x\in V(F)\}$};
  \draw[incidence] (w.south west) -- (c1top.north);
  \draw[incidence] (w.south) -- (c2top.north);
  \draw[incidence] (w.south east) -- (cktop.north);
  \node[anchor=west,align=left,font=\scriptsize] at (7.55,4.05)
    {$q_xy_{i,R-1}\in E(H)$ iff\\[-1pt] $y\in N_F[x]$};

  \draw[decorate,decoration={brace,mirror,amplitude=4pt}]
    (2.2,-0.58) -- (8.25,-0.58)
    node[midway,below=5pt,font=\scriptsize]
    {$P_{\boldsymbol v}$ selects $(v_i)_{i,0}\in C_{i,0}$ for every $i\in[k]$};
\end{tikzpicture}
\caption{Overview of the reduction from \ds{}. Each $C_{i,h}$ and $W$ is a
clique. Consecutive bottom cliques are completely joined, while the vertical
edges form the columns. The path $P_{\boldsymbol v}$ stays in the backbone
and selects one vertex from each clique $C_{i,0}$.}
\label{fig:cc-overview}
\end{figure}

The graph has
\[
 (kR+1)n+2R+2=(k^2+1)n+2k+2
\]
vertices and can be constructed in polynomial time.

\subsubsection{Bounding the clique cover number}

The $kR$ cliques $C_{i,h}$, the clique $W$, and the $2R+2$ singleton cliques
containing the vertices of the two pendant paths, including $s$ and $t$, form
an explicit vertex clique cover of size
\[
 kR+1+(2R+2)=k^2+2k+3.
\]
Consequently, $\cc(H)\leq k^2+2k+3$. Notice that the supplied cover need not
be minimum.

\subsubsection{Correctness}

\begin{proof}[Proof of Theorem~\ref{thm:maincliquecover}]
For a tuple $\boldsymbol v=(v_1,\ldots,v_k)\in V(F)^k$, where repetitions
are allowed, let
\[
 P_{\boldsymbol v}
 =(s,(v_1)_{1,0},\ldots,(v_k)_{k,0},t),
 \qquad
 D_{\boldsymbol v}=\{v_1,\ldots,v_k\}.
\]
Suppose first that $F$ has a dominating set of size at most $k$. List its
vertices in a $k$-tuple $\boldsymbol v$, repeating one vertex if necessary.
We first verify that the corresponding path $P_{\boldsymbol v}$ is a
geodesic.

\begin{claim}\label{lem:normal}
The shortest $s$--$t$ paths are exactly the paths $P_{\boldsymbol v}$.
\end{claim}

\begin{proof}
A shortest $s$--$t$ path cannot enter either pendant path. Consider an
excursion outside the backbone between two consecutive visits to it. If its
endpoints belong to the same clique $C_{i,0}$, the excursion can be replaced
by at most one edge. If they belong to distinct cliques $C_{i,0}$ and
$C_{j,0}$, then the excursion must reach $W$ and return, since columns with
different first indices have no other connection outside the backbone. This
uses at least $2R$ edges, whereas the backbone connects the two endpoints in
$|i-j|\leq k-1<2R$ edges. Hence every shortest $s$--$t$ path is contained in
the backbone. Its minimum length is $k+1$, and the paths attaining this
length are precisely the paths $P_{\boldsymbol v}$.
\end{proof}

The following claim relates the witness vertices to domination in $F$.

\begin{claim}\label{lem:witness}
Every vertex outside $W$ is within distance $R$ of $P_{\boldsymbol v}$. For
every $x\in V(F)$,
\refstepcounter{equation}\label{eq:distance}
\[
 \dist_H(q_x,V(P_{\boldsymbol v}))=
 \begin{cases}
 R, & N_F[x]\cap D_{\boldsymbol v}\neq\varnothing,\\
 R+1, & N_F[x]\cap D_{\boldsymbol v}=\varnothing
 \end{cases}
 \qquad\text{(\theequation)}.
\]
\end{claim}

\begin{proof}
Every pendant-path vertex is within distance $R$ of $s$ or $t$. For a column
vertex $y_{i,h}$, follow its column to $y_{i,0}$ and, if necessary, use one
edge of $C_{i,0}$ to reach $(v_i)_{i,0}$. This takes at most
$h+1\leq R$ edges. Hence every vertex outside $W$ is within distance $R$ of
$P_{\boldsymbol v}$.

Assign level $h$ to every vertex $y_{i,h}$, level $R$ to every witness
vertex, and level zero to all pendant-path vertices, including $s$ and $t$.
Every edge changes the level by at most one. Since every vertex of
$P_{\boldsymbol v}$ has level zero, a walk of length at most $R$ from $q_x$
to $P_{\boldsymbol v}$ must have exactly $R$ edges and decrease the level at
every step. Its first edge goes to some $y_{i,R-1}$ with $y\in N_F[x]$, and
the remaining edges are forced down the same column to $y_{i,0}$. In
particular, such a walk cannot use an edge inside $W$, inside a level clique,
or inside the backbone.

It follows that $\dist_H(q_x,V(P_{\boldsymbol v}))\leq R$ precisely when
$v_i\in N_F[x]$ for some $i\in[k]$, and in this case the distance is exactly
$R$. If no such index exists, then $q_x$ reaches $x_{1,0}$ in $R$ steps,
because $x\in N_F[x]$, and one additional edge of $C_{1,0}$ reaches
$(v_1)_{1,0}$. This proves~\eqref{eq:distance}.
\end{proof}

In particular, $\ecc_H(P_{\boldsymbol v})\leq R$ if and only if
$D_{\boldsymbol v}$ dominates $F$.
Since $D_{\boldsymbol v}$ dominates $F$, Claim~\ref{lem:witness} gives
$\ecc_H(P_{\boldsymbol v})\leq R$. Together with Claim~\ref{lem:normal}, this
shows that $P_{\boldsymbol v}$ is a feasible solution. Hence $(H,R)$ is a
yes-instance of \ESP{}.

For the reverse direction, suppose that $(H,R)$ is a yes-instance of
\ESP{}, and let $Q$ be a geodesic with $\ecc_H(Q)\leq R$. We next extract
from $Q$ a path of the form $P_{\boldsymbol v}$ without increasing its
eccentricity.

\begin{claim}\label{lem:cc-normalization}
There is a tuple $\boldsymbol v\in V(F)^k$ such that
$\ecc_H(P_{\boldsymbol v})\leq R$.
\end{claim}

\begin{proof}
The radius-$R$ ball around $s_R$ is exactly
$\{s_0,\ldots,s_R\}$, and the analogous statement holds for $t_R$.
Therefore $Q$ meets both pendant paths. Since they attach to the rest of $H$
only at $s$ and $t$, respectively, the path $Q$ contains both $s$ and $t$.
Its $s$--$t$ subpath is shortest and, by Claim~\ref{lem:normal}, is some
$P_{\boldsymbol v}$.

It remains to consider possible extensions of $P_{\boldsymbol v}$ in $Q$.
Every geodesic is induced. An extension at $s$ cannot return to
$(v_1)_{1,0}$, since $Q$ is simple, while every other vertex of $C_{1,0}$
is adjacent to $(v_1)_{1,0}$ and would create a chord. Thus any extension at
$s$ lies on the $s$-pendant path. Symmetrically, any extension at $t$ lies
on the $t$-pendant path.

For every vertex outside the two pendant paths, reaching such an extension
requires passing through $s$ or $t$, both of which already belong to
$P_{\boldsymbol v}$. Hence its distance to $P_{\boldsymbol v}$ equals its
distance to $Q$ and is at most $R$. Every pendant-path vertex is within
distance $R$ of $s$ or $t$. Consequently,
$\ecc_H(P_{\boldsymbol v})\leq R$.
\end{proof}

By Claim~\ref{lem:cc-normalization}, some path $P_{\boldsymbol v}$ has
eccentricity at most $R$. Claim~\ref{lem:witness} then shows that
$D_{\boldsymbol v}$ dominates $F$ and has size at most $k$. Therefore
$(F,k)$ is a yes-instance of \ds{} if and only if $(H,R)$ is a yes-instance
of \ESP{}. Together with the parameter bound, this completes the
parameterized reduction.
\end{proof}

\subsection{Cutwidth}

Recall that $\cutwidth(G,\pi)$ is the maximum number of edges crossing
a prefix cut of the ordering $\pi$, and $\cutwidth(G)$ is its minimum over
all orderings; see Subsection~\ref{sec:graph-parameters}.

The parameter is $\cutwidth(G)$ alone; $q$ is part of the input.

An instance of \CSP\ consists of a constraint graph $F$ on a variable
set $\vars$, a finite domain $D_v$ for each variable $v$, and a relation
$R_{uv}\subseteq D_u\times D_v$ for each edge $uv\in E(F)$.
The domains and relations are explicitly listed. A satisfying assignment
$\rho$ chooses $\rho(v)\in D_v$ and satisfies
$(\rho(u),\rho(v))\in R_{uv}$ for every constraint.
Repeated constraints on a pair can be intersected. Recall that $\vc(F)$
denotes the vertex cover number of the constraint graph $F$.

\subparagraph*{Example.}
Let $\vars=\{x,y,z\}$, let $D_x=D_y=D_z=\{0,1\}$, and let the
constraint graph $F$ be the path with edges $xy$ and $yz$. Set
\[
 R_{xy}=\{(0,1),(1,0)\},
 \qquad R_{yz}=\{(0,0),(1,1)\}.
\]
Thus the constraints require $\rho(x)\neq\rho(y)$ and
$\rho(y)=\rho(z)$. The assignment
$(\rho(x),\rho(y),\rho(z))=(0,1,1)$ satisfies both constraints,
whereas $(0,1,0)$ violates the constraint on $yz$.
Moreover, $\{y\}$ is a minimum vertex cover of $F$, so $\vc(F)=1$.
Here ``binary'' refers to each constraint involving two variables;
the domains need not have size two.

\begin{proposition}[Bodlaender, Groenland, and Pilipczuk
{\cite[Theorem~1]{BGP23}}]\label{prop:source}
\CSP\ parameterized by $\vc(F)$ is $\Wthree$-complete.
\end{proposition}

\begin{theorem}\label{thm:cw-whard}
\ESP\ is $\Wthree$-hard parameterized by cutwidth, even on connected
graphs of maximum degree at most five. More precisely, a \CSP\ instance
with constraint graph $F$ can be transformed in polynomial time into
an equivalent instance $(G,q)$ together with an ordering $\pi$ satisfying
\[
 \Delta(G)\leq5,
 \qquad \cutwidth(G,\pi)\leq4\vc(F)+7.
\]
If a vertex cover of size $k$ is supplied, the cutwidth bound is $2k+7$.
\end{theorem}

\subsubsection{Construction}

If $\vars=\varnothing$, return the yes-instance $(K_1,0)$.
If some domain is empty, return the no-instance $(K_3,0)$; a shortest
path in a triangle contains at most two vertices. Both outputs satisfy
the claimed bounds. Hence assume that $\vars$ and all domains are nonempty.

Let $X=\{x_1,\ldots,x_k\}$ be a vertex cover of $F$, and put
$Y=\vars\setminus X=\{y_1,\ldots,y_m\}$. For now, $X$ is supplied.
For each variable $v\in\vars$, put $N_v=|D_v|-1$ and
relabel its domain as $D_v=\{0,\ldots,N_v\}$.
Set
\[
 S=\sum_{v\in\vars}|D_v|
  =\sum_{v\in\vars}(N_v+1).
\]
A \emph{test} is a tuple $\tau=(u,v,a,b)$, where $uv\in E(F)$,
$a\in D_u$, $b\in D_v$, and $(a,b)\notin R_{uv}$.
It records that $u$ and $v$ cannot simultaneously receive the values
$a$ and $b$. Thus an assignment $\rho$ is satisfying exactly when,
for every test, $\rho(u)\neq a$ or $\rho(v)\neq b$.

Fix an arbitrary linear order on $\vars$. For each constraint edge,
write its endpoints as $u,v$ with $u$ preceding $v$, transposing the
relation if necessary. Introduce tests only in this orientation.

List each test once, first those involving two variables of $X$,
then those involving $y_1$, then $y_2$, and so on.
Since $Y$ is independent in $F$, every test involving a variable
of $Y$ has its other variable in $X$, so each test belongs to
exactly one of these blocks. Fix an arbitrary order within each block.

Let $T$ be the total number of tests. For each variable $v$, consider
the sublist consisting only of tests containing $v$, preserving their
order. Let $t_v$ be the length of this sublist and, for a test $\tau$
containing $v$, let $p_v(\tau)$ be its position in this sublist,
starting from $1$. Thus
\[
 1\leq p_v(\tau)\leq t_v\leq T,
 \qquad \sum_v t_v=2T,
 \qquad T\leq S^2.
\]
The test list is computable in polynomial time. Define
\begin{equation}\label{eq:L}
                         L=2(S+T),\qquad q=2L.
\end{equation}
All vertices introduced below are distinct unless explicitly identified.
Every added path has new internal vertices, disjoint from all other
added paths, and there are no unspecified edges.

\subparagraph*{Ladders and carriers.}
For every variable $v$, create two rail paths
\[
 A_{v,0}\cdots A_{v,N_v},\qquad B_{v,0}\cdots B_{v,N_v}.
\]
For each $a\in D_v$, join $A_{v,a}$ to $B_{v,a}$ by a path of length
$N_v+1$, called \emph{rung $a$}. For $N_v=0$, the ladder is a single edge.
Arrange the ladders in the order $x_1,\ldots,x_k,y_1,\ldots,y_m$,
adding the edge $B_{u,N_u}A_{v,0}$ for consecutive variables $u,v$.
Call the resulting graph $C$, the \emph{core}; let $s$ be its first
top-rail vertex and $t$ its last bottom-rail vertex.

For each variable $v$, designate the two \emph{ports}
\[
 U_v=A_{v,N_v},\qquad V_v=B_{v,0}.
\]
Attach a path of length $t_v$ at each port, and denote these
\emph{carrier paths} by
\[
 Q_v^U=U_v^0U_v^1\cdots U_v^{t_v},
 \qquad
 Q_v^V=V_v^0V_v^1\cdots V_v^{t_v},
\]
where $U_v^0=U_v$ and $V_v^0=V_v$.
All vertices with positive superscripts are new.
If $t_v=0$, each carrier path consists only of its port,
so no vertices or edges are added.
Let $H$ consist of the core and all carrier paths.

\subparagraph*{Tests and pendants.}
For each test $\tau=(u,v,a,b)$, create a center $c_\tau$.
With $p=p_u(\tau)$ and $r=p_v(\tau)$, add four \emph{connection paths}:
\[
\begin{array}{c|c}
 \text{Endpoints}&\text{Length}\\\hline
 c_\tau,\ U_u^p&L-N_u+a+1-p\\
 c_\tau,\ V_u^p&L-a+1-p\\
 c_\tau,\ U_v^r&L-N_v+b+1-r\\
 c_\tau,\ V_v^r&L-b+1-r
\end{array}
\]
Attach a pendant path of length $L$ at $c_\tau$, ending at $z_\tau$.
Finally, attach pendant paths of length $q+1$ at $s$ and $t$, ending
at $s^*$ and $t^*$, respectively. Call these the \emph{terminal pendants}.
This completes $G$.

Since $N_v\leq S-1$ and every attachment position is between $1$ and
$T$, each connection length $\lambda$ satisfies
\begin{equation}\label{eq:long}
 \frac L2<L-S-T+2\leq\lambda\leq L.
\end{equation}
In particular, every specified length is positive.

\subsubsection{Correctness}

\subparagraph*{Distances within $H$.}
The diameter of ladder $v$ is at most $2N_v+1$. Indeed, for any two
vertices, the sums of their rung distances to the top and bottom rails
add to $2(N_v+1)$. One of these sums is at most $N_v+1$, and the
corresponding rail segment has length at most $N_v$. These segments
give a walk of length at most $2N_v+1$ between the two vertices.
Joining ladders in sequence, and using that each carrier has length
at most $T$, gives
\begin{equation}\label{eq:diam}
 \diam(C)\leq\sum_v(2N_v+1)+|\vars|-1=2S-1,
 \qquad \diam(H)\leq2S-1+2T=L-1.
\end{equation}
Consequently, every shortest path in $G$ whose endpoints lie in $H$
stays in $H$: an excursion outside $H$ cannot use a non-root pendant
vertex and must traverse two complete connection paths, giving a subpath of length greater than
$L$ by~\eqref{eq:long}, whereas its endpoints are at distance at most
$L-1$ in $H$.

For an assignment $\rho$, define $P_\rho$ to follow the first terminal
pendant from $s^*$ to $s$, cross each ladder along the top rail to rung
$\rho(v)$, traverse that rung, continue along the bottom rail, and
finally follow the other terminal pendant from $t$ to $t^*$.

\begin{claim}\label{clm:paths}
Every $P_\rho$ is shortest, and
\begin{equation}\label{eq:ports}
 d_G(U_v,P_\rho)=N_v-\rho(v),
 \qquad d_G(V_v,P_\rho)=\rho(v).
\end{equation}
Moreover, every shortest path of eccentricity at most $q$ can be
extended to some $P_\rho$ without increasing its eccentricity.
\end{claim}
\begin{proof}
A path across ladder $v$, from $A_{v,0}$ to $B_{v,N_v}$, traverses at
least one entire rung and at least $N_v$ rail edges: rungs preserve
the column index, whereas rail edges change it by one. Its minimum
length is therefore $2N_v+1$, attained exactly by choosing one rung
and moving monotonically along the rails. The inter-ladder edges
are bridges in $C$, so the shortest $s$--$t$ paths in $C$ correspond
exactly to assignments and have length $2S-1$.
Shortest $s$--$t$ paths in $G$ stay in $H$ by the preceding argument,
and cannot traverse a carrier edge, since carriers are pendant in $H$. Hence they are
precisely these core paths. It follows that every $P_\rho$ is shortest.

For the port distances it suffices to consider routes in $C$: carriers
give no shortcut in $H$, shortest paths between vertices of $H$ stay
in $H$, and either terminal pendant is reached through $s$ or $t$,
already on $P_\rho$.
The top rail reaches $P_\rho$ from $U_v$ in $N_v-\rho(v)$ steps.
Traversing another rung takes $N_v+1$ steps, and a route in $C$ leaving
the ladder first meets $A_{v,0}$ or $B_{v,N_v}$, both on $P_\rho$.
The chosen rung cannot be entered before meeting an endpoint already
on $P_\rho$. Thus the top-rail route is shortest; the bottom-rail
argument is symmetric. This proves~\eqref{eq:ports}.

Now let $Q$ be shortest with $\ecc_G(Q)\leq q$.
It must meet the non-root part of each terminal pendant; otherwise
its distance from the corresponding leaf is at least $q+1$.
A simple path meeting both pendants has an endpoint in each, so $Q$
contains a shortest $s$--$t$ subpath. This subpath selects one rung in
every ladder. Extending $Q$ to both terminal leaves therefore gives
the corresponding shortest path $P_\rho$, and adding vertices cannot
increase eccentricity.
\end{proof}

\begin{claim}\label{clm:tests}
For every assignment $\rho$, we have $\ecc_G(P_\rho)\leq q$ if and
only if $\rho$ satisfies the \CSP\ instance.
\end{claim}
\begin{proof}
Fix a test $\tau=(u,v,a,b)$. A shortest route from $c_\tau$ to
$P_\rho$ first reaches $P_\rho$ in the core, since either terminal
pendant is entered through a vertex of $P_\rho$. It starts with one
complete connection path. Its remaining subpath has endpoints in
$H$, so stays in $H$; it follows the carrier to its port and then
reaches $P_\rho$ inside $C$.
For $(w,h)\in\{(u,a),(v,b)\}$ and $p=p_w(\tau)$, the two possible
route lengths through its carriers are, by~\eqref{eq:ports},
\begin{align*}
 (L-N_w+h+1-p)+p+(N_w-\rho(w))&=L+1+h-\rho(w),\\
 (L-h+1-p)+p+\rho(w)&=L+1+\rho(w)-h.
\end{align*}
Taking the minimum of these four values and adding the test pendant's
length $L$ yields
\begin{equation}\label{eq:test}
 d_G(z_\tau,P_\rho)
 =q+1-\max\{|\rho(u)-a|,|\rho(v)-b|\}.
\end{equation}
This distance exceeds $q$ exactly when
$(\rho(u),\rho(v))=(a,b)$. Thus all test leaves are within distance
$q$ exactly when $\rho$ avoids every forbidden pair.

It remains to check other vertices. Every vertex of $H$ is within
$L-1$ of $P_\rho$ by~\eqref{eq:diam}. Every connection-path vertex,
including its center, is within $L$ of $H$, so its distance from
$P_\rho$ is at most $2L-1<q$. The terminal pendants belong to
$P_\rho$. Finally, on a test pendant the distance from $P_\rho$
increases with the distance from its center, and is largest at its
leaf. Hence covering all test leaves covers all of $G$.
\end{proof}

Claims~\ref{clm:paths} and~\ref{clm:tests} prove that the source instance
is satisfiable if and only if $(G,q)$ is a yes-instance of \ESP.

\subsubsection{Cutwidth and completion of the proof}

We construct an ordering $\pi$. First list the terminal pendant from
$s^*$ towards $s$, omitting $s$. To \emph{process a ladder} $v$, list,
for $a=0,\ldots,N_v$,
\[
 A_{v,a},\quad\text{the internal vertices of rung $a$ from top to bottom},
 \quad B_{v,a}.
\]
To \emph{process a test} $\tau=(u,v,a,b)$, first list its four carrier
attachments
\[
U_u^{p_u(\tau)},\quad V_u^{p_u(\tau)},\quad
U_v^{p_v(\tau)},\quad V_v^{p_v(\tau)}.
\]
Next list the internal vertices of its four connection paths, one
path at a time from its carrier attachment towards $c_\tau$; then list
$c_\tau$ and its pendant vertices in order towards $z_\tau$.

Process the ladders of $x_1,\ldots,x_k$, followed by all tests internal
to $X$. For $j=1,\ldots,m$, process ladder $y_j$ and then all tests
containing $y_j$. Within each block use the previously fixed test order.
Finally, list the terminal pendant from the neighbor of $t$ to $t^*$.
Every non-root carrier vertex belongs to exactly one test, and these
tests are processed in increasing order of their positions on each
carrier. Thus every vertex is listed exactly once.

\begin{claim}\label{clm:cutwidth}
The ordering $\pi$ satisfies $\cutwidth(G,\pi)\leq2k+7$.
\end{claim}
\begin{proof}
Fix a cut in the ordering, and call the vertices before the cut
\emph{listed}. The root of each carrier is listed when its ladder
is processed, before any of its non-root vertices. The non-root
vertices are subsequently listed in increasing order of their
positions on the carrier, because tests are processed in the order
used to define $p_v(\tau)$. Thus the listed vertices of each carrier
form an initial segment, possibly empty or the whole path.
Consequently, at most one edge of that carrier crosses the cut:
the edge joining its last listed vertex to its first unlisted
vertex, when both exist.

The two carriers of each variable in $X$ therefore contribute at
most $2k$ crossing edges altogether. For each $y_j\in Y$, its block
consists of its ladder followed by all tests containing $y_j$.
Before this block, none of its carrier vertices, including the
roots, is listed; after the block, all of them are listed.
Hence a variable of $Y$ can contribute crossing carrier edges only
if the cut lies within its block. At most one such block contains
the cut, and its two carriers contribute at most two crossing edges.
Therefore, at most $2k+2$ carrier edges cross the cut.

During a ladder, no test is partially processed. At most one edge of
each rail and one edge of the current rung cross. The incoming
inter-ladder edge stops crossing when the first ladder vertex is
listed, and the outgoing edge starts crossing only when its last
vertex is listed; the terminal-pendant edges behave likewise.
Thus at most three non-carrier edges cross.

During a test, a whole initial segment of the ladder sequence has
been listed, so at most one core or terminal-pendant edge crosses.
Each of the test's four connection paths contributes at most one
crossing edge. When its center is listed, all four paths are complete,
and at most one edge of its pendant crosses. Earlier tests are complete;
later tests and their carrier attachments are unlisted. Thus at most
four test edges cross, giving at most $(2k+2)+1+4=2k+7$ edges altogether.
Cuts between operations obey the same bounds. During the initial and
final terminal-pendant operations only one edge crosses.
\end{proof}

\begin{proof}[Completion of the proof of Theorem~\ref{thm:cw-whard}]
The graph is simple and connected by construction. Each test center
has degree five, and all other vertices have degree at most three.
This includes $N_v=0$: the coinciding port and ladder terminal then
has one rung neighbor, at most one carrier neighbor, and one
inter-ladder or terminal-pendant neighbor.
The core has $O(S^2)$ vertices, the carriers add $2\sum_v t_v=4T$,
each test adds $O(L)$, and the terminal pendants add $O(L)$.
Consequently,
\[
 |V(G)|=O\bigl(S^2+(T+1)L\bigr),
\]
which is polynomial in the source input since $T\leq S^2$ and
$L=2(S+T)$. All lengths are realized by subdivisions. Both $G$ and
$\pi$ can therefore be constructed in polynomial time.

If no cover is supplied, let $X$ be the endpoints of a greedily
computed maximal matching $M$ in $F$. Maximality makes $X$ a vertex
cover, while every vertex cover meets each of the pairwise disjoint
edges of $M$. Hence $k=|X|=2|M|\leq2\vc(F)$.
The correctness claims and Claim~\ref{clm:cutwidth} now give a
polynomial-time reduction with
\[
 \cutwidth(G)\leq\cutwidth(G,\pi)\leq2k+7\leq4\vc(F)+7.
\]
Proposition~\ref{prop:source} proves the asserted $\Wthree$-hardness,
including when the ordering is supplied with the output instance.
\end{proof}

\section{Para-NP-hardness}
\label{sec:para-np-hardness}

\subsection{Tree-independence number}
\label{sec:mesp-tree-independence}

Recall that $\ta(G)$ is the minimum, over all tree decompositions, of the
largest independence number of a bag; see
Subsection~\ref{sec:graph-parameters}.

We show that \ESP{} remains NP-hard even when the input graph has
tree-independence number at most two. Moreover, the reduction produces a tree
decomposition together with a partition of every bag into two cliques.

The construction is a variant of the SAT reduction of Dragan and
Leitert~\cite[Theorem~7]{DraganLeitert2017}: we add the
literal-pair edges and make the clause vertices a clique to obtain
a tree decomposition whose bags are unions of two cliques.

\begin{restatable}{theorem}{thmtreeindependencehardness}
\label{thm:hardness}
\ESP{} is NP-hard on connected graphs $G$ with $\ta(G)\leq2$. This remains
true when the input contains a tree decomposition of $G$ and an explicit
partition of every bag into two cliques.
\end{restatable}

\subsubsection{Construction}

We give a polynomial-time reduction from \textsc{3-SAT}. Let $\Phi$ be a
$3$-CNF formula with variables $x_1,\ldots,x_n$ and clauses
$D_1,\ldots,D_m$. We may assume that $n,m\geq1$ and that every clause is
nonempty. Repeated variables within a clause cause no difficulty. Set
$R=n+1$.

Create connector vertices $a_0,\ldots,a_n$. For every $i$ with
$1\leq i\leq n$, create two literal vertices $p_i$ and $q_i$ and add the edges
\[
 a_{i-1}p_i,\quad a_{i-1}q_i,\quad
 a_ip_i,\quad a_iq_i,\quad p_iq_i
 \qquad(1\leq i\leq n).
\]
Let $H$ denote the resulting
variable-chain graph. The vertices $p_i$ and $q_i$ represent the literals
$x_i$ and $\neg x_i$, respectively.

Create a clique $C=\{c_1,\ldots,c_m\}$. For every occurrence of $x_i$ in
$D_j$, add a path of length exactly $R$ between $c_j$ and $p_i$. Similarly,
for every occurrence of $\neg x_i$ in $D_j$, add a path of length exactly
$R$ between $c_j$ and $q_i$. We call these paths the \emph{occurrence
paths}. Their internal vertices are new and pairwise distinct, and every
internal vertex has degree two. 

Finally, attach a pendant path of length $R$ to $a_0$, with other endpoint
$s$, and a pendant path of length $R$ to $a_n$, with other endpoint $t$. No further
edges are added. The output instance is $(G,R)$.

Since every clause is nonempty, the graph $G$ is connected. There are at
most $3m$ occurrence paths, each of length $R=n+1$. Hence $G$ has
$O(nm+n+m)$ vertices and $O(nm+n+m^2)$ edges and can be constructed in
polynomial time.

\subsubsection{Bounding the tree-independence number}

We first construct a tree decomposition of $G-C$ whose bags are cliques. For
$H$, consider the bags
\[
 A_i=\{a_{i-1},p_i,q_i\}
 \qquad\text{and}\qquad
 B_i=\{a_i,p_i,q_i\}
 \qquad(1\leq i\leq n),
\]
arranged in the path
\[
 A_1-B_1-A_2-B_2-\cdots-A_n-B_n.
\]
Every bag induces a triangle and every edge of $H$ is covered. Each literal
vertex occurs in the consecutive bags $A_i,B_i$. For $1\leq i<n$, the
connector $a_i$ occurs in the consecutive bags $B_i,A_{i+1}$, whereas
$a_0$ and $a_n$ each occur in one bag. Thus the running-intersection
condition holds.

After deleting $C$, every occurrence path is a pendant path attached to its
literal endpoint. The two tails are also pendant paths attached to $H$. For
each such path $v_0v_1\cdots v_\ell$, where $v_0\in V(H)$, introduce the
chain of bags
\[
 \{v_0,v_1\},\{v_1,v_2\},\ldots,\{v_{\ell-1},v_\ell\},
\]
and join its first bag to an existing bag containing $v_0$. The new bag
chains are pairwise disjoint as sets of decomposition nodes. Every new bag is
a clique, every new edge is covered, and the bags containing any fixed vertex
remain connected. We obtain a tree decomposition
$(T,(K_u)_{u\in V(T)})$ of $G-C$ with clique bags.

Replace each bag $K_u$ by $K'_u=K_u\cup C$. All edges outside $C$ remain
covered, all edges inside $C$ are covered, and an edge between $C$ and a
vertex $v\notin C$ is covered by any enlarged bag containing $v$. The bags
containing a vertex outside $C$ remain connected, while every vertex of $C$
appears in every bag. This is therefore a tree decomposition of $G$.

Each bag $K'_u$ is the union of the two cliques $K_u$ and $C$. Consequently,
every independent set in $G[K'_u]$ has size at most two. The decomposition
has polynomially many bags, each of size at most $m+3$, and is produced by
the reduction. Thus $\ta(G)\leq2$.

\subsubsection{Correctness}

\begin{proof}[Proof of Theorem~\ref{thm:hardness}]
We begin with the structure of the shortest paths through the variable chain.

\begin{claim}\label{lem:coretree}
The distance between $a_0$ and $a_n$ is $2n$. Moreover, the shortest
$a_0$--$a_n$ paths are precisely the paths
\refstepcounter{equation}\label{eq:core}
\[
 a_0,z_1,a_1,z_2,a_2,\ldots,z_n,a_n,
 \qquad z_i\in\{p_i,q_i\}\quad(1\leq i\leq n)
 \qquad\text{(\theequation)}.
\]
\end{claim}

\begin{proof}
Every displayed path exists and has length $2n$. A simple path with both
endpoints in $H$ cannot enter a pendant tail, since it would have to leave
through the same attachment vertex. If an $a_0$--$a_n$ path leaves $H$
through an occurrence path, then it must traverse that path completely to
reach $C$ and traverse another occurrence path completely to return to $H$.
This uses at least $2R=2n+2$ edges. Hence every shortest $a_0$--$a_n$ path
is contained in $H$.

The consecutive variable gadgets meet only at their connector vertices. An
$a_0$--$a_n$ path in $H$ must therefore visit
$a_0,a_1,\ldots,a_n$ in this order. Crossing the $i$th gadget requires at
least two edges, with equality precisely for the paths
$a_{i-1}p_ia_i$ and $a_{i-1}q_ia_i$. Summing over all gadgets proves the
claim.
\end{proof}

For a path $L$ of the form~\eqref{eq:core}, define the assignment $\sigma_L$
by
\[
 \sigma_L(x_i)=\mathrm{true}
 \quad\Longleftrightarrow\quad
 p_i\in V(L).
\]
Exactly one of $p_i,q_i$ lies on $L$, so this assignment is well-defined. A
literal is true under $\sigma_L$ precisely when its corresponding literal
vertex lies on $L$.

Suppose first that $\Phi$ is satisfiable, and fix a satisfying assignment.
Choose $z_i=p_i$ when $x_i$ is true and $z_i=q_i$ when $x_i$ is false, and
let
$
 L_\sigma=a_0,z_1,a_1,z_2,a_2,\ldots,z_n,a_n.
$
By Claim~\ref{lem:coretree}, the path $L_\sigma$ is shortest. We show
that its eccentricity is at most $R$.

Every connector vertex belongs to $L_\sigma$. For each $i$, the literal
vertex not chosen by $L_\sigma$ is adjacent to the chosen one through the
edge $p_iq_i$. Hence every vertex of $H$ has distance at most one from
$V(L_\sigma)$. Every vertex of the left tail is at distance at most $R$ from
$a_0\in V(L_\sigma)$, and the analogous statement holds for the right tail
and $a_n$.

Consider an internal vertex $w$ of an occurrence path and let $z$ be the
literal endpoint of this path. The $w$--$z$ segment has length at most
$R-1$, while $\dist_G(z,V(L_\sigma))\leq1$. Hence
\[
 \dist_G(w,V(L_\sigma))
 \leq \dist_G(w,z)+\dist_G(z,V(L_\sigma))
 \leq (R-1)+1=R.
\]
We have now accounted for every vertex outside $C$, and each of them is at
distance at most $R$ from $V(L_\sigma)$.

It remains to consider the clause vertices. Since the fixed assignment
satisfies $D_j$, some literal of $D_j$ is true. Its literal vertex belongs to
$L_\sigma$, and the corresponding occurrence path joins it to $c_j$ in
exactly $R$ edges. Thus
$
 \dist_G(c_j,V(L_\sigma))\leq R
$
for every $j\in\{1,\ldots,m\}$. Consequently,
$\ecc_G(L_\sigma)\leq R$, and $(G,R)$ is a yes-instance of \ESP{}.

For the reverse direction, suppose that $(G,R)$ is a yes-instance. Let $Q$
be a shortest path with $\ecc_G(Q)\leq R$. The next claim extracts from $Q$
the path encoding the desired assignment.

\begin{claim}\label{lem:forcing}
After possibly reversing $Q$, the path $Q$ contains a shortest
$a_0$--$a_n$ path
\[
 P=a_0,z_1,a_1,z_2,a_2,\ldots,z_n,a_n,
 \qquad z_i\in\{p_i,q_i\}\quad(1\leq i\leq n)
\]
as a subpath. Moreover, every vertex of $Q$ outside $P$ lies on one of the
two pendant tails.
\end{claim}

\begin{proof}
Let $T_s$ and $T_t$ denote the two pendant tails, including their attachment
vertices $a_0$ and $a_n$, respectively. Since
$\dist_G(s,a_0)=\dist_G(t,a_n)=R$, every vertex outside $T_s$ is at distance
at least $R+1$ from $s$, and every vertex outside $T_t$ is at distance at
least $R+1$ from $t$. As $\ecc_G(Q)\leq R$, the path $Q$ meets both $T_s$
and $T_t$. The two tails attach to the rest of $G$ only at $a_0$ and $a_n$.
Since $Q$ is connected, it follows that $a_0,a_n\in V(Q)$.

Let $P$ be the $a_0$--$a_n$ subpath of $Q$. Since every subpath of a
shortest path is shortest, Claim~\ref{lem:coretree} shows that, after reversing
$Q$ if necessary,
\[
 P=a_0,z_1,a_1,z_2,a_2,\ldots,z_n,a_n,
 \qquad z_i\in\{p_i,q_i\}\quad(1\leq i\leq n).
\]

It remains to locate the vertices of $Q$ outside $P$. Since every shortest
path is induced, $Q$ cannot extend $P$ at $a_0$ through the unselected
vertex of $\{p_1,q_1\}$: that vertex is adjacent to the selected one through
the edge $p_1q_1$ and would create a chord. Thus an extension at $a_0$ must
enter the left tail, whose degree-two vertices force the extension to remain
on that tail. The same argument, using the edge $p_nq_n$, shows that an
extension at $a_n$ lies on the right tail. Hence every vertex of
$V(Q)\setminus V(P)$ lies on one of the two pendant tails.
\end{proof}

Let $P$ be the subpath given by Claim~\ref{lem:forcing}. Any path from a
clause vertex to the left tail passes through $a_0$, and any path to the
right tail passes through $a_n$. Since $a_0,a_n\in V(P)$ and
$V(Q)\setminus V(P)$ contains only tail vertices, we have
$
 \dist_G(c_j,V(P))=\dist_G(c_j,V(Q))\leq R
$
for every $j\in\{1,\ldots,m\}$. It remains to translate this distance bound
back into a truth assignment.

\begin{claim}\label{lem:distances}
For every $j\in\{1,\ldots,m\}$,
\refstepcounter{equation}\label{eq:clause}
\[
 \dist_G(c_j,V(P))\leq R
 \quad\Longrightarrow\quad
 \text{$\sigma_{P}$ satisfies $D_j$}
 \qquad\text{(\theequation)}.
\]
\end{claim}

\begin{proof}
Fix $j$ and let $W$ be a shortest $c_j$--$V(P)$ path of length at most
$R$. The path $W$ starts in $C$ and ends in $H$. Consider the last vertex of
$C$ and the first vertex of $H$ on $W$. The segment between them is a
complete occurrence path of length $R$, since every internal vertex of an
occurrence path has degree two. As $W$ has length at most $R$, this segment
is the whole path $W$. In particular, $W$ starts at $c_j$ and ends at a
literal vertex of $P$. The corresponding literal occurs in $D_j$ and is
true under $\sigma_{P}$. Hence $\sigma_{P}$ satisfies $D_j$.
\end{proof}

Applying Claim~\ref{lem:distances} to every clause shows that
$\sigma_{P}$ satisfies $\Phi$. Hence $\Phi$ is satisfiable.

This proves the theorem, including the version in which the tree
decomposition and the clique partition of every bag are supplied.
\end{proof}

\begin{corollary}\label{cor:xp}
\ESP{} parameterized by tree-independence number is para-NP-hard.
\end{corollary}

\begin{proof}
Theorem~\ref{thm:hardness} gives NP-hardness already when the parameter is at
most two.
\end{proof}

\section{Kernelization}
\label{sec:kernelization}

This section presents a polynomial kernel for \ESP{} parameterized by
maximum leaf number, followed by kernel lower bounds for distance to
clique, feedback edge number, vertex cover number, and bandwidth.

\providecommand{\coNP}{\mathsf{coNP}}
\providecommand{\poly}{\mathrm{poly}}

\paragraph*{Cross-composition framework.}
We first recall the definitions used by the cross-composition
framework~\cite{BJK14}. An equivalence relation $\mathcal R$ on
$\Sigma^*$ is a \emph{polynomial equivalence relation} if it has the
following properties.
\begin{enumerate}
  \item Given strings $x$ and $y$, one can decide in time polynomial in
  $|x|+|y|$ whether $x\mathrel{\mathcal R}y$.
  \item For every finite set $S\subseteq\Sigma^*$, the number of
  equivalence classes of $\mathcal R$ represented in $S$ is bounded by a
  polynomial in the size of the largest string in $S$.
\end{enumerate}
Let $L\subseteq\Sigma^*$ be a language, let $\mathcal R$ be a polynomial
equivalence relation, and let $Q\subseteq\Sigma^*\times\mathbb N$ be a
parameterized problem. An \emph{OR-cross-composition} of $L$ into $Q$
with respect to $\mathcal R$ is a polynomial-time algorithm that receives
$t$ instances $x_1,\ldots,x_t$ of $L$ from the same equivalence class and
outputs an instance $(y,k)$ of $Q$ with the following properties.
\begin{enumerate}
  \item The parameter $k$ is bounded by a polynomial in
  $\max_{i\in[t]}|x_i|+\log t$.
  \item The instance $(y,k)$ is a yes-instance if and only if at least
  one input instance $x_i$ is a yes-instance.
\end{enumerate}

\begin{theorem}[Bodlaender, Jansen, and
Kratsch~\cite{BJK14}, Corollary~3.6]
\label{thm:cross-composition-framework}
If an NP-hard language OR-cross-composes into a parameterized problem
$Q$, then $Q$ admits no polynomial kernelization or
polynomial compression unless $\NP\subseteq\coNP/\poly$ and the
polynomial hierarchy collapses.
\end{theorem}

\subsection{Maximum leaf number}
\label{sec:mesp-max-leaf}

Recall that $\ell=\ml(G)$ is the maximum number of leaves in a spanning
tree of $G$, as defined in Subsection~\ref{sec:graph-parameters}.

We prove that \ESP{} admits a polynomial kernel when parameterized by
maximum leaf number. The main idea is to suppress maximal degree-two chains
and store their lengths in binary. A small-core representation gives a
polynomial compression, and an explicit certificate shows that the compressed
language belongs to $\NP$.

\begin{restatable}{theorem}{thmmaxleafkernel}
\label{thm:mainleaf}
\ESP{} admits a polynomial kernel parameterized by maximum leaf number.
More precisely, it admits an $O(\ell^4)$-bit polynomial compression into a
fixed language in $\NP$, and consequently an \ESP{} kernel of encoding
length $\ell^{O(1)}$.
\end{restatable}

\subsubsection{A small weighted core}

A weighted multigraph may contain parallel edges and loops. If every edge
$e$ is assigned a positive integer length $L_e$, its subdivision is obtained
by replacing $e$ by an internally vertex-disjoint path of length $L_e$; a
loop is replaced by a cycle of length $L_e$ based at its incident vertex.

\begin{restatable}{lemma}{maxleafcore}
\label{lem:core}
Let $G$ be connected, let $n\geq2$, and let $\ml(G)=\ell$. In polynomial
time, one can construct a connected multigraph $H$, possibly with loops, and
positive integer lengths $(L_e)_{e\in E(H)}$ such that replacing every edge
$e$ by a path of length $L_e$ reconstructs $G$. Moreover, writing
$p=|V(H)|$ and $q=|E(H)|$, we have
\[
 p<4\ell,\qquad 1\leq q<2\ell^2,\qquad L_e\leq n.
\]
For $0\leq R<n$, the tuple $(H,(L_e),R)$ has an encoding of length
$O(q\log n)$ bits.
\end{restatable}

\begin{proof}
Let $B=\{v\in V(G):\deg_G(v)\neq2\}$. If $B=\varnothing$, then $G$ is a
cycle; in this case, choose an arbitrary vertex and add it to $B$. Suppress the internal vertices of every maximal path whose endpoints belong
to $B$ and whose internal vertices have degree two in $G$. Replace each such
path by a single edge of a multigraph $H$, and assign to this edge the length
of the corresponding path in $G$. If both endpoints of such a path coincide,
the resulting edge is a loop; distinct paths with the same pair of endpoints
give parallel edges. The
internal vertices of distinct maximal degree-two paths are disjoint, and every
edge of $G$ belongs to exactly one such path. Therefore, replacing each edge
of $H$ by its corresponding path reconstructs $G$ exactly. These paths and
their lengths can be found in polynomial time by a traversal of $G$.

By Bouland's subdivision bound, as restated by Ku\v{c}era and
Such\'y~\cite[Lemma~28]{KuceraSuchy2023}, a connected graph of
maximum leaf number $\ell$ is a subdivision of a multigraph on fewer
than $4\ell$ vertices. Hence $p<4\ell$. We also have
$\Delta(G)\leq\ell$. Indeed, a breadth-first spanning tree rooted at a vertex
of degree $d$ has at least one leaf in each of its $d$ branches. Degrees at
the retained vertices are unchanged, with a loop counted twice. Consequently,
\[
 2q=\sum_{v\in B}\deg_G(v)\leq|B|\ell<4\ell^2.
\]
Thus $q<2\ell^2$. Since $G$ is connected and has at least two vertices, it
has at least one edge and therefore at least one maximal degree-two path.
Hence $q\geq1$.
Moreover, every vertex of $H$ is incident with at least one edge because
$H$ is connected and has at least one edge. Double-counting vertex--edge
incidences, with each loop contributing two incidences, gives
$
 p\leq\sum_{v\in V(H)}\deg_H(v)=2q.
$

Finally, every edge $e\in E(H)$ represents a maximal degree-two path in $G$.
Since such a path contains at least one and at most $n$ edges, we have
$
 1\leq L_e\leq n
$
for every $e\in E(H)$. Relabel the retained vertices by
$1,\ldots,p$. Every length is at most $n$, every vertex label uses
$O(\log n)$ bits, and $R<n$. Storing the edge list, the lengths, and $R$
therefore uses $O(q\log n)$ bits.
\end{proof}

\begin{restatable}{lemma}{maxleafenumeration}
\label{lem:enumeration}
Given an explicit graph $G$ and a representation from
Lemma~\ref{lem:core} with $q$ edges, \ESP{} can be decided in
$2^q n^{O(1)}$ time.
\end{restatable}

\begin{proof}
First test the $n$ one-vertex paths. Next enumerate all pairs of distinct
vertices $s,t\in V(G)$. Retain $s$ and $t$ as additional vertices of the
multigraph representation, splitting their chains when necessary. Let
$H_{s,t}$ be the resulting weighted multigraph. It has at most $q+2$ edges,
and its subdivision is still exactly $G$.

Every simple $s$--$t$ path in $G$ traverses each chain of $H_{s,t}$ either
completely or not at all: the endpoints are retained vertices, and every
other internal chain vertex has degree two. Thus the path is the expansion of
a simple $s$--$t$ path in $H_{s,t}$.

Enumerate all subsets $F\subseteq E(H_{s,t})$, of which there are at most
$2^{q+2}$. Keep a subset only if its nonisolated subgraph is a simple
$s$--$t$ path. This can be checked using connectivity and degrees, while
rejecting loops and cycles. The expanded path is shortest precisely when
$
 \sum_{e\in F}L_e=\dist_G(s,t).
$
For every surviving choice, expand the path in $G$ and compute its
eccentricity by a multi-source breadth-first search. Every feasible path is
considered, and every accepted path satisfies the required conditions. There
are $O(n^2 2^q)$ candidates, and each is processed in polynomial time.
\end{proof}

 Note that the FPT-ness of ESP parameterized by maximum leaf number is already established in \cite{KuceraSuchy2023}. Since $q<2\ell^2$, Lemma~\ref{lem:enumeration} also gives an
$2^{O(\ell^2)}n^{O(1)}$-time FPT algorithm for ESP parameterized by the maximum leaf number.   We next use the same enumeration
algorithm to obtain a polynomial compression.

\subsubsection{A polynomial-size compression}

Define the language \SubESP{} as follows. An input consists of a connected
multigraph $H$, a positive integer $L_e$ for every $e\in E(H)$, and a
nonnegative integer $R$, with all numbers written in binary. Replace each edge
by an internally vertex-disjoint path of the specified length; a loop is
replaced by a cycle based at its incident vertex. Denote the expanded
unweighted graph by $\operatorname{Sub}(H,L)$. The input belongs to
\SubESP{} exactly when the expansion is simple and has a shortest path of
eccentricity at most $R$.

Simplicity can be checked without performing the expansion: each loop must
have length at least three, and there cannot be two length-one edges with the
same pair of distinct endpoints. Malformed inputs are rejected. Retained
vertices may have arbitrary degree.

\begin{restatable}{proposition}{maxleafcompression}
\label{prop:compression}
There is a polynomial-time compression of \ESP{} parameterized by maximum
leaf number $\ell$ into \SubESP{} with $O(\ell^4)$ output bits.
\end{restatable}

\begin{proof}
Let $n=|V(G)|$. If $n=1$ or $R\geq n$, return a fixed yes-instance of
\SubESP{}; in the latter case any one-vertex path has eccentricity at most
$n-1\leq R$. Hence assume that $n\geq2$ and $R<n$. Compute the
representation from Lemma~\ref{lem:core}, and let $q=|E(H)|$.

If $q\leq\log_2 n$, run the algorithm of Lemma~\ref{lem:enumeration}. Since
$2^q\leq n$, this takes polynomial time. Return a fixed yes- or no-instance
of \SubESP{} according to the answer. For example, use
$(K_2,L\equiv1,R=0)$ and $(K_3,L\equiv1,R=0)$, respectively.

Otherwise, output $(H,(L_e),R)$ itself. Its expansion is exactly $G$, so the
output is equivalent to the input. Since $\log_2 n<q$, its encoding length is
$
 O(q\log n)=O(q^2)=O(\ell^4).
$

The exhaustive algorithm is used only in the branch in which it runs in
polynomial time; the other branch merely constructs the representation. The
value of $\ell$ need not be computed.
\end{proof}

\subsubsection{The compressed language belongs to NP}

The implicit subdivision may be exponentially larger than its encoding.
Thus, membership of ordinary \ESP{} in $\NP$ does not by itself imply that
\SubESP{} belongs to $\NP$. The next lemma gives a polynomial-size
certificate and an explicit verifier.

\begin{restatable}{lemma}{submespinNP}
\label{lem:np}
$\SubESP\in\NP$.
\end{restatable}

\begin{proof}
An endpoint of a proposed solution is specified either by a vertex of $H$, or
by an edge identifier and an integer offset strictly between $0$ and the
length of that edge. Fix an orientation of every edge, including loops, for
the purpose of specifying offsets. Specify both endpoints $s$ and $t$ in this
way, and split their edges at those positions. Identical endpoints are
identified. The resulting multigraph $H'$ has at most two additional vertices
and two additional edges, and its subdivision is the original expanded graph.

The rest of the certificate is a simple path $Q$ in $H'$ from $s$ to $t$,
specified by its vertices and edge identifiers; it may consist of one vertex.
This description has polynomial length. As in Lemma~\ref{lem:enumeration},
every simple path in the expansion has such a certificate after its endpoints
are retained.

Compute weighted shortest-path distances $D$ in $H'$. The expansion $P$ of
$Q$ is shortest if and only if
\refstepcounter{equation}\label{eq:shortest}
\[
 \sum_{e\in E(Q)}L_e=D(s,t).\qquad\text{(\theequation)}
\]
All weights are positive integers, so these computations take polynomial time
in the binary input size.

To verify the eccentricity, put $S=V(Q)$ and define
\[
 \delta(v)=\min_{z\in S}D(v,z)\qquad(v\in V(H')).
\]
For a retained vertex $v$, this is its distance to $P$ in the expanded graph.
Indeed, reaching an internal vertex of a selected edge requires first reaching
one of its endpoints, both of which lie in $S$, while all internal vertices of
a selected edge already belong to $P$.

Consider an unselected edge $e=uv$ of length $L$. The vertex at integer
position $i$ on its chain, where $0\leq i\leq L$, has distance
\refstepcounter{equation}\label{eq:point-distance}
\[
 \min\{\delta(u)+i,\delta(v)+L-i\}\qquad\text{(\theequation)}
\]
from $P$. The same formula applies to loops. Distances to a set are Lipschitz
along edges, and hence $|\delta(u)-\delta(v)|\leq L$. The two affine functions
in~\eqref{eq:point-distance} therefore intersect inside $[0,L]$. Evaluating
at the nearest integer or integers gives
\refstepcounter{equation}\label{eq:maximum}
\[
 \max_{i\in\{0,\ldots,L\}}
 \min\{\delta(u)+i,\delta(v)+L-i\}
 =\left\lfloor\frac{L+\delta(u)+\delta(v)}{2}\right\rfloor
 \qquad\text{(\theequation)}.
\]
Consequently, $\ecc(P)\leq R$ holds exactly when
\refstepcounter{equation}\label{eq:retained}
\[
 \delta(v)\leq R\quad\text{for every }v\in V(H')
 \qquad\text{(\theequation)}
\]
and
\refstepcounter{equation}\label{eq:chain}
\[
 L_e+\delta(u)+\delta(v)\leq2R+1
 \quad\text{for every }e=uv\notin E(Q)
 \qquad\text{(\theequation)}.
\]
The constant $1$ in~\eqref{eq:chain} accounts for the discrete subdivision
vertices.

Thus the verifier can check path validity,~\eqref{eq:shortest},
\eqref{eq:retained}, and~\eqref{eq:chain} in time polynomial in the compressed
input size, without expanding a chain. Every accepted certificate yields a
feasible path, and every feasible path supplies an accepted certificate.
Hence $\SubESP\in\NP$.
\end{proof}

\subsubsection{From compression to an ordinary ESP kernel}

We finish by translating the compressed instance back to an ordinary
unweighted \ESP{} instance.

\begin{proof}[Proof of Theorem~\ref{thm:mainleaf}]
Proposition~\ref{prop:compression} and Lemma~\ref{lem:np} give a polynomial
compression into a fixed language in $\NP$. The ordinary decision version of
\ESP{} is NP-complete, already on connected unweighted simple
graphs~\cite[Theorem~7]{DraganLeitert2017}. Hence there is a fixed
polynomial-time many-one reduction
\[
 f:\SubESP\longrightarrow\ESP.
\]
Apply $f$ to the compressed output $y$. There is a constant $c$ such that the
encoding length of the resulting ordinary instance $(G',R')$ is at most
$O(|y|^c)=\ell^{O(1)}$. Moreover,
\[
 \begin{aligned}
 (G,R)\in\ESP
 &\quad\Longleftrightarrow\quad y\in\SubESP\\
 &\quad\Longleftrightarrow\quad (G',R')\in\ESP.
 \end{aligned}
\]
The output parameter is automatically bounded, since
$\ml(G')\leq|V(G')|\leq\ell^{O(1)}$. If the formulation includes a promised
upper bound on maximum leaf number, use $|V(G')|$ as the output bound. The
entire mapping runs in polynomial time in the original input size and returns
one unweighted \ESP{} instance. It is therefore a polynomial many-one
kernelization, rather than merely a weighted compression or a Turing
kernelization.
\end{proof}

\providecommand{\ESP}{\textnormal{\textsc{Eccentricity Shortest Path}}}
\providecommand{\SAT}{\textsc{CNF-Sat}}
\providecommand{\dist}{\operatorname{dist}}
\providecommand{\ecc}{\operatorname{ecc}}
\providecommand{\NP}{\mathsf{NP}}

\subsection{Distance to clique}
\label{sec:mesp-dtc-kernel-lower-bound}

We prove a kernel lower bound for \ESP\ parameterized by distance to
clique via an OR-cross-composition from \SAT. Recall that $\dtc(G)$ is the
minimum size of a clique modulator, namely a set $X$ such that $G-X$ is a
clique; see Subsection~\ref{sec:graph-parameters}.

\begin{theorem}
\label{thm:mesp-dtc-kernel-lower-bound}
The problem \ESP\ parameterized by distance to clique has no polynomial
kernel unless $\NP\subseteq\coNP/\poly$. It also holds when the input
includes a clique modulator and is parameterized by its size.
\end{theorem}

\subsubsection{Source instances and the selector formula}

To prove Theorem~\ref{thm:mesp-dtc-kernel-lower-bound}, we give an
OR-cross-composition from \SAT. We restrict the source language to
formulas that explicitly list at least one variable, contain at least
one clause, and contain no empty clause. This restriction remains
NP-hard. We define a polynomial equivalence relation $\mathcal R$ such that all strings that do not encode a valid source instance are equivalent. Of the remaining instances, two formulas are equivalent if and only if they contain the same number of variables. Before comparing formulas, we relabel their variables as $x_1,\ldots,x_n$. Any set of valid inputs of encoding length at most $L$ is partitioned into at most $L$ equivalence classes, so $\mathcal R$ is a polynomial equivalence
relation. If the input instances are malformed, then we output a fixed
no-instance.

Let $\Phi_1,\ldots,\Phi_t$ be valid equivalent instances, each on the
variables $x_1,\ldots,x_n$. Set $q=\lceil\log_2t\rceil$ and duplicate
$\Phi_1$ until the list contains $2^q$ formulas. This does not change
whether at least one input formula is satisfiable. Introduce selector
variables $y_1,\ldots,y_q$. For every $i\in[2^q]$, let $L_i$ be the
disjunction of selector literals that is false exactly when the selector
variables encode $i-1$ in binary. Replace every clause $D$ of $\Phi_i$
by $D\lor L_i$, and let $\Psi$ be the conjunction of all resulting
clauses. If $q=0$, then $L_1$ is the empty disjunction and is false. Fixing the selector variables to the encoding of $i-1$ leaves precisely
the clauses originating from $\Phi_i$ to be satisfied; every other
clause contains a true selector literal. Thus a satisfying assignment of
$\Phi_i$ extends to a satisfying assignment of $\Psi$ by setting the
selector variables to encode $i-1$. Conversely, consider any satisfying
assignment of $\Psi$. Its values on $y_1,\ldots,y_q$ encode a unique
index $i\in[2^q]$, for which $L_i$ is false. For every clause $D$ of
$\Phi_i$, the clause $D\lor L_i$ belongs to $\Psi$ and must be true.
Since $L_i$ is false, $D$ itself must be true. Hence the restriction of
the assignment to $x_1,\ldots,x_n$ satisfies every clause of $\Phi_i$.
If $i>t$, then $\Phi_i$ is a copy of $\Phi_1$, so in either case an
original input formula is satisfiable. Consequently,
\begin{equation}
\label{eq:mesp-dtc-selector-equivalence}
  \Psi\text{ is satisfiable}
  \quad\Longleftrightarrow\quad
  \text{at least one of }\Phi_1,\ldots,\Phi_t\text{ is satisfiable}.
\end{equation}

\subsubsection{Graph construction}

The formula $\Psi$ has $N:=n+q$ variables, all its clauses are nonempty,
and it can be constructed in time polynomial in the total input size. Rename the variables of $\Psi$ as $x_1,\ldots,x_N$, write its
clauses as $D_1,\ldots,D_m$, and set $R=N+2$.
We construct an instance $(G,R)$ of \ESP\ as follows (see
Figure~\ref{fig:mesp-dtc-construction}).

\begin{enumerate}
  \item Create vertices $s,t$ and, for every $i\in[N]$, a layer
  $U_i=\{u_i^0,u_i^1\}$. Join $s$ to both vertices of $U_1$, make
  consecutive layers complete to each other, and join both vertices of
  $U_N$ to $t$. Let $B=\{s,t\}\cup\bigcup_{i=1}^N U_i$, and call the
  subgraph induced by $B$ the \emph{assignment backbone}.
  For an assignment $\alpha\in{\{0,1\}}^{N}$, define its \emph{assignment
  path} as $P_\alpha=(s,u_1^{\alpha_1},\ldots,
  u_N^{\alpha_N},t)$.

  \item For every $i\in[N]$ and $b\in\{0,1\}$, add a path of length
  $R-1$ from $u_i^b$ to a new vertex $w_i^b$. These paths are the
  \emph{literal tails}. Create a clique
  $C=\{c_1,\ldots,c_m\}$. Join $c_j$ to $w_i^1$ if $x_i\in D_j$, and
  join $c_j$ to $w_i^0$ if $\neg x_i\in D_j$. Thus $c_j$ is adjacent
  precisely to the endpoints of literal tails that represent literals
  satisfying $D_j$.

  \item Attach a path $s=a_0,a_1,\ldots,a_R$
  at $s$, and a path $t=b_0,b_1,\ldots,b_R$
  at $t$. These paths are the \emph{forcing tails}. All internal and
  pendant vertices introduced in the construction are distinct.
\end{enumerate}

Each labeled line in Figure~\ref{fig:mesp-dtc-construction} represents
a path of the indicated number of edges, with its internal subdivision
vertices suppressed. The drawing is not to scale, and unmarked
crossings are not vertices.

\begin{figure}[!htbp]
\centering
\resizebox{\linewidth}{!}{%
\begin{tikzpicture}[
  x=1cm,y=0.9cm,font=\small,
  vertex/.style={circle,fill=black,inner sep=1.7pt},
  endpoint/.style={vertex},
  clause/.style={circle,draw=black,fill=white,
    line width=0.8pt,inner sep=2.2pt},
  force/.style={thick},
  literal/.style={thick,violet!80!black},
  selector/.style={thick},
  chosen/.style={very thick,blue!65!black},
  clique/.style={very thick,teal!70!black},
  omitted/.style={thick,densely dotted},
  labelbox/.style={fill=white,text=black,inner sep=1.5pt,font=\footnotesize},
  every label/.style={fill=white,inner sep=1pt}
]
\node[vertex,label=below left:$s$] (s) at (0,0) {};
\node[vertex,label={[xshift=-3pt]above left:$u_1^1$}] (u11) at (1.4,0.8) {};
\node[vertex,label=below left:$u_1^0$] (u10) at (1.4,-0.8) {};
\node[vertex,label={[xshift=-3pt]above left:$u_2^1$}] (u21) at (3.6,0.8) {};
\node[vertex,label=below left:$u_2^0$] (u20) at (3.6,-0.8) {};
\node[vertex,label={[xshift=-3pt]above left:$u_i^1$}] (ui1) at (7,0.8) {};
\node[vertex,label=below:$u_i^0$] (ui0) at (7,-0.8) {};
\node[vertex,label={[xshift=-3pt]above left:$u_{N-1}^1$}] (uNm1) at (10.4,0.8) {};
\node[vertex,label=below left:$u_{N-1}^0$] (uNm0) at (10.4,-0.8) {};
\node[vertex,label={[xshift=-3pt]above left:$u_N^1$}] (uN1) at (12.6,0.8) {};
\node[vertex,label=below right:$u_N^0$] (uN0) at (12.6,-0.8) {};
\node[vertex,label=below right:$t$] (t) at (14,0) {};
\draw[selector] (s) -- (u11);
\draw[selector] (s) -- (u10);
\foreach \firstLayerVertex in {u11,u10}
  \foreach \secondLayerVertex in {u21,u20}
    \draw[selector] (\firstLayerVertex) -- (\secondLayerVertex);
\foreach \penultimateLayerVertex in {uNm1,uNm0}
  \foreach \lastLayerVertex in {uN1,uN0}
    \draw[selector] (\penultimateLayerVertex) -- (\lastLayerVertex);
\draw[selector] (uN1) -- (t);
\draw[selector] (uN0) -- (t);
\draw[omitted] (u21) -- (5.3,0.8) -- (ui1);
\draw[omitted] (u21) -- (5.3,0) -- (ui0);
\draw[omitted] (u20) -- (5.3,0) -- (ui1);
\draw[omitted] (u20) -- (5.3,-0.8) -- (ui0);
\draw[omitted] (ui1) -- (8.7,0.8) -- (uNm1);
\draw[omitted] (ui1) -- (8.7,0) -- (uNm0);
\draw[omitted] (ui0) -- (8.7,0) -- (uNm1);
\draw[omitted] (ui0) -- (8.7,-0.8) -- (uNm0);
\node[labelbox] at (5.3,-1.4) {$\cdots$};
\node[labelbox] at (8.7,-1.4) {$\cdots$};
\node[font=\footnotesize] at (7,-2.2)
  {assignment layers $B$; consecutive layers are complete to each other};

\draw[chosen] (s) -- (u11) -- (u21);
\draw[chosen,densely dashed] (u21) -- (ui0) -- (uNm1);
\draw[chosen] (uNm1) -- (uN1) -- (t);
\node[labelbox] at (7,-1.55)
  {one assignment path $P_\alpha$};

\node[endpoint,label=above left:$w_1^1$] (w11) at (0.85,2.65) {};
\node[endpoint,label=above:$w_1^0$] (w10) at (1.95,2.65) {};
\node[endpoint,label=above:$w_2^1$] (w21) at (3.05,2.65) {};
\node[endpoint,label=above:$w_2^0$] (w20) at (4.15,2.65) {};
\node[endpoint,label=above:$w_i^1$] (wi1) at (6.45,2.65) {};
\node[endpoint,label=above right:$w_i^0$] (wi0) at (7.55,2.65) {};
\node[endpoint,label=above:$w_{N-1}^1$] (wNm1) at (9.85,2.65) {};
\node[endpoint,label=above:$w_{N-1}^0$] (wNm0) at (10.95,2.65) {};
\node[endpoint,label=above right:$w_N^1$] (wN1) at (12.05,2.65) {};
\node[endpoint,label=above:$w_N^0$] (wN0) at (13.15,2.65) {};
\draw[literal] (u11) -- (w11);
\draw[literal] (u10) -- (w10);
\draw[literal] (u21) -- (w21);
\draw[literal] (u20) -- (w20);
\draw[literal] (ui1) -- node[labelbox,left] {$R-1$} (wi1);
\draw[literal] (ui0) -- (wi0);
\draw[literal] (uNm1) -- (wNm1);
\draw[literal] (uNm0) -- (wNm0);
\draw[literal] (uN1) -- (wN1);
\draw[literal] (uN0) -- (wN0);

\node[clause] (c1) at (5.25,6.05) {$c_1$};
\node[clause] (cBefore) at (5.1,5.1) {$\cdots$};
\node[clause] (cj) at (7,4.4) {$c_j$};
\node[clause] (cAfter) at (8.9,5.1) {$\cdots$};
\node[clause] (cm) at (8.75,6.05) {$c_m$};
\draw[clique] (c1) -- (cj) -- (cm) -- (c1);
\foreach \cliqueStart/\cliqueEnd in {
  c1/cBefore,c1/cAfter,cBefore/cj,cBefore/cAfter,
  cBefore/cm,cj/cAfter,cAfter/cm}
  \draw[clique,densely dotted] (\cliqueStart) -- (\cliqueEnd);
\node[draw=teal!70!black,rounded corners,
  fit=(c1)(cBefore)(cj)(cAfter)(cm),
  inner sep=7pt,label=above:{clause clique $C$}] {};
\draw[literal] (cj) to[bend right=10] (w11);
\draw[literal] (cj) -- (wi0);
\draw[literal] (cj) to[bend left=10] (wN1);
\node[anchor=west,align=left,font=\footnotesize]
  at (9.55,5.6)
  {representative clause-literal edges\\for
   $D_j=(x_1\lor\neg x_i\lor x_N)$};

\node[vertex] (aR) at (0,-3) {};
\node[vertex] (bR) at (14,-3) {};
\draw[force] (s) -- node[labelbox,left] {$R$} (aR);
\draw[force] (t) -- node[labelbox,right] {$R$} (bR);
\node[font=\footnotesize,below] at (aR) {forcing tail};
\node[font=\footnotesize,below] at (bR) {forcing tail};
\end{tikzpicture}%
}
\caption{A sketch of the graph construction used in the proof of
Theorem~\ref{thm:mesp-dtc-kernel-lower-bound}.}
\label{fig:mesp-dtc-construction}
\end{figure}
\FloatBarrier

\subsubsection{Parameter and size bounds}

We define the parameter bound
\begin{equation}
\label{eq:mesp-dtc-parameter}
\begin{aligned}
  k':=|X|
  &=2+2N+2N(R-1)+2R\\
  &=2N^2+6N+6
   =O\bigl({(n+\log t)}^{2}\bigr).
\end{aligned}
\end{equation}
where $X=V(G)\setminus C$. Since $G-X=C$ is a clique, $X$ is a
clique modulator and $\dtc(G)\le k'$.
The four terms in $|X|$ count, respectively, the vertices $s,t$, the
assignment-layer vertices, the vertices introduced on the literal
tails, and the vertices introduced on the forcing tails. Since
$n\le\max_{i\in[t]}|\Phi_i|$ and
$q=\lceil\log_2t\rceil$, the value $k'$ is polynomial in the size of
the largest input instance plus $\log t$. The cross-composition outputs
the instance $(G,R)$ of \ESP, whose parameter value is at most $k'$.
If a clique modulator must be supplied with the input, we supply $X$,
whose size is $k'$.

The graph $G$ can be constructed in polynomial time. Let
$L:=\sum_{i=1}^t|\Phi_i|$ denote the total input length. Since every
input formula has positive length and contains all $n$ variables, we
have $t\le L$ and $n\le L$. Hence $q=O(\log L)$ and $N=O(L)$. The
padded list contains exactly $2^q<2t$ formulas. Padding only duplicates
$\Phi_1$, and adding selector literals does not change the number of
clauses. Since every input formula has at most $L$ clauses, we have
$m<2tL\le 2L^2$. The graph has $m+2+2N+2N(R-1)+2R=m+O(N^2)$
vertices. Its assignment backbone, literal tails, and forcing tails
contribute $O(N^2)$ edges; there are at most $2Nm$ clause--literal
edges; and the clause clique contributes $O(m^2)$ edges. Hence $G$ has
$O(N^2+Nm+m^2)=O(L^4)$ edges, so the output size is polynomial in the
total input length. The algorithm expands every path explicitly, and therefore $G$ is unweighted. Since every clause of
$\Psi$ is nonempty, every clause vertex has a neighbor outside $C$, and
hence $G$ is connected. The description also shows that $G$ is simple
and undirected.

\subsubsection{Canonical paths and clause distances}

We now argue that $(G,R)$ is a yes-instance of \ESP\ if and only if at
least one input formula $\Phi_i$ is satisfiable. We first establish two
structural properties of the constructed graph. All distances below are
measured in $G$.

\begin{lemma}
\label{lem:mesp-dtc-normal-form}
The shortest $s$--$t$ paths are exactly the assignment paths. Moreover,
if $(G,R)$ is a yes-instance, then some assignment path is a solution
path.
\end{lemma}

\begin{proof}
Every assignment path has length $N+1$. Consider an $s$--$t$ path that
contains a clause vertex. It contains a maximal subpath whose endpoints
lie in the assignment backbone
$B=\{s,t\}\cup\bigcup_{i=1}^N U_i$, whose internal vertices lie
outside $B$, and which contains a clause vertex. This subpath traverses
two entire literal tails and two clause-literal edges, and hence has
length at least
$2(R-1)+2=2R$. Its endpoints are assignment-layer vertices, so the
original path also uses at least one edge from $s$ before this subpath
and at least one edge to $t$ after it. Its total length is therefore at
least $2R+2>N+1$. A shortest $s$--$t$ path that avoids $C$
cannot enter a literal tail or a forcing tail, since each such path is
pendant in $G-C$. It therefore lies in $B$. Within $B$, an $s$--$t$
path has length at least $N+1$, with equality exactly when it visits one
vertex of each layer in order. Hence the shortest $s$--$t$ paths are
precisely the assignment paths.

Suppose that $Q$ is a solution path. Let
$A_s=\{a_1,\ldots,a_R\}$ and $A_t=\{b_1,\ldots,b_R\}$. Every vertex
outside $A_s\cup\{s\}$ is at distance greater than $R$ from $a_R$. If
$s\notin V(Q)$, then the eccentricity bound forces $Q$ to contain a
vertex of $A_s$. Since $s$ separates $A_s$ from the rest of $G$ and $Q$
is connected, this implies $V(Q)\subseteq A_s$. But then
$\dist_G(b_R,V(Q))>R$, a contradiction. Thus $s\in V(Q)$, and the
symmetric argument gives $t\in V(Q)$. The $s$--$t$ subpath of $Q$ is shortest between its endpoints, so it is
an assignment path $P_\alpha$. Orient $Q$ so that this subpath runs from
$s$ to $t$. If the predecessor of $s$ on $Q$ belonged to $U_1$, then it
would be adjacent to the vertex two positions after $s$ on $P_\alpha$:
the selected vertex of $U_2$ when $N\ge2$, or $t$ when $N=1$. This
would replace a three-edge subpath of $Q$ by one edge, contradicting the
shortestness of $Q$. Therefore any extension of $Q$ before $s$ lies in
$A_s$. Symmetrically, any extension after $t$ lies in $A_t$. Trim these two extensions from $Q$. For a vertex outside
$A_s\cup A_t$, any path to one of the trimmed extensions first reaches
$s$ or $t$, both of which remain on $P_\alpha$. Hence its distance to
$P_\alpha$ equals its distance to $Q$. Every vertex of a forcing tail is
within distance $R$ of its attachment vertex. It follows that
$\ecc_G(P_\alpha)\le R$, and $P_\alpha$ is a solution path.
\end{proof}

\begin{lemma}
\label{lem:mesp-dtc-clause-distances}
Let $P_\alpha$ be the assignment path corresponding to an assignment
$\alpha$. For every clause $D_j$,
\begin{equation}
\label{eq:mesp-dtc-clause-distance}
  \dist_G(c_j,V(P_\alpha))=
  \begin{cases}
    R,   &\text{if $\alpha$ satisfies $D_j$},\\
    R+1, &\text{if $\alpha$ falsifies $D_j$}.
  \end{cases}
\end{equation}
Every non-clause vertex is within distance $R$ of $P_\alpha$.
Consequently, $P_\alpha$ is a solution path if and only if $\alpha$
satisfies $\Psi$.
\end{lemma}

\begin{proof}
Every selected layer vertex lies on $P_\alpha$. Every unselected layer
vertex is adjacent to $P_\alpha$: for the first layer, use $s$, and for
each later layer, use the selected vertex of the preceding layer. A
vertex on a literal tail is therefore within distance
$(R-1)+1=R$ of $P_\alpha$. Every forcing-tail vertex is within distance
$R$ of $s$ or $t$. Thus every non-clause vertex is within distance $R$
of $P_\alpha$. Consider a path from $c_j$ to its first vertex in $B$.
After its last clause vertex, it traverses a clause--literal edge and an
entire literal tail, which costs $1+(R-1)=R$ edges. Therefore a path
from $c_j$ to $P_\alpha$ of length at most $R$ must use an edge
$c_j w_i^b$ followed
by the literal tail to a selected vertex $u_i^b$, where $b=\alpha_i$.
Such a path exists exactly when $\alpha$ satisfies $D_j$, and then its
length is exactly $R$. If $\alpha$ falsifies $D_j$, the distance is
therefore at least $R+1$.
Since $D_j$ is nonempty, choose one of its literals. The corresponding
literal tail ends at an unselected layer vertex, which is adjacent to
$P_\alpha$. This gives a path of length $R+1$ from $c_j$ to
$P_\alpha$, proving~\eqref{eq:mesp-dtc-clause-distance}. 
The final claim
follows because the vertices of $C$ correspond to all clauses of
$\Psi$.
\end{proof}

\subsubsection{Correctness of the cross-composition}

\begin{lemma}
\label{lem:mesp-dtc-equivalence}
The constructed instance $(G,R)$ is a yes-instance if and only if at
least one of $\Phi_1,\ldots,\Phi_t$ is satisfiable.
\end{lemma}

\begin{proof}
Suppose that $(G,R)$ is a yes-instance. By
Lemma~\ref{lem:mesp-dtc-normal-form}, some assignment path $P_\alpha$ is
a solution path. Lemma~\ref{lem:mesp-dtc-clause-distances} implies that
$\alpha$ satisfies $\Psi$. By
Equation~\eqref{eq:mesp-dtc-selector-equivalence}, at least one input
formula $\Phi_i$ is satisfiable.

Suppose that some input formula $\Phi_i$ is satisfiable. By
Equation~\eqref{eq:mesp-dtc-selector-equivalence}, the formula $\Psi$
has a satisfying assignment $\alpha$. The corresponding assignment path
$P_\alpha$ is a shortest path by
Lemma~\ref{lem:mesp-dtc-normal-form}, and
Lemma~\ref{lem:mesp-dtc-clause-distances} gives
$\ecc_G(P_\alpha)\le R$. Hence $(G,R)$ is a yes-instance.
\end{proof}

\begin{proof}[Proof of Theorem~\ref{thm:mesp-dtc-kernel-lower-bound}]
The construction can be performed in polynomial time, and
Equation~\eqref{eq:mesp-dtc-parameter} bounds the output parameter by a
polynomial in the maximum input size plus $\log t$.
Lemma~\ref{lem:mesp-dtc-equivalence} establishes the required
OR-property. Consequently, the construction is an OR-cross-composition
of \SAT\ into \ESP\ parameterized by distance to clique.
Theorem~\ref{thm:cross-composition-framework} therefore yields the
claimed kernel lower bound. Since the construction explicitly provides
the clique modulator $X$, the same conclusion holds when such a
modulator is included in the input and its size is the parameter.
\end{proof}

\providecommand{\ESP}{\textnormal{\textsc{Eccentricity Shortest Path}}}
\providecommand{\dist}{\operatorname{dist}}
\providecommand{\ecc}{\operatorname{ecc}}
\providecommand{\diam}{\operatorname{diam}}
\providecommand{\NP}{\mathsf{NP}}
\providecommand{\coNP}{\mathsf{coNP}}
\providecommand{\poly}{\mathrm{poly}}
\subsection{Feedback edge number}
\label{sec:mesp-fen-kernel-lower-bound}

We prove a kernel lower bound for \ESP\ parameterized by feedback edge
number via an OR-cross-composition from \textsc{Distinct 3-Sat}.
Recall that $\fen(G)$ is the minimum number of edges whose deletion leaves
a forest and equals $|E(G)|-|V(G)|+1$ for connected $G$; see
Subsection~\ref{sec:graph-parameters}.

We use \textsc{Distinct 3-Sat} as the source language. In this problem,
every clause contains three literals on distinct variables. This
restriction of \textsc{3-Sat} is NP-hard; stronger restrictions are also
known~\cite{DD21}.

\begin{theorem}
\label{thm:mesp-fen-kernel-lower-bound}
The problem \ESP\ parameterized by feedback edge number has no
polynomial kernel unless $\NP\subseteq\coNP/\poly$. It also holds when the
input includes a minimum feedback edge set.
\end{theorem}

\subsubsection{Source instances and preprocessing}

To prove Theorem~\ref{thm:mesp-fen-kernel-lower-bound}, we show that
\textsc{Distinct 3-Sat} OR-cross-composes into \ESP\ parameterized by
feedback edge number. By
Theorem~\ref{thm:cross-composition-framework}, this is sufficient to
establish the claim. We define a polynomial equivalence relation $\mathcal R$ such that all
bitstrings that do not encode a valid instance of
\textsc{Distinct 3-Sat} are equivalent. Of the remaining instances, any
two well-formed formulas are equivalent if and only if they contain the
same number of variables. Before comparing two formulas, we relabel the
variables that occur as $x_1,\ldots,x_n$. From this definition, it
follows that any set of well-formed instances of encoding length at most
$N$ is partitioned into at most $N$ equivalence classes. Thus
$\mathcal R$ is a polynomial equivalence relation.

We now give a cross-composition algorithm that composes $t$ input
instances that are equivalent under $\mathcal R$ into a single instance
of \ESP\ parameterized by feedback edge number. If the input instances
are malformed, then we may output a fixed no-instance. Hence, in the
remainder, we may assume that all inputs are well-formed and encode
formulas $\Phi_1,\ldots,\Phi_t$, each on the variables
$x_1,\ldots,x_n$. We delete repeated clauses. If $n<3$, then every input
formula has no clause and is satisfiable, so we may output a fixed
yes-instance. Hence assume that $n\ge3$ and $t\ge1$. Let $\mathcal C$ be the set of all clauses that use three distinct
variables from $\{x_1,\ldots,x_n\}$, with arbitrary signs. The number of such clauses is $c:=|\mathcal C|=8\binom n3$.
We construct a single instance $(G,R)$ of \ESP\ parameterized by
feedback edge number. It consists of a connected graph $G$, a
nonnegative integer $R$, and parameter value $\fen(G)$.

\subsubsection{Graph construction}

Every path introduced below has new internal vertices, and its length
is its number of edges. Define
\begin{equation}
\label{eq:mesp-composition-constants}
\begin{gathered}
  s=4t,\qquad K=ns+1=4nt+1,\qquad A=K+2,\\
  D=K+t,\qquad r=K+2t,\qquad R=K+10t=r+8t.
\end{gathered}
\end{equation}

We construct the graph $G$ as follows (see
Figures~\ref{fig:construction-overview} and~\ref{fig:clause-detail}).

\begin{enumerate}
  \item Create the \emph{assignment selector} with vertices
  $z_0,\ldots,z_n$ and, for each $i\in[n]$, literal vertices $v_i^+$ and
  $v_i^-$. For each sign $\sigma\in\{+,-\}$, add a path of length $s$
  from $z_{i-1}$ to $v_i^\sigma$ and a path of length $s$ from
  $v_i^\sigma$ to $z_i$. Add vertices $a$ and $b$, together with an edge
  between $a$ and $z_0$ and an edge between $z_n$ and $b$. Denote the
  selector by $\mathcal S$. Every
  $a,b$-path in $\mathcal S$ chooses one branch in each variable gadget
  and has length $1+2ns+1=2K$.
  We call such a path an \emph{assignment path}. It encodes a truth
  assignment: the positive branch in the $i$th gadget represents the
  value true for $x_i$.

  \item Create vertices $u$ and $v$. Join $a$ to $u$ by a \emph{connector path} of length $A$, and join $b$ to $v$ by another \emph{connector path} of
  length $A$. Add a
  \emph{guard path} of length $K+6t-1$ from $u$ to $z_n$, and add a \emph{guard path}
  of length $K+4t-3$ from $v$ to $z_0$.

  \item For each clause $C\in\mathcal C$, add a \emph{test path} $T_C$ of
  length $8t$ from $u$ to $v$. Let $c_C$ be its midpoint. For each
  literal of $C$, add an \emph{incidence path} of length $D$ from $c_C$ to the
  corresponding literal vertex in $\mathcal S$. For every $j\in[t]$,
  let $w_{C,j}$ be the vertex of $T_C$ at distance $4t+j$ from $u$.

  \item Attach a path of length $R$ at $a$ and another path of length
  $R$ at $b$; these paths are the \emph{forcing tails}. For each
  $j\in[t]$ and each clause $C\in\Phi_j$, attach a path of length $8t$
  at $w_{C,j}$. This path is the \emph{test hair} for $(C,j)$, and its
  leaf is denoted by $\lambda_{C,j}$.
\end{enumerate}

Each labeled line in Figures~\ref{fig:construction-overview}
and~\ref{fig:clause-detail} represents a path of the indicated number
of edges, with its internal subdivision vertices suppressed. The
drawings are not to scale, and unmarked crossings are not vertices.

\begin{figure}[!htbp]
\centering
\resizebox{\linewidth}{!}{%
\begin{tikzpicture}[
  x=1.08cm,y=0.9cm,font=\small,
  vertex/.style={circle,fill=black,inner sep=1.7pt},
  length/.style={fill=white,inner sep=2pt,font=\footnotesize},
  connector/.style={very thick,blue!65!black},
  guard/.style={thick,dashed,orange!80!black},
  test/.style={thick,teal!70!black}
]
\node[vertex,label=below left:$a$] (a) at (0.3,0) {};
\node[vertex,label=below:$z_0$] (z0) at (1.3,0) {};
\node[vertex,label=above:$v_1^+$] (v1p) at (2.3,0.8) {};
\node[vertex,label=below:$v_1^-$] (v1m) at (2.3,-0.8) {};
\node[vertex,label=below:$z_1$] (z1) at (3.3,0) {};
\node[vertex] (omitlp) at (4.3,0.8) {};
\node[vertex] (omitlm) at (4.3,-0.8) {};
\node[vertex,label=below:$z_{i-1}$] (zim) at (5.3,0) {};
\node[vertex,label=above:$v_i^+$] (vip) at (6.3,0.8) {};
\node[vertex,label=below:$v_i^-$] (vim) at (6.3,-0.8) {};
\node[vertex,label=below:$z_i$] (zi) at (7.3,0) {};
\node[vertex] (omitrp) at (8.3,0.8) {};
\node[vertex] (omitrm) at (8.3,-0.8) {};
\node[vertex,label=below:$z_{n-1}$] (znm) at (9.3,0) {};
\node[vertex,label=above:$v_n^+$] (vnp) at (10.3,0.8) {};
\node[vertex,label=below:$v_n^-$] (vnm) at (10.3,-0.8) {};
\node[vertex,label=below:$z_n$] (zn) at (11.3,0) {};
\node[vertex,label=below right:$b$] (b) at (12.3,0) {};
\node[vertex,label=left:$u$] (u) at (0.3,4) {};
\node[vertex,label=right:$v$] (v) at (12.3,4) {};
\draw[connector] (a) -- node[length,left] {$A$} (u);
\draw[connector] (b) -- node[length,right] {$A$} (v);
\draw[guard] (u) -- node[length,pos=0.27,sloped,below]
  {$K+6t-1$} (zn);
\draw[guard,preaction={draw=white,line width=3pt}]
  (v) -- node[length,pos=0.27,sloped,below] {$K+4t-3$} (z0);
\draw[test] (u) to[bend left=16]
  node[length,above] {$T_{C'}$: length $8t$} (v);
\draw[test] (u) -- node[length,pos=0.25,above] {$4t$}
  node[length,pos=0.75,above] {$4t$} (v);
\node[vertex,label=below:$c_C$] at (6.3,4) {};
\draw[test] (u) to[bend right=12] (v);
\node[length] at (6.3,3.05) {$\vdots$};
\node[font=\footnotesize] at (6.3,5.95)
  {one $T_C$ for each $C\in\mathcal C$};
\draw[thick] (a) -- node[length,above] {$1$} (z0);
\draw[thick] (z0) -- node[length,above left] {$s$} (v1p)
  -- node[length,above right] {$s$} (z1);
\draw[thick] (z0) -- node[length,below left] {$s$} (v1m)
  -- node[length,below right] {$s$} (z1);
\draw[thick,densely dotted] (z1) -- (omitlp) -- (zim);
\draw[thick,densely dotted] (z1) -- (omitlm) -- (zim);
\node[length] at (4.3,0) {$\cdots$};
\draw[thick,densely dotted] (zim) -- (vip) -- (zi);
\draw[thick,densely dotted] (zim) -- (vim) -- (zi);
\draw[thick,densely dotted] (zi) -- (omitrp) -- (znm);
\draw[thick,densely dotted] (zi) -- (omitrm) -- (znm);
\node[length] at (8.3,0) {$\cdots$};
\draw[thick] (znm) -- node[length,above left] {$s$} (vnp)
  -- node[length,above right] {$s$} (zn);
\draw[thick] (znm) -- node[length,below left] {$s$} (vnm)
  -- node[length,below right] {$s$} (zn);
\draw[thick] (zn) -- node[length,above] {$1$} (b);
\draw[thick] (a) -- node[length,left] {$R$} (0.3,-2);
\draw[thick] (b) -- node[length,right] {$R$} (12.3,-2);
\node[vertex] at (0.3,-2) {};
\node[vertex] at (12.3,-2) {};
\node[font=\footnotesize,below] at (0.3,-2) {forcing tail};
\node[font=\footnotesize,below] at (12.3,-2) {forcing tail};
\end{tikzpicture}%
}
\caption{
A sketch of the construction used in the proof of
Theorem~\ref{thm:mesp-fen-kernel-lower-bound}. Incidence paths and test hairs are omitted here and shown in
Figure~\ref{fig:clause-detail}.}
\label{fig:construction-overview}
\end{figure}

\begin{figure}[!htbp]
\centering
\begin{tikzpicture}[
  x=1cm,y=1cm,font=\small,
  vertex/.style={circle,fill=black,inner sep=1.7pt},
  length/.style={fill=white,inner sep=2pt,font=\footnotesize},
  test/.style={very thick,teal!70!black},
  incidence/.style={thick,violet!80!black}
]
\node[vertex,label=left:$u$] (u) at (0,0) {};
\node[vertex,label=below left:$c_C$] (midpoint) at (5,0) {};
\node[vertex,label=below right:$w_{C,j}$] (root) at (8,0) {};
\node[vertex,label=right:$v$] (v) at (12,0) {};
\draw[test] (u) -- node[length,above] {$4t$} (midpoint)
  -- node[length,above] {$j$} (root)
  -- node[length,above] {$4t-j$} (v);
\node[font=\footnotesize] at (2.3,-0.55) {$T_C$: total length $8t$};
\node[vertex,label=above:$v_p^{\sigma_p}$] (literal1) at (1.5,2.5) {};
\node[vertex,label=above:$v_q^{\sigma_q}$] (literal2) at (5,2.5) {};
\node[vertex,label=above:$v_h^{\sigma_h}$] (literal3) at (8.5,2.5) {};
\draw[incidence] (midpoint) -- node[length,sloped,above] {$D$} (literal1);
\draw[incidence] (midpoint) -- node[length,right] {$D$} (literal2);
\draw[incidence] (midpoint) -- node[length,sloped,above] {$D$} (literal3);
\node[font=\footnotesize,align=center] at (5,3.6)
  {the three literal vertices of $C$ in $\mathcal S$\\
   (in three distinct variable gadgets)};
\node[vertex,label=below:$\lambda_{C,j}$] (leaf) at (8,-2) {};
\draw[thick] (root) -- node[length,left] {$8t$} (leaf);
\node[anchor=west,align=left,font=\footnotesize] at (8.5,-1.25)
  {test hair present\\exactly when $C\in\Phi_j$};
\end{tikzpicture}
\caption{A sketch of one of the test paths $T_C$ and one of the possible test hairs.}
\label{fig:clause-detail}
\end{figure}
\FloatBarrier

\subsubsection{Parameter and size bounds}

We define the parameter value
\begin{equation}
\label{eq:mesp-feedback-edge-number}
  k':=\fen(G)=n+2+4c=n+2+32\binom n3=O(n^3).
\end{equation}
The selector contributes $n$ to the feedback edge number, one for each
variable gadget. Each connector path is attached at one existing vertex
and creates no cycle. Each guard path joins two vertices already in the
connected graph and therefore contributes one. For each clause $C$, the
test path contributes one, and the three incidence paths contribute
three. The forcing tails and the test hairs are pendant paths and
contribute nothing. This proves~\eqref{eq:mesp-feedback-edge-number}.
The value $k'$ is the parameter of the output instance. Since every
variable occurs in a well-formed input formula,
$n\le\max_{j\in[t]}|\Phi_j|$. Thus $k'$ is polynomial in the size of
the largest input instance, and hence in
$\max_{j\in[t]}|\Phi_j|+\log t$. The cross-composition outputs the instance $x':=(G,R)$ of \ESP, whose
parameter value is $k'$. The graph $G$ can be constructed in polynomial
time. Indeed, the construction has $O(n+c)$ nonpendant paths, each of
length $O(nt)$.
It has at most $ct$ test hairs, each of length $8t$, and two forcing
tails of length $O(nt)$. Hence the numbers of vertices and edges are
$O(n^4t+n^3t^2)$. Let $L:=\sum_{j=1}^t|\Phi_j|$ denote the total input length. Since each input formula has positive length and contains all $n$ variables, we have
$t\le L$ and $n\le L$. Consequently, $n^4t+n^3t^2\le L^5+L^5=2L^5$,
so the output size is polynomial in the total input length. The
algorithm expands every path explicitly, and therefore $G$ is
unweighted. The description also shows that $G$ is connected, simple,
and undirected.

\subsubsection{Metric properties and canonical paths}

We now argue that $x'$ is a yes-instance of \ESP\ if and only if at
least one input formula $\Phi_j$ is satisfiable. We first establish three
structural properties of the constructed graph.

\begin{lemma}
\label{lem:mesp-metric}
The selector $\mathcal S$ and the union $J$ of all test paths $T_C$ are
isometric in $G$. Every shortest $a$--$b$ path is an assignment path of
length $2K$. Moreover,
\begin{equation}
\label{eq:mesp-terminal-distances}
\begin{aligned}
  \dist_G(a,b)&=2K, & \dist_G(u,v)&=8t,\\
  \dist_G(a,u)&=A,  & \dist_G(b,v)&=A,\\
  \dist_G(u,b)&=K+6t, & \dist_G(v,a)&=K+4t-2.
\end{aligned}
\end{equation}
\end{lemma}

\begin{proof}
Let $P$ be a shortest path in $G$ whose endpoints lie in $\mathcal S$.
Suppose that $P$ is not contained in $\mathcal S$. Then $P$ contains a
maximal subpath whose endpoints lie in $\mathcal S$ and whose internal
vertices lie outside $\mathcal S$. This subpath traverses at least two
connector, guard, or incidence paths. Each of these paths has one of the
lengths $A=K+2, D=K+t, K+6t-1$, and $K+4t-3$, all strictly larger than
$K$. Hence the subpath has length greater than $2K$. Its endpoints can
instead be joined within $\mathcal S$ by a path of length at most
$\diam(\mathcal S)\le2K$. This replacement yields a shorter walk with
the same endpoints, from which a shorter path is obtained by deleting
cycles, contradicting the choice of $P$. Thus $\mathcal S$ is
isometric. In particular, every shortest $a$--$b$ path is contained in
$\mathcal S$ and therefore is an assignment path of length $2K$.

Similarly, let $J$ be the union of all test paths. A shortest path with
endpoints in $J$ cannot enter a pendant path and return. Hence, if it
leaves $J$, the preceding argument produces a subpath of length greater
than $2K$ whose endpoints can be joined within $J$ by a path of length
at most $\diam(J)\le8t<2K$, a contradiction. Thus $J$ is isometric. In
particular,
$\dist_G(u,v)=8t$, and the distances in $G$ from the position-$i$
vertex of any $T_C$ to $u$ and $v$ are $i$ and $8t-i$, respectively.

It remains to determine the four distances between $\{u,v\}$ and
$\{a,b\}$. Consider a shortest such path and its first entry into
$\mathcal S$. Once it enters $\mathcal S$, it remains there by
isometry. We distinguish the following cases.
\begin{enumerate}
  \item If the path first enters at a literal vertex, then it spends at
  least $4t$ edges reaching a clause midpoint, $D$ edges on an incidence
  path, and at least $s+1$ edges inside $\mathcal S$ before reaching
  $a$ or $b$. Its length is at least $4t+D+s+1=K+9t+1$, which exceeds all four claimed distances.

  \item The connector paths give routes of length $A$ from $u$ to $a$
  and from $v$ to $b$. A route using the opposite connector already
  spends more than $A$ edges before traversing $\mathcal S$, and every
  guard route has length greater than $A$. Hence $\dist_G(a,u)=\dist_G(b,v)=A$.

  \item The guards followed by the adjacent selector edges give routes
  of lengths $K+6t$ from $u$ to $b$ and $K+4t-2$ from $v$ to $a$.
  In both cases, the connector alternatives have lengths $A+2K$ and
  $8t+A$, while routes using the opposite guard are longer. Therefore, $\dist_G(u,b)=K+6t$ and $\dist_G(v,a)=K+4t-2$.
\end{enumerate}
\end{proof}

For an assignment path $Q$, and integers $0\le x,y\le A$, let
$P(Q,x,y)$ be the path obtained by extending $Q$ a distance $x$ from
$a$ along its connector path toward $u$, and a distance $y$ from $b$
along its connector path toward $v$.

\begin{lemma}
\label{lem:mesp-canonical}
If $(G,R)$ is a yes-instance, then it has a solution path of the form
$P(Q,x,y)$.
For such a path, shortestness is equivalent to
\begin{equation}
\label{eq:mesp-shortestness-bounds}
  x\le3t+1,\qquad y\le2t,\qquad x+y\le4t+2.
\end{equation}
\end{lemma}

\begin{proof}

Every vertex outside the pendant path attached at $a$ is at distance
greater than $R$ from its pendant vertex; the same holds at $b$. Hence
any solution path must contain a vertex on each of these two pendant
paths. Since they are attached to the graph only at $a$ and $b$,
respectively, the solution path contains both $a$ and $b$.
Its $a$--$b$ subpath is therefore an assignment path $Q$ by
Lemma~\ref{lem:mesp-metric}. An extension of $Q$ at $a$ can enter only the forcing tail or the
connector path towards $u$. The extension cannot reach $u$, since its
$u,b$-subpath would have length $A+2K$, which exceeds
$\dist_G(u,b)=K+6t$, contradicting the shortestness of the solution
path. A similar argument holds at $b$. Any extension of the solution
path into a forcing tail may be truncated at $a$ or $b$, respectively.
This preserves shortestness and the eccentricity bound: every vertex
outside the relevant forcing tail does not become farther from the
path, while every vertex on that tail remains within distance $R$ of
the path. The resulting path is $P(Q,x,y)$ for some integers $x$ and
$y$ with $0\le x\le A$ and $0\le y\le A$.

Let $p_x$ and $q_y$ be the endpoints of $P(Q,x,y)$ on the two connector
paths. Since the internal vertices of these paths have degree two, every
$p_x$--$q_y$ path leaves each connector path through one of its
endpoints. Between these endpoints, the four possibilities use $Q$, a
shortest $u$--$b$ path, a shortest $a$--$v$ path, or a test path from
$u$ to $v$. Consequently,
\begin{align*}
  \dist_G(p_x,q_y)=\min\{&x+2K+y,\quad
    (A-x)+(K+6t)+y,\\
    &x+(K+4t-2)+(A-y),\quad
    (A-x)+8t+(A-y)\}.
\end{align*}
The path $P(Q,x,y)$ realizes the first term, so it is shortest exactly
when this term is no larger than each of the other three. Simplifying
these inequalities gives~\eqref{eq:mesp-shortestness-bounds}.
\end{proof}

\begin{lemma}
\label{lem:mesp-tests}
Let $Q$ be an assignment path that encodes an assignment $\beta$. For
every clause $C\in\mathcal C$ and every $j\in[t]$,
\begin{equation}
\label{eq:mesp-literal-test}
\begin{cases}
  \dist_G(w_{C,j},V(Q))\le r,&\text{if $\beta$ satisfies $C$},\\
  \dist_G(w_{C,j},V(Q))>r,&\text{if $\beta$ falsifies $C$}.
\end{cases}
\end{equation}
Every nonpendant vertex is within distance $R$ of $Q$. Furthermore, for
$P=P(Q,x,y)$ and $i=4t+j$,
\begin{equation}
\label{eq:mesp-distance-to-canonical-path}
  \dist_G(w_{C,j},V(P))=
  \min\{\dist_G(w_{C,j},V(Q)),\ i+A-x,\ 8t-i+A-y\}.
\end{equation}
\end{lemma}

\begin{proof}
Each selected literal vertex is on $Q$, whereas each unselected
literal vertex has distance exactly $s=4t$ to $Q$. This holds in the
selector and also in $G$ by Lemma~\ref{lem:mesp-metric}. Fix a clause $C\in\mathcal C$ and an index $j\in[t]$, and set
$i=4t+j$.

  \begin{enumerate}
    \item Suppose that $\beta$ satisfies $C$. A selected literal of $C$
    yields a path from $w_{C,j}$ to $Q$ of length $
      j+D=K+t+j\le K+2t=r$.

    \item Suppose that $\beta$ falsifies $C$. All three literal vertices
    of $C$ are then unselected. Consider how a shortest path from
    $w_{C,j}$ to $Q$ first enters $\mathcal S$.
    \begin{enumerate}
      \item If it uses an incidence path of $C$, then its length is at
      least $j+D+s=K+5t+j>r$.

      \item If it uses an incidence path of another clause, then it must
      first reach that clause's midpoint without entering $\mathcal S$.
      This requires at least $\min\{(4t+j)+4t,(4t-j)+4t\}=8t-j\ge7t$
      edges, and the incidence path increases the length to at least
      $7t+D=K+8t>r$.

      \item If it uses a connector path or a guard path, then its length
      is at least one of $i+A, 8t-i+A, i+(K+6t-1)$ and $8t-i+(K+4t-3)$.
      Each quantity exceeds $r=K+2t$ for $1\le j\le t$.
    \end{enumerate}

  \end{enumerate}
    These subcases exhaust all possible first entries into
    $\mathcal S$, proving~\eqref{eq:mesp-literal-test}. A selector vertex is within distance $s$ of $Q$. A connector-path vertex is within distance $A$ of $a$ or $b$. A guard-path vertex is within the corresponding guard length of $z_0$ or $z_n$. An incidence-path vertex is within distance $D+s$ of $Q$. A test-path vertex reaches $Q$ through its midpoint and an
    incident literal in at most $4t+D+s=K+9t<R$ edges. All these bounds are at most $R$. A shortest path from $w_{C,j}$
  to $P(Q,x,y)$ ends either on $Q$ or on one of the two connector-path
  prefixes.
  \begin{enumerate}
    \item If it ends on $Q$, its length is $\dist_G(w_{C,j},V(Q))$.
    \item A path to the prefix on the $a$--$u$ connector enters through
    $a$ or $u$. Entering through $a$ already reaches $Q$, whereas
    entering through $u$ gives length $\dist_G(w_{C,j},u)+A-x=i+A-x$.
    \item Similarly, entering the other connector prefix through $v$
    gives length $8t-i+A-y$.
  \end{enumerate}
  Here the distances to $u$ and $v$ follow from the isometry of $J$ in
  Lemma~\ref{lem:mesp-metric}. Taking the minimum of these proves~\eqref{eq:mesp-distance-to-canonical-path}.

\end{proof}

\subsubsection{Correctness of the cross-composition}

\begin{lemma}
\label{lem:mesp-equivalence}
The constructed instance $(G,R)$ is a yes-instance if and only if at
least one of $\Phi_1,\ldots,\Phi_t$ is satisfiable.
\end{lemma}

\begin{proof}
Suppose that $(G,R)$ is a yes-instance. By
Lemma~\ref{lem:mesp-canonical}, it has a solution path
$P=P(Q,x,y)$. Let $\beta$ be the assignment encoded by $Q$, and define
$L_0=x+2t-2$ and $U_0=6t+2-y$.
For a position $i$ on a test path, the left connector-path term
$i+A-x$ is at most $r$ exactly when $i\le L_0$, and the right
connector-path term $8t-i+A-y$ is at most $r$ exactly when
$i\ge U_0$. The inequalities in~\eqref{eq:mesp-shortestness-bounds}
imply $L_0\le5t-1$,
$U_0\ge4t+2$, and $U_0-L_0\ge2$.
Hence there is an integer $i\in\{4t+1,\ldots,5t\}$ with $L_0<i<U_0$;
for example, take $i=\max\{4t+1,L_0+1\}$. Write $i=4t+j$ with $j\in[t]$.
This index is the same for every clause test path. Suppose that a clause $C\in\Phi_j$ is falsified by $\beta$.
Lemma~\ref{lem:mesp-tests} gives $\dist_G(w_{C,j},V(Q))>r$, while
$L_0<i<U_0$ makes both connector-path terms
in~\eqref{eq:mesp-distance-to-canonical-path} greater than $r$.
Therefore $\dist_G(w_{C,j},V(P))>r$. The path $P$ does not enter a test
hair, so $\dist_G(\lambda_{C,j},V(P))
  =8t+\dist_G(w_{C,j},V(P))>8t+r=R$,
contradicting that $P$ is a solution path. Thus $\beta$ satisfies every
clause of $\Phi_j$, and $\Phi_j$ is satisfiable.

Suppose that $\Phi_j$ is satisfied by an assignment $\beta$, and let
$Q$ be the corresponding assignment path. Set $x=2t+j+1$ and
$y=2t-j+1$.
These values satisfy~\eqref{eq:mesp-shortestness-bounds}, with
$x+y=4t+2$, so $P=P(Q,x,y)$ is shortest. Its length is
$2K+4t+2=2r+2$. Every nonpendant vertex and both forcing tails are
within distance $R$ of $P$, by Lemma~\ref{lem:mesp-tests} and because
$a,b\in P$. Consider a test hair attached at $w_{C,h}$, where $h\in[t]$ and $C\in\Phi_h$. The two connector-path terms
in~\eqref{eq:mesp-distance-to-canonical-path} become
$(4t+h)+A-x=r+h-j+1$ and $(4t-h)+A-y=r+j-h+1$. If $h\ne j$, at least
one of these terms is at most $r$. If $h=j$, then $\beta$ satisfies $C$
because $C\in\Phi_j$, and~\eqref{eq:mesp-literal-test} gives
$\dist_G(w_{C,j},V(Q))\le r$. Thus every attachment vertex $w_{C,h}$ of a
test hair is within distance $r$ of $P$. Since each test hair has length
$8t$ and $r+8t=R$, all its vertices are within distance $R$ of $P$.
Therefore $P$ is a solution path, and $(G,R)$ is a yes-instance.
\end{proof}

\begin{proof}[Proof of Theorem~\ref{thm:mesp-fen-kernel-lower-bound}]
The construction can be performed in polynomial time, and
equation~\eqref{eq:mesp-feedback-edge-number} bounds the output parameter
by a polynomial in the maximum input size plus $\log t$.
Lemma~\ref{lem:mesp-equivalence} establishes the required OR-property.
Consequently, the construction is an OR-cross-composition of
\textsc{Distinct 3-Sat} into \ESP\ parameterized by feedback edge
number. Theorem~\ref{thm:cross-composition-framework} therefore yields
the claimed kernel lower bound. A minimum feedback edge set can be supplied by computing a spanning
tree and taking the edges outside it, without changing the parameter
or the equivalence.
\end{proof}

\providecommand{\ESP}{\textnormal{\textsc{Eccentricity Shortest Path}}}
\providecommand{\SUSC}{\textnormal{\textsc{Small Universe Set Cover}}}
\providecommand{\dist}{\operatorname{dist}}
\providecommand{\ecc}{\operatorname{ecc}}
\providecommand{\vc}{\operatorname{vc}}
\providecommand{\NP}{\mathsf{NP}}
\providecommand{\coNP}{\mathsf{coNP}}
\providecommand{\poly}{\mathrm{poly}}
\subsection{Vertex cover number}
\label{sec:mesp-vc-kernel-lower-bound}

We transfer the kernel lower bound for \SUSC\ to \ESP\ parameterized by
vertex cover and related vertex-deletion distances. Recall that $\vc(G)$ is
the minimum size of a vertex set meeting every edge, as defined in
Subsection~\ref{sec:graph-parameters}.

\begin{theorem}
\label{thm:mesp-vc-kernel-lower-bound}
The problem \ESP\ parameterized by vertex cover number has no
polynomial kernel unless $\NP\subseteq \coNP/\poly$. It also holds when the
input includes a vertex cover and is parameterized by its size.
\end{theorem}

\subsubsection{Source problem and transformation framework}

We use a polynomial parameter transformation from \SUSC. An instance of
\SUSC\ consists of a universe $U$, a family $\mathcal F\subseteq2^U$,
and an integer $b$. It is a yes-instance if at most $b$ members of
$\mathcal F$ cover $U$. The parameter is $|U|+b$. This problem has no
polynomial kernel unless $\NP\subseteq\coNP/\poly$~\cite{DLS14}.

Let $P$ and $Q$ be parameterized problems. A \emph{polynomial parameter
transformation} from $P$ to $Q$ is a polynomial-time algorithm that maps
an instance $(x,k)$ of $P$ to an instance $(x',k')$ of $Q$ such that
$(x,k)$ is a yes-instance if and only if $(x',k')$ is a yes-instance,
and $k'$ is bounded by a polynomial in $k$.

\begin{theorem}[Bodlaender, Thomass\'e, and Yeo~\cite{BTY11}]
\label{thm:mesp-vc-transformation-framework}
Let $P$ and $Q$ be parameterized problems such that the unparameterized
version of $P$ is NP-hard and the unparameterized version of $Q$ belongs
to NP. If $P$ admits no polynomial kernel and there is a polynomial
parameter transformation from $P$ to $Q$, then $Q$ admits no polynomial
kernel.
\end{theorem}

\subsubsection{Graph construction}

Let $(U,\mathcal F,b)$ be an instance of \SUSC, and set $n:=|U|$. We
may assume that
\begin{equation}
\label{eq:mesp-vc-normalization}
  U\ne\varnothing,\qquad
  \bigcup_{F\in\mathcal F}F=U,\qquad
  1\le b\le n.
\end{equation}
Indeed, an empty universe gives a yes-instance. An element that occurs in
no set, or a zero budget with a nonempty universe, gives a no-instance.
If every element occurs in some set and $b>n$, then choosing one set
for each element gives a cover of size at most $b$. These cases can be
mapped to fixed yes- and no-instances of \ESP, with fixed vertex covers
when required. Hence assume~\eqref{eq:mesp-vc-normalization} in the
remainder.

Set $L=b$ and $R=b+1$.
We construct an instance $(G,R)$ of \ESP\ as follows (see Figure~\ref{fig:mesp-vc-construction}).

\begin{enumerate}
  \item Create connector vertices $z_0,\ldots,z_b$. For every
  $i\in[b]$, create an independent candidate layer $I_i=\{c_{i,F}:F\in\mathcal F\}$.
  Join every candidate in $I_i$ to both $z_{i-1}$ and $z_i$. There are
  no edges between candidate vertices, including candidates in distinct
  layers. Let
  $B=\{z_0,\ldots,z_b\}\cup\bigcup_{i=1}^b I_i$, and call the subgraph
  induced by $B$ the \emph{selection backbone}.
  For a sequence $F_1,\ldots,F_b\in\mathcal F$, define the corresponding
  \emph{selection path} as $z_0,c_{1,F_1},z_1,c_{2,F_2},\ldots,z_{b-1},c_{b,F_b},z_b$. Here $F_i$ denotes the set selected in layer $I_i$, not the $i$th
  member of a fixed ordering of $\mathcal F$. The sets
  $F_1,\ldots,F_b$ need not be distinct.

  \item Attach a path $z_0=a_0,a_1,\ldots,a_R$ at $z_0$, and a path $z_b=d_0,d_1,\ldots,d_R$ at $z_b$. These paths are the \emph{forcing tails}.

  \item For every element $u\in U$, create a terminal vertex $x_u$.
  For every $i\in[b]$, create a hub $h_{i,u}$ and a path $Q_{i,u}$ of
  length $L=b$ from $h_{i,u}$ to $x_u$. For a fixed element $u$, these
  paths share only the terminal $x_u$; paths belonging to distinct
  elements are disjoint. We call the subgraph formed by $x_u$, the hubs
  $h_{i,u}$, and the paths $Q_{i,u}$ the \emph{element gadget} for $u$.
  Finally, join $c_{i,F}$ to $h_{i,u}$ exactly when $u\in F$.
\end{enumerate}

Each labeled line in Figure~\ref{fig:mesp-vc-construction} represents a
path of the indicated number of edges, with its internal subdivision
vertices suppressed. The drawing is not to scale, and unmarked
crossings are not vertices.

\begin{figure}[!htbp]
\centering
\resizebox{\linewidth}{!}{%
\begin{tikzpicture}[
  x=1.025cm,y=0.9cm,font=\small,
  vertex/.style={circle,fill=black,inner sep=1.7pt},
  connector/.style={vertex},
  candidate/.style={vertex},
  gadget/.style={vertex},
  backbone/.style={thick},
  chosen/.style={very thick,blue!65!black},
  omitted/.style={thick,densely dotted},
  incidence/.style={thick,violet!80!black},
  element/.style={thick,teal!70!black},
  force/.style={thick},
  labelbox/.style={fill=white,text=black,inner sep=1.5pt,font=\footnotesize},
  every label/.style={fill=white,text=black,inner sep=1pt}
]
\node[connector,label=below left:$z_0$] (z0) at (0,0) {};
\node[candidate] (c1u) at (1.2,0.92) {};
\node[candidate] (c1um) at (1.2,0.46) {};
\node[candidate] (c1) at (1.2,0) {};
\node[font=\scriptsize,inner sep=0pt] (c1dots) at (1.2,-0.5) {$\vdots$};
\node[candidate] (c1d) at (1.2,-0.92) {};
\node[connector,label=below:$z_1$] (z1) at (2.4,0) {};
\node[candidate] (omitLeftUpper) at (3.75,0.92) {};
\node[candidate] (omitLeftUpperMid) at (3.75,0.46) {};
\node[candidate] (omitLeftChosen) at (3.75,0) {};
\node[font=\scriptsize,inner sep=0pt] (omitLeftDots) at (3.75,-0.5) {$\vdots$};
\node[candidate] (omitLeftLower) at (3.75,-0.92) {};

\node[connector,label=below:$z_{i-1}$] (zim) at (5.1,0) {};
\node[candidate] (ciu) at (6.3,0.92) {};
\node[candidate] (cium) at (6.3,0.46) {};
\node[candidate] (ci) at (6.3,0) {};
\node[font=\scriptsize,inner sep=0pt] (cidots) at (6.3,-0.5) {$\vdots$};
\node[candidate] (cid) at (6.3,-0.92) {};
\node[connector,label=below:$z_i$] (zi) at (7.5,0) {};
\node[candidate] (omitRightUpper) at (8.85,0.92) {};
\node[candidate] (omitRightUpperMid) at (8.85,0.46) {};
\node[candidate] (omitRightChosen) at (8.85,0) {};
\node[font=\scriptsize,inner sep=0pt] (omitRightDots) at (8.85,-0.5) {$\vdots$};
\node[candidate] (omitRightLower) at (8.85,-0.92) {};

\node[connector,label=below:$z_{b-1}$] (zbm) at (10.2,0) {};
\node[candidate] (cbu) at (11.4,0.92) {};
\node[candidate] (cbum) at (11.4,0.46) {};
\node[candidate] (cb) at (11.4,0) {};
\node[font=\scriptsize,inner sep=0pt] (cbdots) at (11.4,-0.5) {$\vdots$};
\node[candidate] (cbd) at (11.4,-0.92) {};
\node[connector,label=below right:$z_b$] (zb) at (12.6,0) {};

\foreach \c in {c1u,c1um,c1,c1d} {
  \draw[backbone] (z0) -- (\c);
  \draw[backbone] (\c) -- (z1);
}
\foreach \c in {ciu,cium,ci,cid} {
  \draw[backbone] (zim) -- (\c);
  \draw[backbone] (\c) -- (zi);
}
\foreach \c in {cbu,cbum,cb,cbd} {
  \draw[backbone] (zbm) -- (\c);
  \draw[backbone] (\c) -- (zb);
}
\foreach \candidateVertex in {omitLeftUpper,omitLeftUpperMid,
  omitLeftChosen,omitLeftLower} {
  \draw[omitted] (z1) -- (\candidateVertex) -- (zim);
}
\foreach \candidateVertex in {omitRightUpper,omitRightUpperMid,
  omitRightChosen,omitRightLower} {
  \draw[omitted] (zi) -- (\candidateVertex) -- (zbm);
}
\draw[omitted] (z0) -- (c1dots.west);
\draw[omitted] (c1dots.east) -- (z1);
\draw[omitted] (z1) -- (omitLeftDots.west);
\draw[omitted] (omitLeftDots.east) -- (zim);
\draw[omitted] (zim) -- (cidots.west);
\draw[omitted] (cidots.east) -- (zi);
\draw[omitted] (zi) -- (omitRightDots.west);
\draw[omitted] (omitRightDots.east) -- (zbm);
\draw[omitted] (zbm) -- (cbdots.west);
\draw[omitted] (cbdots.east) -- (zb);
\node[labelbox] at (3.75,-1.4) {$\cdots$};
\node[labelbox] at (8.85,-1.4) {$\cdots$};

\node[draw=black,rounded corners,fit=(c1u)(c1d),
  inner sep=6pt,label={[xshift=7pt]above:$I_1$}] {};
\node[draw=black,densely dotted,rounded corners,
  fit=(omitLeftUpper)(omitLeftLower),inner sep=6pt] {};
\node[draw=black,rounded corners,fit=(ciu)(cid),
  inner sep=6pt,label={[xshift=7pt]above:$I_i$}] {};
\node[draw=black,densely dotted,rounded corners,
  fit=(omitRightUpper)(omitRightLower),inner sep=6pt] {};
\node[draw=black,rounded corners,fit=(cbu)(cbd),
  inner sep=6pt,label={[xshift=7pt]above:$I_b$}] {};

\draw[chosen] (z0) -- (c1) -- (z1);
\draw[chosen,densely dashed] (z1) -- (omitLeftChosen) -- (zim);
\draw[chosen] (zim) -- (ci) -- (zi);
\draw[chosen,densely dashed] (zi) -- (omitRightChosen) -- (zbm);
\draw[chosen] (zbm) -- (cb) -- (zb);
\node[labelbox] at (6.3,-1.75)
  {one selection path};

\node[gadget,label=left:$h_{1,u}$] (h1) at (1.2,2.55) {};
\node[gadget,label=left:$h_{i,u}$] (hi) at (6.3,2.55) {};
\node[gadget,label=right:$h_{b,u}$] (hb) at (11.4,2.55) {};
\node[gadget,label=above:$x_u$] (xu) at (6.3,4.65) {};
\draw[element] (h1) to[bend left=11]
  node[labelbox,pos=0.47,above left] {$Q_{1,u}$: length $L=b$} (xu);
\draw[element] (hi) -- node[labelbox,right] {$L=b$} (xu);
\draw[element] (hb) to[bend right=11]
  node[labelbox,pos=0.47,above right] {$Q_{b,u}$: length $L=b$} (xu);
\node[font=\footnotesize] at (6.3,5.3)
  {one element gadget for each $u\in U$};

\draw[incidence] (c1) to[bend left=18]
  node[labelbox,pos=0.72,left] {$u\in F$} (h1);
\draw[incidence] (ciu) -- node[labelbox,pos=0.72,right] {$u\in H$} (hi);
\draw[incidence] (cbu) -- node[labelbox,pos=0.72,right] {$u\in H'$} (hb);

\node[labelbox,anchor=north] at ([yshift=-2pt]c1.south) {$c_{1,F}$};
\node[labelbox,anchor=north] at ([yshift=-2pt]ci.south) {$c_{i,F'}$};
\node[labelbox,anchor=north] at ([yshift=-2pt]cb.south) {$c_{b,F''}$};

\node[connector] (aR) at (0,-3.3) {};
\node[connector] (dR) at (12.6,-3.3) {};
\draw[force] (z0) -- node[labelbox,left] {$R=b+1$} (aR);
\draw[force] (zb) -- node[labelbox,right] {$R=b+1$} (dR);
\node[font=\footnotesize,below] at (aR) {forcing tail};
\node[font=\footnotesize,below] at (dR) {forcing tail};
\end{tikzpicture}%
}
\caption{A sketch of the graph construction used in the proof of
Theorem~\ref{thm:mesp-vc-kernel-lower-bound}.}
\label{fig:mesp-vc-construction}
\end{figure}
\FloatBarrier

\subsubsection{Parameter and size bounds}

Let $X=V(G)\setminus\bigcup_{i=1}^b I_i$.
The graph $G-X$ is independent, so $X$ is a vertex cover. Each path
$Q_{i,u}$ contributes $b$ vertices other than its terminal, which is
counted once for each $u$. Consequently,
\begin{equation}
\label{eq:mesp-vc-cover-size}
\begin{aligned}
  |X|&=(b+1)+2R+n+nb^2\\
     &=nb^2+n+3b+3
      =O\bigl({(n+b)}^3\bigr).
\end{aligned}
\end{equation}
The four terms in $|X|$ count, respectively, the connector vertices,
the vertices introduced on the two forcing tails, the element terminals,
and the nonterminal vertices of the paths $Q_{i,u}$.
The bound is independent of $|\mathcal F|$ and polynomial in the source
parameter $n+b$.
If a vertex cover must be supplied with the output, we supply $X$. The graph $G$ can be constructed in polynomial time. Let $\ell$ denote
the encoding length of $(U,\mathcal F,b)$. Then
$n,b,|\mathcal F|\le \ell$. Besides the vertices of $X$, the graph has
$b|\mathcal F|$ candidate vertices. The selection backbone contributes
$2b|\mathcal F|$ edges, the forcing tails contribute $2R$ edges, and the
paths $Q_{i,u}$ contribute $nb^2$ edges. There are at most
$nb|\mathcal F|$ candidate-hub edges. Hence $G$ has
$O(nb^2+b|\mathcal F|)=O(\ell^3)$ vertices and
$O(nb^2+nb|\mathcal F|)=O(\ell^3)$ edges. Thus the output size and the
running time are polynomial in the source input length. The algorithm
expands every path explicitly, and therefore $G$ is unweighted. By
Equation~\eqref{eq:mesp-vc-normalization}, every element gadget has an
edge to the selection backbone, so $G$ is connected. The description
also shows that $G$ is simple and undirected.

\subsubsection{Canonical paths and correctness}

We now argue that $(G,R)$ is a yes-instance of \ESP\ if and only if
$(U,\mathcal F,b)$ is a yes-instance of \SUSC. We first establish two
structural properties of the constructed graph.

\begin{lemma}
\label{lem:mesp-vc-canonical}
The shortest $z_0$--$z_b$ paths are exactly the selection paths, and
each has length $2b$. Moreover, if $(G,R)$ is a yes-instance, then some
selection path is a solution path.
\end{lemma}

\begin{proof}
Every selection path has length $2b$, so a shortest $z_0$--$z_b$ path
has length at most $2b$. Let $P$ be such a shortest path. Since each
forcing tail is pendant at its attachment vertex, $P$ cannot enter its
interior. Suppose that $P$ contains an element terminal $x_u$ or an
internal vertex of a path $Q_{i,u}$. Since the endpoints of $P$ lie in
the selection backbone $B$, the path $P$ contains a maximal subpath
whose endpoints lie in $B$, whose internal vertices lie outside $B$,
and which contains this vertex. This subpath enters and leaves the
element gadget through candidate--hub edges. Since every internal vertex
of a path $Q_{i,u}$ has degree two, returning to $B$ without repeating a
vertex requires traversing two distinct length-$b$ paths $Q_{i,u}$ and
$Q_{j,u}$ through $x_u$. Together with the two candidate--hub edges,
this subpath has length at least $2b+2$, contradicting the length of
$P$. Hence $P$ contains no element terminal or internal vertex of a path
$Q_{i,u}$. After the terminals and the internal vertices of the paths $Q_{i,u}$ are excluded, every hub is adjacent only to candidates from one layer. Thus, for every $i\in[b-1]$, the connector $z_i$ separates the layers $I_1,\ldots,I_i$ from the layers $I_{i+1},\ldots,I_b$. Any
$z_0$--$z_b$ path must therefore visit
$z_0,z_1,\ldots,z_b$ in this order. Between $z_{i-1}$ and $z_i$, it
uses at least two edges, with equality exactly when it uses one candidate of $I_i$. Since the total length is at most $2b$, equality holds for every layer. Hence the shortest $z_0$--$z_b$ paths are precisely the selection paths.

Suppose that $P$ is a solution path. Let
$A_0=\{a_1,\ldots,a_R\}$ and $A_b=\{d_1,\ldots,d_R\}$. Every vertex
outside $A_0\cup\{z_0\}$ is at distance greater than $R$ from $a_R$.
If $z_0\notin V(P)$, then the eccentricity bound forces $P$ to contain a
vertex of $A_0$. Since $z_0$ separates $A_0$ from the rest of $G$ and
$P$ is connected, this implies $V(P)\subseteq A_0$. But then
$\dist_G(d_R,V(P))>R$, a contradiction. Thus $z_0\in V(P)$, and the
symmetric argument gives $z_b\in V(P)$. The $z_0$--$z_b$ subpath of $P$ is shortest between its endpoints and
is therefore a selection path $S$. Orient $P$ so that this subpath runs
from $z_0$ to $z_b$. If the predecessor of $z_0$ on $P$ belonged to
$I_1$, then it would be adjacent to $z_1$. This would replace the
three-edge subpath consisting of that predecessor, $z_0$, the selected
candidate of $I_1$, and $z_1$ by one edge, contradicting the
shortestness of $P$. Therefore any extension before $z_0$ lies in
$A_0$. Symmetrically, any extension after $z_b$ lies in $A_b$. Trim these extensions. For a vertex outside $A_0\cup A_b$, any path to
one of the trimmed extensions first reaches $z_0$ or $z_b$, both of
which remain on $S$. Hence its distance to $S$ equals its distance to
$P$. Every forcing-tail vertex is within distance $R$ of its attachment
vertex. It follows that $\ecc_G(S)\le R$, so $S$ is a solution path.
\end{proof}

\begin{lemma}
\label{lem:mesp-vc-element-distances}
Let $P$ be a selection path that chooses
$F_1,\ldots,F_b$. For every element $u\in U$,
\begin{equation}
\label{eq:mesp-vc-terminal-distance}
  \dist_G(x_u,V(P))=
  \begin{cases}
    b+1=R,   &\text{if }u\in\bigcup_{i=1}^b F_i,\\
    b+2=R+1, &\text{otherwise}.
  \end{cases}
\end{equation}
Every nonterminal vertex is within distance $R$ of $P$. Consequently,
$P$ is a solution path if and only if $F_1,\ldots,F_b$ cover $U$.
\end{lemma}

\begin{proof}
Every connector lies on $P$, every candidate is adjacent to a connector,
and every forcing-tail vertex is within distance $R$ of its attachment
vertex. Consider a vertex
$y\in V(Q_{i,u})\setminus\{x_u\}$, and let $r$ be its distance from
$h_{i,u}$ along $Q_{i,u}$. Then $0\le r\le b-1$. By
Equation~\eqref{eq:mesp-vc-normalization}, there is a set
$F\in\mathcal F$ containing $u$. The route from $y$ through $h_{i,u}$
and $c_{i,F}$ to $z_{i-1}$ has length $r+2\le b+1=R$. Thus every
nonterminal vertex is within distance $R$ of $P$. Every path from $x_u$ to $P$ first traverses one of the length-$b$ paths
$Q_{i,u}$ to a hub $h_{i,u}$. The hub $h_{i,u}$ has distance one from
$P$ exactly when the selected set $F_i$ contains $u$. Otherwise its
distance from $P$ is two: choose any set $F\in\mathcal F$ containing
$u$, and use the path $h_{i,u},c_{i,F},z_{i-1}$. Taking the minimum of
$\dist_G(h_{i,u},V(P))$ over all layers proves
Equation~\eqref{eq:mesp-vc-terminal-distance}: it is one if some selected set $F_i$ contains $u$, and two otherwise.

The final claim follows because the vertices $x_u$ are the only vertices that can lie
farther than $R$ from $P$.
\end{proof}

\begin{lemma}
\label{lem:mesp-vc-equivalence}
The constructed instance $(G,R)$ is a yes-instance if and only if
$(U,\mathcal F,b)$ is a yes-instance of \SUSC.
\end{lemma}

\begin{proof}
Suppose that $(G,R)$ is a yes-instance. By
Lemma~\ref{lem:mesp-vc-canonical}, some selection path is a solution
path. Let $F_1,\ldots,F_b$ be its selected sets.
Lemma~\ref{lem:mesp-vc-element-distances} implies that these sets cover
$U$. Hence $(U,\mathcal F,b)$ is a yes-instance.

Suppose that at most $b$ sets from $\mathcal F$ cover $U$. Since $U$ is
nonempty, the cover contains at least one set. Repeat a selected set if
necessary to obtain a sequence $F_1,\ldots,F_b$. The corresponding
selection path is shortest by Lemma~\ref{lem:mesp-vc-canonical}, and
Lemma~\ref{lem:mesp-vc-element-distances} gives its eccentricity at most
$R$. Therefore $(G,R)$ is a yes-instance.
\end{proof}

\begin{proof}[Proof of Theorem~\ref{thm:mesp-vc-kernel-lower-bound}]
The construction is polynomial-time, and
Equation~\eqref{eq:mesp-vc-cover-size} bounds $|X|$ by a polynomial in
the source parameter $n+b$. Since $X$ is a vertex cover,
$\vc(G)\le |X|$, so the target parameter has the same polynomial bound.
Lemma~\ref{lem:mesp-vc-equivalence} establishes equivalence of the source
and target instances. The unparameterized version of \SUSC\ is
NP-complete, and \ESP\ belongs to NP because shortestness and
eccentricity of a given path can be checked in polynomial time.
Theorem~\ref{thm:mesp-vc-transformation-framework} and the kernel lower
bound for \SUSC\ therefore yield the claimed result. For the version with
a supplied vertex cover, we include $X$ in the output. For the minimum
vertex cover parameter, we omit $X$ and use $\vc(G)\le|X|$.
\end{proof}

\subsubsection{Consequences for deletion parameters}

For the following consequence, recall the vertex-deletion distances to
cluster, split, proper interval, and interval graphs from
Subsection~\ref{sec:graph-parameters}.

\begin{corollary}
\label{cor:mesp-vc-deletion-classes}
Unless $\NP\subseteq\coNP/\poly$, \ESP\ has no polynomial kernel
parameterized by deletion distance to cluster graphs, split graphs,
proper interval graphs, or interval graphs. The same holds when a
deletion set to any of these graph classes is supplied and its size is
the parameter.
\end{corollary}

\begin{proof}
Every edgeless graph belongs to each of the four graph classes.
Since $G-X$ is independent, every listed deletion parameter is at most
$|X|=O({(n+b)}^3)$. The construction therefore gives the same polynomial
parameter transformation for each parameter. If a deletion set is
required as part of the input, we supply $X$.
\end{proof}

\subsection{Bandwidth}

Recall that the bandwidth $\bw(G)$ minimizes the maximum distance
between the positions of adjacent vertices in an ordering; see
Subsection~\ref{sec:graph-parameters}. We use the notation for distances,
path eccentricity, and \ESP\ fixed in the preliminaries.

\begin{theorem}\label{thm:main}
Unless $\mathrm{NP}\subseteq\mathrm{coNP/poly}$, \ESP{} has no polynomial
kernel when parameterized by the bandwidth of the input graph.
\end{theorem}

\subsubsection{The cross-composition construction}

We give an OR-cross-composition from \textsc{3-SAT}, which is
NP-complete~\cite{Karp1972}. We use formulas with at least one variable and
one to three literals per clause. Two well-formed formulas are equivalent
when they have the same number of variables; all malformed strings form
one additional class. This is a polynomial equivalence relation: it is
polynomial-time decidable and has at most $N+1$ classes on strings of
length at most $N$. For the malformed class, output the fixed no-instance
$(K_3,0)$.

Let $F_1,\ldots,F_q$ be formulas in one well-formed class, each on variables
$X_1,\ldots,X_n$, where $n\ge1$. Write $m_i$ for the number of clauses of
$F_i$, and set
\begin{equation}\label{eq:constants}
 A=2n+1,\qquad L=A+1=2n+2,\qquad R=2A-1=4n+1.
\end{equation}
We construct an \ESP{} instance $(H,R)$.
A \emph{private path of length $\ell$} has $\ell$ edges and new internal
vertices. All private paths below have mutually disjoint interiors, and
no edges other than those specified are added.

\paragraph*{Construction.}
Create boundary vertices $a_0,b_0,\ldots,a_q,b_q$.
For each $i\in\{1,\ldots,q\}$, construct a module $M_i$ as follows.
Its \emph{upper rail} $T_i$ is a private $a_{i-1}$--$a_i$ path of length $L$;
its \emph{lower rail} $B_i$ is a private $b_{i-1}$--$b_i$ path of length $L$.
Let $x_i$ be the neighbor of $a_i$ on $T_i$, and let $y_i$ be the neighbor
of $b_{i-1}$ on $B_i$. Thus
\[
 \dist_{T_i}(a_{i-1},x_i)=A,
 \qquad
 \dist_{B_i}(y_i,b_i)=A.
\]

The \emph{variable chain} $D_i$ has new vertices
$w_{i,0},\ldots,w_{i,n}$ and $u_{i,j}^+,u_{i,j}^-$ for $1\le j\le n$.
Add the edges $a_{i-1}w_{i,0}$ and $w_{i,n}b_i$ and, for each $j$, the
four edges forming the two paths
\[
 w_{i,j-1},u_{i,j}^+,w_{i,j}
 \qquad\text{and}\qquad
 w_{i,j-1},u_{i,j}^-,w_{i,j}.
\]
The vertices $u_{i,j}^+$ and $u_{i,j}^-$ represent $X_j$ and $\neg X_j$,
respectively. Traversing $D_i$ from $a_{i-1}$ to $b_i$ by choosing one
literal vertex for each variable takes exactly $2n+2=L$ edges.

Add two \emph{helper paths}, each of length $L$: one from $a_{i-1}$ to
$y_i$ and one from $x_i$ to $b_i$.
For each clause occurrence $C$ of $F_i$, create a vertex $c_{i,C}$.
For each distinct literal occurring in $C$, add a path of length $R$
from $c_{i,C}$ to its corresponding vertex in $D_i$: use $u_{i,j}^{+}$
for the literal $X_j$ and $u_{i,j}^{-}$ for the literal $\neg X_j$.
Each such path is called a \emph{clause path}, and all its internal
vertices are new and used only by that path.
Also join $c_{i,C}$ to each of $x_i,y_i$ by a private path of length $A$,
called a \emph{hub link}.
The helper paths, clause paths, and hub links are the \emph{auxiliary paths}.
Different modules share only their common boundary vertices.

Finally, attach a pendant path of length $R+1$ at each of $a_0$ and $b_q$,
with respective far endpoints $\alpha$ and $\beta$.
These are the \emph{forcing tails}. The resulting graph $H$ is connected(see Figure~\ref{fig:bw-construction}).

Each labeled line in Figure~\ref{fig:bw-construction} represents a
path of the indicated number of edges, with its internal subdivision
vertices suppressed. The drawing is not to scale, and unmarked
crossings are not vertices.

\begin{figure}[!htbp]
\centering
\begin{tikzpicture}[
  x=0.95cm,y=0.85cm,font=\small,
  vertex/.style={circle,fill=black,inner sep=1.7pt},
  literalvertex/.style={circle,fill=black,inner sep=1.4pt},
  rail/.style={thick,gray!75!black},
  chosen/.style={very thick,blue!65!black},
  auxiliary/.style={thick,teal!70!black},
  helper/.style={thick,dashed,teal!70!black},
  clausepath/.style={thick,violet!80!black},
  labelbox/.style={fill=white,inner sep=1.5pt,font=\footnotesize},
  every label/.style={font=\scriptsize,fill=white,inner sep=1pt}
]
\node[vertex,label=above:$\alpha$] (alpha) at (0.1,8.4) {};
\node[vertex,label=above:$a_0$] (a0) at (1.0,8.4) {};
\node[vertex,label=below:$b_0$] (b0) at (1.0,3.2) {};
\node[vertex,label=above:$a_{i-1}$] (aLeft) at (3.2,8.4) {};
\node[vertex,label=below:$b_{i-1}$] (bLeft) at (3.2,3.2) {};
\node[vertex,label=above:$a_i$] (aRight) at (11.5,8.4) {};
\node[vertex,label=below:$b_i$] (bRight) at (11.5,3.2) {};
\node[vertex,label=above:$a_q$] (aq) at (13.5,8.4) {};
\node[vertex,label=below:$b_q$] (bq) at (13.5,3.2) {};
\node[vertex,label=above:$\beta$] (beta) at (14.4,3.2) {};
\node[vertex,label=above:$x_i$] (x) at (10.3,8.4) {};
\node[vertex,label=below:$y_i$] (y) at (4.7,3.2) {};

\node[vertex,label=below left:$w_{i,0}$] (w0) at (4.3,7.25) {};
\node[literalvertex,label=above:$u_{i,1}^+$] (u1p) at (5.24,7.45) {};
\node[literalvertex,label=below:$u_{i,1}^-$] (u1m) at (4.66,6.36) {};
\node[vertex,label=below:$w_{i,1}$] (w1) at (5.6,6.56) {};
\node[vertex,label=left:$w_{i,j-1}$] (wjm) at (6.9,5.87) {};
\node[literalvertex,label=above:$u_{i,j}^+$] (ujp) at (7.84,6.075) {};
\node[literalvertex,label=below:$u_{i,j}^-$] (ujm) at (7.26,4.975) {};
\node[vertex,label=below left:$w_{i,j}$] (wj) at (8.2,5.18) {};
\node[vertex,label=below right:$w_{i,n-1}$] (wnm) at (9.5,4.49) {};
\node[literalvertex,label=above:$u_{i,n}^+$] (unp) at (10.44,4.695) {};
\node[literalvertex,label=below:$u_{i,n}^-$] (unm) at (9.86,3.595) {};
\node[vertex,label=below:$w_{i,n}$] (wn) at (10.8,3.8) {};
\node[vertex,label=above:$c_{i,C}$] (clause) at (7.35,7.75) {};
\node[vertex] (omit1p) at (6.54,6.765) {};
\node[vertex] (omit1m) at (5.96,5.665) {};
\node[vertex] (omit2p) at (9.14,5.385) {};
\node[vertex] (omit2m) at (8.56,4.285) {};

\draw[chosen] (alpha) -- node[labelbox,pos=0.5,below] {$R+1$} (a0) -- (aLeft);
\draw[rail] (b0) -- (bLeft);
\node[labelbox] at (2.1,9.35) {$M_1,\ldots,M_{i-1}$};
\draw[rail] (aLeft) -- (x) -- (aRight);
\draw[rail] (bLeft) -- (y) -- (bRight);
\node[labelbox] at (7.4,9.75) {$M_i$};
\node[labelbox] at (6.7,8.75) {$T_i$: length $L$};
\node[labelbox] at (10.0,2.7) {$B_i$: length $L$};
\draw[rail] (aRight) -- (aq);
\draw[chosen] (bRight) -- (bq) -- node[labelbox,pos=0.35,above] {$R+1$} (beta);
\node[labelbox] at (12.5,9.35) {$M_{i+1},\ldots,M_q$};

\draw[helper] (aLeft) -- node[labelbox,pos=0.55,left] {$L$} (y);
\draw[helper] (x) -- node[labelbox,pos=0.55,right] {$L$} (bRight);
\draw[auxiliary] (clause) -- node[labelbox,pos=0.55,below] {$A$} (y);
\draw[auxiliary] (clause) -- node[labelbox,pos=0.55,above] {$A$} (x);

\draw[chosen] (aLeft) -- (w0);
\draw[rail] (w0) -- (u1m) -- (w1);
\draw[chosen] (w0) -- (u1p) -- (w1);
\node[labelbox] at (6.25,6.215) {$\cdots$};
\draw[rail,densely dotted] (w1) -- (omit1p) -- (wjm);
\draw[rail,densely dotted] (w1) -- (omit1m) -- (wjm);
\draw[chosen,densely dotted] (w1) -- (omit1p) -- (wjm);
\draw[rail] (wjm) -- (ujp) -- (wj);
\draw[chosen] (wjm) -- (ujm) -- (wj);
\node[labelbox] at (8.85,4.835) {$\cdots$};
\draw[rail,densely dotted] (wj) -- (omit2p) -- (wnm);
\draw[rail,densely dotted] (wj) -- (omit2m) -- (wnm);
\draw[chosen,densely dotted] (wj) -- (omit2m) -- (wnm);
\draw[chosen] (wnm) -- (unp) -- (wn);
\draw[rail] (wnm) -- (unm) -- (wn);
\draw[chosen] (wn) -- (bRight);
\node[labelbox] at (7.4,1.85) {variable chain $D_i$};

\draw[clausepath] (clause) -- node[labelbox,pos=0.6,left] {$R$} (u1p);
\draw[clausepath] (clause) -- (ujm);
\draw[clausepath] (clause) -- (unp);
\end{tikzpicture}
\caption{A sketch of the graph construction used in the proof of
Theorem~\ref{thm:main}.}
\label{fig:bw-construction}
\end{figure}

\subsubsection{Canonical shortest paths and domination}

\begin{claim}\label{clm:selector}
The distance from $a_0$ to $b_q$ is $qL$. Every shortest $a_0$--$b_q$ path
uses upper rails before a unique module $M_i$, traverses $D_i$ through
exactly one literal vertex for each variable, and uses lower rails
afterwards. Conversely, every such path is shortest.
\end{claim}

\begin{proof}
For each $i\in\{1,\ldots,q\}$, define a real-valued local level function
$\lambda_i$ on $V(M_i)$. Assign level $0$ to $a_{i-1},b_{i-1}$ and
level $L$ to $a_i,b_i$. Along each rail, increase the level by one
at each step from left to right. In particular,
$\lambda_i(x_i)=A$ and $\lambda_i(y_i)=1$. For the variable-chain
and clause vertices, set
\[
 \lambda_i(w_{i,j})=2j+1 \quad (0\le j\le n),\qquad
 \lambda_i(u_{i,j}^{\pm})=2j \quad (1\le j\le n),\qquad
 \lambda_i(c_{i,C})=n+1.
\]
Thus the levels at the endpoints of every rail or variable-chain
edge differ by exactly one.

It remains to assign levels to the internal vertices of auxiliary
paths. For an auxiliary path $z_0,z_1,\ldots,z_\ell$, whose endpoint
levels have already been assigned, set
\[
 \lambda_i(z_t)
 =\lambda_i(z_0)+\frac{t}{\ell}
       \bigl(\lambda_i(z_\ell)-\lambda_i(z_0)\bigr)
 \qquad (1\le t<\ell).
\]
This is well-defined because auxiliary paths have disjoint interiors.
The absolute level change across each edge of this path is therefore
$|\lambda_i(z_\ell)-\lambda_i(z_0)|/\ell$.
For a helper path, the endpoint levels differ by $1$ and its length
is $L$. For a hub link, they differ by $n$ and its length is $A$.
For a clause path ending at $u_{i,j}^{\pm}$, they differ by
$|2j-(n+1)|\le n-1$ and its length is $R$.
Since $1<L$, $n<A$, and $n-1<R$, every auxiliary edge changes the
level by strictly less than one. Consequently, every edge of $M_i$
changes the level by at most one.

Adding $(i-1)L$ to the levels in $M_i$ gives a well-defined global function
$\Lambda$, since the levels agree on shared boundaries. Any
$a_0$--$b_q$ path avoids the pendant tails and therefore has length at least
$\Lambda(b_q)-\Lambda(a_0)=qL$. A path of the stated form has length $qL$.
Equality forces every edge to increase $\Lambda$ by exactly one, so no
auxiliary edge can be used. The only possible transitions through a module
are its upper rail, its lower rail, or its variable chain, corresponding to
\[
 a\longrightarrow a,\qquad b\longrightarrow b,
 \qquad a\longrightarrow b.
\]
Starting at $a_0$ and ending at $b_q$ therefore requires exactly one
variable-chain transition. Along that chain, increasing levels force
exactly one literal choice per variable.
\end{proof}

Call this unique module \emph{selected}, and interpret its literal choices
as a truth assignment: $X_j$ is true precisely when $u_{i,j}^+$ is chosen.

\begin{claim}\label{clm:hubs}
If a shortest $a_0$--$b_q$ path $P$ selects $M_i$, then
\[
 \dist_H(x_i,P)=\dist_H(y_i,P)=A.
\]
\end{claim}

\begin{proof}
The rail segments to $a_{i-1}\in V(P)$ and $b_i\in V(P)$ give the respective
upper bounds $A$.
A path of length less than $A$ from either hub to $P$ cannot traverse an
auxiliary path: every such path has length at least $A$, its internal
vertices have degree two, and neither endpoint of the proposed short path
is an auxiliary internal vertex. Hence such a path would lie in the graph
consisting only of the rails, variable chains, and forcing tails.

In this graph, the part of the upper rail belonging to $P$ ends at
$a_{i-1}$, at rail distance $A$ from $x_i$. Before meeting $P$, a route
from $x_i$ that leaves the upper rail must traverse a nonselected variable
chain to the lower rail, which costs at least $L>A$ edges. It cannot enter
$D_i$ before meeting $P$, because both attachments of $D_i$ belong to $P$;
entering a forcing tail likewise requires first meeting $a_0$ or $b_q$.
Thus no $x_i$--$P$ path has length less than $A$.
The same argument on the lower rail, whose part in $P$ starts at $b_i$,
applies to $y_i$.
\end{proof}

\begin{claim}\label{clm:domination}
Let $P$ be a shortest $a_0$--$b_q$ path selecting $M_i$.
Every vertex of every nonselected module is within distance $R$ of $P$.
Every vertex of $M_i$ is within distance $R$ of $P$ if and only if its
selected assignment satisfies $F_i$.
\end{claim}

\begin{proof}
First consider the selected module $M_i$.
Each literal vertex is on $P$ or adjacent to a vertex $w_{i,j}$ on $P$.
A route from a clause vertex $c_{i,C}$ to $P$ must first traverse either
a hub link or a clause path. By Claim~\ref{clm:hubs}, going through a hub
costs at least $A+A=R+1$. A clause path already has length $R$, so it
reaches $P$ within distance $R$ exactly when its literal endpoint is chosen.
In fact,
\begin{equation}\label{eq:clause-test}
 \dist_H(c_{i,C},P)=
 \begin{cases}
 R,   &\text{if the selected assignment satisfies $C$},\\
 R+1, &\text{otherwise}.
 \end{cases}
\end{equation}
All other vertices of $M_i$ are automatically within distance $R$:
variable-chain vertices are within distance one, and rail vertices and
helper-path vertices are within distance $L$ via $a_{i-1}$ or $b_i$.
An internal clause-path vertex is at most $R-1$ edges from its literal
endpoint, hence at distance at most $R$ from $P$.
An internal hub-link vertex is at most $A-1$ edges from its hub, hence at
distance at most $(A-1)+A=R$ from $P$.
This proves the assertion for $M_i$.

Now consider a nonselected module, suppressing its index. By symmetry,
suppose $P$ uses its upper rail. Then $x$ and the upper-left boundary are
on $P$. Every nonboundary vertex of its variable chain is within distance
$A$ of that boundary, and every clause vertex is within distance $A$ of
$x$. Thus both endpoints of each clause path are within distance $A$ of
$P$. On a path of length $\ell$ whose endpoints are within distance $D$
of $P$, every vertex is within distance $D+\lfloor\ell/2\rfloor$ of $P$.
Consequently, all clause-path vertices are within distance
\[
 A+\left\lfloor\frac R2\right\rfloor=R.
\]
Each helper path has an endpoint on $P$, so all its vertices, including
$y$ and the lower-right boundary, are within distance $L$.
Every hub link has length $A$ and both endpoints within distance $L$;
the same holds for the lower-rail segment from $y$ to the lower-right
boundary. Their vertices are therefore within distance
\[
 L+\left\lfloor\frac A2\right\rfloor=3n+2\le4n+1=R.
\]
The lower-left boundary is adjacent to $y$, giving distance at most
$L+1=2n+3\le R$. This covers the whole module.
For the lower-rail case, interchange the two rails and use that every
nonboundary variable-chain vertex is also within distance $A$ of the
lower-right boundary.
\end{proof}

\subsubsection{Correctness of the cross-composition}

If $F_i$ is satisfiable, choose the path of Claim~\ref{clm:selector} that
selects $M_i$ and its satisfying assignment, and extend it along both
entire forcing tails. The extension is a shortest $\alpha$--$\beta$ path,
since the tails are pendant. Claim~\ref{clm:domination} dominates all module
vertices within distance $R$, and all tail vertices lie on the path.
Thus $(H,R)$ is a yes-instance.

Conversely, let $Q$ be a shortest path with $\ecc_H(Q)\le R$.
To come within distance $R$ of either forcing leaf, $Q$ must contain a
nonattachment vertex of its tail, because the tail is pendant and has
length $R+1$. Thus $Q$ meets both tails away from their attachments.
Because the tails are pendant, the endpoints of $Q$ lie in different tails,
and its central subpath $P$ is a shortest $a_0$--$b_q$ path.
For every module vertex $v$, we have $\dist_H(v,Q)=\dist_H(v,P)$:
any route to a tail first reaches its attachment, which already belongs
to $P$. Claim~\ref{clm:domination} now implies that the assignment in the
selected module satisfies its formula. Therefore
\[
 (H,R)\in\ESP
 \quad\Longleftrightarrow\quad
 \text{at least one of }F_1,\ldots,F_q\text{ is satisfiable}.
\]

\subsubsection{Bandwidth bound and completion of the proof}

\begin{proof}[Proof of Theorem~\ref{thm:main}]
Let $N$ be the maximum encoding length of a source formula.
Each module has $O(n)$ rail, variable-chain, and helper-path vertices.
Each of its $m_i$ clauses adds at most three length-$R$ clause paths and
two length-$A$ hub links, hence $O(n)$ vertices. Consequently,
\[
 |V(M_i)|=O\bigl(n(m_i+1)\bigr)=O(N^2).
\]
The entire construction is polynomial in the total input length.

Let $W_i$ be the vertices of $M_i$ other than its four boundary vertices.
Order $V(H)$ by listing the left tail from $\alpha$ towards $a_0$, then
$b_0$, then, for $i=1,\ldots,q$, the vertices of $W_i$ in any order followed
by $a_i,b_i$, and finally the right tail from the neighbor of $b_q$ to
$\beta$. Every module edge has both endpoints in the consecutive block
\[
 \{a_{i-1},b_{i-1}\},\ W_i,\ \{a_i,b_i\},
\]
which has size $O(N^2)$; every tail edge has span one. Thus
$\bw(H)=O(N^2)$, independently of $q$. This ordering is explicitly
constructed. One may therefore use the formulation with a supplied
ordering and its width as parameter: a kernel parameterized by
$\bw(H)$ would also kernelize that formulation by ignoring the supplied
ordering and equipping the output with an arbitrary ordering.

We have obtained an OR-cross-composition from an NP-hard language with
output parameter polynomial in $N+\log q$. The cross-composition theorem
of Bodlaender, Jansen, and Kratsch~\cite{BJK14}
therefore excludes a polynomial kernel unless
$\mathrm{NP}\subseteq\mathrm{coNP/poly}$.
\end{proof}
}

\begin{document}

\maketitle

\mespFullAbstract

\ifthenelse{\boolean{shortver}}{%
\section{Introduction}
\label{sv:sec:introduction}

The \emph{Eccentricity Shortest Path} problem (\ESP) asks, given a
connected graph $G$ and a nonnegative integer $R$, whether some shortest
path is within distance $R$ of every vertex. Its optimization version
minimizes this eccentricity. A solution combines a covering requirement
with exact shortestness: extending a path can improve coverage while
destroying shortestness, and replacing it by a shorter route can leave
vertices uncovered. Dragan and Leitert introduced the problem and showed
its connection with approximating minimum-distortion embeddings of graph
metrics into the line~\cite{DraganLeitert2017}.

The problem is NP-hard even on planar bipartite graphs of maximum degree
three, and $\Wtwo$-hard parameterized by $R$~\cite{DraganLeitert2017}.
Nevertheless, it is polynomial-time solvable on several structured classes,
including chordal and distance-hereditary graphs~\cite{DraganLeitert2016Structured}.
Ku\v{c}era and Such\'y~\cite{KuceraSuchy2023} initiated its structural
parameterized study, obtaining FPT algorithms for modular-width, cluster
deletion distance, maximum leaf number, and treewidth combined with $R$.
Bhyravarapu et al.~\cite{BhyravarapuEtAl2023,BhyravarapuEtAl2023Full}
obtained further algorithms for deletion distance to disjoint paths and
split graphs, and proved $\Wtwo$-hardness for chordal deletion distance.
They asked whether interval deletion distance admits an FPT algorithm.
More recently, Bhyravarapu and Mohanapriya~\cite{BhyravarapuMohanapriya2026}
established $\Wone$-hardness for deletion distance to disjoint unions of
spiders, implying hardness for feedback vertex number, pathwidth, and
treewidth, and gave an FPT algorithm for feedback edge number.

We study how structural parameters control both computation and polynomial
preprocessing, with $R$ unrestricted. All deletion distances below count
vertices. Our results connect three aspects of this landscape: the
difference between closely related deletion parameters, the role of
ordering width, and the limits of polynomial kernelization.

For deletion parameters, we resolve the interval-deletion question
positively. We also obtain an XP algorithm for chordal deletion distance,
complementing the known $\Wtwo$-hardness. Thus deleting vertices to interval
graphs and to chordal graphs leads to different parameterized behavior.
Distance to cographs also yields fixed-parameter tractability.

For ordering parameters, \ESP{} is FPT for bandwidth but $\Wthree$-hard
for cutwidth, even on graphs of maximum degree at most five and with an
ordering supplied.
The latter result also strengthens the known hardness for pathwidth and
treewidth. On the algorithmic side, an XP algorithm for clique-width
follows from an algorithm on supplied mim-width layouts whose polynomial
exponent is independent of $R$.
It yields polynomial-time solvability at every fixed treewidth, answering
the question raised by Chakraborty et al.~\cite{ChakrabortyEtAl2026}.

For preprocessing, maximum leaf number admits a polynomial kernel, while
bandwidth, feedback edge number, distance to a clique, and vertex cover
number admit no polynomial kernel unless
$\NP\subseteq\mathrm{coNP}/\mathrm{poly}$. These lower bounds distinguish
fixed-parameter tractability from polynomial kernelization, and imply the
same obstruction for interval deletion distance. Further results include
$\Wtwo$-hardness for vertex clique-cover number and NP-hardness at
tree-independence number two. We also record treedepth FPT and diameter XP
as consequences of earlier algorithms.

The main text develops the bandwidth and deletion-distance algorithms,
the clique-width approach, the cutwidth and clique-cover reductions,
the tree-independence hardness result, and representative kernelization
arguments. We also derive the treedepth and diameter consequences explicitly.
The full version, containing complete proofs and additional technical
details, is appended after the references.

Table~\ref{sv:tab:results} lists the main statements, and
Figure~\ref{fig:parameter-landscape} locates them in the parameter hierarchy.

\begin{table}[t]
\centering
\small
\renewcommand{\arraystretch}{1.10}
\begin{tabularx}{\textwidth}{@{}>{\raggedright\arraybackslash}p{0.31\textwidth}
 >{\raggedright\arraybackslash}X@{}}
\hline
\textbf{Parameter} & \textbf{Result and location} \\
\hline
Bandwidth $b$ & $2^{O(b^2\log(b+1))}n^{O(1)}$ time
 (Theorem~\ref{sv:thm:bandwidth}); no polynomial kernel \\
Cutwidth & $\Wthree$-hard at maximum degree five
 (Theorem~\ref{sv:thm:cutwidth}) \\
Clique-width; rank-width & XP without a supplied decomposition
 (Theorem~\ref{sv:thm:width-xp}) \\
Supplied mim-width $w$ & $n^{O(\max\{1,w\})}$ time
 (Theorem~\ref{sv:thm:width-xp}) \\
Interval deletion distance $k$ & $13^k n^{O(1)}$ time
 (Theorem~\ref{sv:thm:interval}) \\
Chordal deletion distance $k$ & $n^{O(k)}$ time
 (Theorem~\ref{sv:thm:chordal}) \\
Maximum leaf number $\ell$ & $O(\ell^4)$-bit compression and polynomial kernel
 (Theorem~\ref{sv:thm:mainleaf}) \\
Distance to clique; feedback edge number; vertex cover number & No polynomial kernel
 (Theorem~\ref{sv:thm:kernel-lower-bounds}) \\
Cograph deletion distance & FPT (Theorem~\ref{sv:thm:cograph}) \\
Vertex clique-cover number & $\Wtwo$-hard
 (Theorem~\ref{sv:thm:clique-cover-hardness}) \\
Tree-independence number & NP-hard at value at most two
 (Theorem~\ref{sv:thm:tree-independence-hardness}) \\
Treedepth $d$; diameter $\Delta$ & Consequences of earlier algorithms:
 $F(d)n^3$ and $O(n^{2\Delta+2}m)$, respectively
 (Proposition~\ref{sv:prop:known-consequences}) \\
\hline
\end{tabularx}
\caption{Results discussed in this paper. Kernel lower bounds assume
$\NP\nsubseteq\mathrm{coNP}/\mathrm{poly}$. The treedepth and diameter
consequences follow from~\cite{KuceraSuchy2023,DraganLeitert2017}; the
clique-width and rank-width algorithms apply known model checking.
All deletion distances count vertices.}
\label{sv:tab:results}
\end{table}

\mespLandscape

\section{Preliminaries}
\label{sv:sec:preliminaries}

Graphs are connected, finite, simple, and unweighted, unless a weighted
multigraph is explicitly introduced. Write $n=|V(G)|$, $m=|E(G)|$,
and $d_G(u,X)=\min_{x\in X}d_G(u,x)$, with distance to the empty set
or between components equal to $+\infty$. A \emph{geodesic} is a path
shortest between its endpoints; singleton paths are allowed. We may
assume $0\leq R<n$, since larger thresholds give yes-instances.
All-pairs distances and singleton-path eccentricities take polynomial time.

For a graph class $\mathcal H$, an $\mathcal H$-\emph{modulator} is a
vertex set $S$ with $G-S\in\mathcal H$; deletion distance is its minimum
size. Interval graphs are intersection graphs of intervals on a line;
chordal graphs have no induced cycle of length at least four; cographs
have no induced four-vertex path. Intervals may be represented with
distinct endpoints. The maximum leaf number $\ml(G)$ is the largest
number of leaves in a spanning tree, and the feedback edge number
is $m-n+1$. Vertex clique-cover number counts cliques covering all
vertices, while tree-independence number is the minimum, over tree
decompositions, of the largest independence number of a bag.

For an ordering $\pi$, bandwidth bounds $|\pi(u)-\pi(v)|$ for every
edge $uv$, and cutwidth bounds the number of edges crossing every
prefix cut. A tree decomposition has width one less than its largest
bag size. The treedepth $\td(G)$ is the minimum height, counted in
vertices, of a rooted forest on $V(G)$ in which the endpoints of every
edge of $G$ are comparable in the ancestor order.
We use rooted nice decompositions with introduce, forget,
and join nodes. A \emph{layout} is a rooted binary tree whose leaves
are the graph vertices; its mim-width is the maximum induced-matching
size in the bipartite graph across a tree cut, and its rank-width
uses the cut-adjacency rank over $\GF(2)$. Clique-width is the number
of labels in an expression using vertex creation, disjoint union,
relabeling, and complete joins between labels. Further conventions
and standard parameter relations appear in the \emph{Preliminaries} section
of the full paper.

\section{Width parameters: algorithms and hardness}
\label{sv:sec:width}

\subsection{An FPT algorithm for bandwidth}

\begin{theorem}\label{sv:thm:bandwidth}
\ESP{} can be solved in $2^{O(b^2\log(b+1))}n^{O(1)}$ time,
where $b=\bw(G)$, without a supplied bandwidth ordering.
A witnessing path and an optimal path can be found within the same bound.
\end{theorem}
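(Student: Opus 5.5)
This is the short-version restatement of Theorem~\ref{thm:bw-fpt}, so the plan follows the full-version argument. For each ordered pair $(s,t)$ of endpoints, restrict the values that $d(v,V(P))$ can take to the distance spectrum $\Spec_{s,t}(v)$. Then run a tree-decomposition dynamic program over labelings. Singleton paths are tested directly. We may assume $R<n-1$ and $b\ge1$, and we precompute all-pairs distances.

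\textbf{Step 1: the combinatorial core.} We show that every spectrum has size at most $M_b^3$, where $M_b=(40bL)^b$ and $L=1+\lfloor\log_2 n\rfloor$. This uses the packing bound of Lemma~\ref{lem:tree}: a separated set in a rooted tree of bandwidth at most $b$ has size at most $M_b$. The proof is by induction on $b$. Remove two root-to-leaf spines containing the two extremes of the ordering; this lowers the bandwidth. Then group the hanging components dyadically by size, which yields $O(bL)$ components.

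Next, Lemma~\ref{lem:spectrum} shows how to compute the spectrum and how to bound it. Computation is by max–min dynamic programming over the shortest-path DAG. For the bound, pick witnesses $x_i$ for the spectrum values, and use three subtrees of $G$: predecessor trees $T_s$ and $T_t$, and a BFS tree $T_v$. Increasing and decreasing chains are separated sets in $T_s$ or $T_t$. Each class of the rank pair is separated in $T_v$. Multiplying the three bounds gives $M_b^3$. I expect this step to be the main obstacle, in particular checking the lowest-common-ancestor inequality $d_{T_s}(x_i,x_j)\ge\Delta a+2\Delta r$.

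\textbf{Step 2: a local certificate.} Lemma~\ref{lem:certificate} turns existence of a solution path into a local labeling problem. Each vertex gets a label from $\Lambda(v)=\Spec_{s,t}(v)\cap\{0,\dots,R\}$. Vertices labeled zero must have a zero-labeled predecessor in $H$. Every other vertex must have a neighbor whose label is one smaller. Finally, at most $D+1$ vertices may be labeled zero.

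\textbf{Step 3: the dynamic program.} We do not need a bandwidth ordering. Korhonen's approximation with iterative doubling gives a tree decomposition of width at most $2b+1$ in $2^{O(b)}n$ time. On this decomposition, Lemma~\ref{lem:DP} runs in $2^{O(b)}q^{O(b)}n^{O(1)}$ time. Each state stores a label and a support flag for every bag vertex, and the value minimizes the number of zero labels among forgotten vertices. Backpointers recover a witnessing path.

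\textbf{Running time.} With $q\le M_b^3$, the total is $2^{O(b^2\log(b+1))}L^{O(b^2)}n^{O(1)}$. To remove the $\log n$ dependence, use $(1+\ln n)^K\le eK^Kn$, which follows from maximizing $(1+x)^Ke^{-x}$ over $x$. This gives $L^{O(b^2)}\le 2^{O(b^2\log(b+1))}n$.

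\textbf{Optimization.} Feasibility is monotone in $R$, so binary search over $R$ finds the optimum in $O(\log n)$ decision calls. This adds only a polynomial factor.
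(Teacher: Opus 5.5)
Your proposal is correct and follows the paper's proof essentially step for step. It uses the same ingredients: the separated-set packing bound, the $M_b^3$ spectrum bound via $T_s$, $T_t$ and $T_v$, the local labeling certificate, the tree-decomposition dynamic program, and the absorption of $L^{O(b^2)}$ by maximizing $(1+x)^Ke^{-x}$.

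One small correction: the paper tries $k=1,2,\ldots$ in Korhonen's algorithm to reach width at most $2b+1$. Genuine doubling only guarantees width $O(b)$ (roughly $4b+1$), which is harmless for the stated running time.
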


The algorithm for treewidth together with $R$~\cite{KuceraSuchy2023}
does not directly give this result, since the optimum eccentricity is
unbounded even on cycles. We instead bound the number of distance values
that a shortest path can realize. Put
$L=1+\lfloor\log_2 n\rfloor$, $M_0=1$, and
$M_b=(40bL)^b$ for $b\geq1$.

\subsubsection{A bound on separated vertices}

In a tree $T$ rooted at $r$, write $h(x)=d_T(r,x)$ and call
$A\subseteq V(T)$ \emph{separated} if
$d_T(x,y)\geq2|h(x)-h(y)|$ for all $x,y\in A$.

\begin{lemma}\label{sv:lem:separated}
Every separated set in a rooted tree with at most $n$ vertices and
bandwidth at most $b$ has size at most $M_b$.
\end{lemma}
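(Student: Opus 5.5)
We argue by induction on $b$, following the spine-removal and dyadic-grouping strategy already used for Lemma~\ref{lem:tree} in the full version. If $b=0$ the tree is a single vertex and $|A|\le 1=M_0$. For $b\ge1$ and nonempty separated $A$, we first observe that no member of $A$ is a proper ancestor of another: their distance would equal their depth difference, which is less than twice it. We then prune $T$ to the union of root paths of members of $A$. Pruning preserves depths, pairwise distances and separation, does not increase bandwidth, and makes $A$ exactly the leaf set (other than possibly the root).

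Fix an ordering $\pi$ of bandwidth at most $b$. Choose leaves $a_-,a_+\in A$ whose root paths contain the first and last vertices of $\pi$, and let $S$ be the union of these two root paths. The set $S$ is connected, spans both ordering extremes, and contains at most two members of $A$. Deleting $S$ from the ordering reduces the bandwidth of $T-S$ to at most $b-1$. Any edge of span exactly $b$ outside $S$ must straddle some vertex of $S$: otherwise the path in $S$ from the first to the last position would have to jump across this gap with an edge longer than $b$. This bandwidth-reduction step is the main obstacle, and we would check it carefully.

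Each component $K$ of $T-S$ hangs from a unique attachment vertex $u\in S$ and contains a leaf $x\in A$. Let $a\in\{a_-,a_+\}$ be a terminal whose root path contains $u$. Since $u$ is the lowest common ancestor of $a$ and $x$, separation gives $d(u,a)\le 3d(u,x)\le 3|V(K)|$. Hence every vertex of $K$ lies within distance $4|V(K)|$ of $a$, and so within position $4b|V(K)|$ of $\pi(a)$.

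We then group the components assigned to $a$ dyadically by size in $[2^j,2^{j+1})$. Within one group, all vertices lie in a window of at most $17b2^j$ positions, while each component has at least $2^j$ vertices. Thus each group contains at most $17b$ components, giving at most $34bL$ components over both terminals and all $L$ groups. Rooting each component at its attachment neighbour keeps $A\cap V(K)$ separated, and $K$ has bandwidth at most $b-1$. Induction therefore gives
\[
|A|\le 2+34bL\,M_{b-1}\le(40bL)^b=M_b.
\]
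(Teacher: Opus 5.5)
Your proposal is correct and follows the paper's proof essentially step for step. The steps coincide: induction on $b$, pruning to root paths, removing the spine $S$ through the two ordering extremes to drop the bandwidth to $b-1$, the bound $d(u,a)\le 3d(u,x)$, dyadic grouping into at most $17b$ components per class, and the final estimate $2+34bL\,M_{b-1}\le M_b$.
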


\begin{proof}[Proof sketch]
Assume $A\ne\varnothing$. We adapt the spine-removal and dyadic-grouping arguments of Eppstein,
Goodrich, and Liu~\cite[Lemmas~12 and~13]{EppsteinGoodrichLiu2025}.
No member of $A$ is a proper ancestor of another. Prune the tree to its
root-to-$A$ paths, so $A$ is precisely its set of leaves, and fix an
ordering of bandwidth $b$. Choose leaves $a_-,a_+$ whose root paths
contain the two ordering extremes; let $S$ be the union of those paths.
It contains at most two members of $A$.

Deleting $S$ and closing the ordering gaps reduces bandwidth to $b-1$.
Indeed, every remaining edge of span $b$ has a vertex of $S$ strictly
between its endpoints: otherwise the path in $S$ between the ordering
extremes would cross the gap with span greater than $b$.
Each component $K$ of $T-S$ attaches at one vertex $u\in S$.
Assign it to one of the root paths containing $u$, with terminal
$a\in\{a_-,a_+\}$. Taking $x\in A\cap V(K)$, separation gives
$d_T(u,a)\leq3d_T(u,x)\leq3|V(K)|$.
Hence every vertex of $K$ is within distance $4|V(K)|$ of $a$.

Group the assigned components by
$\ell\leq|V(K)|<2\ell$, where $\ell$ is a power of two.
Their vertices lie within $8b\ell$ ordering positions of $a$, so
their union has at most $17b\ell$ vertices. Thus at most $17b$
components belong to each group. There are at most $L$ groups per
terminal. Separation survives inside each component, so induction gives
\[
 |A|\leq2+34bL M_{b-1}\leq(40bL)^b=M_b.
\]
The base case $b=0$ is immediate.
\end{proof}

\subsubsection{Distance spectra}

Fix distinct endpoints $s,t$, let $D=d_G(s,t)$ and $a(x)=d_G(s,x)$.
On the geodesic interval $I=\{x:d_G(s,x)+d_G(x,t)=D\}$,
retain exactly the arcs $x\to y$ with $xy\in E(G)$ and
$a(y)=a(x)+1$.
The resulting DAG $H$ has exactly the shortest $s$--$t$ paths as
its directed $s$--$t$ paths. For $v\in V(G)$ define
\[
 \Spec_{s,t}(v)=\{d_G(v,V(P)):P\text{ is a shortest }s\text{--}t\text{ path}\}.
\]

\begin{lemma}\label{sv:lem:spectrum}
After computing graph distances, each spectrum can be computed in
$O(n+m)$ time and has size at most $M_b^3$.
\end{lemma}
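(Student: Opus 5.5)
The plan has two parts: first compute each spectrum by a bottleneck dynamic program, then bound its size through three auxiliary trees and Lemma~\ref{sv:lem:separated}.

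\textbf{Computing the spectrum.} Put $w(x)=d_G(v,x)$. On the DAG $H$, compute in forward layer order the bottleneck value $A(x)$, the largest possible minimum of $w$ over a directed $s$--$x$ path. It satisfies $A(s)=w(s)$ and $A(x)=\min\{w(x),\max_{y\to x}A(y)\}$. Compute $B(x)$ symmetrically for directed $x$--$t$ paths in reverse layer order. The claim is that $\Spec_{s,t}(v)=\{w(x):A(x)=B(x)=w(x)\}$.
\begin{itemize}
\item If $P$ is a shortest path and $x$ is a $w$-minimizer on $P$, then the prefix and suffix of $P$ certify both equalities.
\item Conversely, optimal prefix and suffix paths at $x$ meet only at $x$, because their layers are strictly monotone. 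Their concatenation is therefore a shortest path whose minimum is $w(x)$.
\end{itemize}
Both passes take $O(n+m)$ time.

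\textbf{Setting up the size bound.} List the spectrum as $r_1<\dots<r_q$ with witnesses $x_i$ satisfying $A(x_i)=B(x_i)=w(x_i)=r_i$, and write $a_i=a(x_i)$. Form three trees, each a subgraph of $G$ and hence of bandwidth at most $b$:
\begin{itemize}
\item $T_s$, built from chosen maximizing predecessors in the recurrence for $A$, rooted at $s$ with depth $a(x)$;
\item $T_t$, the analogue for $B$, rooted at $t$ with depth $D-a(x)$;
\item $T_v$, a BFS tree rooted at $v$.
\end{itemize}
For $i<j$, let $z$ be the lowest common ancestor of $x_i,x_j$ in $T_s$. The root path to $x_j$ attains bottleneck $r_j$, so $w(z)\ge r_j$. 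Since $w$ is $1$-Lipschitz, $d_{T_s}(z,x_i)\ge r_j-r_i$. Combining this with the depths gives
\[ d_{T_s}(x_i,x_j)\ge (a_j-a_i)+2(r_j-r_i), \]
and symmetrically
\[ d_{T_t}(x_i,x_j)\ge (a_i-a_j)+2(r_j-r_i). \]
In $T_v$ the depths are the $r_i$, and $d_{T_v}(x_i,x_j)\ge d_G(x_i,x_j)\ge|a_j-a_i|$.

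\textbf{Finishing the count.} Call a sequence of indices an increasing chain if $0\le \Delta a\le 2\Delta r$ holds for consecutive pairs, and a decreasing chain if $-2\Delta r\le\Delta a\le0$. Both conditions sum along the chain, so they hold for all pairs in it. An increasing chain is then a separated set in $T_s$, and a decreasing chain is separated in $T_t$. By Lemma~\ref{sv:lem:separated}, each chain has length at most $M_b$.

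Label each index $i$ by the pair (longest increasing chain ending at $i$, longest decreasing chain ending at $i$). This gives at most $M_b^2$ labels. If $i<j$ carry the same label, then $|\Delta a|>2\Delta r$, since otherwise $j$ would extend one of the chains ending at $i$. By the $T_v$ inequality, each label class is separated in $T_v$, so it has at most $M_b$ members. This yields $q\le M_b^3$.

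\textbf{Main obstacle.} The delicate step is the lowest-common-ancestor inequality in $T_s$. It requires checking that the chosen predecessor tree actually realizes $A(x)$ along every root path, so that the ancestor $z$ inherits $w(z)\ge r_j$. It also requires careful bookkeeping of depth differences against the Lipschitz bound.
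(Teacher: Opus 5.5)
Your proposal is correct and follows the paper's proof essentially step for step. It uses the same bottleneck recurrences with the characterization $\Spec_{s,t}(v)=\{w(x):A(x)=B(x)=w(x)\}$, the same three trees $T_s,T_t,T_v$ with the lowest-common-ancestor and Lipschitz inequalities, and the same increasing/decreasing chain ranks, which give at most $M_b^2$ classes of at most $M_b$ members each. The step you flag as the main obstacle is settled by induction along the recurrence: if $y^*$ is the chosen maximizing predecessor of $x$, the root path to $x$ has minimum $\min\{w(x),A(y^*)\}=A(x)$, so every ancestor $z$ of $x_j$ satisfies $w(z)\ge r_j$.
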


\begin{proof}
Put $w(x)=d_G(v,x)$. Let $A(x)$ be the largest possible minimum
$w$-value on a directed $s$--$x$ path, and define $B(x)$ for an
$x$--$t$ path. In forward and reverse layer order, compute
\begin{align*}
 A(s)&=w(s),& A(x)&=\min\{w(x),\max_{y\to x}A(y)\},\\
 B(t)&=w(t),& B(x)&=\min\{w(x),\max_{x\to y}B(y)\}.
\end{align*}
Then
$\Spec_{s,t}(v)=\{w(x):A(x)=B(x)=w(x)\}$:
a minimum-weight vertex of a geodesic satisfies the equalities, and
witnessing prefix and suffix paths concatenate to a geodesic.

List its values $r_1<\cdots<r_q$, choose corresponding witnesses
$x_i$, and write $a_i=a(x_i)$. Maximizing predecessor choices for
$A$ form a tree $T_s$ with depths $a(x)$; maximizing successor
choices for $B$ form a tree $T_t$ with depths $D-a(x)$.
Let $T_v$ be a BFS tree of $G$ rooted at $v$. All three trees are
subgraphs of $G$ and have bandwidth at most $b$.
For $i<j$, put $\Delta r=r_j-r_i$ and $\Delta a=a_j-a_i$.
The lowest common ancestor of $x_i,x_j$ in $T_s$ has weight at
least $r_j$. Since $w$ changes by at most one along an edge,
and symmetrically for $T_t$, we obtain
\begin{align*}
 d_{T_s}(x_i,x_j)&\geq\Delta a+2\Delta r,&
 d_{T_t}(x_i,x_j)&\geq-\Delta a+2\Delta r,\\
 d_{T_v}(x_i,x_j)&\geq|\Delta a|.&
\end{align*}

An increasing sequence of indices with consecutive differences
$0\leq\Delta a\leq2\Delta r$ gives a separated set in $T_s$;
one with $-2\Delta r\leq\Delta a\leq0$ gives a separated set in
$T_t$. These inequalities extend to every pair by summation.
Both sequences therefore have length at most $M_b$.
Associate with $i$ the pair of maximum lengths of such sequences
ending at $i$. There are at most $M_b^2$ pairs. Equal pairs at
$i<j$ force $|\Delta a|>2\Delta r$, since otherwise a sequence
ending at $i$ extends to $j$. Each equal-pair class is consequently
separated in $T_v$, whose witness depths are $r_i$, and has at most
$M_b$ members. Thus $q\leq M_b^3$.
\end{proof}

\subsubsection{A local certificate and dynamic program}

Allow vertex $v$ the labels
$\Lambda(v)=\Spec_{s,t}(v)\cap\{0,\ldots,R\}$, rejecting an empty
domain. A solution exists exactly when labels
$\lambda(v)\in\Lambda(v)$ satisfy the following conditions, with
$Z=\{v:\lambda(v)=0\}$:
\begin{enumerate}
\item $s,t\in Z$ and $|Z|\leq D+1$;
\item each vertex of $Z\setminus\{s\}$ has a predecessor in $Z$ in $H$;
\item each positively labelled vertex has a neighbor with label one smaller.
\end{enumerate}
True distances to a solution satisfy these conditions.
Conversely, following zero predecessors from $t$ gives a geodesic
with $D+1$ vertices, exhausting $Z$. Decreasing positive labels
give walks to $Z$ of length at most $R$, proving coverage.

On a nice tree decomposition of width $w$, store an allowed label
for each bag vertex, forcing $s,t$ to zero, and one bit recording whether its required
neighbor has already been supplied by a forgotten vertex.
For zero labels the required neighbor is a zero predecessor in $H$;
for positive labels it has label one smaller. Vertex $s$ has no
such requirement. Every forgotten vertex other than $s$ must already have support.
The state value is the minimum number of forgotten
zero-labelled vertices. Initialize the empty leaf value to zero and
minimize over transitions producing the same state.
At introduction, initialize the new support
bit to false. At forgetting, require support either in the bag or
through the bit, update the remaining bits, and charge the vertex
if its label is zero. At a join, agree on bag labels, OR the bits,
and add the values; forgotten sets are disjoint. At the empty root,
accept a value at most $D+1$. This gives at most
$(2q)^{w+1}$ states, where $q=\max_v|\Lambda(v)|$, and running time
$2^{O(w)}q^{O(w+1)}n^{O(1)}$.

Since $\tw(G)\leq b$, Korhonen's algorithm~\cite{Korhonen2021},
tried with increasing width bounds, finds a decomposition of width
at most $2b+1$ in $2^{O(b)}n$ time without a bandwidth ordering.
Test singleton paths separately, then all endpoint pairs. Substituting
$q\leq M_b^3$ gives total time
$2^{O(b^2\log(b+1))}L^{O(b^2)}n^{O(1)}$.
For $K\geq1$, maximizing $(1+x)^Ke^{-x}$ yields
$L^K\leq e(2K)^K n$; with $K=O(b^2)$ this proves the claimed FPT
bound. Backtracking recovers a path, and binary search over $R$
finds the optimum with only a polynomial additional factor.
Complete arguments and transitions appear in the \emph{Bandwidth} subsection
of the \emph{FPT algorithms} section in the full paper.

\subsection{Hardness for cutwidth}
\label{sv:sec:cutwidth}

We contrast the bandwidth algorithm with a hardness result for cutwidth.
The source problem is \CSP: for a constraint graph $F$, each variable $v$
has an explicitly listed domain $D_v$, and each edge $uv$ has an allowed
relation $R_{uv}\subseteq D_u\times D_v$. An assignment must satisfy all
relations. This problem is $\Wthree$-complete parameterized by
$\vc(F)$~\cite[Theorem~1]{BGP23}.

\begin{theorem}\label{sv:thm:cutwidth}
\ESP{} is $\Wthree$-hard parameterized by cutwidth, even on connected
graphs of maximum degree at most five. A \CSP{} instance with constraint graph
$F$ can be transformed in polynomial time into an equivalent instance
$(G,q)$ and an ordering $\pi$ with
$\cutwidth(G,\pi)\leq4\vc(F)+7$.
If a vertex cover of size $k$ is supplied, the bound is $2k+7$.
\end{theorem}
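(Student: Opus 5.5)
This is the short-version restatement of Theorem~\ref{thm:cw-whard}, and I will prove it by a parameterized reduction from \CSP{} parameterized by vertex cover number, which is $\Wthree$-complete by Proposition~\ref{prop:source}. The proof has three parts: the construction, its correctness, and an explicit ordering $\pi$ witnessing the cutwidth bound.

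\textbf{Construction.} First dispose of the trivial cases: an empty variable set maps to $(K_1,0)$, and an empty domain maps to $(K_3,0)$. Otherwise take a vertex cover $X=\{x_1,\dots,x_k\}$, either supplied or computed as the endpoints of a maximal matching, so that $k\le 2\vc(F)$. Each variable $v$ gets a ladder whose rungs have length $|D_v|$. Chaining the ladders gives a core in which shortest $s$--$t$ paths correspond exactly to assignments $\rho$. Extending such a path by long terminal pendants forces every feasible solution to contain such a core path. Each variable also receives two ports at the ends of its rails, at distances $N_v-\rho(v)$ and $\rho(v)$ from the path.

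Each forbidden pair $(u,v,a,b)$ becomes a test gadget. It consists of a center joined by four long connection paths to positions along carrier paths hanging off the four ports. The connection lengths are compensated so that the pendant leaf of the center lies at distance $q+1-\max\{|\rho(u)-a|,|\rho(v)-b|\}$ from the path. Hence the leaf is uncovered exactly when $(\rho(u),\rho(v))=(a,b)$.

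\textbf{Correctness.} I expect to verify the following, roughly in this order.
\begin{itemize}
\item With $L=2(S+T)$, every connection path is longer than $L/2$, while $\diam(H)\le L-1$. Hence shortest paths between vertices of $H$ never detour through the test gadgets.
\item The port distances are exactly as claimed above.
\item The distance formula for the test leaves holds.
\item Every other vertex lies within distance $q=2L$ of the path.
\end{itemize}
Together these show that the \CSP{} instance is satisfiable if and only if the output is a yes-instance.

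\textbf{Cutwidth ordering.} Process the ladders of $X$ first, followed by all tests internal to $X$. Then, for each $y_j\in Y$, process its ladder followed by the tests containing $y_j$. Test positions on the carriers follow the global test order, so the listed part of each carrier is always a prefix. Consequently each carrier contributes at most one crossing edge at any cut. This gives $2k$ crossings from the carriers of $X$, plus two from the currently active $y_j$. Add at most $3$ crossings from the core (rails and current rung) or at most $1$ core edge together with $4$ connection paths of the current test. The total is at most $2k+7\le 4\vc(F)+7$. Every vertex has degree at most $3$ except the test centers, which have degree $5$.

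The main obstacle is the metric bookkeeping. One must show that no shortcut through carriers or other tests reduces a test leaf's distance or breaks shortestness of the core path. I would handle this first, using the isometry argument for $H$ based on the length bound~\eqref{eq:long}.
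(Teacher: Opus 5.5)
Your proposal is correct and follows the paper's proof essentially step for step. It uses the same ladder core with ports at distances $N_v-\rho(v)$ and $\rho(v)$, the same carrier-based test gadgets with compensated connection lengths giving $q+1-\max\{|\rho(u)-a|,|\rho(v)-b|\}$, the same isometry of $H$ from $\diam(H)\le L-1$ against connection lengths above $L/2$, and the same block ordering yielding $(2k+2)+1+4=2k+7$. Two routine points are left implicit. First, the explicit connection lengths $L-N_w+h+1-p$ and $L-h+1-p$ at carrier position $p$ are exactly what your compensation requirement forces. Second, the polynomial size bound follows from $T\le S^2$ and $L=2(S+T)$.
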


\begin{proof}[Proof sketch]
Let $X=\{x_1,\ldots,x_k\}$ be a supplied vertex cover and let
$Y=\{y_1,\ldots,y_m\}$ be the remaining variables. Empty variable sets
and empty domains have constant-size yes- and no-outputs, respectively;
otherwise relabel $D_v=\{0,\ldots,N_v\}$ and put
$S=\sum_v(N_v+1)$.
A \emph{test} $\tau=(u,v,a,b)$ records a forbidden pair
$(a,b)\notin R_{uv}$, using one fixed orientation of each constraint edge.
Order the tests first within $X$, then by their endpoint $y_1,y_2,\ldots,y_m$;
every remaining test has exactly one endpoint in $Y$.
Let $T$ be their number, $t_v$ the number containing $v$, and
$p_v(\tau)$ the position of $\tau$ in the induced sublist for $v$.
Set $L=2(S+T)$ and $q=2L$.

\emph{Construction.}
For each $v$, make a ladder with rails
$A_{v,0}\cdots A_{v,N_v}$ and $B_{v,0}\cdots B_{v,N_v}$.
For each $a\in D_v$, its rung joins $A_{v,a}$ to $B_{v,a}$ by a path
of length $N_v+1$. Join consecutive ladders in the order
$x_1,\ldots,x_k,y_1,\ldots,y_m$ by edges from the last bottom-rail
vertex to the first top-rail vertex. Denote this core by $C$, with
first and last terminals $s,t$.
At ports $U_v=A_{v,N_v}$ and $V_v=B_{v,0}$ attach two carrier paths
$U_v^0\cdots U_v^{t_v}$ and $V_v^0\cdots V_v^{t_v}$, rooted at their
superscript-zero vertices. Let $H$ consist of the core and carriers.

For each test $\tau=(u,v,a,b)$, introduce a center $c_\tau$.
For each $(w,h)\in\{(u,a),(v,b)\}$, with $p=p_w(\tau)$, connect
$c_\tau$ to $U_w^p$ and $V_w^p$ by paths of respective lengths
\[
 L-N_w+h+1-p
 \qquad\text{and}\qquad
 L-h+1-p.
\]
Attach a length-$L$ pendant at $c_\tau$, ending at $z_\tau$.
Finally attach length-$(q+1)$ terminal pendants at $s$ and $t$,
ending at $s^*,t^*$. All internal vertices are fresh and all unspecified
edges are absent. Each connection length $\lambda$ satisfies
$L/2<\lambda\leq L$. Every test center has degree five; all other
vertices have degree at most three. Since $T\leq S^2$, the graph has
$O(S^2+(T+1)L)$ vertices and is constructed in polynomial time.

\emph{Canonical paths.}
The diameter of ladder $v$ is at most $2N_v+1$: of the routes using
the top and bottom rails, one uses at most $N_v+1$ rung edges and
$N_v$ rail edges. Therefore
$\diam(C)\leq2S-1$ and $\diam(H)\leq L-1$.
A shortest path with both endpoints in $H$ stays there, since an
excursion traverses two connection paths and has length greater than $L$.

A shortest crossing of ladder $v$ uses exactly one rung, say $\rho(v)$,
and moves monotonically along the rails: its minimum length is
$2N_v+1$. Consequently the shortest $s$--$t$ paths correspond exactly
to assignments $\rho$. Extend such a path to both terminal leaves and
call it $P_\rho$. Its distances from the two ports are
\[
 d_G(U_v,P_\rho)=N_v-\rho(v),
 \qquad d_G(V_v,P_\rho)=\rho(v).
\]
Indeed, the indicated rail segments attain these distances, while
another rung is longer and leaving the ladder first meets the path.
Conversely, any shortest path of eccentricity at most $q$ must meet
both terminal pendants away from their roots, because their lengths
are $q+1$. Its endpoints lie in these pendants and its core segment
selects one rung per ladder. Extending it to $s^*,t^*$ therefore yields
some $P_\rho$ without increasing eccentricity.

\emph{Testing assignments.}
A shortest route from $c_\tau$ to $P_\rho$ follows one connection
path, then its carrier and the core. For $(w,h)\in\{(u,a),(v,b)\}$,
the two resulting lengths are $L+1+h-\rho(w)$ and
$L+1+\rho(w)-h$: the attachment position cancels the carrier length.
Adding the test pendant gives
\begin{equation}\label{sv:eq:cutwidth-test}
 d_G(z_\tau,P_\rho)
 =q+1-\max\{|\rho(u)-a|,|\rho(v)-b|\}.
\end{equation}
Thus $z_\tau$ is covered exactly when $\rho$ avoids the forbidden pair.
All vertices of $H$ are within $L-1$ of $P_\rho$, and connection-path
vertices are within a further $L$. Terminal pendants lie on $P_\rho$;
on a test pendant its leaf is farthest from $P_\rho$.
Hence $\ecc_G(P_\rho)\leq q$ precisely for satisfying assignments.

\emph{The ordering.}
Begin with the pendant from $s^*$ towards $s$, omitting $s$.
Process each ladder
column by column, listing its top vertex, rung interior, and bottom
vertex. Process the $X$ ladders and their internal tests first, then
each $y_j$ ladder immediately followed by all its tests. For a test,
list its four carrier attachments, its four connection-path interiors
one at a time towards its center, and then its center and pendant.
Finish with the pendant from the neighbor of $t$ to $t^*$.

Every carrier is listed in increasing position order and contributes
at most one crossing edge. The $X$ carriers contribute at most $2k$;
only the current $Y$ block contributes further carrier edges, at most
two. During a ladder, at most three non-carrier edges cross.
During a test, at most one core or terminal edge and four test edges
cross. Thus the ordering has cutwidth at most $(2k+2)+1+4=2k+7$.
If no cover is supplied, take the endpoints of a maximal matching;
they form a cover of size at most $2\vc(F)$.
The \emph{Cutwidth} subsection of the full paper gives the complete
supporting arguments.
\end{proof}

\subsection{An XP application for clique-width and rank-width}
\label{sv:sec:cliquewidth}

The following algorithms are applications of known model checking and
width-conversion results. Details appear in the \emph{Clique-width}
subsection of the full paper.

\begin{theorem}\label{sv:thm:width-xp}
For a connected $n$-vertex graph with $n\geq2$, \ESP{} is solvable in
$n^{O(w)}$ time given a rooted layout of mim-width $w$, and in
$|\tau|^{O(1)}+n^{O(k)}$ time given a clique-width $k$-expression $\tau$.
Without a supplied decomposition, it is solvable in
$f(q)n^{O(q)}$ time for $q=\rw(G)$, and hence in $f'(k)n^{O(k)}$
time for $k=\cw(G)$, for computable functions $f,f'$.
Each algorithm returns a witnessing shortest path.
\end{theorem}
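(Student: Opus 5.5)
The plan is to reduce every case of Theorem~\ref{sv:thm:width-xp} to one model-checking call per endpoint pair on a supplied rooted layout. First I would fix $s,t$ and $D=d_G(s,t)$, and certify a geodesic by a connected vertex set. The claim to prove is this: $X$ induces a geodesic $s$--$t$ path if and only if $s,t\in X$, $|X|=D+1$, and $G[X]$ is connected (this is Lemma~\ref{lem:cliquewidth-xp-certificate} of the full version). The argument is short. A simple $s$--$t$ path inside $G[X]$ has length between $D$ and $|X|-1=D$, so it uses every vertex of $X$. Any further edge of $G[X]$ would be a chord giving a shorter $s$--$t$ path. Coverage is expressed as $\overline X\subseteq N_1^R(X)$ when $R\ge1$, and as $X=V(G)$ when $R=0$. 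Together these give a constant-length formula in distance neighborhood logic with connectivity, using singleton predicates for $s$ and $t$.

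Second, I would apply the mim-width model-checking theorem of Bergougnoux, Dreier, and Jaffke (Theorem~\ref{thm:meta}). Its exponent depends only on the formula length, the counting threshold (here one), and the width $w$. It does not depend on the numerical radius $R$ or on the cardinality value $D+1\le n$. Each call therefore costs $n^{O(w)}$, and there are $n^2$ endpoint pairs. For $n\ge2$, connectivity gives $w\ge1$, so the total is $n^{O(w)}$. A witness is recovered because the theorem returns a satisfying set, which we traverse from $s$. The optimization version follows by binary search over $R$.

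Third come the width conversions. From a $k$-expression I would suppress the unary nodes of its syntax tree to obtain a rooted layout. At each node cut, vertices that share a label at that moment receive identical neighborhoods across the cut, so the cut matrix has at most $k$ distinct rows. Hence the cut-rank is at most $k$, and an induced matching of size $a$ yields an $a\times a$ identity submatrix, so mim-width is at most cut-rank. This gives time $|\tau|^{O(1)}+n^{O(k)}$. Without a decomposition, I would run Oum's approximation for $h=1,2,\dots$; it yields a rank-decomposition of width at most $3\rw(G)-1$ in $g(h)n^3$ time per attempt. I would then root it by subdividing an edge, giving $f(q)n^{O(q)}$ time. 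Finally, $\rw\le\cw$ gives the bound in terms of $k$.

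The main obstacle is making sure the meta-theorem's running time truly does not depend on $R$, because the analogous clique-width meta-theorems are exponential in the radius. I would check carefully that the cited statement counts numerical values as single symbols. I would also confirm that the formula uses only the threshold-one operator $N_1^R$ and a single cardinality constraint.
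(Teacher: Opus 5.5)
Your proposal is correct and follows the paper's proof essentially step for step. It uses the connected-set certificate with $|X|=D+1$, the constant-size $\ACDN$ formula built from $N_1^R$ with the $R=0$ special case, and the Bergougnoux--Dreier--Jaffke theorem, whose exponent is independent of $R$. It then converts a $k$-expression to a layout via at most $k$ distinct cut rows and $\mim_G(A)\le\rho_G(A)$, and uses Oum's approximation tried for $h=1,2,\ldots$ together with $\rw\le\cw$.

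The only detail you skip is that degree-two internal vertices of the rank-decomposition should be suppressed before subdividing an edge to root it. This makes the result a full binary rooted layout and leaves the nontrivial cuts unchanged.
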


Handle the one-vertex graph directly. For each endpoint pair $s,t$,
including $s=t$, put $D=d_G(s,t)$. A set $X$ induces an $s$--$t$
geodesic precisely when $s,t\in X$, $|X|=D+1$, and $G[X]$ is
connected: any $s$--$t$ path in $G[X]$ already needs all $D+1$
vertices, and any additional edge would shorten it. Thus, for $R\geq1$,
the distance-neighborhood logic formula
\[
 (\{s,t\}\subseteq X)\ \wedge\ |X|=D+1\ \wedge\ \con(X)
 \ \wedge\ \overline X\subseteq N_1^R(X)
\]
expresses the desired solution, using singleton predicates for the
endpoints. Here $N_1^R(X)$ contains vertices at positive distance at
most $R$ from some member of $X$; for $R=0$, replace the last conjunct
by $\overline X=\varnothing$.

By Bergougnoux, Dreier, and Jaffke~\cite[Theorem~1.2]{BergougnouxDreierJaffke2023},
this formula is solvable in $n^{O(w)}$ time on the supplied layout.
Numerical radii and cardinality constants each count as one symbol in
their formula-size convention. The formula therefore has constant size
and counting threshold, so the exponent is independent of $R$.
The $n^2$ endpoint guesses preserve the bound.

The standard conversion from a $k$-expression gives a layout with
cut-rank at most $k$~\cite[Proposition~6.3]{OumSeymour2006}.
Every cut's induced-matching number is at most its rank, since an
induced matching supplies an identity submatrix. Without an expression,
trying successive bounds in Oum's algorithm~\cite[Theorem~5.8]{Oum2008}
yields a rank-decomposition of
width at most $3q-1$ in $g(q)n^3$ time. Rooting this decomposition and
using $\rw(G)\leq\cw(G)$ gives the remaining bounds. The computable
preprocessing factor is retained when no decomposition is supplied.

\section{Deletion distances}
\label{sv:sec:deletion}

We present FPT algorithms for deletion distance to interval graphs and
to cographs, and an XP algorithm for chordal deletion distance. The latter complements the
$\Wtwo$-hardness of Bhyravarapu et al.~\cite{BhyravarapuEtAl2023Full}.
Complete proofs and transitions appear in the corresponding subsections
of the full paper.

\subsection{Distance to interval graphs}

\begin{theorem}\label{sv:thm:interval}
\ESP{} can be solved in $13^k n^{O(1)}$ time, where $k$ is the
vertex-deletion distance to interval graphs.
\end{theorem}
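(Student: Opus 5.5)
The plan follows the full-version development in the \emph{Distance to interval graphs} subsection, so the proof can assemble results stated earlier.

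\textbf{Preprocessing.} First obtain a modulator $S$ with $|S|\leq k$ using the $10^k n^{O(1)}$ algorithm for Interval Vertex Deletion, and compute an interval representation of $H=G-S$ with distinct endpoints. Test all singleton paths directly. Then enumerate ordered endpoint pairs $(a,b)$ and set $D=d(a,b)$.

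\textbf{Guessing the distance profile.} For each endpoint pair, enumerate candidate profiles $(M,\rho)$ using the six labels $\mathsf{near}_j$ and $\mathsf{far}_j$. Correctness of this guessing rests on Lemma~\ref{lem:interval-anchor}: the anchor estimate $\beta_s$ is exact on $T$ and overestimates $\rho_s$ by at most two elsewhere. That lemma in turn relies on the asteroidal-triple argument of Lemma~\ref{lem:interval-endpoint}. Corollary~\ref{cor:interval-enumeration} then guarantees that the true profile of every $a$--$b$ geodesic is among the candidates.

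\textbf{Deciding each candidate.} Apply the preliminary checks~\eqref{eq:interval-prechecks}. Then run the sweep dynamic program over $\prec_R$ with states $(A,B,c,q,W)$.
\begin{itemize}
\item Lemma~\ref{lem:interval-correct} shows that the sweep accepts exactly when there is an $a$--$b$ geodesic with eccentricity at most $R$ that realizes the candidate.
\item That correctness proof uses Lemma~\ref{lem:interval-mod-separator} to reduce coverage to the vertices of $U_\rho$.
\item It uses Corollary~\ref{cor:interval-summaries} (via Lemma~\ref{lem:interval-monotone}) so that the three-vertex summaries $A$ and $B'$ decide both the degree tests and the coverage tests.
\end{itemize}
Taking the union over all endpoint pairs and all candidates gives the correct answer to \ESP{}.

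\textbf{Running time.} For a fixed candidate with $|M|=m$, each layer holds at most $3^m2^{k-m}n^{O(1)}$ states. Each state has $O(n^3)$ transitions, one for each promise $B'$. By Corollary~\ref{cor:interval-enumeration}, a fixed zero set $M$ has at most $5^{k-m}$ candidates. Summing gives~\eqref{eq:thirteen}:
\[
\sum_m \binom{k}{m}3^m10^{k-m}=13^k.
\]
The $n^2$ endpoint pairs are absorbed into $n^{O(1)}$, and the $10^k$ modulator step is dominated by $13^k$.

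\textbf{Main obstacle.} The delicate step is the soundness part of Lemma~\ref{lem:interval-correct}. The future promise $B$ is guessed, so one must show by backward induction that $B$ really equals $\mathrm L_3(X\cap\overline L_i)$. Only then can one conclude three things: the degree test certifies true induced degrees, the sweep yields a genuine geodesic, and the guessed $\rho$ is the actual profile. The last point holds because eligibility through $E_\rho$ gives $d(s,Y)\geq\rho_s$, and the witnesses recorded in $W$ give equality.
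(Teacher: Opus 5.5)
Your proposal is correct and follows the paper's proof essentially step for step. It uses the same modulator and interval-model preprocessing, the six-label profile enumeration justified by the anchor and asteroidal-triple lemmas, the $(A,B,c,q,W)$ sweep with its correctness lemma (including the backward induction on $B$), and the count $\sum_m\binom{k}{m}3^m10^{k-m}=13^k$. The only detail to make explicit in your soundness remark is that $d(s,Y)\geq\rho_s$ needs both $E_\rho$, for the selected vertices of $H$, and the precheck $d(s,m)\geq\rho_s$, for the vertices $m\in M$.
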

This resolves the open question posed by Bhyravarapu et al.~\cite{BhyravarapuEtAl2023Full} of whether ESP is fixed-parameter tractable when parameterized by the vertex-deletion distance to interval graphs.
Find a modulator $S$, $|S|=k$, in $10^k n^{O(1)}$
time~\cite{cao2015interval}, and an interval representation of $H=G-S$
in polynomial time~\cite{BoothLueker1976}. Test singleton paths
separately. Thereafter, guess distinct endpoints $a,b$ and put
$D=d_G(a,b)$. A candidate vertex set must contain $a,b$, have
$D+1$ vertices, and induce degrees one at $a,b$ and two elsewhere.
These conditions force a shortest $a$--$b$ path: its $a$--$b$ component
already requires at least $D+1$ vertices, leaving none for other
components. Write $\tau(a)=\tau(b)=1$ and $\tau(v)=2$ otherwise.

\paragraph*{Enumerating distance profiles.}
For a hypothetical solution $P$, let $\rho_s=d_G(s,V(P))$ for $s\in S$.
The central observation is that only constantly many possibilities per
modulator vertex are needed, once the endpoints and nearby anchors are
fixed. Put $T=\{s\in S:\rho_s\leq2\}$ and
\[
 \beta_s=\min\left\{d_G(s,a),d_G(s,b),
       \min_{t\in T}\bigl(d_G(s,t)+\rho_t\bigr)\right\}.
\]
Then $\rho_s\leq\beta_s\leq\rho_s+2$, with equality
$\beta_s=\rho_s$ on $T$.
To see the additive bound, a geodesic $Q$ of an interval graph satisfies
the following endpoint property: if $d_H(u,Q)=r\geq2$ is finite, an
endpoint of $Q$ is within distance $r+1$ of $u$. Otherwise, $u$ and the
two endpoints form an asteroidal triple, contrary to
AT-freeness~\cite{LekkerkerkerBoland1962}.
Follow a shortest $s$--$P$ path to its last modulator vertex and apply
this property to the encountered component of $P-S$. It either reaches
a nearby anchor or comes within additive two of an endpoint or a vertex
of $P\cap S$.

Accordingly, assign each $s$ one of six labels: three declare
$\rho_s=0,1,2$, and three declare $\rho_s=\beta_s-j>2$ for
$j=0,1,2$. The near labels determine $T$ and its distances before
$\beta$ is computed. This enumerates the true profile of every
solution. The zero set $M=\{s:\rho_s=0\}$ determines $P\cap S$;
for any prescribed $M$, at most $5^{k-|M|}$ profiles remain.
Reject profiles violating $0\leq\rho_s\leq R$, the endpoint
requirements, or the degree bounds within $M$.

\paragraph*{Checking a profile by an interval sweep.}
Write $Y=M\cup X$, where $X\subseteq V(H)$ is to be selected.
Require $d_G(s,y)\geq\rho_s$ for every $s\in S$ and $y\in Y$,
and an equality witness for each $s$. Once these conditions hold,
\[
 d_G(v,Y)=\min\left\{d_H(v,X),
            \min_{s\in S}(d_G(v,s)+\rho_s)\right\}.
\]
Thus only vertices in
$U_\rho=\{v\in V(H):d_G(v,s)+\rho_s>R\text{ for all }s\in S\}$
still require coverage within $H$.

Use an interval model with distinct endpoints and sweep in increasing
right-endpoint order. Distances from the current interval to processed
vertices are nonincreasing in their right-endpoint order; distances to
unprocessed vertices are nondecreasing in their left-endpoint order.
Consequently, retaining the last three selected processed vertices
in right-endpoint order and the first three selected unprocessed
vertices in left-endpoint order suffices to preserve selected-neighbor
counts, truncated at three, and to test whether a selected vertex
lies within any prescribed positive distance. Let $R_3$ and $L_3$
denote these summaries.

The state $(A,B,c,q,W)$ records the processed summary $A$, a promised
future summary $B$, the number $c(m)$ of processed selected neighbors of
each $m\in M$, the total selected count $q$ including $M$, and the
vertices $W\subseteq S\setminus M$ with an equality witness already
found. Each $c(m)$ ranges from zero to
$t_m=\tau(m)-|N_G(m)\cap M|\leq2$.
When processing $v$, choose $Z=\{v\}$ or $\varnothing$, forcing
selection at the endpoints, and guess the next promise $B'$ from at most
three remaining vertices, subject to
$B=\mathrm L_3(B'\cup Z)$.
If selected, $v$ must satisfy the distance lower bounds and
\[
 \min\{3,|N_G(v)\cap M|+|N_H(v)\cap A|
                       +|N_H(v)\cap B'|\}=\tau(v).
\]
If skipped and in $U_\rho$, require a vertex of $A\cup B'$ within
distance $R$ in $H$; reject such a skip when $R=0$.
Update $A$ to $\mathrm R_3(A\cup Z)$, increase the counters for
selected neighbors, update $q$ and $W$, and reject exceeded bounds.
Initialize with $A=\varnothing$, $c=0$, $q=|M|$ and the witnesses
already in $M$, enumerating all initial promises.
Accept precisely when $B=\varnothing$, $q=D+1$,
$W=S\setminus M$, and every $c(m)=t_m$.
\paragraph*{Correctness of the promised summaries.}
Consider an accepting sequence of transitions, and let $X$ be its
selected vertices of $H$. Write $h_1,\ldots,h_N$ for the sweep order
and $L_i=\{h_1,\ldots,h_i\}$. The meanings of $A,c,q$ follow by
forward induction. The future summary requires backward induction:
$B_N=\varnothing$, and, writing $Z_i=X\cap\{h_{i+1}\}$, every
transition satisfies
\[
 B_i=\mathrm L_3(B_{i+1}\cup Z_i)
     =\mathrm L_3(X\setminus L_i).
\]
The second equality follows inductively because taking the first three
vertices commutes with truncating a set before adjoining $Z_i$.
Thus every future vertex used in a degree or coverage check belongs
to the eventual selection. The summaries need this interpretation
only along accepting computations; unsuccessful promises cause no
false acceptance.

The witness set $W$ also remains valid throughout: once a selected
vertex attains distance $\rho_s$ from $s$, no later selection can
decrease that distance below $\rho_s$, because all selected vertices
satisfy the lower bounds. At acceptance, $W=S\setminus M$ and the
vertices of $M$ witness their own zero distances. Hence the entire
guessed profile equals the actual distances to $M\cup X$.
The separator identity now certifies coverage outside $U_\rho$,
while the sweep certifies coverage inside $U_\rho$; vertices of $S$
are covered because $\rho_s\leq R$. Together with the degree and
cardinality checks this proves soundness. Conversely, following any
valid path's actual summaries, counters, and equality witnesses
satisfies every transition and the acceptance conditions, proving
completeness.

For fixed $M$ and $\rho$, the summaries and selected count have
polynomially many possibilities; the counters and witness set contribute
$3^{|M|}2^{k-|M|}$ states. Transitions take polynomial time per state.
Summing over profiles and zero sets gives
\[
 \sum_{M\subseteq S}5^{k-|M|}3^{|M|}2^{k-|M|}
 =\sum_{m=0}^k\binom{k}{m}3^m10^{k-m}=13^k.
\]
Endpoint enumeration and modulator computation preserve the claimed bound.

\subsection{Distance to chordal graphs}

\begin{theorem}\label{sv:thm:chordal}
\ESP{} is solvable in $n^{O(k)}$ time, where $k$ is the
vertex-deletion distance to chordal graphs.
\end{theorem}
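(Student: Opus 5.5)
We follow the plan of the full-version chordal algorithm and run a dynamic program over a tree decomposition whose bags have the form $S\cup C_t$. First obtain a chordal modulator $S$ with $|S|\le k$. We can do this by brute force over all vertex subsets of size at most $k$, testing chordality each time, which costs $n^{O(k)}$. We then apply Lemma~\ref{lem:chordal-decomposition} to get a nice tree decomposition with $O(n^2)$ nodes, in which every bag is $S$ together with a clique $C_t$ of $H=G-S$ and only vertices of $H$ are introduced or forgotten. After testing singleton paths separately, we guess distinct endpoints $a,b$, set $D=d(a,b)$, and restrict selections to the interval $F=\{v:d(a,v)+d(v,b)=D\}$. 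Layering $F$ by $\ell(v)=d(a,v)$ and orienting edges upward, Lemma~\ref{lem:chordal-degrees} states that a selected set $Y$ induces an $a$--$b$ geodesic exactly when it contains $a,b$, has no same-layer edge, and has degree pairs $(0,1)$, $(1,0)$, $(1,1)$. These conditions are local and can be checked at introduce, forget and join nodes.

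The main obstacle is that a bag contains an unbounded clique $C_t$, so we cannot afford to record, for every bag vertex, an arbitrary distance to the selected set. Lemma~\ref{lem:chordal-projection} and Lemma~\ref{lem:chordal-representatives} deal with this. Since $P$ is induced and $C_t$ is a clique, $P$ meets $C_t$ in at most two consecutive vertices. Removing these and $P\cap S$ leaves at most $k+2$ subpaths. Each subpath is a geodesic of $H$ and lies entirely in the strict interior $I_t$ or the exterior $O_t$, because the two sides have no edges between them. Hence the distance vector from $B_t$ to $V(P)\cap I_t$, and likewise the one to $V(P)\cap O_t$, is realized by some set $W$ of at most $2k+2$ vertices. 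Both vectors therefore belong to the family $\mathcal P_t(\cdot)$ of \eqref{eq:universe}, which has $n^{O(k)}$ members.

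A state at node $t$ consists of the selected set $A_t=Y\cap B_t$ (at most two clique vertices plus a subset of $S$), a partial degree pair for each member of $A_t$, an inside profile $\alpha_t\in\mathcal P_t^-$ and an outside profile $\omega_t\in\mathcal P_t^+$. We impose the coverage condition $\min\{\alpha_t(v),\omega_t(v),d(v,A_t)\}\le R$ for every $v\in B_t$. The transitions update the profiles by pointwise minima and by the separator transfer $\mathcal T_K$ of \eqref{eq:transfer}, whose correctness follows from \eqref{eq:separator}. Inside profiles propagate bottom-up and outside profiles are guessed and checked top-down; at a join, each child's exterior is the parent's exterior together with the sibling's interior. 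Degree pairs add at joins, with the edges inside $G[A]$ subtracted once, and a vertex's degree must equal its target value $\tau$ when it is forgotten, or at the root for vertices of $S$.

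For correctness, completeness comes from reading off the actual values of a solution path. For soundness, we first note that consistent selection statuses define $Y$, which by Lemma~\ref{lem:chordal-degrees} is a geodesic. Next we prove the profile invariants $\alpha_t=d(\cdot,Y\cap I_t)$ by induction from the leaves and $\omega_t=d(\cdot,Y\cap O_t)$ by induction from the root. Since every vertex appears in some bag, the coverage condition at that bag yields $\ecc_G(Y)\le R$. The running time is $n^{O(k)}$ states per node, times $O(n^2)$ nodes, times $n^2$ endpoint pairs, which is $n^{O(k)}$ overall.
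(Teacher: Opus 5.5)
Your proposal is correct and is essentially the paper's own proof. It uses the same brute-force modulator and $S\cup C_t$ nice decomposition, the same projection/representative argument bounding each side's bag profile by at most $2k+2$ vertices, and the same state (selected bag vertices, partial degree pairs, inside and outside profiles) with separator transfers, pointwise minima at joins, degree checks at forget nodes and the root, and soundness via the layered degree characterization and bottom-up/top-down profile invariants.
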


Enumerate a modulator $S$ and construct a nice tree decomposition with
bags $S\cup C$, where $C$ is a clique of $H=G-S$, using the standard
clique-tree representation~\cite{Gavril1974,BlairPeyton1993,Kloks1994}.
The essential property is that a geodesic $Q$ of a chordal graph,
disjoint from a clique $C$, contains one vertex simultaneously closest
to every vertex of $C$. This follows by repeatedly deleting a simplicial
vertex outside $C$: deletion preserves distances, and when the vertex
is an endpoint of $Q$, its neighbor on $Q$ is no farther from any
vertex of $C$. Dirac's theorem ensures the required simplicial
vertex~\cite{Dirac1961}.

Consequently, a union of $h$ geodesics of $H$ disjoint from $C$ has a set of at most
$h+k$ representatives preserving its distances to every vertex of
$S\cup C$: retain one vertex per geodesic and one nearest vertex for
each member of $S$. A solution meets $C$ in at most two consecutive
vertices. Removing them and $S$ leaves at most $k+2$ geodesic pieces,
so each side of a bag has a distance profile represented by at most
$2k+2$ vertices. Enumerating these representatives gives $n^{O(k)}$
possible profiles.

After testing singleton paths, fix distinct endpoints $a,b$ and put
$D=d_G(a,b)$. Only vertices $v$ with
$d_G(a,v)+d_G(v,b)=D$ may be selected. Orient edges by increasing
$d_G(a,v)$ and force selection of the endpoints.
The dynamic program stores the selected bag
vertices (at most two from $C$), their partial directed degrees, and
the profiles of selected vertices inside and outside the bag.
Separator-distance identities transfer profiles; pointwise minima
combine them at joins. Inside profiles are all-infinite at leaves;
the outside profile is all-infinite at the root. Induction from the
leaves and from the root, respectively, certifies both profiles.
We forbid selected edges within a distance layer.
Requiring degree pairs $(0,1)$ at $a$, $(1,0)$ at $b$, and $(1,1)$
elsewhere certifies one geodesic, and
the minimum of the two profiles and the selected bag distances
certifies coverage. There are $n^{O(k)}$ states and transitions.

\subsection{Distance to cographs}
\label{sv:sec:cographs}

\begin{theorem}\label{sv:thm:cograph}
For a computable function $f$, \ESP{} and its optimization version
can be solved in $f(k)n^{O(1)}$ time, where $k=\dcog(G)$ is the
vertex-deletion distance to cographs.
No cograph modulator is required as part of the input, and an optimal
shortest path can be returned.
\end{theorem}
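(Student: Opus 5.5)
The plan is to follow the strategy already developed in the full version (Theorem~\ref{thm:maincograph}): reduce to Courcelle--Makowsky--Rotics model checking on a bounded-width clique-width expression, with a formula whose length depends only on the parameter. The parameter controls both the clique-width and the diameter. The first step is to compute a cograph modulator without it being supplied. Greedily select a maximal family of pairwise vertex-disjoint induced $P_4$'s, scanning all four-vertex subsets in $O(n^4)$ time, and let $X$ be their union. By maximality $G-X$ is a cograph. Every modulator must hit each selected $P_4$, so $t:=|X|\leq 4k$.

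The second step bounds the diameter. By Proposition~\ref{lem:diameter-cograph-modulator} applied to $X$, $\diam(G)\leq 4t+3$. If $R\geq 4t+3$, any singleton path is a solution and we answer yes. Otherwise $R\leq 4t+3\leq 16k+3$, so both $R$ and the length of every geodesic are bounded by a function of $k$. This is the crucial point: the first-order formula $\Phi_{4t+3,R}$ from \eqref{eq:formula} has length $g(t)$. By Lemma~\ref{lem:formula-correct}, $G\models\Phi_{4t+3,R}$ exactly when $(G,R)$ is a yes-instance. Its disjuncts guess the vertices $p_0,\dots,p_\ell$, $\ell\leq 4t+3$, of a walk. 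They forbid a shorter connection between the endpoints via $\neg\delta_{\ell-1}$, and require every vertex to be within distance $R$ through the bounded-radius formulas $\delta_R$.

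The third step produces the expression. Lemma~\ref{lem:expression} gives a $(2^{t+1}+t)$-expression of size $O(4^tn)$ by refining a binary cotree of $G-X$ (Proposition~\ref{cor:binary-cotree}) with labels recording neighbourhoods in $X$, and giving each modulator vertex a private label. Proposition~\ref{thm:cmr} then decides $\Phi_{4t+3,R}$ in $F(2^{t+1}+t,g(t))\cdot O(4^tn)$ time. Together with the polynomial preprocessing, this yields $f(k)n^{O(1)}$ time via $t\leq 4k$.

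For optimization and witness recovery, try $R=0,1,\dots,4t+3$ and stop at the first accepted value, which is at most $4t+4$ calls. Then use self-reduction. First fix an accepted disjunct $\ell$. Next fix $p_0,\dots,p_\ell$ one at a time, testing each candidate vertex through the free-variable version of Proposition~\ref{thm:cmr}. This takes $O(tn)$ further calls. The main subtlety I expect is ensuring that the formula length stays independent of $n$. The diameter bound is exactly what justifies capping both $\ell$ and $R$; without it, $\delta_R$ would have length growing with $R$. A secondary check is that the greedy $P_4$-packing really yields a modulator within a constant factor of $k$, which the hitting argument settles.
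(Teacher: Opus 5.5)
Your proposal is correct and follows the paper's proof step for step. It uses the greedy $P_4$-packing modulator with $t\le 4k$, the bound $\diam(G)\le 4t+3$ to cap both $R$ and the path length in $\Phi_{4t+3,R}$, the $(2^{t+1}+t)$-expression built from the refined binary cotree, Courcelle--Makowsky--Rotics model checking, and the threshold scan with vertex-by-vertex self-reduction for the optimization version.
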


\begin{proof}[Proof sketch]
Greedily select a maximal collection of vertex-disjoint induced
$P_4$'s and let $X$ contain their vertices. Every cograph modulator
meets each selected path, so $t:=|X|\leq4k$; maximality ensures that
$G-X$ is a cograph. Scanning all four-vertex sets finds $X$ in
$O(n^4)$ time.

A binary cotree of $G-X$ gives a clique-width $2$-expression in
polynomial time~\cite{Corneiletal1985,BodlaenderMohring1993}.
Refine each of its two labels by the neighborhood in $X$, giving
$2^{t+1}$ labels, and give each member of $X$ its own label.
Replace every cotree join by joins between the corresponding refined
labels, and then add the vertices of $X$ and their incident edges.
This constructs a $(2^{t+1}+t)$-expression of size $O(4^t n)$;
compare the standard vertex-addition construction~\cite{Gurski2017}.
If $G-X$ is empty, construct $G[X]$ directly with $t$ labels.

Since connected cographs have diameter at most two, the deletion-distance
bound of Kobayashi and Otachi~\cite{KobayashiOtachi2022} gives
$\diam(G)\leq D:=4t+3$. Thresholds $R\geq D$ are answered by a
singleton path. For $R<D$, use a logical sentence that guesses
$p_0,\ldots,p_j$ for some $0\leq j\leq D$, requires consecutive
vertices to be adjacent, forbids a walk of length below $j$ between
$p_0$ and $p_j$, and requires every vertex to lie within distance $R$
of some $p_i$. These distance tests are first-order formulas obtained
by quantifying walks of bounded length. The adjacency and endpoint
conditions force the guessed walk to be a shortest path. Both the
expression width and the sentence length depend only on $t$, so
Courcelle--Makowsky--Rotics model checking~\cite{CMR2000} yields
$f(k)n^{O(1)}$ time.

Testing $R=0,\ldots,D$ finds the optimum. Fixing an accepted path
length and then its vertices one at a time, using the model checker
to test each candidate, recovers a witnessing path in FPT time.
The full argument appears in the \emph{Distance to cographs}
subsection of the full paper.
\end{proof}

\section{Kernelization}
\label{sv:sec:kernelization}

We contrast a polynomial kernel for maximum leaf number with four
kernel lower bounds. Complete proofs and the lower-bound constructions
appear in the \emph{Kernelization} section of the full paper.

\subsection{A polynomial kernel for maximum leaf number}

FPT for this parameter is known~\cite[Theorem~27]{KuceraSuchy2023}.
Our kernel uses a small weighted representation and verifies its
solutions without expanding the binary-encoded edge lengths.

\begin{theorem}\label{sv:thm:mainleaf}
\ESP{} parameterized by maximum leaf number $\ell$ admits an
$O(\ell^4)$-bit compression into a fixed language in $\NP$, and hence
a polynomial kernel consisting of an ordinary unweighted \ESP{} instance.
\end{theorem}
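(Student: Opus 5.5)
The plan has three stages: reduce $G$ to a small weighted core, turn that core into a compression of $O(\ell^4)$ bits, and then translate the compressed instance back into an ordinary \ESP{} instance. All the ingredients are already in the full paper, so the short-version proof will cite them.

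\emph{Step 1 (core).} We suppress maximal degree-two chains, as in Lemma~\ref{lem:core}. Bouland's bound gives a multigraph $H$ on fewer than $4\ell$ vertices. Since $\Delta(G)\le\ell$, the degree sum shows that $H$ has $q<2\ell^2$ edges. Each edge carries a chain length $L_e\le n$. We may first assume $n\ge2$ and $R<n$, since otherwise the instance is trivially yes.

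\emph{Step 2 (compression).} If $q\le\log_2 n$, we solve the instance outright with the $2^q n^{O(1)}$ algorithm of Lemma~\ref{lem:enumeration}. That algorithm fixes endpoints, retains them as branch vertices, and enumerates edge subsets of $H_{s,t}$ forming simple $s$--$t$ paths whose total length equals $\dist_G(s,t)$. Here $2^q\le n$, so this runs in polynomial time, and we output a fixed yes- or no-instance. Otherwise $\log_2 n<q$, and we output $(H,(L_e),R)$ as an instance of \SubESP{}. Its encoding length is $O(q\log n)=O(q^2)=O(\ell^4)$ bits (Proposition~\ref{prop:compression}).

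\emph{Step 3 (NP membership).} This is the step I expect to be the main obstacle, because the expansion may be exponentially larger than the encoding, so ordinary \ESP{}$\in\NP$ does not transfer. The certificate specifies each endpoint as a vertex of $H$ or as an edge together with an offset. We split those edges to obtain $H'$, and give a simple path $Q$ in $H'$. The verifier then checks three things with weighted shortest paths in $H'$:
\begin{itemize}
\item shortestness, by comparing $\sum_{e\in Q}L_e$ with the weighted distance between the endpoints;
\item that every retained vertex $v$ satisfies $\delta(v)=\min_{z\in V(Q)}D(v,z)\le R$;
\item that every unselected chain $uv$ satisfies $L_e+\delta(u)+\delta(v)\le 2R+1$.
\end{itemize}
The last check is justified as follows. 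The distance to the path along a chain is the minimum of two affine functions. By the Lipschitz property, their crossing point lies inside the chain, so the maximum over integer positions along the chain is $\lfloor (L+\delta(u)+\delta(v))/2\rfloor$. This is exactly Lemma~\ref{lem:np}.

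\emph{Step 4 (kernel).} \ESP{} is NP-hard on unweighted graphs~\cite[Theorem~7]{DraganLeitert2017}, and \SubESP{} is in NP by Step~3. So there is a fixed polynomial reduction from \SubESP{} to \ESP{}. Applying it to the $O(\ell^4)$-bit output gives an unweighted instance of size $\ell^{O(1)}$. Its maximum leaf number is at most its vertex count, so the parameter is bounded as well, and this yields the polynomial kernel as in the proof of Theorem~\ref{thm:mainleaf}.
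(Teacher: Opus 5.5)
Your proposal is correct and follows the paper's proof essentially step for step. The matching steps are: suppressing degree-two chains with Bouland's bound and $\Delta(G)\le\ell$ to get $q<2\ell^2$; the case split at $q\le\log_2 n$ using the $2^q n^{O(1)}$ enumeration; the NP verifier built on split endpoints and the criterion $L_e+\delta(u)+\delta(v)\le 2R+1$; and the final many-one reduction back to unweighted \ESP{} via the NP-hardness of Dragan and Leitert.
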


\begin{proof}
We may assume $n\ge2$ and $R<n$, since the other cases are immediate
yes-instances. Suppress maximal degree-two chains, retaining one vertex
when $G$ is a cycle. The resulting connected multigraph $H$ may have
loops and parallel edges. Give every edge $e$ its original chain length
$L_e\le n$. By Bouland's subdivision bound, as restated in
\cite[Lemma~28]{KuceraSuchy2023}, and the elementary bound
$\Delta(G)\le\ell$, we obtain
\[
 p:=|V(H)|<4\ell,\qquad q:=|E(H)|<2\ell^2.
\]
Indeed, $2q\le p\ell$. The tuple $(H,(L_e),R)$ reconstructs $G$
exactly and uses $O(q\log n)$ bits.

There is a $2^q n^{O(1)}$-time algorithm: guess both endpoints in $G$,
retain them by splitting their chains, and enumerate subsets of the at
most $q+2$ resulting edges. Keep only subsets forming a simple path
between these endpoints. Every simple path in $G$ arises this way.
Test shortestness by comparing its total length with the endpoint
distance, and test eccentricity by breadth-first search in $G$.
Also test one-vertex paths. If $q\le\log_2 n$, this algorithm runs in
polynomial time, so output a fixed yes- or no-instance. Otherwise,
output the weighted representation, whose length is
$O(q\log n)=O(q^2)=O(\ell^4)$. This algorithm need not compute $\ell$.

It remains to show that the target language belongs to $\NP$.
Let \SubESP{} contain precisely the binary-encoded weighted
multigraphs and thresholds whose unweighted subdivisions are simple
and admit a suitable shortest path. Simplicity is checked by rejecting
loops of length below three and parallel length-one edges.

A certificate specifies each endpoint by a retained vertex or an edge
and a binary offset. Split the relevant edges at these endpoints,
obtaining $H'$, and specify a simple path $Q$ in $H'$.
These operations add at most two vertices and two edges. Compute
weighted distances $D$ in $H'$. The expansion $P$ of $Q$ is shortest
exactly when its total weight equals $D(s,t)$. Put
$\delta(v)=\min_{z\in V(Q)}D(v,z)$. For an unselected edge $e=uv$ of
length $L$, the vertex at offset $i$ has distance
$\min\{\delta(u)+i,\delta(v)+L-i\}$ from $P$. Since
$|\delta(u)-\delta(v)|\le L$, its maximum over integer offsets is
\[
 \max_{0\le i\le L}
 \min\{\delta(u)+i,\delta(v)+L-i\}
 =\left\lfloor\frac{L+\delta(u)+\delta(v)}2\right\rfloor.
\]
The formula also holds for loops. Selected edges lie on $P$.
Consequently, checking $\delta(v)\le R$ for retained vertices and
$L_e+\delta(u)+\delta(v)\le2R+1$ for unselected edges verifies the
eccentricity bound using polynomially many operations on binary
integers. Thus \SubESP{} belongs to $\NP$.

Finally, ordinary \ESP{} is NP-complete on connected unweighted simple
graphs~\cite[Theorem~7]{DraganLeitert2017}. A fixed polynomial-time
many-one reduction from \SubESP{} to \ESP{} converts the compression
into an equivalent instance of encoding length $\ell^{O(1)}$.
Its maximum leaf number is bounded by its number of vertices, yielding
the required kernel.
\end{proof}

\subsection{Kernel lower bounds}

\begin{theorem}\label{sv:thm:kernel-lower-bounds}
Unless $\NP\subseteq\mathrm{coNP}/\mathrm{poly}$, \ESP{} has no
polynomial kernel parameterized by distance to clique, feedback edge
number, vertex cover number, or bandwidth. The same holds for deletion
distance to cluster, split, proper interval, or interval graphs.
\end{theorem}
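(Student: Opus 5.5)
The plan is to assemble Theorem~\ref{sv:thm:kernel-lower-bounds} from the four full-version constructions. Each one is either an OR-cross-composition, fed into Theorem~\ref{thm:cross-composition-framework}, or a polynomial parameter transformation, fed into Theorem~\ref{thm:mesp-vc-transformation-framework}.

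\emph{Distance to clique.} I would use the OR-cross-composition from \textsc{CNF-Sat}. Two formulas are equivalent when they have the same number of variables. The $t$ formulas are merged into a single formula $\Psi$ using $\lceil\log_2 t\rceil$ selector variables. $\Psi$ is then encoded by an assignment backbone with layers $\{u_i^0,u_i^1\}$, literal tails of length $R-1$, a clause clique $C$, and forcing tails of length $R=N+2$. The steps are as follows.
\begin{enumerate}
\item Show that every solution contains an assignment path $P_\alpha$ (Lemma~\ref{lem:mesp-dtc-normal-form}).
\item Show that $\dist(c_j,P_\alpha)=R$ when $\alpha$ satisfies $D_j$, and $R+1$ otherwise (Lemma~\ref{lem:mesp-dtc-clause-distances}).
\item Observe that $V(G)\setminus C$ has size $O((n+\log t)^2)$.
\end{enumerate}

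\emph{Feedback edge number.} I would use the OR-cross-composition from \textsc{Distinct 3-Sat}. The key point is that every one of the $8\binom n3$ possible clauses gets a test path $T_C$, so the cyclic core does not depend on $t$. Membership of a clause in $\Phi_j$ is recorded only by pendant test hairs, which add no cycles, so $\fen=O(n^3)$. I expect this case to be the main obstacle. The connector extensions $P(Q,x,y)$ must let the solution ``point at'' one index $j$: the hairs at position $4t+j$ are then the only ones that actually test the assignment, while all other hairs are covered through $u$ or $v$. Getting this to work requires the distance bookkeeping of Lemmas~\ref{lem:mesp-metric}--\ref{lem:mesp-equivalence}. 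I would verify the shortestness window \eqref{eq:mesp-shortestness-bounds} first, and then check that the choice $x=2t+j+1$, $y=2t-j+1$ isolates index $j$.

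\emph{Vertex cover and the deletion distances.} I would use the polynomial parameter transformation from \textsc{Small Universe Set Cover}. There are $b$ independent candidate layers between connectors, element gadgets with hubs and paths of length $b$, and $R=b+1$. The cover $X$ has size $O((n+b)^3)$, independent of $|\mathcal F|$. Terminal $x_u$ is within distance $R$ of the selection path exactly when $u$ is covered. Since $G-X$ is edgeless, and edgeless graphs are cluster, split, proper interval and interval graphs, Corollary~\ref{cor:mesp-vc-deletion-classes} gives the four deletion-distance bounds.

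\emph{Bandwidth.} I would use the OR-cross-composition from \textsc{3-Sat} with modules $M_i$ chained between two rails. A level-function argument (Claim~\ref{clm:selector}) shows that every shortest $a_0$--$b_q$ path switches rails inside exactly one module, passing through its variable chain. The clause test \eqref{eq:clause-test} checks the formula only in that selected module, while Claim~\ref{clm:domination} covers every other module. Listing the modules in order gives bandwidth $O(N^2)$, independent of $q$.

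Combining the four reductions with the two frameworks, and noting that \ESP{} is in NP, yields every stated lower bound.
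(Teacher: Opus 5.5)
Your proposal is correct and matches the paper's proof. You use the same four reductions: three OR-cross-compositions fed into the Bodlaender--Jansen--Kratsch theorem, namely from \textsc{CNF-Sat} with $\lceil\log_2 t\rceil$ selector variables for distance to clique, from \textsc{Distinct 3-Sat} with one test path per possible clause and pendant hairs for feedback edge number, and from \textsc{3-Sat} with rail-linked modules for bandwidth. The fourth is the polynomial parameter transformation from \textsc{Small Universe Set Cover}, fed into the Bodlaender--Thomass\'e--Yeo theorem, which also needs \ESP{} in NP; its edgeless $G-X$ gives the cluster, split, proper interval, and interval deletion bounds as well.
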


For distance to clique, an OR-cross-composition combines CNF formulas
using $O(\log t)$ selector variables. A layered assignment path must
dominate a clique of clause vertices; deleting a set of size polynomial
in the number of variables plus $\log t$ leaves that clique.
For feedback edge number, the input index is encoded by positions on
long subdivided test paths. Attaching pendant tests records the source
clauses, while subdivision and pendant attachments preserve the number
of independent cycles. The two endpoint extensions select an input
whose clauses must all be satisfied.

For bandwidth, two rails connect a sequence of formula modules. A
shortest path switches from the upper to the lower rail in exactly one
module, selecting its assignment. Clause tests constrain that module,
while all others are automatically dominated. Ordering each module
consecutively bounds bandwidth by a polynomial in the largest
source-instance size.
These three constructions invoke the cross-composition framework
of Bodlaender, Jansen, and Kratsch~\cite{BJK14}.

\paragraph*{The vertex-cover lower bound.}
We give a polynomial parameter transformation from
\textsc{Small Universe Set Cover}, whose parameter is $|U|+b$ and
which has no polynomial kernel under the stated assumption~\cite{DLS14}.
The input is a family $\mathcal F\subseteq2^U$ and a budget $b$;
it asks whether at most $b$ sets cover $U$.
Handle trivial instances directly, and assume $U\ne\varnothing$,
$\bigcup_{F\in\mathcal F}F=U$, and $1\leq b\leq|U|$.
Set $R=b+1$.

Create connectors $z_0,\ldots,z_b$ and independent candidate layers
$I_i=\{c_{i,F}:F\in\mathcal F\}$, joining each candidate in $I_i$
to $z_{i-1}$ and $z_i$. Attach a pendant path of length $R$ at
each of $z_0,z_b$. For every $u\in U$, create a terminal $x_u$ and
hubs $h_{i,u}$, $i\in[b]$. Join each hub to $x_u$ by a length-$b$
path; these paths share only $x_u$, and different element gadgets
are disjoint. Join $c_{i,F}$ to $h_{i,u}$ exactly when $u\in F$.
All paths are expanded explicitly. The construction is polynomial-time,
and deleting all vertices outside the candidate layers leaves an
independent set. These deleted vertices form a vertex cover $X$ of size
\[
 |X|=|U|b^2+|U|+3b+3=O((|U|+b)^3),
\]
independent of $|\mathcal F|$.

A \emph{selection path} chooses one candidate per layer between
$z_0$ and $z_b$, and has length $2b$. These are exactly the shortest
$z_0$--$z_b$ paths: traversing an element gadget between distinct layers
costs at least $2b+2$. After excluding such crossings, every intermediate
connector is unavoidable, so the $2b$ bound forces exactly one candidate
in each layer.
To cover both pendant endpoints, any feasible geodesic must contain
$z_0$ and $z_b$. Its intervening subpath is therefore a selection path.
An extension from $z_0$ into $I_1$ would admit a shortcut to $z_1$,
and symmetrically at $z_b$; hence extensions lie only in the pendant
tails. Trimming them preserves coverage: their attachment vertices
remain, and every tail vertex is within $R$ of its attachment.

Suppose the selection path $P$ chooses $F_1,\ldots,F_b$. Every route
from $x_u$ to $P$ first traverses a length-$b$ spoke. Its hub is
adjacent to $P$ precisely when its selected set contains $u$;
otherwise the hub reaches a connector in two edges. Thus
\[
 \dist_G(x_u,V(P))=
 \begin{cases}
 R, & u\in\bigcup_{i=1}^b F_i,\\
 R+1, & u\notin\bigcup_{i=1}^b F_i.
 \end{cases}
\]
Every nonterminal is within $R$ of $P$: a nonterminal spoke vertex
reaches a connector in at most $(b-1)+2$ edges, and the backbone and
tail vertices satisfy the bound directly. A cover of size at most $b$,
padded by repetitions, therefore gives a feasible selection path.
Conversely, canonicalizing any solution and applying the displayed
identity recovers such a cover.

The transformation theorem of Bodlaender, Thomass\'e, and
Yeo~\cite{BTY11} transfers the kernel lower bound.
Since $G-X$ is edgeless, the same construction bounds the deletion
distances to cluster, split, proper interval, and interval graphs by
$|X|$, proving their lower bounds as well.

\section{Further results}
\label{sv:sec:further-results}

\subsection{Consequences of earlier algorithms}
\label{sv:sec:known-consequences}
The following consequences are included to complete the parameter
landscape; their underlying algorithms are due to
Ku\v{c}era and Such\'y~\cite[Corollary~26]{KuceraSuchy2023}
and Dragan and Leitert~\cite[Corollary~12]{DraganLeitert2017}.

\begin{proposition}\label{sv:prop:known-consequences}
For a computable function $F$, \ESP{} and its optimization version
can be solved in $F(\td(G))n^3$ time, without a supplied treedepth
decomposition. Both versions are also in XP parameterized by diameter.
\end{proposition}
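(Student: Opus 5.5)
The plan is to combine the two cited algorithms with elementary parameter bounds, following the full-paper treatment of treedepth and diameter. Three ingredients are needed: Proposition~\ref{prop:tw-r}, which gives $A(\tw(G),r)n^3$ time; Lemma~\ref{lem:td-bounds}, which gives $\tw(G)\le d-1$ and $\diam(G)\le 2^d-2$ for $d=\td(G)$; and Lemma~\ref{lem:rleqdiameter}, which gives $r(G)\le\diam(G)$.

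\emph{Treedepth.} Given $(G,R)$, first compute $\Delta=\diam(G)$ by breadth-first search from every vertex, in $O(n^3)$ time. If $R\ge\Delta$, output any singleton path; it is shortest and has eccentricity at most $\Delta$. Otherwise $R<\Delta\le 2^d-2$, and we call the Ku\v{c}era--Such\'y algorithm. Its running time is at most $A(d-1,2^d-2)n^3$, after assuming $A$ is nondecreasing in both arguments. The algorithm itself never uses $d$ or a treedepth decomposition; these appear only in the running-time analysis. For the optimization version, we try $R=0,1,\ldots,\Delta$ and return the first witness. This costs at most $2^d-1$ calls, so the total time stays of the form $F(d)n^3$.

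\emph{Diameter.} Proposition~\ref{cor:DL12} solves the optimization version in $O(n^{2r+2}m)$ time with $r=r(G)$, and it does not need $r$ in advance. Lemma~\ref{lem:rleqdiameter} gives $r\le\Delta$, so the running time is $O(n^{2\Delta+2}m)$. This is XP in the diameter, and the decision version follows by comparing the optimum with $R$.

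The only delicate point is the treedepth bound $\diam(G)\le 2^d-2$, which controls both the threshold $R$ and the number of optimization calls. It is already proved in Lemma~\ref{lem:td-bounds} by induction on the height of the elimination tree. One must also note that the cited algorithm handles singleton paths; otherwise one tests them separately in $O(n^3)$ time. Neither step seems to pose a real obstacle.
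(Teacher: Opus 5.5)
Your proposal is correct and follows essentially the same route as the paper. For treedepth, you answer $R\ge\Delta$ with a singleton path, and otherwise call the Ku\v{c}era--Such\'y algorithm with both arguments bounded by $d-1$ and $2^d-2$; for diameter, you apply the Dragan--Leitert $O(n^{2r+2}m)$ algorithm with $r(G)\le\diam(G)$. The only extra detail in the paper is that it handles the one-vertex graph separately in the diameter part, since there $m=0$ and the stated bound is degenerate.
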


\begin{proof}
Put $d=\td(G)$ and $\Delta=\diam(G)$. The standard bounds
$\tw(G)\leq d-1$ and $\Delta\leq2^d-2$ follow, respectively,
from ancestor bags and the maximum path length in the closure of a
height-$d$ forest. Ku\v{c}era and Such\'y's algorithm takes
$A(\tw(G),R)n^3$ time and returns a witness. After answering
$R\geq\Delta$ with a singleton path, its two parameters are bounded
by functions of $d$. Testing thresholds up to $\Delta$ also gives
optimization within $F(d)n^3$ time.

For diameter, the optimum eccentricity $r$ satisfies $r\leq\Delta$.
Dragan and Leitert's optimization algorithm takes
$O(n^{2r+2}m)$ time, even without knowing $r$; thus it takes
$O(n^{2\Delta+2}m)$ time for $n\geq2$. Handle the singleton graph
directly. The \emph{Treedepth} and \emph{Diameter} subsections of the
full paper give the detailed derivations.
\end{proof}

\subsection{Vertex clique-cover number}
\label{sv:sec:clique-cover}

\begin{theorem}\label{sv:thm:clique-cover-hardness}
\ESP{} is $\Wtwo$-hard parameterized by the vertex clique-cover number,
even when a vertex clique cover is supplied and its size is the parameter.
\end{theorem}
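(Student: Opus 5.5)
We give a parameterized reduction from \ds{}, which is $\Wtwo$-complete parameterized by solution size~\cite{DowneyFellows1995}. The plan is to reuse the choice-layer and distance-test idea of Dragan and Leitert~\cite[Theorem~13]{DraganLeitert2017}, but to make every layer a clique so that the output has few cliques. Given $(F,k)$ with $n=|V(F)|$ and $1\le k\le n$, set $R=k$.

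\emph{Construction.} For each $i\in[k]$ and $0\le h\le R-1$ we create a clique $C_{i,h}$ containing one copy $x_{i,h}$ of every $x\in V(F)$. Columns $x_{i,0}\cdots x_{i,R-1}$ link consecutive levels. Consecutive bottom cliques $C_{i,0},C_{i+1,0}$ are completely joined, and terminals $s,t$ are joined completely to $C_{1,0}$ and $C_{k,0}$, respectively. At $s$ and at $t$ we attach forcing pendant paths of length $R$. Finally, a witness clique $W=\{q_x\}$ is added, with $q_x$ adjacent to the column tops $y_{i,R-1}$ for $y\in N_F[x]$. The cliques $C_{i,h}$, the clique $W$, and singletons for the pendant vertices give an explicit clique cover of size $k^2+2k+3$. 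This bound is polynomial in $k$, and the cover can be output whenever a supplied cover is required.

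\emph{Canonical paths.} First we show that the shortest $s$--$t$ paths are exactly the backbone paths $P_{\boldsymbol v}=(s,(v_1)_{1,0},\ldots,(v_k)_{k,0},t)$. Any excursion between different bottom cliques must climb to $W$ and back, which costs at least $2R>k-1$. Next, a solution path must contain $s$ and $t$, because the pendant leaves are at distance $R$ from them. Any extension beyond $s$ or $t$ must lie on the pendants: a step into $C_{1,0}$ would create a chord. Trimming these extensions keeps the eccentricity at most $R$.

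\emph{Distance test.} This is the main step, and we handle it with a level function. Give level $h$ to $y_{i,h}$, level $R$ to $W$, and level $0$ to the pendants and terminals. Every edge changes the level by at most one, and every vertex of $P_{\boldsymbol v}$ has level $0$. Hence a walk of length at most $R$ from $q_x$ to the path must descend strictly at every step, going down a single column $y_{i,\cdot}$ with $y\in N_F[x]$ and landing exactly on $(v_i)_{i,0}$. So $q_x$ is within distance $R$ of $P_{\boldsymbol v}$ if and only if some $v_i$ dominates $x$; otherwise the distance is $R+1$. All other vertices are within distance $R$, via their column plus one clique edge, or via the pendants. Therefore $P_{\boldsymbol v}$ has eccentricity at most $R$ if and only if $\{v_1,\ldots,v_k\}$ dominates $F$, which proves the equivalence, with repeated vertices used to pad dominating sets smaller than $k$.

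The delicate points are the exact level argument and excluding non-backbone shortcuts in the canonical-path claim; both follow from the length comparison $2R>k-1$.
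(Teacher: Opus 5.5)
Your proposal is correct and follows the paper's proof essentially step for step: the same clique-layered reduction from \ds{} with $R=k$, the same explicit cover of size $k^2+2k+3$, the same canonical-path argument via $2R>k-1$ together with the chord argument for extensions, and the same level-function distance test for the witness vertices. The only case you leave implicit is an excursion that leaves and returns to the same bottom clique $C_{i,0}$; it is shortcut by a single clique edge, which the paper states explicitly.
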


\begin{proof}[Proof sketch]
We reduce from \ds{}, parameterized by the solution size
$k$~\cite{DowneyFellows1995}. The construction adapts the choice layers
and distance tests of Dragan and
Leitert~\cite[Theorem~13]{DraganLeitert2017}; clique layers and columns
keep the clique-cover number bounded in terms of $k$.
Let $(F,k)$ be the source instance, with $1\leq k\leq |V(F)|$, and set
$R=k$.

For each $i\in[k]$ and $h\in\{0,\ldots,R-1\}$, create a clique
$C_{i,h}=\{x_{i,h}:x\in V(F)\}$. Join consecutive copies of each vertex
to form columns $x_{i,0}-\cdots-x_{i,R-1}$.
Completely join consecutive bottom cliques $C_{i,0}$ and $C_{i+1,0}$.
Add $s$ complete to $C_{1,0}$ and $t$ complete to $C_{k,0}$;
these vertices and the bottom cliques form the \emph{backbone}.
Attach pendant paths of length $R$ at $s$ and $t$.
Finally, create a witness clique $W=\{q_x:x\in V(F)\}$ and join
$q_x$ to $y_{i,R-1}$ exactly when $y\in N_F[x]$.
There are no further edges. The resulting connected graph $H$ has a
supplied clique cover consisting of the $kR$ layer cliques, $W$, and
the $2R+2$ individual vertices of the two pendant paths, including
$s,t$. Thus its size is $k^2+2k+3$.

For a tuple $\boldsymbol v=(v_1,\ldots,v_k)\in V(F)^k$, allowing
repetitions, put
\[
 P_{\boldsymbol v}=(s,(v_1)_{1,0},\ldots,(v_k)_{k,0},t),
 \qquad D_{\boldsymbol v}=\{v_1,\ldots,v_k\}.
\]
The shortest $s$--$t$ paths are exactly the paths $P_{\boldsymbol v}$.
Indeed, an excursion outside the backbone returning to the same bottom
clique can be replaced by at most one edge. An excursion between distinct
bottom cliques must pass through $W$ and uses at least $2R$ edges,
whereas the backbone connects its endpoints in at most $k-1$ edges.
Within the backbone, a shortest path selects exactly one vertex per
bottom clique.

Every vertex outside $W$ is within distance $R$ of
$P_{\boldsymbol v}$: descend a column and use at most one bottom-clique
edge, or follow a pendant path to its attachment.
Give column vertices $y_{i,h}$ level $h$ and witnesses level $R$.
Reaching the backbone from $q_x$ in at most $R$ steps requires decreasing
the level at every step, hence following one column whose label belongs
to $N_F[x]$. Consequently,
\[
 \dist_H(q_x,V(P_{\boldsymbol v}))\leq R
 \quad\Longleftrightarrow\quad
 N_F[x]\cap D_{\boldsymbol v}\neq\varnothing.
\]
Thus a dominating set of size at most $k$, padded by repetitions, gives a
feasible path.

Conversely, the radius-$R$ ball around either pendant endpoint consists
exactly of its pendant path. A feasible geodesic $Q$ therefore meets
both pendant paths and contains $s,t$; its $s$--$t$ subpath is some
$P_{\boldsymbol v}$. Since a geodesic is induced, an extension beyond
$s$ into $C_{1,0}$ would create a chord, and similarly at $t$.
All extensions therefore lie on the pendant paths. Removing them
preserves distances to vertices outside those paths, while every pendant
vertex remains within distance $R$ of $s$ or $t$.
Hence $P_{\boldsymbol v}$ is feasible, and the displayed equivalence
shows that $D_{\boldsymbol v}$ dominates $F$.
The construction is polynomial and the supplied cover has size depending
only on $k$. Complete distance checks appear in the \emph{Clique cover
number} subsection of the full paper supplied with the submission.
\end{proof}

\subsection{Tree-independence number}
\label{sv:sec:tree-independence}

The following reduction adapts the SAT construction of Dragan and
Leitert~\cite[Theorem~7]{DraganLeitert2017}. The added literal-pair
edges and clause clique give the required decomposition certificate.

\begin{theorem}\label{sv:thm:tree-independence-hardness}
\ESP{} is NP-hard on connected graphs $G$ with $\ta(G)\leq2$, even
when a tree decomposition and a partition of every bag into two cliques
are supplied. Consequently, parameterization by tree-independence
number is para-NP-hard.
\end{theorem}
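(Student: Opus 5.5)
We reduce from \textsc{3-SAT}, following the construction of Theorem~\ref{thm:hardness} in the full paper: a variable chain $a_0,\{p_1,q_1\},a_1,\ldots,\{p_n,q_n\},a_n$ with edges $a_{i-1}p_i,a_{i-1}q_i,a_ip_i,a_iq_i$ and the extra literal-pair edge $p_iq_i$. We add a clause clique $C=\{c_1,\ldots,c_m\}$, occurrence paths of length exactly $R=n+1$ from $c_j$ to the literal vertex of each literal of $D_j$, and pendant tails of length $R$ at $a_0$ and $a_n$. The output is $(G,R)$. The plan has three parts: bound the tree-independence number with an explicit certificate, show that satisfying assignments give feasible paths, and show that feasible paths give satisfying assignments.

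\textbf{Decomposition certificate.} First we build a tree decomposition of $G-C$ with clique bags. On the chain we use the path of bags $\{a_{i-1},p_i,q_i\},\{a_i,p_i,q_i\}$; each is a triangle because of the edge $p_iq_i$. Deleting $C$ turns every occurrence path and both tails into pendant paths, so we hang edge-bag chains off a bag containing the attachment vertex. We then add $C$ to every bag. Each bag becomes the union of two cliques, so it has independence number at most two. Edges between $C$ and the first internal vertex of an occurrence path are covered by that vertex's bags, and running intersection still holds because $C$ lies in every bag.

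\textbf{Forward direction.} Shortest $a_0$--$a_n$ paths have length $2n$ and are exactly the paths $a_0z_1a_1\cdots z_na_n$ with $z_i\in\{p_i,q_i\}$. Any detour through $C$ uses two full occurrence paths, which costs at least $2R=2n+2>2n$. Given a satisfying assignment, we pick the true literals and extend the path along both tails. Under this path:
\begin{itemize}
\item every chain vertex is within distance $1$ (using $p_iq_i$);
\item every occurrence-path interior vertex is within $(R-1)+1=R$;
\item each $c_j$ reaches a chosen true literal of $D_j$ in exactly $R$ steps.
\end{itemize}

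\textbf{Reverse direction.} A feasible geodesic must meet both tails, since the tail leaves are at distance $R$ from $a_0$ and $a_n$. So it contains $a_0$ and $a_n$, and its middle subpath is a canonical path $P$. Extensions beyond $a_0$ or $a_n$ cannot enter the unchosen literal vertex, because inducedness forbids the chord $p_iq_i$; hence they stay on the tails and do not affect distances from $C$. A route of length at most $R$ from $c_j$ to $P$ must consist of exactly one occurrence path, since its internal vertices have degree two and the path already has length $R$. So it ends at a chosen literal vertex of $D_j$, and the assignment satisfies $D_j$.

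The main obstacle is that the literal-pair edges and the clause clique could create unintended shortcuts. We must check that the edge $p_iq_i$ neither shortens $a_0$--$a_n$ geodesics nor lets a clause vertex reach $P$ in fewer than $R$ steps through an unchosen literal. The chord argument handles the first concern and the degree-two structure of occurrence paths handles the second. Para-NP-hardness follows because NP-hardness already holds at parameter value two.
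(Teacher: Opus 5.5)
Your proposal is correct and follows the paper's proof essentially step for step. It uses the same \textsc{3-SAT} construction with literal-pair edges and a clause clique, the same decomposition obtained by adding $C$ to every clique bag of $G-C$, the same characterization of the shortest $a_0$--$a_n$ paths, and the same chord and degree-two arguments in the reverse direction.

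The differences are cosmetic. You extend the forward path along both tails, which is harmless because the tails are pendant. Also, the fact that $p_iq_i$ does not shorten $a_0$--$a_n$ geodesics comes from the cut vertices $a_i$ and the two-edge crossing of each gadget, not from the chord argument; the chord argument is what confines extensions of a feasible geodesic to the tails.
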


\begin{proof}[Proof sketch]
Let $\Phi$ be a $3$-CNF formula with variables $x_1,\ldots,x_n$ and
nonempty clauses $D_1,\ldots,D_m$, where $n,m\geq1$, and put $R=n+1$.
Build a variable chain $H$ with connectors $a_0,\ldots,a_n$ and
literal pairs $p_i,q_i$ representing $x_i,\neg x_i$: both are adjacent to $a_{i-1}$ and
$a_i$, and $p_iq_i$ is an edge. Make the clause vertices
$C=\{c_1,\ldots,c_m\}$ a clique. For each literal occurrence in $D_j$,
join $c_j$ to its literal vertex by a length-$R$ path, using distinct
internal vertices for all these paths. Attach a pendant length-$R$
tail to each of $a_0,a_n$.

The graph $G-C$ has a tree decomposition with clique bags: arrange the
triangles $\{a_{i-1},p_i,q_i\}$ and $\{a_i,p_i,q_i\}$ in chain order,
and attach chains of edge bags for the pendant paths.
Adding $C$ to every bag gives a decomposition of $G$ in which every
bag is explicitly partitioned into two cliques. Hence $\ta(G)\leq2$.

The shortest $a_0$--$a_n$ paths have length $2n$ and are precisely
\[
  P=a_0,z_1,a_1,z_2,\ldots,z_n,a_n,
  \qquad z_i\in\{p_i,q_i\}.
\]
Indeed, leaving $H$ and returning through occurrence paths costs at
least $2R>2n$; inside $H$, each variable gadget requires two edges.
A satisfying assignment chooses its true literal at each gadget.
The resulting $P$ has eccentricity at most $R$: the unchosen literals
are adjacent to $P$, internal occurrence vertices are within $R-1$
of a literal, tail vertices are within $R$ of their attachment, and
each clause has an occurrence path to a chosen literal.

Conversely, a shortest path $Q$ of eccentricity at most $R$ must meet
both tails, including their attachments: a tail endpoint is farther
than $R$ from every vertex outside its tail. Thus $Q$ contains a core
path $P$ of the displayed form. Every extension of $P$ stays on a
tail, because extending through an unchosen literal would create a
chord via its literal-pair edge. Consequently,
$\dist_G(c_j,V(Q))=\dist_G(c_j,V(P))$.
A path from $c_j$ to $P$ traverses a complete occurrence path of
length $R$; if its total length is at most $R$, it must consist
entirely of such a path from $c_j$ to a chosen literal.
The assignment encoded by $P$ therefore satisfies every clause.
The full paper supplies the detailed argument in its
\emph{Tree-independence number} subsection.
\end{proof}

}{%
\tableofcontents
\thispagestyle{empty}
\newpage
\mespFullIntroduction
\mespFullProofs
}

\paragraph*{Declaration of AI use.}
All results obtained in this paper were developed using OpenAI's
GPT-5.6/6 Pro. These models were also used to generate, refine, and
write the proofs and exposition. The authors take full responsibility
for the correctness, originality, and attribution of the results and
for the final content of the paper.

\bibliography{main}


\end{document}